\crefname{theorem}{Theorem}{Theorems}
\crefname{lemma}{Lemma}{Lemmas}
\crefname{proposition}{Proposition}{Propositions}
\crefname{section}{\S}{\S\S}
\crefname{equation}{}{}
\newtheorem{theorem}{Theorem}[section]
\newtheorem{remark}[theorem]{Remark}
\newtheorem{lemma}[theorem]{Lemma}
\newtheorem{proposition}[theorem]{Proposition}
\newtheorem{definition}[theorem]{Definition}
\newtheorem{corollary}[theorem]{Corollary}
\newcommand{\R}{\mathbb{R}}
\newcommand{\N}{\mathbb{N}}
\newcommand{\C}{\mathbb{C}}
\newcommand{\E}{\mathbb{E}}
\newcommand{\Z}{\mathbb{Z}}
\renewcommand{\Pr}{\mathbb{P}}
\newcommand{\B}{\mathbb{B}}
\newcommand{\s}{\sigma}
\renewcommand{\Re}{\operatorname{Re}}
\newcommand{\x}{{\bf x}}
\newcommand{\ket}[1]{|{#1}\rangle}
\newcommand{\bra}[1]{\langle{#1}|}
\newcommand{\LL}{\mathcal{L}}
\newcommand{\WW}{\mathcal{W}}
\newcommand{\textd}{\textrm{d}}
\DeclareMathOperator{\tr}{\mathrm{Tr}}
\newcommand{\Hil}{\mathcal{H}}
\title{Emptiness Formation Probability}
\author{Nicholas Crawford$^{1}$, Stephen Ng$^{2}$ and Shannon Starr$^{3}$\\[10pt]
$^{1}$ Department of Mathematics, The Technion, Haifa, Israel\\
$^{2}$ Department of Mathematics, University of Rochester, Rochester, NY\\
$^{3}$ Department of Mathematics, University of Alabama at Birmingham, Birmingham, AL}
\date{October 14, 2014}
\begin{document}
\newcommand{\Ru}{-- ++(2.5,0) arc (240:-60:1cm) -- ++(2.5,0)}
\newcommand{\Rd}{-- ++(2.5,0) arc (-240:60:1cm) -- ++(2.5,0)}
\newcommand{\Ur}{-- ++(0,2.5) arc(-150:150:1cm) -- ++(0,2.5)}
\newcommand{\Ul}{-- ++(0,2.5) arc (-30:-330:1cm) -- ++(0,2.5)}
\newcommand{\Lu}{-- ++(-2.5,0) arc (-60:240:1cm) -- ++(-2.5,0)}
\newcommand{\Ld}{-- ++(-2.5,0) arc (60:-240:1cm) -- ++(-2.5,0)}
\newcommand{\Dr}{-- ++(0,-2.5) arc(150:-150:1cm) -- ++(0,-2.5)}
\newcommand{\Dl}{-- ++(0,-2.5) arc (-330:-30:1cm) -- ++(0,-2.5)}
\newcommand{\urdl}{\Ru \Ur \Ld \Dl}
\newcommand{\dlur}{\Rd \Ul \Lu \Dr}
\newcommand{\urur}{\Ru \Ur \Lu \Dr}
\newcommand{\dldl}{\Rd \Ul \Ld \Dl}
\newcommand{\ulul}{\Ru \Ul \Lu \Dl}
\newcommand{\drdr}{\Rd \Ur \Ld \Dr}
\newcommand{\Ruhf}{-- ++(2.5,0) arc (240:90:1cm)}
\newcommand{\Rdhf}{-- ++(2.5,0) arc (-240:-90:1cm)}
\newcommand{\Urhf}{-- ++(0,2.5) arc(-150:0:1cm)}
\newcommand{\Ulhf}{-- ++(0,2.5) arc (-30:-180:1cm)}
\newcommand{\Luhf}{-- ++(-2.5,0) arc (-60:90:1cm)}
\newcommand{\Ldhf}{-- ++(-2.5,0) arc (60:-90:1cm)}
\newcommand{\Drhf}{-- ++(0,-2.5) arc(150:0:1cm)}
\newcommand{\Dlhf}{-- ++(0,-2.5) arc (-330:-180:1cm)}
\newcommand{\EFP}{{\mathtt{EFP}}}
\newcommand{\bi}{\mathbf{i}}
\newcommand{\bj}{\mathbf{j}}
\newcommand{\be}{\mathbf{e}}
\newcommand{\T}{\mathbb{T}}

\newcommand{\G}{\mathcal{G}}
\newcommand{\V}{\mathcal{V}}
\newcommand{\Ed}{\mathcal{E}}
\newcommand{\half}{\textstyle{\frac{1}{2}}}

\newcommand{\Ran}{{\textrm{Ran}}}

\maketitle
\begin{abstract}
\setcounter{section}{0}
We present rigorous upper and lower bounds on the 
emptiness formation probability for the ground state of a spin-$1/2$ Heisenberg XXZ quantum spin system.
For a $d$-dimensional system we find a rate of decay of the order $\exp(-c L^{d+1})$ where $L$ is the sidelength of the box in which we ask for the emptiness formation event to occur. In the $d=1$ case this confirms previous predictions made in the integrable systems community, though our bounds do not achieve the precision predicted by Bethe ansatz calculations. On the other hand, our bounds in the case $d
\geq 2$ are new. The main tools we use are reflection positivity and a rigorous path integral expansion which is a variation on those previously introduced by Toth, Aizenman-Nachtergaele and Ueltschi.
\end{abstract}
\section{Introduction and Main Results.}
In this paper we obtain mathematically rigorous bounds for a quantity that physicists have considered
for some time, called the ``emptiness formation probability.''
This is the probability, in the ground state of the quantum Heisenberg antiferromagnet, to find a block of spins ferromagnetically aligned.
In a classical model, such as the Ising model, this probability would be zero in the true ground state.
It is a measure of the quantum nature of the Heisenberg antiferromagnet that this probability is not exactly zero,
even in the ground state.

The expected answer in $d=1$ is that the emptiness formation probability for a block of length $L$
is asymptotically $AL^{\nu} \exp(-cL^{2})$, in the limit $L \to \infty$, where $A,\nu >0$ are independent of L. 
This behavior was determined by physicists for some special Bethe-ansatz solvable models, although
part of their analysis is not rigorous.
We prove, in any dimension $d>0$ of the underlying lattice, there are upper and lower bounds of the form $C_{\pm} \exp(-c_{\pm} L^{d+1})$
for constants $C_+,C_-,c_+,c_- \in (0,\infty)$.
This is certainly an easier explanation than the Bethe ansatz.
In contrast to Bethe ansatz derivations.
 the reasons behind our bounds is transparent.
The exponent scales as $L^{d+1}$ instead of $L^d$ (as in classical statistical mechanical
models at positive temperature) because the extra dimension is {\em imaginary time} in the graphical representation of the quantum model.

The mathematical analysis is still somewhat involved. 
For convenience, we restrict attention to {\em reflection positive} models.
Luckily, many interesting physical models are
reflection positive, including all those considered using the Bethe ansatz.

\subsection{Set-up}
Let $\G = (\V,\Ed)$ be a finite graph.  Using the notation of graph theory, $\Ed$ may be any subset of the collection of all pairs $\{\bi,\bj\}$ such that $\bi,\bj \in \V$, $\bi\neq \bj$.

To define the Heisenberg XXZ models, we begin by introducing its underlying Hilbert space.  In general,
spin-$\frac{1}{2}$ quantum spin systems  the Hilbert space is $\Hil_\V = (\C^2)^{\otimes |\V|}$, one factor for each vertex of the graph.
We denote the usual Pauli spin matrices, normalized by $\frac{1}{2}$, as
\begin{equation}
\label{eq:Pauli}
S^x\, =\, \frac{1}{2}\, \begin{bmatrix} 0 & 1 \\ 1 & 0 \end{bmatrix}\, ,\quad
S^y\, =\, \frac{1}{2}\, \begin{bmatrix} 0 & -i \\ i & 0 \end{bmatrix}\ \text{ and } \
S^z\, =\, \frac{1}{2}\, \begin{bmatrix} 1 & 0 \\ 0 & -1 \end{bmatrix}\, .
\end{equation}
Let $\V \, =\, \{\bi_1,\dots,\bi_{|\V|}\}$ be any enumeration of the vertices in order to specify the spin matrices at the various sites.
The choice of enumeration is immaterial since a re-ordering will just result in a unitarily equivalent representation.
For each $n$, the spin matrices at $\bi_n$ are
$S^{x}_{\bi_n}$, $S^y_{\bi_n}$ and $S^z_{\bi_n}$, where
$S^{x}_{\bi_n}\, =\, ({I}_{\C^2})^{\otimes (n-1)} \otimes S^{x} \otimes ({I}_{\C^2})^{\otimes (|\V|-n)}$,
with similar formulas for $S^y_{\bi_n}$ and $S^z_{\bi_n}$.

There is a real parameter of the model $\Delta \in \R$, called the anisotropy parameter.
With this the XXZ Hamiltonian is a self adjoint operator $H_{\G,\Delta} : \Hil_\V \to \Hil_\V$, defined as
\begin{equation}
\label{eq:PFready}
H_{\G,\Delta}\, =\, -\sum_{\{\bi,\bj\} \in \Ed} (S^x_{\bi} S^x_{\bj} + S^y_{\bi} S^y_{\bj} + \Delta \cdot S^z_{\bi} S^z_{\bj})\, .
\end{equation}
$H_{\G,\Delta}$ is ferromagnetic for $\Delta>0$ and antiferromagnetic for $\Delta<0$.

In this paper, we will restrict to the special case that $\G$ is bipartite: $\exists A \subseteq \V$ such that every edge in $\Ed$ can be written as $\{\bi,\bj\}$ with $\bi \in A$ and $\bj \in \V \setminus A$.
In this case, and for our graphical representations below, let us recall how $H_{\G,\Delta}$ transforms under the unitary $U_A = \prod_{\bi \in A} (2S_{\bi}^z)$--
$$
U_A H_{\G,\Delta} U_A^*\, =\, \sum_{\{\bi,\bj\} \in \Ed} (S^x_{\bi} S^x_{\bj} + S^y_{\bi} S^y_{\bj} - \Delta \cdot S^z_{\bi} S^z_{\bj})\, .
$$
We define the usual thermodynamic quantities: the partition function 
$$
Z_{\G,\Delta}(\beta)\, :=\, \tr(e^{-\beta H_{\G,\Delta}})\, ,
$$
and the equilibrium state
$$
\langle X \rangle_{\G,\Delta,\beta}\, :=\, \frac{\tr(Xe^{-\beta H_{\G,\Delta}})}{Z_{\G,\Delta}(\beta)}\, .
$$

For any $N \in \N$, let $\B_N$ denote the box
\begin{equation}
\label{eq:BNdef}
\B_N\, =\, \{\bi = (i_1,\dots,i_d)\in \Z^d\, :\, -\textstyle{\frac{1}{2}} N < i_1,\dots,i_d \leq \textstyle{\frac{1}{2}}N\}\, =\, \{-\lceil \half N\rceil+1,\dots,\lfloor \half N \rfloor\}^d\, ,
\end{equation}
where $\lceil x \rceil = \min\{k \in \Z\, :\, k\geq x\}$, $\lfloor x \rfloor = \max\{ k \in \Z\, :\, k \leq x\}$.  The parameter $d \in \N$ is the dimension of the underlying box.  Our main line of argument holds for all $d\in \N$ and we will usually leave the dependence on $d$ implicit, in order to simplify the notation.

Given $N \in \N$, let $\mathbb{T}_N$ denote the graph $\G = (\V,\Ed)$ such that $\V = \B_N$ (the box of sidelength $N$) and
\begin{equation}
\label{eq:TNdef}
\Ed\, =\, \Ed_{\mathbb{T}_N}\,
=\, \{\{\bi,\bj\}\, :\, \bi,\bj \in \V\, ,\ \bj - \bi \in \{\be_1,\dots,\be_d,-(N-1)\be_1,\dots,-(N-1)\be_d\}\}\, ,
\end{equation}
where $\be_1,\dots,\be_d$ are the usual canonical basis vectors in $\Z^d$.
This is the discrete torus because of the periodic boundary conditions.
Frequently we will abuse notation by also writing $\T_N$ for the vertex set. In particular, when we write the cardinality $|\T_N|$ this will denote the cardinality of the vertex set, which is $N^d$.  We will always restrict attention to $N$ even, so that $\mathbb{T}_N$ is bipartite.
We write $H_{N,\Delta}$, $Z_{N,\Delta}(\beta)$ and $\langle X \rangle_{N,\Delta,\beta}$ in place of $H_{\mathbb{T}_N,\Delta}$, $Z_{\mathbb{T}_N,\Delta}(\beta)$ and $\langle X \rangle_{\mathbb{T}_N,\Delta,\beta}$.

\subsection{Emptiness Formation Probability}
Recall that the eigenvalues of the spin matrix $S^z_{\bi}$ are $\pm\frac{1}{2}$ so that the two operators $(\frac{1}{2} \pm S^z_{\bi})$ are the projections onto the eigenspaces associated
with the eigenvalues $\pm\frac{1}{2}$.
As long as $N\geq L$, we may view $\B_L$ as a subset of the graph $\T_N$ (whose vertex set is $\B_N$).
We define the projection operator
$$
\mathbf{Q}_{L}\, =\, \prod_{\bi\in \B_L} \left[\frac{1}{2} + S_{\bi}^z\right]\, .
$$
The range of $\mathbf{Q}_{L}$ is the subspace spanned by all spin states having all spins up on the sub-box $\B_L$.
The expectation of $\mathbf{Q}_L$ in the ground state of the XXZ model is called the \textbf{emptiness formation
probability} in the physics literature.

\begin{theorem}
\label{thm:Main1}
Suppose the dimension $d$ is fixed in $\{1,2,\dots\}$.
For each $\Delta \in [-1,1)$, there are
constants $c_1, C_1 \in (0,\infty)$ such that, whenever $L^d \leq N^d/2$
\begin{equation}
\label{E:Ld2}
C_1\exp\left(-c_1 L^{d} \min(L, \beta)\right)\, \leq\, \left \langle \mathbf{Q}_{L} \right \rangle_{N,\Delta,\beta}\, ,
\end{equation}
while if $\Delta\leq 0$ there are constants $c_2,C_2 \in (0,\infty)$ such that,
\begin{equation}
\label{E:Ud2}
\left \langle \mathbf{Q}_{L} \right \rangle_{N,\Delta,\beta}\, \leq\, C_2 \exp\left(-c_2 L^{d} \min(L, \beta)\right)\, .
\end{equation}
\end{theorem}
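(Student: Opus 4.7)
The approach rests on two ingredients: a loop (space-time) representation for the partition function in the style of T\'oth, Aizenman--Nachtergaele, and Ueltschi; and reflection positivity of the sublattice-rotated Hamiltonian $\widetilde H := U_A H_{N,\Delta} U_A^*$ when $\Delta \leq 0$. In the representation, both $Z_{N,\Delta}(\beta)$ and $\tr(\mathbf{Q}_L e^{-\beta H_{N,\Delta}})$ are expressed as weighted sums over random loop configurations $\omega$ on $\T_N \times [0,\beta)$; after the $2^{\#\text{loops}(\omega)}$ factor is absorbed into a probability measure $\mu$, the key identity is
\begin{equation*}
\langle \mathbf{Q}_L \rangle_{N,\Delta,\beta} \;=\; \mathbb{E}_\mu\!\left[2^{-N_L(\omega)}\right],
\end{equation*}
where $N_L(\omega)$ counts the loops of $\omega$ that visit the slice $\B_L \times \{0\}$. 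Because each of the $L^d$ sites of $\B_L \times \{0\}$ lies on exactly one loop, one has the deterministic bound $N_L(\omega) \leq L^d$.

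For the lower bound \eqref{E:Ld2}, the inequality $N_L \leq L^d$ immediately gives $\langle \mathbf{Q}_L\rangle \geq 2^{-L^d}$, which dominates $C_1 e^{-c_1 L^d \min(L,\beta)}$ for suitable constants whenever $\min(L,\beta)$ is bounded away from $0$; the remaining small-$\beta$ regime is treated directly by a Duhamel expansion of $e^{-\beta H_{N,\Delta}}$ around the infinite-temperature value $\tr(\mathbf{Q}_L)/2^{|\T_N|} = 2^{-L^d}$. Since this step uses only the loop representation and not reflection positivity, it extends to the full range $\Delta \in [-1,1)$.

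For the upper bound \eqref{E:Ud2}, restricted to $\Delta \leq 0$, note that $\widetilde H = \sum_{\{\bi,\bj\} \in \Ed_{\T_N}}\bigl(S^x_{\bi} S^x_{\bj} + S^y_{\bi} S^y_{\bj} - \Delta\, S^z_{\bi} S^z_{\bj}\bigr)$ has all three coupling coefficients non-negative, which is the sign pattern required for reflection positivity across lattice planes of $\T_N$. Since $U_A \mathbf{Q}_L U_A^*$ is a product of single-site projections of the form $(\tfrac12 \pm S^z_{\bi})$, the quantum chessboard estimate (Fr\"ohlich--Israel--Lieb--Simon type) yields
\begin{equation*}
\langle \mathbf{Q}_L\rangle \;=\; \langle U_A \mathbf{Q}_L U_A^*\rangle_{\widetilde H} \;\leq\; \bigl\langle \mathbf{Q}_{\T_N}\bigr\rangle_{\widetilde H}^{L^d/N^d},
\end{equation*}
the spin-flip symmetry of $\widetilde H$ having been used to merge the $(\tfrac12+S^z)$- and $(\tfrac12-S^z)$-tiles into a single ``all-up'' tile on $\T_N$. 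The right-hand side is explicit: $\mathbf{Q}_{\T_N} = |\Omega\rangle\langle\Omega|$ with $|\Omega\rangle$ an eigenvector of $\widetilde H$ of eigenvalue $E_\Omega = -\Delta\, dN^d/4$, while a dimer (singlet) trial state built from a perfect matching of $\T_N$ (available since $N$ is even) gives $E_0 \leq (\Delta-2)N^d/8$. A direct computation then yields $E_\Omega - E_0 \geq N^d/4$ uniformly for $\Delta \in [-1,0]$, so that $\langle\mathbf{Q}_{\T_N}\rangle_{\widetilde H} \leq e^{-\beta (E_\Omega - E_0)} \leq e^{-\beta N^d/4}$, and the chessboard bound produces $\langle \mathbf{Q}_L\rangle \leq e^{-\beta L^d/4} \leq \exp(-\tfrac14 L^d \min(L,\beta))$ (with equality when $\beta\leq L$, and using $\beta\geq L = \min(L,\beta)$ otherwise).

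The principal technical obstacle is the first step: constructing a single loop representation that covers the full range $\Delta \in [-1,1)$ and in which the insertion of $\mathbf{Q}_L$ at imaginary time $0$ translates cleanly into the constraint that fixes the $\pm 1$ labels of the loops passing through $\B_L\times\{0\}$. T\'oth's ``crosses only'' representation is natural at $\Delta = 0$, Aizenman--Nachtergaele's ``double bars only'' representation at the isotropic point, and Ueltschi's mixed Poissonian scheme interpolates; I would model my construction on Ueltschi's, taking particular care with how the time-$0$ projection propagates along the loops. Verifying that the chessboard estimate applies to the mixed-coefficient $\widetilde H$ under reflections through both site-bisecting and bond-bisecting hyperplanes of $\T_N$ (where the evenness of $N$ is essential) is a secondary and essentially standard step.
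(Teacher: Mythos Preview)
Both halves of your argument contain fatal errors.

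\textbf{Lower bound.} The identity $\langle \mathbf{Q}_L\rangle = \E_\mu[2^{-N_L(\omega)}]$ fails once cul-de-sac marks are present, and in Ueltschi's representation they occur with positive rate for every $\Delta\in[-1,1)$. Along a loop the spin label is fixed by one binary choice, but it is not constant: it flips each time the loop traverses a cul-de-sac mark. Hence a loop that meets $\B_L\times\{0\}$ at several points may \emph{force} some of those labels to be opposite, so the event ``all $+1$ on $\B_L\times\{0\}$'' can be incompatible with $\omega$ and contributes $0$, not $2^{-1}$ per loop. The correct formula carries an extra indicator $\mathbf{1}\{\omega\text{ compatible with }\mathbf 1_L\}$, and it is precisely the smallness of this indicator that drives the $e^{-cL^{d+1}}$ decay. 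Your conclusion $\langle\mathbf{Q}_L\rangle\ge 2^{-L^d}$ in fact contradicts \eqref{E:Ud2}: for $\Delta\le 0$ and $\beta\ge L$ it would force $2^{-L^d}\le C_2 e^{-c_2 L^{d+1}}$, impossible for large $L$. The paper's lower bound instead restricts to an explicit $\omega$-event (no marks in a space--time window of volume $\sim L^{d+1}$, then a prescribed ``dipole'' pattern of cul-de-sac marks that \emph{guarantees} compatibility), paying a probability cost $e^{-cL^{d+1}}$ and using that $|\LL_\beta(\omega)|$ changes by at most one per mark added or removed.

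\textbf{Upper bound.} Reflection positivity for the XXZ model with $\Delta\le 0$ requires a unitary conjugation under which the reflection sends $S^z$ to $-S^z$; consequently chessboard dissemination of $\mathbf{Q}_L$ produces the projector $\widehat{\mathbf{Q}}_{N,L}$ onto the alternating-block configuration $\tau^{(L,N)}$, \emph{not} $\mathbf{Q}_{\T_N}=|\Omega\rangle\langle\Omega|$. Global spin-flip symmetry cannot convert one into the other, since it flips all sites simultaneously rather than block by block. This distinction is decisive: $|\Omega\rangle$ is an eigenvector of $H_{N,\Delta}$ while $\Psi_N(\tau^{(L,N)})$ is not, so your one-line eigenvalue bound is unavailable for $\langle\widehat{\mathbf{Q}}_{N,L}\rangle$. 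Indeed your final inequality $\langle\mathbf{Q}_L\rangle\le e^{-\beta L^d/4}$ tends to $0$ as $\beta\to\infty$ at fixed $N,L$, contradicting \eqref{E:Ld2} and the strict positivity of the ground-state EFP. Bounding $\langle\widehat{\mathbf{Q}}_{N,L}\rangle$ is where the real work lies: the paper disseminates a second time in imaginary time via the generalized H\"older inequality, rewrites the resulting diagonal matrix element by a Feynman--Kac formula exposing an antiferromagnetic Ising potential, and then uses a Poisson large-deviation estimate to control how far $\sigma(t)$ can drift from $\tau^{(L,N)}$ over a time window $\delta T$ of order $L$. It is this competition between Ising energy cost and drift entropy that produces $\min(L,\beta)$ rather than $\beta$ in the exponent.
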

The lower bound of \cref{E:Ld2} will be proved in \Cref{sec:LBthm1} while the upper bound of \cref{E:Ud2} is the subject of \Cref{S:UB}.

When $d=1$ we may obtain extended results at zero temperature, 
but we must take account  a symmetry of the XXZ model.
For each $\Delta \in \R$, the Hamiltonian $H_{N,\Delta}$ commutes with the operator
$$
S^z_{\mathrm{tot}}\, =\, \sum_{\bi \in \B_N} S_{\bi}^z\, .
$$
The eigenvalues of this operator are $M \in \{-\frac{1}{2}\, N^d,\dots,+\frac{1}{2}\, N^d\}$.
Let $\mathbf{M}_M$ denote the orthogonal projection onto the eigenspace of $S^{z}_{\mathrm{tot}}$ corresponding to eigenvalue $M$.

We recall that $N$ is even.
\begin{theorem}
\label{thm:1d}
Suppose the dimension is $d=1$. For each $\Delta < 1$, there are constants $c_i,C_i \in (0,\infty), \; i=1,2$ such that, whenever $L\leq N/2$
\begin{equation}
\label{E:Ld3}
C_1\exp\left(-c_1 L^2\right)\, \leq\, \min_{M \in \{-\frac{1}{2}N,\dots,\frac{1}{2}N\}}\, \lim_{\beta \to \infty}
\frac{\left \langle \mathbf{M}_{M} \cdot \mathbf{Q}_{L} \right \rangle_{N,\Delta,\beta}}
{\left \langle \mathbf{M}_M \right \rangle_{N,\Delta,\beta}}\, \text{ and }\lim_{\beta \to \infty} 
\frac{\left \langle \mathbf{M}_{0} \cdot \mathbf{Q}_{L} \right \rangle_{N,\Delta,\beta}}
{\left \langle \mathbf{M}_0 \right \rangle_{N,\Delta,\beta}}\, \leq\, C_2 \exp\left(-c_2 L ^2\right)\, .
\end{equation}
\end{theorem}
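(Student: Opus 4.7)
The plan is to derive both inequalities of \cref{E:Ld3} by upgrading \cref{thm:Main1} in two directions: first, the estimates must be made sector-by-sector in $S^z_{\mathrm{tot}}$ so that the ratios $\langle\mathbf{M}_M\mathbf{Q}_L\rangle/\langle\mathbf{M}_M\rangle$ can be controlled, and second, in $d=1$ at zero temperature the range of $\Delta$ must be enlarged to cover all of $(-\infty,1)$.

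For the lower bound, the $d=1$ specialization of \cref{E:Ld2} already yields $\langle\mathbf{Q}_L\rangle_{N,\Delta,\beta}\ge C\exp(-cL^2)$ as $\beta\to\infty$ for $\Delta\in[-1,1)$. Because both $\mathbf{Q}_L$ and $e^{-\beta H_{N,\Delta}}$ commute with $S^z_{\mathrm{tot}}$, the partition function and numerator split into sector traces, and the loop / path-integral representation underlying \Cref{sec:LBthm1} is compatible with this grading: the total magnetization is a topological invariant of a loop configuration. The trial ensemble that realizes the $e^{-cL^2}$ factor is built by a \emph{local} surgery in a neighborhood of $\B_L$, and since $\mathbf{Q}_L$ already pins the magnetization inside $\B_L$ to $L/2$, this surgery is consistent with any admissible global $M$ at no extra energetic cost. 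This gives the sector-resolved inequality
\begin{equation*}
\tr(\mathbf{M}_M\,\mathbf{Q}_L\,e^{-\beta H_{N,\Delta}})\ \ge\ C\,e^{-c L^2}\,\tr(\mathbf{M}_M\,e^{-\beta H_{N,\Delta}}),
\end{equation*}
uniformly in admissible $M$ for $\Delta\in[-1,1)$. Extension to $\Delta<-1$ I expect to handle by a perturbative path-integral expansion in the transverse coupling $(S^xS^x+S^yS^y)$ around the Ising limit: the leading contribution to $\langle\psi_M|\mathbf{Q}_L|\psi_M\rangle$ comes from imaginary-time paths that create an up-block inside $\B_L$ and let it persist for time of order $L$ (since shorter persistence times are forbidden by the Ising energy cost), giving a contribution $|\Delta|^{-O(L^2)}$ consistent with the claimed rate.

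For the upper bound, \cref{E:Ud2} already covers $\Delta\le 0$; the case $\Delta\in(0,1)$ at $\beta=\infty$ and $M=0$ is new. My plan is to apply reflection positivity to the sublattice-rotated Hamiltonian $U_AH_{N,\Delta}U_A^*$ through bonds of $\T_N$, which remains available for $\Delta$ in an open interval of positive values, followed by a chessboard estimate to majorize $\langle\mathbf{Q}_L\rangle_{N,\Delta,\beta}$ at $M=0$ by a power of the expectation of the periodic ``all-up'' tiling of $\T_N$. Since the fully saturated state is no longer the ground state for $\Delta<1$, the tiled expectation decays at a rate proportional to $\beta(1-\Delta)$, and optimizing the tile length produces the $\exp(-c_2L^2)$ bound as $\beta\to\infty$.

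The main obstacle I anticipate is reconciling the chessboard estimate with the fixed-magnetization constraint in the ground-state limit: the reflections producing the chessboard inequality must respect $S^z_{\mathrm{tot}}$-conservation, and the periodic ``all-up'' tiling that appears on the majorizing side must be interpretable as a configuration consistent with $M=0$. I expect to resolve this by introducing a Lagrange multiplier for $S^z_{\mathrm{tot}}$ at finite $\beta$, applying reflection positivity to the modified Hamiltonian, and then recovering the $M=0$ conditional expectation via a Legendre transform in the multiplier as $\beta\to\infty$.
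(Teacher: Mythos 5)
Your proposal takes a genuinely different route from the paper, and it has several significant gaps that prevent it from being a proof.

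The paper's entire proof of this theorem rests on the six-vertex-model correspondence, which you do not use. In \Cref{sec:1d6vtx} the authors use Sutherland's result (\Cref{prop:Sutherland}) together with Lieb's Perron--Frobenius analysis (\Cref{prop:Lieb}) to show that the ground-state EFP in the sector $\Ran(\mathbf{M}_M)$ is exactly a $T\to\infty$ limit of a ratio of classical six-vertex partition functions conditioned on the event $\mathcal{M}(N,M)$ (\Cref{cor:Q1d}). The upper bound is then proved in \Cref{sec:UB6} by classical reflection positivity of the six-vertex model (quoting \cite{FILS}) plus an entropy argument based on the alternation property of sinks and sources (\Cref{lem:alternate}, \Cref{cor:density}, \Cref{cor:DensityPlus}). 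The lower bound in \Cref{sec:LBthm2} uses the osculating-path picture, local plaquette flips to the highest opc, and the blockade lemma (\Cref{lem:blockades}) to show that a typical boundary configuration on a tall rectangle forces an $L$-long interval of aligned vertical edges. The entire point, stated explicitly in the introduction, is that the stronger $d=1$ result is available precisely because the ground state maps to the six-vertex model; your plan to argue directly in the quantum/loop picture for the full range $\Delta<1$ and at $\beta=\infty$ bypasses this and therefore has to re-derive from scratch features that are free in the classical formulation.

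Concretely, I see three gaps. First, your lower-bound argument asserts that the surgery underlying the proof of \cref{E:Ld2} is ``consistent with any admissible global $M$ at no extra energetic cost,'' but this is not established: the event $G$ in \Cref{sec:LBthm1} pins loop structure inside the space-time window $\WW_L$, and the loops escaping $\WW_L$ carry the magnetization quantum numbers, so the conditional law of the label configuration given the loop structure interacts with the constraint $S^z_{\mathrm{tot}}=M$. You would have to prove a uniform-in-$M$ version of the surgery estimate, which is not automatic. Second, your treatment of $\Delta<-1$ by a ``perturbative expansion around the Ising limit'' is a sketch rather than a proof; in particular it is not clear that the persistence-time argument you describe yields a rate $\exp(-cL^2)$ with constants uniform in $\beta\to\infty$, and the paper itself explicitly declines to prove the lower bound of \cref{E:Ld2} for $\Delta<-1$ by loop-representation methods, resorting instead to the six-vertex route. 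Third, for the upper bound at $\Delta\in(0,1)$ and $M=0$, the sublattice rotation $U_A$ does not put $H_{N,\Delta}$ into the form \cref{eq:Hform} required by \Cref{thm:RPex}: that reduction (in \Cref{S:RP}) works only for $\Delta\le 0$ because it needs the $z$-coupling to appear with the nonpositive sign after the transverse rotation $U$. Moreover, the Lagrange-multiplier/Legendre-transform device you propose to restore the $M=0$ constraint after reflection is not compatible with the chessboard dissemination, since adding a field $hS^z_{\mathrm{tot}}$ breaks the spin-flip symmetry that the reflections in the $+$/$-$ pattern rely on; the paper sidesteps all of this by conditioning on $\mathcal{M}(N,0)$ directly inside the six-vertex measure, where reflection positivity respects that constraint (see the discussion at the end of \Cref{sec:6vtx}).
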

The reason we have a stronger result in one dimension is that there its groundstate maps to the six vertex model, a point described in \cref{sec:1d6vtx}.
The upper bound of \cref{E:Ld3} will be proved in \cref{sec:UB6} while the lower bound of \cref{E:Ld3} will be proved in \cref{sec:LBthm2}.

Finally, at positive temperatures, if we take sufficiently large $L$ then the lower bound holds with no restrictions on $\Delta$.
\begin{theorem}
For any fixed $d \in \N$ and any $\Delta \in \R$, there are constants $c$ and $C$ such that whenever $0\leq \beta\leq 4L$
$$
C\exp\left(-c L^{d} \beta\right)\, \leq\, \left \langle \mathbf{Q}_{L} \right \rangle_{N,\Delta,\beta}\, .
$$
\end{theorem}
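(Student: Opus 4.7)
The plan is to bound the numerator $\tr(\mathbf{Q}_{L}\,e^{-\beta H_{N,\Delta}})$ from below and the denominator $Z_{N,\Delta}(\beta)$ from above by elementary operator inequalities, after splitting $H_{N,\Delta}=H_{\mathrm{in}}+H_{\mathrm{bd}}+H_{\mathrm{out}}$ according as the edge lies entirely in $\B_{L}$, crosses the boundary of $\B_{L}$, or lies entirely outside $\B_{L}$. Using $\mathbf{Q}_{L}^{2}=\mathbf{Q}_{L}$ and trace cyclicity,
\[
\tr\bigl(\mathbf{Q}_{L}\,e^{-\beta H_{N,\Delta}}\bigr)=\sum_{k}\langle\phi_{k}|e^{-\beta H_{N,\Delta}}|\phi_{k}\rangle
\]
for any orthonormal basis $\{\phi_{k}\}$ of $\Ran\,\mathbf{Q}_{L}$, and I would take $\phi_{k}=\ket{\uparrow_{\B_{L}}}\otimes|\psi_{k}\rangle$ for $\{\psi_{k}\}$ an orthonormal basis of $\Hil_{\B_{N}\setminus\B_{L}}$ specified below.

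Next I would identify the quenched Hamiltonian $K:=\langle\uparrow_{\B_{L}}|H_{N,\Delta}|\uparrow_{\B_{L}}\rangle$, an operator on $\Hil_{\B_{N}\setminus\B_{L}}$. Using $\langle\uparrow|S^{x,y}|\uparrow\rangle=0$ and $\langle\uparrow|S^{z}|\uparrow\rangle=\tfrac{1}{2}$, direct calculation gives
\[
K=-\tfrac{\Delta}{4}|\Ed_{\mathrm{in}}|+K_{\mathrm{bd}}+H_{\mathrm{out}},\qquad K_{\mathrm{bd}}=-\tfrac{\Delta}{2}\sum_{j\in\B_{N}\setminus\B_{L}}d_{j}\,S^{z}_{j},
\]
where $\Ed_{\mathrm{in}}$ is the edge set of $\B_{L}$ and $d_{j}$ counts edges from $j$ into $\B_{L}$. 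By the Peierls--Bogoliubov inequality (Jensen applied to the spectral decomposition of $H_{N,\Delta}$), each summand satisfies $\langle\phi_{k}|e^{-\beta H_{N,\Delta}}|\phi_{k}\rangle\geq\exp(-\beta\langle\phi_{k}|H_{N,\Delta}|\phi_{k}\rangle)$; choosing $\{\psi_{k}\}$ to diagonalise $K$ and summing yields
\[
\tr\bigl(\mathbf{Q}_{L}\,e^{-\beta H_{N,\Delta}}\bigr)\geq e^{\beta\Delta|\Ed_{\mathrm{in}}|/4}\,\tr_{\B_{N}\setminus\B_{L}}(e^{-\beta K}).
\]

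For the denominator I would use the Weyl perturbation bound $\lambda_{k}(A+B)\geq\lambda_{k}(A)-\|B\|$. With $A=H_{\mathrm{out}}$ and $B=H_{\mathrm{in}}+H_{\mathrm{bd}}$ this gives
\[
Z_{N,\Delta}(\beta)\leq 2^{L^{d}}\,e^{\beta\|H_{\mathrm{in}}+H_{\mathrm{bd}}\|}\,\tr_{\B_{N}\setminus\B_{L}}(e^{-\beta H_{\mathrm{out}}}),
\]
the factor $2^{L^{d}}$ arising because $H_{\mathrm{out}}$ is trivial on $\Hil_{\B_{L}}$. Applied in the opposite direction to $K=H_{\mathrm{out}}+K_{\mathrm{bd}}$ it gives $\tr_{\B_{N}\setminus\B_{L}}(e^{-\beta K})\geq e^{-\beta\|K_{\mathrm{bd}}\|}\tr_{\B_{N}\setminus\B_{L}}(e^{-\beta H_{\mathrm{out}}})$. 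Combining the two,
\[
\langle\mathbf{Q}_{L}\rangle_{N,\Delta,\beta}\geq 2^{-L^{d}}\exp\!\left(\beta\left[\tfrac{\Delta}{4}|\Ed_{\mathrm{in}}|-\|K_{\mathrm{bd}}\|-\|H_{\mathrm{in}}+H_{\mathrm{bd}}\|\right]\right).
\]
Since $|\Ed_{\mathrm{in}}|\leq dL^{d}$, $\|H_{\mathrm{in}}\|=O(L^{d})$, and both boundary norms are $O(L^{d-1})$ with implicit constants depending only on $d$ and $\Delta$, the bracket is at least $-c(d,\Delta)L^{d}$, giving an overall bound $\langle\mathbf{Q}_{L}\rangle\geq\exp(-L^{d}(\log 2+c\beta))$.

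The main obstacle is the dimensional factor $2^{-L^{d}}$ that enters the denominator bound. Since $\langle\mathbf{Q}_{L}\rangle_{N,\Delta,0}=2^{-L^{d}}$ exactly, no bound of the stated form $C\exp(-cL^{d}\beta)$ with $C$ strictly independent of $L$ can hold at $\beta=0$; what the argument really proves is $\langle\mathbf{Q}_{L}\rangle\geq\exp(-cL^{d}(\beta+1))$, which reduces to the claimed form once $\beta$ is bounded below. This is precisely the regime $\beta\asymp L$ in which the theorem is invoked to interpolate with the zero-temperature \Cref{thm:Main1,thm:1d}.
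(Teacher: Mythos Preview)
Your argument is correct and takes a genuinely different, more elementary route than the paper. The paper first controls the ratio $Z(1)/Z(0)$ of partition functions with and without the boundary interaction by differentiating in an interpolation parameter, then invokes the Feynman--Kac graphical representation of \Cref{lem:FK} (restricting to the event $F_L$ that no Poisson arrivals cross $\partial\B_L$) to reduce to the emptiness formation probability on the isolated box $\B_L$. Your approach bypasses the graphical representation entirely: Peierls--Bogoliubov for the numerator and the Weyl perturbation bound for the denominator are purely spectral and yield the comparison to $\tr_{\B_N\setminus\B_L}(e^{-\beta H_{\mathrm{out}}})$ directly. This is cleaner for the present purpose, though the paper's route is natural given that the graphical representation is its central tool elsewhere.

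Two small remarks. First, there is a harmless bookkeeping slip: having defined $K=-\tfrac{\Delta}{4}|\Ed_{\mathrm{in}}|+K_{\mathrm{bd}}+H_{\mathrm{out}}$, the displayed lower bound on the numerator should read either $\tr(\mathbf{Q}_L e^{-\beta H})\geq\tr(e^{-\beta K})$ or $\tr(\mathbf{Q}_L e^{-\beta H})\geq e^{\beta\Delta|\Ed_{\mathrm{in}}|/4}\tr(e^{-\beta(K_{\mathrm{bd}}+H_{\mathrm{out}})})$; as written the constant is double-counted, and the later line ``$K=H_{\mathrm{out}}+K_{\mathrm{bd}}$'' drops it again. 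Your final displayed bound is nonetheless correct. Second, your observation about the factor $2^{-L^{d}}$ is well taken: since $\langle\mathbf{Q}_L\rangle_{N,\Delta,0}=2^{-L^d}$ exactly, the literal statement with $C>0$ independent of $L$ cannot hold at $\beta=0$, and what both your argument and the paper's actually establish is $\langle\mathbf{Q}_L\rangle\geq\exp(-cL^d(\beta+1))$. This is the form needed for the applications (matching with \Cref{thm:Main1} in the regime $\beta\asymp L$), and the discrepancy is a wording issue in the theorem rather than a gap in either proof.
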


\subsection{Background and Motivation}
The background motivation for our investigation stems from a few sources. For the rest of this section, we assume $d=1$.
The name "emptiness formation probability" ($\EFP_L$) seems to come from the computation of (say) density-density correlation functions in the $1$-dimensional hardcore Bose gas. 
The generating functional for this and other correlation functions is $\langle e^{\alpha Q(x)} \rangle$ where 
$$
Q(x)\, =\, \int_0^x \textrm{d} y \Psi^{\dag}(y) \Psi(y)\, .
$$ 
Here $ \Psi^{\dag}(y), \Psi(y)$ are field operators for the Bose gas and $\alpha \in \C$. As $\Re \alpha \rightarrow - \infty$, $e^{\alpha Q(x)}$ converges (weakly say) to the projection operator onto the subspace with no particles present in $[0, x]$. It turns out that $\langle e^{\alpha Q(x)} \rangle$ is more easily computed using Bethe ansatz techniques than various other correlation functions \cite{Korepin-Book}.
More recently, it was argued that $\EFP_L$ is of primary importance for the ground state correlation structure of $XXZ$ chains \cite{BK}.  In fact computing exactly, in the thermodynamic limit, $\EFP_L$ for all $L$ would allow to compute many other correlation functions as well. This is prohibitively complicated once $L >5$, even by the standards of the Bethe ansatz , and in \cite{KLNS} the authors focused instead on the asymptotic behavior of $\EFP_L$ in $L$. 

Further (nonrigorous) work appears in \cite{KMST,Stephan} to name just a few articles.
In fact, there \textit{are} at least two cases where the asymptotic computations may be made rigorous: when $\Delta \in \{0, 1/2\}$, \cite{KMST0,STN}. The latter is special as it corresponds to the uniform measure on $6$-vertex configurations (a short explanation is given below). The former is special because, via the Jordan-Wigner transformation, it is the equivalent to the model of free fermions. One can therefore write all eigenvectors of $-H_{\Delta=0}$ as Slater determinants of $1$-particle eigenfunctions of the discrete Laplace operator on a one
dimensional torus. In particular the groundstate is explicit.

To make a connection with currently fashionable mathematical phyics let us restate the quantum $XY$ model (that is \cref{eq:PFready} with $\Delta=0$) correspondence to free fermions in a probabilistic language. Consider for a moment a collection of $k$ independent continuous time simple random walkers on a discrete circle of $N$ vertices, conditioned not to collide. Then via the Karlin-Mcgregor formula \cite{KarlinMcGregor}, the quasi-stationary measure for this process is exactly the square of the amplitude of the groundstate wave function of the free fermion model restricted to its $k$ particle sector.
In this language, computing the asymptotics of $\EFP_L$ translates to computing the large deviation rate of decay for large gaps in a "Dyson" random walk, with density $\frac 12$ of particles. From this perspective, the generalization from $1$ to $d$ dimensions is quite natural and forms the basic heuristic explanation of the $e^{-cL^{d+1}}$ rate of decay. 

We originally learned of this problem from O. Zeitouni \cite{Zeitouni}, whose interest was piqued by the resemblance between the multiple integral representation of XXZ correlation functions arising from the Bethe ansatz and certain computations from random matrix theory.
A third motivation for our work, which should be contrasted with the (non-rigorous) Bethe ansatz methods mentioned above is to give an explanation for these very strong rates of decay which is robust to perturbations of the model and makes intuitive sense physically. As is explained below, our proof for the second half of of \Cref{thm:Main1} achieves this goal; for example it is possible changing the range of interaction in $-H_{\Delta}$ and obtain the same results, as long as the interaction remains reflection positive and the interaction strength decays exponentially with distance.

Finally, we wish to draw the reader's attention to related work of Gallavotti, Lebowitz and Mastropietro \cite{GalLebMas}.
In that paper they proved a large deviation principle for quantum gases, either Bosonic or Fermionic, under
the condition of sufficiently low densities.
More precisely, they exhibit the usual large deviation type of behavior probability  to find density $\rho$ in a prescribed large box of volume $L^d$: the probability decays as $\exp(-\beta \Delta F(\beta,\rho,\rho_0)L^d)$.  The function $\Delta F(\beta,\rho,\rho_0)$ represents the change in free energy density at inverse-temperature $\beta$ between the prescribed density $\rho$ and the thermodynamic density $\rho_0$.

\subsection*{Acknowledgements}
We thank Dmitry Ioffe for discussions at various points in the work and an anonymous referee for suggestions which improved the presentation. NC is supported by ISF grant number 915/12.

\section{The Lower Bound \cref{E:Ld2} in \Cref{thm:Main1}.}
\label{sec:LBthm1}

\subsection{The Aizenman-Nachtergaele-Toth-Ueltschi representation.}
\label{S:graphical}

The origins of this graphical representation we present below are \cite{AN, Toth}, but the synthesis of ferromagnetic and antiferromagnetic loops into the same representation only appeared in the recent paper \cite{U}, see Section $3$.
The construction works for any bipartite finite graph $\G = (\V,\Ed)$.
For inverse-temperature $\beta \geq 0$, the underlying space for this construction is $\V_{\beta} = \V \times [-\frac{1}{2}\beta,\frac{1}{2}\beta]$.


For each edge $\{\bi,\bj\}$ in $\Ed$, 
we consider two independent Poisson point processes on $\R$,
which we label as random (nonnegative integer valued-) measures $\textd \nu_{\bi\bj}^F(\omega)$,
$\textd \nu_{\bi\bj}^{AF}(\omega)$ with respective rates $\frac{1}{2}(1-u),  \frac{1}{2}u$.  associated with the family of Poisson point processes $\bigotimes_{\{\bi,\bj\} \in E}[\textd \nu_{\bi\bj}^F(\omega) \otimes \textd \nu_{\bi\bj}^{AF}(\omega)]$. 
All these Poisson point processes are taken to be independent of one another.  Let $(\Omega, \mathcal F, \mathbb P_u(\cdot))$ be a probability space supporting these processes and let $\E_u[\cdot]$ denote the corresponding e{x}pectation.  We shall regard $\omega \in \Omega$ as represented a configuration of edges on $\V_{\beta}$ as follows.
If $t$ is an arrival time of $\textd \nu_{\bi\bj}^F(\omega)$ then we create
an \textbf{overpass edge},
while for an arrival time of $\textd \nu_{\bi\bj}^{AF}(\omega)$ then we create
a \textbf{cul-de-sac edge}.
An example of this is shown in Figure \ref{fig:ANTU1}
\begin{figure}[h]
\begin{tikzpicture}[xscale=0.85,yscale=0.65, very thick]
\foreach \x in {1,...,6}
\draw[thin] (\x,1) -- (\x,7);
\draw[<->] (0.75,0.5) -- (6.25,0.5);
\draw (3.5,0.5) node[below] {$G = \mathbb{T}_6$, $d=1$};
\draw[->] (-0.25,0.25) -- (-0.25,7.5) node[above] {$t$};
\draw[thick] (0,1) -- (-0.5,1) node[left] {$-1$};
\draw[thick] (0,7) -- (-0.5,7) node[left] {$1$};
\draw (-1.5,4) node[] {$\beta=2$};
\fill[white] (0.9,1.7) rectangle (2.1,1.9); \draw (1,1.7) -- (2,1.7); \draw (1,1.9) -- (2,1.9);
\fill[white] (1.9,4.7) rectangle (3.1,4.9); \draw (2,4.7) -- (3,4.7); \draw (2,4.9) -- (3,4.9);
\fill[white] (1.9,5.4) rectangle (3.1,5.6); \draw (2,5.4) -- (3,5.4); \draw (2,5.6) -- (3,5.6);
\fill[white] (2.9,4.1) rectangle (4.1,4.3); \draw (3,4.1) -- (4,4.1); \draw (3,4.3) -- (4,4.3);
\fill[white] (3.9,6.5) rectangle (5.1,6.7); \draw (4,6.5) -- (5,6.5); \draw (4,6.7) -- (5,6.7);
\fill[white] (5.9,6.6) rectangle (6.5,6.8); \draw[->] (6,6.6) -- (6.5,6.6); \draw[->] (6,6.8) -- (6.5,6.8);
\fill[white] (1.1,6.6) rectangle (0.5,6.8); \draw[->] (1,6.6) -- (0.5,6.6); \draw[->] (1,6.8) -- (0.5,6.8);
\fill[white] (0.9,2.9) rectangle (2.1,3.1); \draw (1,2.9) -- (2,3.1); \draw[line width=1mm,white] (1,3.1) -- (2,2.9); \draw (1,3.1) -- (2,2.9);
\fill[white] (0.9,3.6) rectangle (2.1,3.8); \draw (1,3.6) -- (2,3.8); \draw[line width=1mm,white] (1,3.8) -- (2,3.6); \draw (1,3.8) -- (2,3.6);
\fill[white] (1.9,2.4) rectangle (3.1,2.6); \draw (2,2.4) -- (3,2.6); \draw[line width=1mm,white] (2,2.6) -- (3,2.4); \draw (2,2.6) -- (3,2.4);
\fill[white] (1.9,3.2) rectangle (3.1,3.4); \draw (2,3.2) -- (3,3.4); \draw[line width=1mm,white] (2,3.4) -- (3,3.2); \draw (2,3.4) -- (3,3.2);
\fill[white] (1.9,6.5) rectangle (3.1,6.7); \draw (2,6.5) -- (3,6.7); \draw[line width=1mm,white] (2,6.7) -- (3,6.5); \draw (2,6.7) -- (3,6.5);
\fill[white] (3.9,3.4) rectangle (5.1,3.6); \draw (4,3.4) -- (5,3.6); \draw[line width=1mm,white] (4,3.6) -- (5,3.4); \draw (4,3.6) -- (5,3.4);
\fill[white] (3.9,6.0) rectangle (5.1,6.2); \draw (4,6.0) -- (5,6.2); \draw[line width=1mm,white] (4,6.2) -- (5,6.0); \draw (4,6.2) -- (5,6.0);
\fill[white] (4.9,4.7) rectangle (6.1,4.9); \draw (5,4.7) -- (6,4.9); \draw[line width=1mm,white] (5,4.9) -- (6,4.7); \draw (5,4.9) -- (6,4.7);
\draw[semitransparent, line width=1.5mm, gray, rounded corners] (1,1) -- (1,1.7) -- (2,1.7) -- (2,1)
(2,7) -- (2,6.7) -- (3,6.5) -- (3,5.6) -- (2,5.6) -- (2,6.5) -- (3,6.7) -- (3,7) (3,1) -- (3,2.4) -- (2,2.6) -- (2,2.9) -- (1,3.1) -- (1,3.6) -- (2,3.8) -- (2,4.7) -- (3,4.7) -- (3,4.3) -- (4,4.3) -- (4,6.0) -- (5,6.2) -- (5,6.5) -- (4,6.5) -- (4,6.2) -- (5,6.0) -- (5,4.9) -- (6,4.7) -- (6,1) (6,7) -- (6,6.8) -- (6.5,6.8) (0.5,6.8) -- (1,6.8) -- (1,7);
\draw(7.5,3.75) node[right] {\begin{minipage}{10cm}\small 
Overpass edges:\quad
\raisebox{0cm}{\begin{minipage}{1.25cm}
\begin{tikzpicture}[xscale=0.85,yscale=0.65, very thick]
\foreach \x in {1,2}
\draw[thin] (\x,2.5) -- (\x,3.5);
\fill[white] (0.9,2.9) rectangle (2.1,3.1); \draw (1,2.9) -- (2,3.1); \draw[line width=1mm,white] (1,3.1) -- (2,2.9); \draw (1,3.1) -- (2,2.9);
\end{tikzpicture}
\end{minipage}}
\\[0.5cm]
Cul-de-sac edges:\quad
\raisebox{0cm}{\begin{minipage}{1.25cm}
\begin{tikzpicture}[xscale=0.85,yscale=0.65, very thick]
\foreach \x in {1,2}
\draw[thin] (\x,1.3) -- (\x,2.3);
\fill[white] (0.9,1.7) rectangle (2.1,1.9); \draw (1,1.7) -- (2,1.7); \draw (1,1.9) -- (2,1.9);
\end{tikzpicture}
\end{minipage}}
\end{minipage}};
\end{tikzpicture}
\caption{
An instance of  labeled edges for the graph $G = \mathbb{T}_6$, when $d=1$, and a 
highlighted loop.
\label{fig:ANTU1}
}
\end{figure}
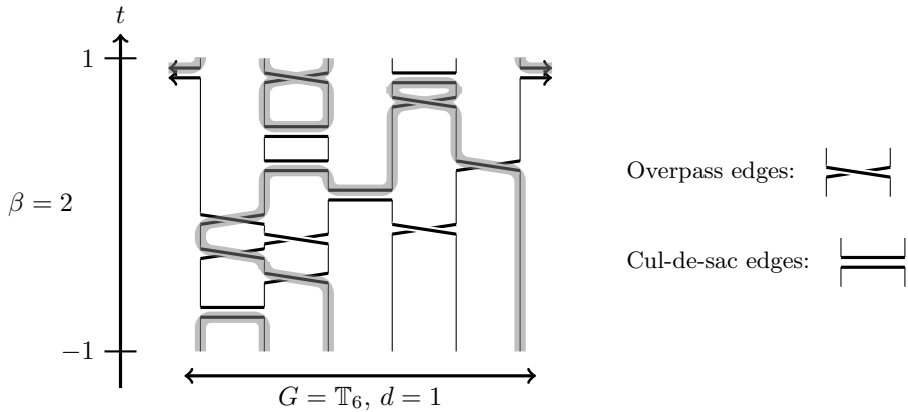

The nature of the edge-types is manifested in the rules assigning labels to $\V_{\beta}$, which we describe next.
Let $\Sigma_{\V}=\{-1, 1\}^{\V}$ and let $\Sigma_{\V,\beta}$ be the set of all piecewise constant functions 
$\sigma(\cdot) : [-\frac{1}{2}\beta,\frac{1}{2}\beta] \to \Sigma_{\V}$.

Let $\Sigma_{\V,\beta}(\omega)\subset \Sigma_{\V, \beta}$ denote the subset of functions which satisfy the following rules:
\begin{itemize}
\item
For any $\bi \in \V$, consider the union of the set of arrival times of all the Poisson processes
$\textd \nu_{\bi\bj}^F(\omega)$ 
and $\textd \nu_{\bi\bj}^{AF}(\omega)$ for all
$\bj$ such that $\{\bi,\bj\} \in E$. If the set of all these times is disjoint from 
$[t-\epsilon,t+\epsilon)$  then $\sigma_{\bi}(\cdot)$ is constant on this time interval.
\item If $t$ is an arrival time of $\textd \nu_{\bi\bj}^F(\omega)$, then 
$$
\sigma_{\bi}(t)\, =\, \sigma_{\bj}(t-)
\quad \text{ and }\quad
\sigma_{\bj}(t)\, =\, \sigma_{\bi}(t-)\,  .
$$
\item If $t$ is an arrival time of $\textd \nu_{\bi\bj}^{AF}(\omega)$, then
$$
\sigma_{\bi}(t-)\, =\, -\sigma_{\bj}(t-)
\quad \text{ and }\quad
\sigma_{\bi}(t)\, =\, -\sigma_{\bi}(t)\,  .
$$
\end{itemize}
Finally $\Sigma_{G,\beta}^{\mathrm{per}}(\omega)\subset \Sigma_{G,\beta}(\omega)$
consists of those labelings such that 
$\sigma_{\bi}(\frac{\beta}{2}-) = \sigma_{\bi}(-\frac{\beta}{2})$
for each $\bi \in V$, i.e. the periodic configurations in (imaginary) time.
We note that this is equivalent to identifying $t=\frac{\beta}{2}$ with $t=-\frac{\beta}{2}$.
In the periodic setting, the graph may be decomposed entirely into disjoint loops $\LL_{\beta}$. We have highlighted
one loop in $\LL_{\beta}$ in Figure \ref{fig:ANTU1}.
For each loop, once $\sigma_{\bi_0}(t_0)$ has been specified at a single space-time point $(\bi_0,t_0) \in \mathcal{V}_{\beta}$ on the loop, the rules
above prescribe it uniquely at each other space-time point on that loop.
Therefore $|\Sigma_{G,\beta}^{\mathrm{per}}(\omega)| = 2^{|\LL_{\beta}(\omega)|}$.

We will now specialize to $\G = \T_N$.
Suppose $\omega$ is a configuration of edges.
Given $L\leq N$ and given
$\tau \in \{-1, 1\}^{\mathbb{B}_L}$
let us define the event
\begin{equation}
\label{eq:EventDef}
E_{N,L,\beta}(\tau,\omega)\, =\, 
\{\sigma(\cdot) \in \Sigma_{N,\beta}^{\mathrm{per}}(\omega)\, :\, 
\sigma_{\bi}(0) =\tau_{\bi} \text{ for all } \bi \in \mathbb{B}_L\}.
\end{equation}
The reason we introduced the stochastic process described above is that it allows us to express various quantum spin system correlation functions in a 
more amenable probabilistic language. 

For $\C^2$, let $\psi^+$ and $\psi^-$ be the standard orthonormal basis, such that the spin
matrices in this basis have the form (\ref{eq:Pauli}).
Given any $\sigma \in \Sigma_{\mathcal{V}}$, define the Ising basis vector
\begin{equation}
\label{eq:PsiSigm}
\Psi_{\V}(\sigma)\, =\, \bigotimes_{\bi \in \mathcal{V}} \psi_{\bi}^{\sigma_\bi}\, .
\end{equation}
As usual, we write $\Psi_N(\sigma)$ as a short-hand for $\Psi_{\V}(\sigma)$ in the special case that $\V = \mathbb{T}_N$
(which will be used mainly in \S 3).

\begin{proposition}
\label{prop:Ueltschi}
For any $\s,\tau \in \Sigma_{\mathcal{V}}$,
\begin{equation}
\label{eq:Eform}
\big\langle \Psi_{\V}(\tau), e^{-\beta H_{\mathcal{G},\Delta}} \Psi_{\V}(\s) \big\rangle\, 
=\, e^{\beta |\mathcal{E}|/4}
\E_u\Bigg[\sum_{\sigma(\cdot) \in \Sigma_{\mathcal{V},\beta}(\omega)}
\mathbf{1}_{\{\sigma(-\beta/2) = \sigma\}} \mathbf{1}_{\{\sigma(\beta/2) = \tau\}}\Bigg]\, ,
\end{equation}
with the choice $u = (1+\Delta)/2$.
In particular, this means
\begin{align}
\label{EPar} & Z_{N,\Delta}(\beta)\, =\, e^{\beta |\mathcal{E}(\mathbb{T}_N)|/4} \E_u[2^{|\LL_\beta(\omega)|}], \\
\label{EEFP} & \EFP_L(N,\beta)\, =\, \frac{\E_u[|E_{N,L,\beta}(\mathbf{1}_L)|]}{ \E_u[2^{|\LL_{\beta}(\omega)|}]}\, ,
\end{align}
for $\mathbf{1}_L$ being the configuration with all $1$'s on $\Lambda_L$.
\end{proposition}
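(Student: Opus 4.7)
\emph{Plan.} I would prove \eqref{eq:Eform} by a Lie--Trotter expansion of $e^{-\beta H_{\G,\Delta}}$, interpreting each short-time factor as a marked Poisson point process on edges; \eqref{EPar} and \eqref{EEFP} then follow from \eqref{eq:Eform} by specializing the boundary spin configurations. The proof decomposes into three steps.

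\emph{Step 1 (edge decomposition).} Write $H_{\G,\Delta} = \sum_{e \in \Ed} H_e$ with $H_e = -(S^x_\bi S^x_\bj + S^y_\bi S^y_\bj + \Delta S^z_\bi S^z_\bj)$ for $e = \{\bi,\bj\}$, and verify the two-body identity
\begin{equation*}
  -H_e\, +\, \tfrac{1}{4}\, I_e\; =\; \tfrac{1-u}{2}\, O_e\; +\; \tfrac{u}{2}\, C_e
\end{equation*}
in the Ising basis, where $u = (1+\Delta)/2 \in [0,1]$ and $O_e, C_e$ are entrywise-nonnegative $0/1$ operators on $\C^2 \otimes \C^2$ whose matrix elements $\langle \tau_\bi\tau_\bj \mid \cdot \mid \s_\bi\s_\bj\rangle$ equal $1$ exactly on the quadruples $(\s_\bi, \s_\bj, \tau_\bi, \tau_\bj)$ satisfying the overpass and cul-de-sac rules of the preamble respectively, and $0$ otherwise. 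Non-negativity of the two coefficients requires $\Delta \in [-1,1]$. The bipartite structure of $\G$ is what aligns the signs coming from $U_A$-conjugation on one sublattice so that the identity holds without hidden signs on individual matrix elements.

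\emph{Step 2 (Trotter to Poisson).} Apply the Lie--Trotter formula
\begin{equation*}
e^{-\beta H_{\G,\Delta}}\; =\; \lim_{K \to \infty}\Bigl(\prod_{e \in \Ed} e^{-\beta H_e / K}\Bigr)^K
\end{equation*}
and expand each short-time factor as $e^{-\beta H_e/K} = e^{\beta/(4K)}\bigl(I_e + (\beta/K)[\tfrac{1-u}{2}O_e + \tfrac{u}{2}C_e] + O(K^{-2})\bigr)$. The $K$-fold product expands into a sum over discrete time-ordered sequences of choices at each edge and time slice (the three possibilities at each slice being to insert the identity, $O_e$, or $C_e$). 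In the continuum limit $K \to \infty$, this sum converges to the expectation, under the two-type Poisson process $\mathbb P_u$ with rates $(1-u)/2$ and $u/2$, of the corresponding time-ordered operator product at the arrival times. Sandwiching between $\langle \Psi_\V(\tau)|$ and $|\Psi_\V(\s)\rangle$ and inserting an Ising-basis resolution of the identity between successive arrivals, the matrix element of each ordered product reduces, because $O_e$ and $C_e$ have $0/1$ entries, to the indicator that the piecewise-constant labeling $\sigma(t)$ obtained from $\s$ by propagating forward through the arrival rules is well-defined and equals $\tau$ at $t = \beta/2$ --- equivalently, that $\sigma(\cdot) \in \Sigma_{\V, \beta}(\omega)$ with boundary values $\s$ and $\tau$. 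Summing these indicators yields \eqref{eq:Eform}, the prefactor $e^{\beta|\Ed|/4}$ arising from the accumulated $e^{\beta/(4K)}$ factors over all $K$ slices and $|\Ed|$ edges.

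\emph{Step 3 and main obstacle.} For \eqref{EPar}, set $\tau = \s$ in \eqref{eq:Eform} and sum over $\s \in \Sigma_\V$: the matching boundary values force periodicity, and a periodic labeling is uniquely determined by a free $\pm 1$ choice on each loop of $\LL_\beta(\omega)$, giving the count $2^{|\LL_\beta(\omega)|}$. For \eqref{EEFP}, write $\tr(\mathbf{Q}_L\, e^{-\beta H_{\G,\Delta}}) = \sum_{\s : \s|_{\B_L} \equiv +1}\langle \Psi_\V(\s), e^{-\beta H_{\G,\Delta}} \Psi_\V(\s)\rangle$, apply \eqref{eq:Eform} to each summand, and divide by $Z_{N,\Delta}(\beta)$, so that the $e^{\beta|\Ed|/4}$ factors cancel. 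The main obstacle is Step 1: identifying $O_e, C_e$ so that both the operator identity \emph{and} the matrix-element interpretation hold with the prescribed coefficients $(1-u)/2, u/2$. A naive Ansatz with $C_e$ equal to the anti-aligned swap alone does not reproduce the diagonal shift of $-H_e + \tfrac{1}{4} I_e$; the correct $C_e$ combines the anti-aligned swap with the anti-aligned identity, reflecting the fact that a cul-de-sac imposes a constraint on the labels on both sides of the arrival rather than transporting them across it. Once this local decomposition is in hand, the remaining Trotter-to-Poisson convergence and combinatorial loop-counting follow the standard pattern in \cite{AN, Toth, U}.
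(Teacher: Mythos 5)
Your high-level plan---expand $e^{-\beta H_{\G,\Delta}}$ via Lie--Trotter, read the per-edge factors as $\{0,1\}$-matrix operators weighted by Poisson arrivals, then specialize boundary spins to get \cref{EPar,EEFP}---is the same argument the paper sketches, which matches the infinitesimal generator $\sum_e[u Q_e + (1-u) T_e]$ against the XXZ interaction and cites Ueltschi for the details.

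However, the two-body identity at the heart of your Step~1 does not hold as written. Compare diagonal entries on $\ket{++}$: the left side $-H_e + \tfrac14 I_e = S^x_{\bi}S^x_{\bj} + S^y_{\bi}S^y_{\bj} + \Delta S^z_{\bi}S^z_{\bj} + \tfrac14$ gives $(1+\Delta)/4$, while $\tfrac{1-u}{2}O_e + \tfrac{u}{2}C_e$ gives $(1-u)/2$, since only the overpass operator $O_e$ is nonzero on aligned states. These agree only for $u = (1-\Delta)/2$, not $u = (1+\Delta)/2$. Appealing to bipartiteness and $U_A$-conjugation to ``align the signs'' cannot rescue this: the identity is a purely local statement on $\C^2\otimes\C^2$, independent of $\G$, and conjugating one site by $2S^z$ negates $S^x_{\bi}S^x_{\bj} + S^y_{\bi}S^y_{\bj}$, producing a $-\tfrac12$ off-diagonal entry on $\ket{+-}\leftrightarrow\ket{-+}$; that would destroy, not restore, the entrywise nonnegativity your Trotter-to-Poisson step requires. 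The same parameterization mismatch is visible in the paper's own proof, which identifies $T_e$ with the $\Delta=1$ interaction (weight $1-u$) and $Q_e$ with the $\Delta=-1$ interaction (weight $u$), forcing $\Delta = 1-2u$, i.e.\ $u=(1-\Delta)/2$, or equivalently a swap of the rates assigned to $\nu^F$ and $\nu^{AF}$. Once the parameterization is corrected, your Steps~2 and~3 are sound and track the paper's route; as written, Step~1 is wrong, and the bipartiteness explanation offered for it is a red herring.
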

\begin{remark}
The fact that the weight factor in \cref{EPar} is  $2^{|\LL_{\beta}(\omega)|}$ has important implications.
The number of loops changes by at most 1 in absolute value if we add or subtract 
an arrival to one of the Poisson processes.
Therefore, $|\LL_{\beta}(\omega)|$ has a Lipschitz property with respect to the number of arrivals of $\omega$.
This is a useful property for obtaining large deviation type bounds.
\end{remark}
\begin{proof}
An equivalent result is proved in \cite{U}, Section 3.
This follows by considering the infinitesimal generator.
Given $\sigma(\cdot)$
in a small increment of time $0<\Delta t\ll 1$, there is a probability $\frac{1}{2} u \Delta t (1 + o(1))$ to have a cul-de-sac edge at $\{\bi,\bj\}$.
When there is a cul-de-sac edge, that is represented by the operator
$$
1 + 2Q_{\{\bi,\bj\}}\,
=\, 2\Big(\frac{1}{4} + \big(S_{\bi}^{x} S_{\bj}^{x} +  S_{\bi}^{y} S_{\bj}^{y}
- S_{\bi}^{z} S_{\bj}^{z}\big)\Big)
$$
In the same time increment there is a probability $(1-u) \Delta t(1+o(1))$
to have an overpass edge at $\{\bi,\bj\}$.
Then there is an operator
$$
1 + 2T_{\{\bi,\bj\}}\,
=\, 2\Big(\frac{1}{4} + \big(S_{\bi}^{x} S_{\bj}^{x} +  S_{\bi}^{y} S_{\bj}^{y}
+ S_{\bi}^{z} S_{\bj}^{z}\big)\Big)\, .
$$
There is a probability $1 - \Delta t(1+o(1))$ that there is no edge which is represented by
the identity operator $1$.
The reason we have shifted the operators above by 1 is for this reason.
Similarly, we scaled based on the fact that we chose the rates to be $1/2$ as large.
So we have
$$
Q_{\{\bi,\bj\}}\,
=\, \big(S_{\bi}^{x} S_{\bj}^{x} +  S_{\bi}^{y} S_{\bj}^{y}
- S_{\bi}^{z} S_{\bj}^{z}\big)-\frac{1}{4}\, ,\qquad
T_{\{\bi,\bj\}}\,
=\, \big(S_{\bi}^{x} S_{\bj}^{x} +  S_{\bi}^{y} S_{\bj}^{y}
+ S_{\bi}^{z} S_{\bj}^{z}\big)-\frac{1}{4}\, .
$$
From this, we see that in time increment $\Delta t$ the operator representing
all these effects is, to leading order, equal to
$$
\exp\Bigg(\Delta t \sum_{\{\bi,\bj\} \in \mathcal{E}} [u Q_{\{\bi,\bj\}} + (1-u) T_{\{\bi,\bj\}}]\Bigg)\, .
$$
Noting the relationship between $T_{\{\bi,\bj\}}$, $Q_{\{\bi,\bj\}}$ and
the XXZ interaction for $\Delta=1$, $\Delta=-1$ proves the lemma.
\end{proof}

\subsection{Derivation of the Lower Bound}
In this section, we present the lower bound of $C\exp(-c L^{d+1})$ for $d\geq 1$. This lower bound does NOT give the correct $L$ dependence for $\Delta \geq 1$. We do expect this bound to hold if $\Delta< -1$, but are currently unable to prove it. We refer the reader to \Cref{S:func} for a partial result in this direction. We will derive an upper bound of $C\exp(-c L^{d+1})$ in \Cref{S:UB}.
\subsubsection{The Case $\Delta \in [-1, 1]$.}
Let $\Delta \in [-1,1]$ be fixed. As before in \Cref{prop:Ueltschi}, we also fix $u = (1+\Delta)/2$.
For any $\tau \in \Sigma_{\Lambda_L}$, set
\[
Z_{N,L,\beta}(\tau)= \E_u[|E_{N,L,\beta}(\tau)|].
\]
where $E_{N,L,\beta}(\tau,\omega)$ follows the definition \cref{eq:EventDef}.
Since $u$ is fixed, we will write $\Pr$ and $\E$ instead of $\Pr_u$ and $\E_u$.
Because of \cref{EPar,EEFP}, the following lemma immediately gives the lower bound \cref{E:Ld2}.
\begin{lemma}
There are constants $C_1, c_1>0$ so that, assuming $\Delta \in [-1, 1]$, and $L\leq \frac{1}{2}N$,
\[
\frac{Z_{N,L,\beta}(\tau)}{ \E[2^{|\LL_{\beta}(\omega)|}]}\,
\geq\, C_1\exp(-c_1 L^{d} \min(\beta, L))
\]
for any $\tau \in \{-1, 1\}^{\Lambda_L}$.
\end{lemma}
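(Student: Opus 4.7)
The plan is to express the ratio $Z_{N,L,\beta}(\tau)/\E[2^{|\LL_\beta|}]$ as the joint probability that $\sigma_\bi(0) = \tau_\bi$ for every $\bi \in \B_L$, and then to lower bound it by exhibiting a single explicit event in the Poisson process that has probability at least $\exp(-cL^d\min(\beta,L))$ and on which $|E_{N,L,\beta}(\tau,\omega)|$ is at least $2^{|\LL_\beta(\omega)|-L^d}$. Converting such a pointwise estimate into a bound on the ratio will then rely on the Lipschitz property of $|\LL_\beta|$ noted in the Remark after \cref{prop:Ueltschi}.

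Concretely, set $\delta = \tfrac12 \min(\beta,L)$ and define
\[
\mathcal{G}_\delta = \bigl\{\omega : \text{no arrival of either } \textd\nu^F \text{ or } \textd\nu^{AF} \text{ occurs on any edge incident to } \B_L \text{ in } (-\delta,\delta)\bigr\}.
\]
The number of edges of $\T_N$ incident to $\B_L$ is at most $2dL^d$, and each carries total arrival intensity $\tfrac12$, so Poisson independence gives $\Pr(\mathcal{G}_\delta) \geq \exp(-2dL^d\delta) \geq \exp(-c_0 L^d \min(\beta,L))$. On $\mathcal{G}_\delta$, every $\sigma_\bi(\cdot)$ with $\bi \in \B_L$ is constant on $[-\delta,\delta]$, so each loop through $\B_L \times \{0\}$ crosses the slab $\B_L \times [-\delta,\delta]$ as a vertical segment.

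If $\tau$ is realizable under the loop structure of $\omega$, then $|E_{N,L,\beta}(\tau,\omega)| = 2^{|\LL_\beta(\omega)| - k(\omega)} \geq 2^{|\LL_\beta(\omega)| - L^d}$, where $k(\omega)\leq L^d$ is the number of distinct loops meeting $\B_L \times \{0\}$. To secure realizability of \emph{every} $\tau \in \{-1,+1\}^{\B_L}$, I would refine $\mathcal{G}_\delta$ to a sub-event $\mathcal{G}_\delta^*$ on which the $L^d$ vertical loop segments at the sites of $\B_L$ are additionally closed off into pairwise disjoint short loops; a canonical way is to prescribe, near $(\bi,\pm\delta)$ for each $\bi\in\B_L$, a small fixed pattern of cul-de-sac and overpass arrivals whose combined effect isolates $(\bi,0)$ on its own loop. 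Each such local prescription is a Poisson event of probability bounded below independently of $L, N, \beta$, and there are only $O(L^d)$ of them.

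Assembling, on $\mathcal{G}_\delta^*$ one has $|E_{N,L,\beta}(\tau,\omega)| \geq 2^{|\LL_\beta(\omega)|-L^d}$ for every $\tau$, giving $Z_{N,L,\beta}(\tau) \geq 2^{-L^d}\,\E[\mathbf 1_{\mathcal G_\delta^*}\, 2^{|\LL_\beta|}]$. To pass from here to the ratio, split $\omega = \omega_{\text{slab}} + \omega_{\text{out}}$ into arrivals inside and outside the forbidden slab. The Lipschitz property yields $2^{|\LL_\beta(\omega)|} \leq 2^{|\LL_\beta(\omega_{\text{out}})|}\cdot 2^{|\omega_{\text{slab}}|}$, and $\E[2^{|\omega_{\text{slab}}|}] \leq \exp(c L^d \delta)$ because $|\omega_{\text{slab}}|$ is Poisson with mean $O(L^d\delta)$. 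Taking expectations gives $\E[2^{|\LL_\beta|}] \leq \exp(cL^d\delta)\,\E[2^{|\LL_\beta(\omega_{\text{out}})|}]$ and hence $\E[\mathbf 1_{\mathcal G_\delta^*}\,2^{|\LL_\beta|}] \geq \exp(-cL^d\min(\beta,L))\,\E[2^{|\LL_\beta|}]$, which combined with the previous display yields the claim (the factor $2^{-L^d}$ is absorbed into the $\min(\beta,L)$ exponent). The main obstacle in executing this plan is the construction and verification of $\mathcal G_\delta^*$: one must check that the chosen local patterns genuinely isolate the loop through each $(\bi,0)$ (cul-de-sac and overpass arrivals modify loop connectivity in different ways), and that the combined probability cost of imposing them stays within the budget $\exp(-cL^d\min(\beta,L))$ already afforded by $\delta$.
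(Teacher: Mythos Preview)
Your overall architecture is the same as the paper's: use the Lipschitz property of $|\LL_\beta(\omega)|$ to replace $\E[2^{|\LL_\beta|}]$ by its restriction to an event with no arrivals in a space--time slab of volume $O(L^d\min(\beta,L))$, and then modify that event so that the loop structure is guaranteed to support $\tau$.  For $\beta<L$ your event $\mathcal G_\delta$ already wipes out the full time circle and each $\bi\in\B_L$ sits on its own trivial loop, exactly as in the paper.

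The gap is in the case $\beta\ge L$.  Your proposed refinement $\mathcal G_\delta^*$ asks for ``a small fixed pattern of cul-de-sac and overpass arrivals'' near $(\bi,\pm\delta)$ that ``isolates $(\bi,0)$ on its own loop'', with cost bounded independently of $L$ per site.  No such local pattern exists.  Overpass arrivals merely permute the vertical strands and never close a loop within the slab.  Cul-de-sac arrivals do close loops, but every cul-de-sac forces the two incident sites to carry opposite labels; hence any loop closed locally around $(\bi,0)$ via cul-de-sacs on edges within $\B_L$ necessarily passes through some neighbouring $(\bj,0)$ with $\bj\in\B_L$ and imposes $\sigma_{\bi}(0)=-\sigma_{\bj}(0)$.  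For $\tau=\mathbf 1_L$ this is fatal.  More generally, whatever combination of overpass and cul-de-sac arrivals you place on the edges of $\B_L$ near $\pm\delta$, the resulting loop through an interior $(\bi,0)$ will meet $(\bj,0)$ for some neighbour $\bj\in\B_L$, and the parity of cul-de-sacs along the way dictates a sign relation you do not control.

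The paper handles this by enlarging the spatial window to width $2L$ (its $\mathcal W_L$) and laying down a coordinated \emph{staircase} of cul-de-sac arrivals (the event $G$) that routes the loop through each $(\bi,0)$ with $\bi\in\B_L$ to a partner site $(\bj,0)$ lying in the buffer $\mathcal W_L\setminus\B_L$.  The anti-alignment constraint then lands on a site outside $\B_L$ and is harmless.  This staircase uses $O(L^{d+1})$ prescribed arrivals in total, which is within budget, but it is a global construction across the window, not a fixed-cost gadget per site.  Your outline becomes a proof once $\mathcal G_\delta^*$ is replaced by this staircase event on a doubled spatial box; as written, the local-isolation step does not go through.
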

\begin{proof}
We consider the case $d=1$ for ease of e{x}position and only the most relevant case $\tau= \mathbf{1}_L$. The entire argument e{x}tends without difficulty to the case $d>1$ and other choices of $\tau$.
We first consider the case $\beta\geq L$. We will comment on the case $\beta< L$ at the end of the proof.
We first observe the following: if $\omega, \omega_1$ are two configurations of edges which differ by $k$ space-time arrivals. Then
\begin{equation}
\label{E:Finite}
2^{-k} \leq 2^{|\LL_\beta(\omega)|-|\LL_\beta(\omega_1)|} \leq 2^k.
\end{equation}
Now let $\WW_L$ denote the space time window $\{-L+1,\dots,L\} \times [-L, L]$ and
let
\[
F=\{\omega\, :\, \text{there are no arrivals of $\omega$ in $\WW_{L}$}\}.
\]
As a simple consequence of \cref{E:Finite} and tail bounds for large numbers of Poisson arrivals, we have
constants $C, c>0$ so that
\[
\E[2^{|\LL_{\beta}(\omega)|}] \leq Ce^{c L^{2}} \E[2^{|\LL_{\beta}(\omega)|} \mathbf 1\{ F\} ].
\]
Ne{x}t, we bound $\E[2^{|\LL_{\beta}(\omega)|} \mathbf 1\{ F\} ]$ by $\E[|E_{N,L,\beta}(\mathbf 1_L)|]$.
Consider the event $G$ consisting of configurations of edges satisfying the following properties (see \Cref{fig:FandG} for an illustration):
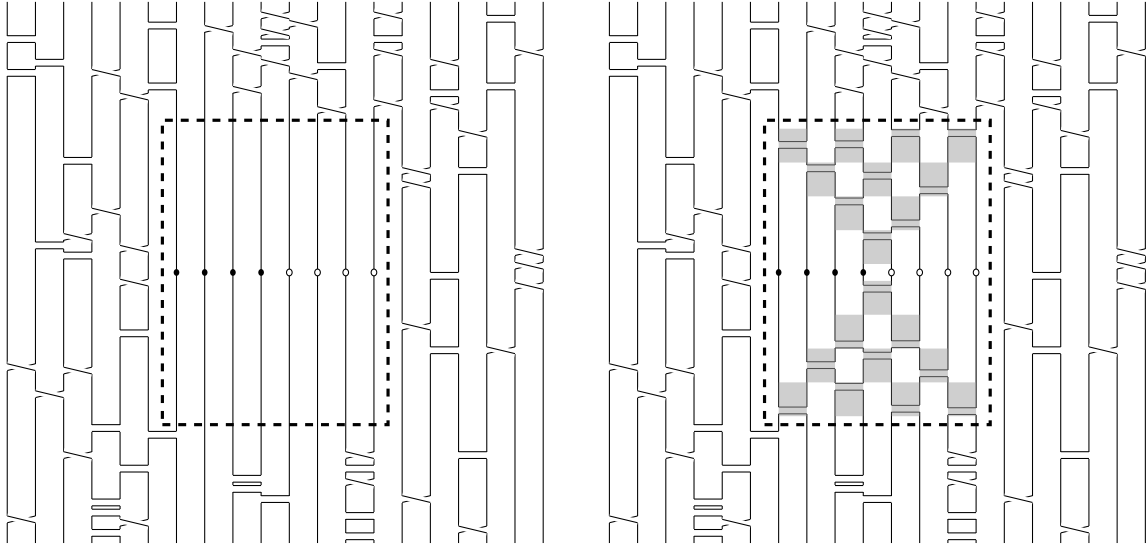
\begin{figure}
\begin{tikzpicture}[xscale=0.375,yscale=0.45]
\foreach \x in {1,...,20}
{
\draw (\x,-8) -- (\x,8);
}
\foreach \x in {7,...,10}
{
\fill (\x,0) circle (1mm);
}
\foreach \x in {11,...,14}
{
\filldraw[fill=white] (\x,0) circle (1mm);
}
\draw[very thick, dashed] (6.5,-4.5) rectangle (14.5,4.5);
\fill[white] (0.9,-7.4) rectangle (2.1,-7.2); \draw (1,-7.4) -- (2,-7.2); \draw[line width=1mm,white] (1,-7.2) -- (2,-7.4); \draw (1,-7.2) -- (2,-7.4);
\fill[white] (0.9,-2.9) rectangle (2.1,-2.7); \draw (1,-2.9) -- (2,-2.7); \draw[line width=1mm,white] (1,-2.7) -- (2,-2.9); \draw (1,-2.7) -- (2,-2.9);
\fill[white] (0.9,5.8) rectangle (2.1,6.0); \draw (1,5.8) -- (2,5.8); \draw[] (1,6.0) -- (2,6.0);
\fill[white] (0.9,6.8) rectangle (2.1,7.0); \draw (1,6.8) -- (2,6.8); \draw[] (1,7.0) -- (2,7.0);
\fill[white] (1.9,-3.7) rectangle (3.1,-3.5); \draw (2,-3.7) -- (3,-3.5); \draw[line width=1mm,white] (2,-3.5) -- (3,-3.7); \draw (2,-3.5) -- (3,-3.7);
\fill[white] (1.9,0.7) rectangle (3.1,0.9); \draw (2,0.7) -- (3,0.7); \draw[] (2,0.9) -- (3,0.9);
\fill[white] (1.9,6.1) rectangle (3.1,6.3); \draw (2,6.1) -- (3,6.1); \draw[] (2,6.3) -- (3,6.3);
\fill[white] (2.9,-5.5) rectangle (4.1,-5.3); \draw (3,-5.5) -- (4,-5.3); \draw[line width=1mm,white] (3,-5.3) -- (4,-5.5); \draw (3,-5.3) -- (4,-5.5);
\fill[white] (2.9,-3.0) rectangle (4.1,-2.8); \draw (3,-3.0) -- (4,-2.8); \draw[line width=1mm,white] (3,-2.8) -- (4,-3.0); \draw (3,-2.8) -- (4,-3.0);
\fill[white] (2.9,0.4) rectangle (4.1,0.6); \draw (3,0.4) -- (4,0.4); \draw[] (3,0.6) -- (4,0.6);
\fill[white] (2.9,0.95) rectangle (4.1,1.15); \draw (3,0.95) -- (4,1.15); \draw[line width=1mm,white] (3,1.15) -- (4,0.95); \draw (3,1.15) -- (4,0.95);
\fill[white] (2.9,3.2) rectangle (4.1,3.4); \draw (3,3.2) -- (4,3.2); \draw[] (3,3.4) -- (4,3.4);
\fill[white] (3.9,-7.8) rectangle (5.1,-7.6); \draw (4,-7.8) -- (5,-7.8); \draw[] (4,-7.6) -- (5,-7.6);
\fill[white] (3.9,-7.2) rectangle (5.1,-7.0); \draw (4,-7.2) -- (5,-7.2); \draw[] (4,-7.0) -- (5,-7.0);
\fill[white] (3.9,-6.9) rectangle (5.1,-6.7); \draw (4,-6.9) -- (5,-6.9); \draw[] (4,-6.7) -- (5,-6.7);
\fill[white] (3.9,-4.7) rectangle (5.1,-4.5); \draw (4,-4.7) -- (5,-4.7); \draw[] (4,-4.5) -- (5,-4.5);
\fill[white] (3.9,1.7) rectangle (5.1,1.9); \draw (4,1.7) -- (5,1.9); \draw[line width=1mm,white] (4,1.9) -- (5,1.7); \draw (4,1.9) -- (5,1.7);
\fill[white] (3.9,5.8) rectangle (5.1,6.0); \draw (4,5.8) -- (5,6.0); \draw[line width=1mm,white] (4,6.0) -- (5,5.8); \draw (4,6.0) -- (5,5.8);
\fill[white] (4.9,-7.5) rectangle (6.1,-7.3); \draw (5,-7.5) -- (6,-7.3); \draw[line width=1mm,white] (5,-7.3) -- (6,-7.5); \draw (5,-7.3) -- (6,-7.5);
\fill[white] (4.9,-5.9) rectangle (6.1,-5.7); \draw (5,-5.9) -- (6,-5.9); \draw[] (5,-5.7) -- (6,-5.7);
\fill[white] (4.9,-1.9) rectangle (6.1,-1.7); \draw (5,-1.9) -- (6,-1.9); \draw[] (5,-1.7) -- (6,-1.7);
\fill[white] (4.9,-0.4) rectangle (6.1,-0.2); \draw (5,-0.4) -- (6,-0.2); \draw[line width=1mm,white] (5,-0.2) -- (6,-0.4); \draw (5,-0.2) -- (6,-0.4);
\fill[white] (4.9,0.6) rectangle (6.1,0.8); \draw (5,0.6) -- (6,0.8); \draw[line width=1mm,white] (5,0.8) -- (6,0.6); \draw (5,0.8) -- (6,0.6);
\fill[white] (4.9,5.1) rectangle (6.1,5.3); \draw (5,5.1) -- (6,5.3); \draw[line width=1mm,white] (5,5.3) -- (6,5.1); \draw (5,5.3) -- (6,5.1);
\fill[white] (5.9,-4.9) rectangle (7.1,-4.7); \draw (6,-4.9) -- (7,-4.9); \draw[] (6,-4.7) -- (7,-4.7);
\fill[white] (5.9,5.4) rectangle (7.1,5.6); \draw (6,5.4) -- (7,5.4); \draw[] (6,5.6) -- (7,5.6);
\fill[white] (5.9,7.2) rectangle (7.1,7.4); \draw (6,7.2) -- (7,7.2); \draw[] (6,7.4) -- (7,7.4);
\fill[white] (7.9,7.1) rectangle (9.1,7.3); \draw (8,7.1) -- (9,7.3); \draw[line width=1mm,white] (8,7.3) -- (9,7.1); \draw (8,7.3) -- (9,7.1);
\fill[white] (8.9,-6.5) rectangle (10.1,-6.3); \draw (9,-6.5) -- (10,-6.5); \draw[] (9,-6.3) -- (10,-6.3);
\fill[white] (8.9,-6.2) rectangle (10.1,-6.0); \draw (9,-6.2) -- (10,-6.2); \draw[] (9,-6.0) -- (10,-6.0);
\fill[white] (8.9,5.3) rectangle (10.1,5.5); \draw (9,5.3) -- (10,5.5); \draw[line width=1mm,white] (9,5.5) -- (10,5.3); \draw (9,5.5) -- (10,5.3);
\fill[white] (8.9,6.4) rectangle (10.1,6.6); \draw (9,6.4) -- (10,6.6); \draw[line width=1mm,white] (9,6.6) -- (10,6.4); \draw (9,6.6) -- (10,6.4);
\fill[white] (9.9,-6.8) rectangle (11.1,-6.6); \draw (10,-6.8) -- (11,-6.8); \draw[] (10,-6.6) -- (11,-6.6);
\fill[white] (9.9,6.1) rectangle (11.1,6.3); \draw (10,6.1) -- (11,6.3); \draw[line width=1mm,white] (10,6.3) -- (11,6.1); \draw (10,6.3) -- (11,6.1);
\fill[white] (9.9,6.7) rectangle (11.1,6.9); \draw (10,6.7) -- (11,6.7); \draw[] (10,6.9) -- (11,6.9);
\fill[white] (9.9,7.0) rectangle (11.1,7.2); \draw (10,7.0) -- (11,7.2); \draw[line width=1mm,white] (10,7.2) -- (11,7.0); \draw (10,7.2) -- (11,7.0);
\fill[white] (9.9,7.5) rectangle (11.1,7.7); \draw (10,7.5) -- (11,7.7); \draw[line width=1mm,white] (10,7.7) -- (11,7.5); \draw (10,7.7) -- (11,7.5);
\fill[white] (10.9,5.7) rectangle (12.1,5.9); \draw (11,5.7) -- (12,5.9); \draw[line width=1mm,white] (11,5.9) -- (12,5.7); \draw (11,5.9) -- (12,5.7);
\fill[white] (10.9,7.25) rectangle (12.1,7.45); \draw (11,7.25) -- (12,7.45); \draw[line width=1mm,white] (11,7.45) -- (12,7.25); \draw (11,7.45) -- (12,7.25);
\fill[white] (11.9,4.7) rectangle (13.1,4.9); \draw (12,4.7) -- (13,4.9); \draw[line width=1mm,white] (12,4.9) -- (13,4.7); \draw (12,4.9) -- (13,4.7);
\fill[white] (11.9,6.0) rectangle (13.1,6.2); \draw (12,6.0) -- (13,6.0); \draw[] (12,6.2) -- (13,6.2);
\fill[white] (12.9,-7.9) rectangle (14.1,-7.7); \draw (13,-7.9) -- (14,-7.9); \draw[] (13,-7.7) -- (14,-7.7);
\fill[white] (12.9,-7.3) rectangle (14.1,-7.1); \draw (13,-7.3) -- (14,-7.3); \draw[] (13,-7.1) -- (14,-7.1);
\fill[white] (12.9,-6.3) rectangle (14.1,-6.1); \draw (13,-6.3) -- (14,-6.1); \draw[line width=1mm,white] (13,-6.1) -- (14,-6.3); \draw (13,-6.1) -- (14,-6.3);
\fill[white] (12.9,-5.9) rectangle (14.1,-5.7); \draw (13,-5.9) -- (14,-5.9); \draw[] (13,-5.7) -- (14,-5.7);
\fill[white] (12.9,-5.5) rectangle (14.1,-5.3); \draw (13,-5.5) -- (14,-5.3); \draw[line width=1mm,white] (13,-5.3) -- (14,-5.5); \draw (13,-5.3) -- (14,-5.5);
\fill[white] (13.9,4.9) rectangle (15.1,5.1); \draw (14,4.9) -- (15,5.1); \draw[line width=1mm,white] (14,5.1) -- (15,4.9); \draw (14,5.1) -- (15,4.9);
\fill[white] (13.9,6.4) rectangle (15.1,6.6); \draw (14,6.4) -- (15,6.6); \draw[line width=1mm,white] (14,6.6) -- (15,6.4); \draw (14,6.6) -- (15,6.4);
\fill[white] (13.9,6.8) rectangle (15.1,7.0); \draw (14,6.8) -- (15,6.8); \draw[] (14,7.0) -- (15,7.0);
\fill[white] (13.9,7.5) rectangle (15.1,7.7); \draw (14,7.5) -- (15,7.7); \draw[line width=1mm,white] (14,7.7) -- (15,7.5); \draw (14,7.7) -- (15,7.5);
\fill[white] (14.9,-6.8) rectangle (16.1,-6.6); \draw (15,-6.8) -- (16,-6.6); \draw[line width=1mm,white] (15,-6.6) -- (16,-6.8); \draw (15,-6.6) -- (16,-6.8);
\fill[white] (14.9,-1.7) rectangle (16.1,-1.5); \draw (15,-1.7) -- (16,-1.5); \draw[line width=1mm,white] (15,-1.5) -- (16,-1.7); \draw (15,-1.5) -- (16,-1.7);
\fill[white] (14.9,2.5) rectangle (16.1,2.7); \draw (15,2.5) -- (16,2.7); \draw[line width=1mm,white] (15,2.7) -- (16,2.5); \draw (15,2.7) -- (16,2.5);
\fill[white] (14.9,2.9) rectangle (16.1,3.1); \draw (15,2.9) -- (16,3.1); \draw[line width=1mm,white] (15,3.1) -- (16,2.9); \draw (15,3.1) -- (16,2.9);
\fill[white] (15.9,-2.4) rectangle (17.1,-2.2); \draw (16,-2.4) -- (17,-2.4); \draw[] (16,-2.2) -- (17,-2.2);
\fill[white] (15.9,-0.2) rectangle (17.1,-0.0); \draw (16,-0.2) -- (17,-0.2); \draw[] (16,-0.0) -- (17,-0.0);
\fill[white] (15.9,4.8) rectangle (17.1,5.0); \draw (16,4.8) -- (17,5.0); \draw[line width=1mm,white] (16,5.0) -- (17,4.8); \draw (16,5.0) -- (17,4.8);
\fill[white] (15.9,5.2) rectangle (17.1,5.4); \draw (16,5.2) -- (17,5.2); \draw[] (16,5.4) -- (17,5.4);
\fill[white] (15.9,6.1) rectangle (17.1,6.3); \draw (16,6.1) -- (17,6.3); \draw[line width=1mm,white] (16,6.3) -- (17,6.1); \draw (16,6.3) -- (17,6.1);
\fill[white] (16.9,-7.7) rectangle (18.1,-7.5); \draw (17,-7.7) -- (18,-7.5); \draw[line width=1mm,white] (17,-7.5) -- (18,-7.7); \draw (17,-7.5) -- (18,-7.7);
\fill[white] (16.9,-5.5) rectangle (18.1,-5.3); \draw (17,-5.5) -- (18,-5.5); \draw[] (17,-5.3) -- (18,-5.3);
\fill[white] (16.9,2.9) rectangle (18.1,3.1); \draw (17,2.9) -- (18,2.9); \draw[] (17,3.1) -- (18,3.1);
\fill[white] (16.9,4.0) rectangle (18.1,4.2); \draw (17,4.0) -- (18,4.2); \draw[line width=1mm,white] (17,4.2) -- (18,4.0); \draw (17,4.2) -- (18,4.0);
\fill[white] (17.9,-2.4) rectangle (19.1,-2.2); \draw (18,-2.4) -- (19,-2.2); \draw[line width=1mm,white] (18,-2.2) -- (19,-2.4); \draw (18,-2.2) -- (19,-2.4);
\fill[white] (17.9,5.4) rectangle (19.1,5.6); \draw (18,5.4) -- (19,5.4); \draw[] (18,5.6) -- (19,5.6);
\fill[white] (17.9,7.4) rectangle (19.1,7.6); \draw (18,7.4) -- (19,7.4); \draw[] (18,7.6) -- (19,7.6);
\fill[white] (18.9,-0.5) rectangle (20.1,-0.3); \draw (19,-0.5) -- (20,-0.3); \draw[line width=1mm,white] (19,-0.3) -- (20,-0.5); \draw (19,-0.3) -- (20,-0.5);
\fill[white] (18.9,0.1) rectangle (20.1,0.3); \draw (19,0.1) -- (20,0.3); \draw[line width=1mm,white] (19,0.3) -- (20,0.1); \draw (19,0.3) -- (20,0.1);
\fill[white] (18.9,0.5) rectangle (20.1,0.7); \draw (19,0.5) -- (20,0.7); \draw[line width=1mm,white] (19,0.7) -- (20,0.5); \draw (19,0.7) -- (20,0.5);
\fill[white] (18.9,6.4) rectangle (20.1,6.6); \draw (19,6.4) -- (20,6.6); \draw[line width=1mm,white] (19,6.6) -- (20,6.4); \draw (19,6.6) -- (20,6.4);
\end{tikzpicture}
\qquad
\begin{tikzpicture}[xscale=0.375,yscale=0.45]
\foreach \x in {1,...,20}
{
\draw (\x,-8) -- (\x,8);
}
\foreach \x in {7,...,10}
{
\fill (\x,0) circle (1mm);
}
\foreach \x in {11,...,14}
{
\filldraw[fill=white] (\x,0) circle (1mm);
}
\draw[very thick, dashed] (6.5,-4.5) rectangle (14.5,4.5);
\fill[white] (0.9,-7.4) rectangle (2.1,-7.2); \draw (1,-7.4) -- (2,-7.2); \draw[line width=1mm,white] (1,-7.2) -- (2,-7.4); \draw (1,-7.2) -- (2,-7.4);
\fill[white] (0.9,-2.9) rectangle (2.1,-2.7); \draw (1,-2.9) -- (2,-2.7); \draw[line width=1mm,white] (1,-2.7) -- (2,-2.9); \draw (1,-2.7) -- (2,-2.9);
\fill[white] (0.9,5.8) rectangle (2.1,6.0); \draw (1,5.8) -- (2,5.8); \draw[] (1,6.0) -- (2,6.0);
\fill[white] (0.9,6.8) rectangle (2.1,7.0); \draw (1,6.8) -- (2,6.8); \draw[] (1,7.0) -- (2,7.0);
\fill[white] (1.9,-3.7) rectangle (3.1,-3.5); \draw (2,-3.7) -- (3,-3.5); \draw[line width=1mm,white] (2,-3.5) -- (3,-3.7); \draw (2,-3.5) -- (3,-3.7);
\fill[white] (1.9,0.7) rectangle (3.1,0.9); \draw (2,0.7) -- (3,0.7); \draw[] (2,0.9) -- (3,0.9);
\fill[white] (1.9,6.1) rectangle (3.1,6.3); \draw (2,6.1) -- (3,6.1); \draw[] (2,6.3) -- (3,6.3);
\fill[white] (2.9,-5.5) rectangle (4.1,-5.3); \draw (3,-5.5) -- (4,-5.3); \draw[line width=1mm,white] (3,-5.3) -- (4,-5.5); \draw (3,-5.3) -- (4,-5.5);
\fill[white] (2.9,-3.0) rectangle (4.1,-2.8); \draw (3,-3.0) -- (4,-2.8); \draw[line width=1mm,white] (3,-2.8) -- (4,-3.0); \draw (3,-2.8) -- (4,-3.0);
\fill[white] (2.9,0.4) rectangle (4.1,0.6); \draw (3,0.4) -- (4,0.4); \draw[] (3,0.6) -- (4,0.6);
\fill[white] (2.9,0.95) rectangle (4.1,1.15); \draw (3,0.95) -- (4,1.15); \draw[line width=1mm,white] (3,1.15) -- (4,0.95); \draw (3,1.15) -- (4,0.95);
\fill[white] (2.9,3.2) rectangle (4.1,3.4); \draw (3,3.2) -- (4,3.2); \draw[] (3,3.4) -- (4,3.4);
\fill[white] (3.9,-7.8) rectangle (5.1,-7.6); \draw (4,-7.8) -- (5,-7.8); \draw[] (4,-7.6) -- (5,-7.6);
\fill[white] (3.9,-7.2) rectangle (5.1,-7.0); \draw (4,-7.2) -- (5,-7.2); \draw[] (4,-7.0) -- (5,-7.0);
\fill[white] (3.9,-6.9) rectangle (5.1,-6.7); \draw (4,-6.9) -- (5,-6.9); \draw[] (4,-6.7) -- (5,-6.7);
\fill[white] (3.9,-4.7) rectangle (5.1,-4.5); \draw (4,-4.7) -- (5,-4.7); \draw[] (4,-4.5) -- (5,-4.5);
\fill[white] (3.9,1.7) rectangle (5.1,1.9); \draw (4,1.7) -- (5,1.9); \draw[line width=1mm,white] (4,1.9) -- (5,1.7); \draw (4,1.9) -- (5,1.7);
\fill[white] (3.9,5.8) rectangle (5.1,6.0); \draw (4,5.8) -- (5,6.0); \draw[line width=1mm,white] (4,6.0) -- (5,5.8); \draw (4,6.0) -- (5,5.8);
\fill[white] (4.9,-7.5) rectangle (6.1,-7.3); \draw (5,-7.5) -- (6,-7.3); \draw[line width=1mm,white] (5,-7.3) -- (6,-7.5); \draw (5,-7.3) -- (6,-7.5);
\fill[white] (4.9,-5.9) rectangle (6.1,-5.7); \draw (5,-5.9) -- (6,-5.9); \draw[] (5,-5.7) -- (6,-5.7);
\fill[white] (4.9,-1.9) rectangle (6.1,-1.7); \draw (5,-1.9) -- (6,-1.9); \draw[] (5,-1.7) -- (6,-1.7);
\fill[white] (4.9,-0.4) rectangle (6.1,-0.2); \draw (5,-0.4) -- (6,-0.2); \draw[line width=1mm,white] (5,-0.2) -- (6,-0.4); \draw (5,-0.2) -- (6,-0.4);
\fill[white] (4.9,0.6) rectangle (6.1,0.8); \draw (5,0.6) -- (6,0.8); \draw[line width=1mm,white] (5,0.8) -- (6,0.6); \draw (5,0.8) -- (6,0.6);
\fill[white] (4.9,5.1) rectangle (6.1,5.3); \draw (5,5.1) -- (6,5.3); \draw[line width=1mm,white] (5,5.3) -- (6,5.1); \draw (5,5.3) -- (6,5.1);
\fill[white] (5.9,-4.9) rectangle (7.1,-4.7); \draw (6,-4.9) -- (7,-4.9); \draw[] (6,-4.7) -- (7,-4.7);
\fill[white] (5.9,5.4) rectangle (7.1,5.6); \draw (6,5.4) -- (7,5.4); \draw[] (6,5.6) -- (7,5.6);
\fill[white] (5.9,7.2) rectangle (7.1,7.4); \draw (6,7.2) -- (7,7.2); \draw[] (6,7.4) -- (7,7.4);
\fill[white] (7.9,7.1) rectangle (9.1,7.3); \draw (8,7.1) -- (9,7.3); \draw[line width=1mm,white] (8,7.3) -- (9,7.1); \draw (8,7.3) -- (9,7.1);
\fill[white] (8.9,-6.5) rectangle (10.1,-6.3); \draw (9,-6.5) -- (10,-6.5); \draw[] (9,-6.3) -- (10,-6.3);
\fill[white] (8.9,-6.2) rectangle (10.1,-6.0); \draw (9,-6.2) -- (10,-6.2); \draw[] (9,-6.0) -- (10,-6.0);
\fill[white] (8.9,5.3) rectangle (10.1,5.5); \draw (9,5.3) -- (10,5.5); \draw[line width=1mm,white] (9,5.5) -- (10,5.3); \draw (9,5.5) -- (10,5.3);
\fill[white] (8.9,6.4) rectangle (10.1,6.6); \draw (9,6.4) -- (10,6.6); \draw[line width=1mm,white] (9,6.6) -- (10,6.4); \draw (9,6.6) -- (10,6.4);
\fill[white] (9.9,-6.8) rectangle (11.1,-6.6); \draw (10,-6.8) -- (11,-6.8); \draw[] (10,-6.6) -- (11,-6.6);
\fill[white] (9.9,6.1) rectangle (11.1,6.3); \draw (10,6.1) -- (11,6.3); \draw[line width=1mm,white] (10,6.3) -- (11,6.1); \draw (10,6.3) -- (11,6.1);
\fill[white] (9.9,6.7) rectangle (11.1,6.9); \draw (10,6.7) -- (11,6.7); \draw[] (10,6.9) -- (11,6.9);
\fill[white] (9.9,7.0) rectangle (11.1,7.2); \draw (10,7.0) -- (11,7.2); \draw[line width=1mm,white] (10,7.2) -- (11,7.0); \draw (10,7.2) -- (11,7.0);
\fill[white] (9.9,7.5) rectangle (11.1,7.7); \draw (10,7.5) -- (11,7.7); \draw[line width=1mm,white] (10,7.7) -- (11,7.5); \draw (10,7.7) -- (11,7.5);
\fill[white] (10.9,5.7) rectangle (12.1,5.9); \draw (11,5.7) -- (12,5.9); \draw[line width=1mm,white] (11,5.9) -- (12,5.7); \draw (11,5.9) -- (12,5.7);
\fill[white] (10.9,7.25) rectangle (12.1,7.45); \draw (11,7.25) -- (12,7.45); \draw[line width=1mm,white] (11,7.45) -- (12,7.25); \draw (11,7.45) -- (12,7.25);
\fill[white] (11.9,4.7) rectangle (13.1,4.9); \draw (12,4.7) -- (13,4.9); \draw[line width=1mm,white] (12,4.9) -- (13,4.7); \draw (12,4.9) -- (13,4.7);
\fill[white] (11.9,6.0) rectangle (13.1,6.2); \draw (12,6.0) -- (13,6.0); \draw[] (12,6.2) -- (13,6.2);
\fill[white] (12.9,-7.9) rectangle (14.1,-7.7); \draw (13,-7.9) -- (14,-7.9); \draw[] (13,-7.7) -- (14,-7.7);
\fill[white] (12.9,-7.3) rectangle (14.1,-7.1); \draw (13,-7.3) -- (14,-7.3); \draw[] (13,-7.1) -- (14,-7.1);
\fill[white] (12.9,-6.3) rectangle (14.1,-6.1); \draw (13,-6.3) -- (14,-6.1); \draw[line width=1mm,white] (13,-6.1) -- (14,-6.3); \draw (13,-6.1) -- (14,-6.3);
\fill[white] (12.9,-5.9) rectangle (14.1,-5.7); \draw (13,-5.9) -- (14,-5.9); \draw[] (13,-5.7) -- (14,-5.7);
\fill[white] (12.9,-5.5) rectangle (14.1,-5.3); \draw (13,-5.5) -- (14,-5.3); \draw[line width=1mm,white] (13,-5.3) -- (14,-5.5); \draw (13,-5.3) -- (14,-5.5);
\fill[white] (13.9,4.9) rectangle (15.1,5.1); \draw (14,4.9) -- (15,5.1); \draw[line width=1mm,white] (14,5.1) -- (15,4.9); \draw (14,5.1) -- (15,4.9);
\fill[white] (13.9,6.4) rectangle (15.1,6.6); \draw (14,6.4) -- (15,6.6); \draw[line width=1mm,white] (14,6.6) -- (15,6.4); \draw (14,6.6) -- (15,6.4);
\fill[white] (13.9,6.8) rectangle (15.1,7.0); \draw (14,6.8) -- (15,6.8); \draw[] (14,7.0) -- (15,7.0);
\fill[white] (13.9,7.5) rectangle (15.1,7.7); \draw (14,7.5) -- (15,7.7); \draw[line width=1mm,white] (14,7.7) -- (15,7.5); \draw (14,7.7) -- (15,7.5);
\fill[white] (14.9,-6.8) rectangle (16.1,-6.6); \draw (15,-6.8) -- (16,-6.6); \draw[line width=1mm,white] (15,-6.6) -- (16,-6.8); \draw (15,-6.6) -- (16,-6.8);
\fill[white] (14.9,-1.7) rectangle (16.1,-1.5); \draw (15,-1.7) -- (16,-1.5); \draw[line width=1mm,white] (15,-1.5) -- (16,-1.7); \draw (15,-1.5) -- (16,-1.7);
\fill[white] (14.9,2.5) rectangle (16.1,2.7); \draw (15,2.5) -- (16,2.7); \draw[line width=1mm,white] (15,2.7) -- (16,2.5); \draw (15,2.7) -- (16,2.5);
\fill[white] (14.9,2.9) rectangle (16.1,3.1); \draw (15,2.9) -- (16,3.1); \draw[line width=1mm,white] (15,3.1) -- (16,2.9); \draw (15,3.1) -- (16,2.9);
\fill[white] (15.9,-2.4) rectangle (17.1,-2.2); \draw (16,-2.4) -- (17,-2.4); \draw[] (16,-2.2) -- (17,-2.2);
\fill[white] (15.9,-0.2) rectangle (17.1,-0.0); \draw (16,-0.2) -- (17,-0.2); \draw[] (16,-0.0) -- (17,-0.0);
\fill[white] (15.9,4.8) rectangle (17.1,5.0); \draw (16,4.8) -- (17,5.0); \draw[line width=1mm,white] (16,5.0) -- (17,4.8); \draw (16,5.0) -- (17,4.8);
\fill[white] (15.9,5.2) rectangle (17.1,5.4); \draw (16,5.2) -- (17,5.2); \draw[] (16,5.4) -- (17,5.4);
\fill[white] (15.9,6.1) rectangle (17.1,6.3); \draw (16,6.1) -- (17,6.3); \draw[line width=1mm,white] (16,6.3) -- (17,6.1); \draw (16,6.3) -- (17,6.1);
\fill[white] (16.9,-7.7) rectangle (18.1,-7.5); \draw (17,-7.7) -- (18,-7.5); \draw[line width=1mm,white] (17,-7.5) -- (18,-7.7); \draw (17,-7.5) -- (18,-7.7);
\fill[white] (16.9,-5.5) rectangle (18.1,-5.3); \draw (17,-5.5) -- (18,-5.5); \draw[] (17,-5.3) -- (18,-5.3);
\fill[white] (16.9,2.9) rectangle (18.1,3.1); \draw (17,2.9) -- (18,2.9); \draw[] (17,3.1) -- (18,3.1);
\fill[white] (16.9,4.0) rectangle (18.1,4.2); \draw (17,4.0) -- (18,4.2); \draw[line width=1mm,white] (17,4.2) -- (18,4.0); \draw (17,4.2) -- (18,4.0);
\fill[white] (17.9,-2.4) rectangle (19.1,-2.2); \draw (18,-2.4) -- (19,-2.2); \draw[line width=1mm,white] (18,-2.2) -- (19,-2.4); \draw (18,-2.2) -- (19,-2.4);
\fill[white] (17.9,5.4) rectangle (19.1,5.6); \draw (18,5.4) -- (19,5.4); \draw[] (18,5.6) -- (19,5.6);
\fill[white] (17.9,7.4) rectangle (19.1,7.6); \draw (18,7.4) -- (19,7.4); \draw[] (18,7.6) -- (19,7.6);
\fill[white] (18.9,-0.5) rectangle (20.1,-0.3); \draw (19,-0.5) -- (20,-0.3); \draw[line width=1mm,white] (19,-0.3) -- (20,-0.5); \draw (19,-0.3) -- (20,-0.5);
\fill[white] (18.9,0.1) rectangle (20.1,0.3); \draw (19,0.1) -- (20,0.3); \draw[line width=1mm,white] (19,0.3) -- (20,0.1); \draw (19,0.3) -- (20,0.1);
\fill[white] (18.9,0.5) rectangle (20.1,0.7); \draw (19,0.5) -- (20,0.7); \draw[line width=1mm,white] (19,0.7) -- (20,0.5); \draw (19,0.7) -- (20,0.5);
\fill[white] (18.9,6.4) rectangle (20.1,6.6); \draw (19,6.4) -- (20,6.6); \draw[line width=1mm,white] (19,6.6) -- (20,6.4); \draw (19,6.6) -- (20,6.4);
\fill[white] (9.9,0.97) rectangle (11.1,1.17); \draw (10,0.97) -- (11,0.97); \draw (10,1.17) -- (11,1.17);
\fill[white] (8.9,2.00) rectangle (10.1,2.20); \draw (9,2.00) -- (10,2.00); \draw (9,2.20) -- (10,2.20);
\fill[white] (10.9,1.35) rectangle (12.1,1.55); \draw (11,1.35) -- (12,1.35); \draw (11,1.55) -- (12,1.55);
\fill[white] (7.9,2.98) rectangle (9.1,3.18); \draw (8,2.98) -- (9,2.98); \draw (8,3.18) -- (9,3.18);
\fill[white] (9.9,2.76) rectangle (11.1,2.96); \draw (10,2.76) -- (11,2.76); \draw (10,2.96) -- (11,2.96);
\fill[white] (11.9,2.33) rectangle (13.1,2.53); \draw (12,2.33) -- (13,2.33); \draw (12,2.53) -- (13,2.53);
\fill[white] (6.9,3.67) rectangle (8.1,3.87); \draw (7,3.67) -- (8,3.67); \draw (7,3.87) -- (8,3.87);
\fill[white] (8.9,3.69) rectangle (10.1,3.89); \draw (9,3.69) -- (10,3.69); \draw (9,3.89) -- (10,3.89);
\fill[white] (10.9,4.02) rectangle (12.1,4.22); \draw (11,4.02) -- (12,4.02); \draw (11,4.22) -- (12,4.22);
\fill[white] (12.9,4.02) rectangle (14.1,4.22); \draw (13,4.02) -- (14,4.02); \draw (13,4.22) -- (14,4.22);
\fill[semitransparent,gray!75!white] (10,0.25) rectangle (11,1.25) (9,1.25) rectangle (10,2.25) (11,1.25) rectangle (12,2.25) (8,2.25) rectangle (9,3.25) (10,2.25) rectangle (11,3.25) (12,2.25) rectangle (13,3.25)
(7,3.25) rectangle (8,4.25) (9,3.25) rectangle (10,4.25) (11,3.25) rectangle (12,4.25) (13,3.25) rectangle (14,4.25);
\begin{scope}[yscale=-1]
\fill[white] (9.9,0.38) rectangle (11.1,0.58); \draw (10,0.38) -- (11,0.38); \draw (10,0.58) -- (11,0.58);
\fill[white] (8.9,2.03) rectangle (10.1,2.23); \draw (9,2.03) -- (10,2.03); \draw (9,2.23) -- (10,2.23);
\fill[white] (10.9,2.02) rectangle (12.1,2.22); \draw (11,2.02) -- (12,2.02); \draw (11,2.22) -- (12,2.22);
\fill[white] (7.9,2.64) rectangle (9.1,2.84); \draw (8,2.64) -- (9,2.64); \draw (8,2.84) -- (9,2.84);
\fill[white] (9.9,2.36) rectangle (11.1,2.56); \draw (10,2.36) -- (11,2.36); \draw (10,2.56) -- (11,2.56);
\fill[white] (11.9,2.88) rectangle (13.1,3.08); \draw (12,2.88) -- (13,2.88); \draw (12,3.08) -- (13,3.08);
\fill[white] (6.9,3.98) rectangle (8.1,4.18); \draw (7,3.98) -- (8,3.98); \draw (7,4.18) -- (8,4.18);
\fill[white] (8.9,3.28) rectangle (10.1,3.48); \draw (9,3.28) -- (10,3.28); \draw (9,3.48) -- (10,3.48);
\fill[white] (10.9,3.93) rectangle (12.1,4.13); \draw (11,3.93) -- (12,3.93); \draw (11,4.13) -- (12,4.13);
\fill[white] (12.9,4.00) rectangle (14.1,4.20); \draw (13,4.00) -- (14,4.00); \draw (13,4.20) -- (14,4.20);
\fill[semitransparent,gray!75!white] (10,0.25) rectangle (11,1.25) (9,1.25) rectangle (10,2.25) (11,1.25) rectangle (12,2.25) (8,2.25) rectangle (9,3.25) (10,2.25) rectangle (11,3.25) (12,2.25) rectangle (13,3.25)
(7,3.25) rectangle (8,4.25) (9,3.25) rectangle (10,4.25) (11,3.25) rectangle (12,4.25) (13,3.25) rectangle (14,4.25);
\end{scope}
\end{tikzpicture}
\caption{
On the left is an example of an $\omega \in F$, namely, there is a space-time box conditioned to have no arrivals.
On the right is an example of an $\omega \in G$. We condition on certain arrivals of cul-de-sac edges, on in each gray box, arranged in a pattern to enforce a dipole picture, 
creating loops joining  spins on the left (colored black) and those on the right (colored white) for an interval of length $2L$.
\label{fig:FandG}
}
\end{figure}
\begin{itemize}
\item There are no overpass edges, i.e., no arrivals of $\textd \nu_{\bi\bj}^{F}(\omega)$, in $\WW_L$.
\item Let $\{\bi,\bj\}= \{ k-1, k \}$ be an edge with $k \in\{-L+2,\dots, L\}$ and $k$ even. For $t \in \Z$ satisfying $|k| \leq 2t \leq L-1$, f
there is e{x}actly one cul-de-sac edge, i.e., one arrival of $\textd \nu_{\bi\bj}^{AF}(\omega)$ in $(2t, 2t+1)$,
and there is e{x}actly one arrival of $\textd \nu_{\bi\bj}^{AF}(\omega)$ in $(-2t-1, -2t)$
\item Similarly, let $\{\bi,\bj\}= \{ k-1, k \}$ be an edge with $k \in\{-L+2,\dots, L\}$  and $k$ odd. 
For $t \in \Z$ satisfying $|k| \leq 2t-1 \leq L-1$, 
there is e{x}actly one cul-de-sac edge, i.e., one arrival of $\textd \nu_{\bi\bj}^{AF}(\omega)$ in $(2t-1, 2t)$,
and there is e{x}actly one arrival of $\textd \nu_{\bi\bj}^{AF}(\omega)$ in $(-2t, -2t+1)$
\item There are no other arrivals of $\textd \nu_{\bi\bj}^{AF}(\omega)$ in $\WW_L$.
\end{itemize}
In the case where $d>1$ then one merely  and makes this construction in the first coordinate direction for each cross section of $\B_L$.
By spatial independence of arrivals of the Poisson point processes, there are constants $C_i, c_i>0 , \: i=1,2 $ depending only on $u$ so that
\[
C_1 e^{-c_1 L^{d+1}} \leq \frac{\mathbb P(F)}{\mathbb P(G)} \leq C_2 e^{c_2 L^{d+1}}.
\]
Since the event $G$ only adds $O(L^{d+1})$ edges,
\[
\E[2^{|\LL_{\beta}(\omega)|} \mathbf 1\{ F\} ] \leq C_3 e^{c_3 L^{d+1}} \E[2^{|\LL_{\beta}(\omega)|} \mathbf 1\{ G\}].
\]
Finally, it is clear that
\[
\E[2^{|\LL_{\beta}(\omega)|} \mathbf 1\{ G\}] \leq 4^{L^d} \E[|E_{\beta}(\mathbf 1)|]
\]
since the loops induced by the event $G$ support a labeling such that $\sigma_{\bi}(0)=1$ $ \forall \bi \in \mathbb{B}_L$. The factor $4^{L^d}$ appears in the upper bound due to the total number of (admissible and non-admissible) ways of labeling $\B_L \times \{0\}$ with a spin configuration if we do not demand that all spins be $+1$. Collecting the estimates together finishes the proof in the case $\beta\geq L$.

In the case $\beta< L$, running through the first step of the above proof we have
the e{x}istence of constants $C, c$ so that
\[
\E[2^{|\LL_{\beta}(\omega)|}] \leq C\exp(c L^{d} \beta) \E[2^{|\LL_{\beta}(\omega)|} \mathbf 1\{ F_\beta\} ].
\]
where $F_\beta$ is the event that there are no edges in $\{-L+1,\dots,L\} \times \left[-\frac \beta2, \frac \beta2 \right)$. Now, because of the periodic boundary conditions, a configuration of edges $\omega \in F_\beta$ supports all choices of labelings $\tau$ on $\mathbb{B}_L \times \{0\}$, in particular $\tau= \mathbf{1}_L$. Therefore
\[
\E[2^{|\LL_{\beta}(\omega)|} \mathbf 1\{ F_\beta\} ] \leq 4^{L^d} \E[|E_{\beta}(\mathbf 1)|].
\]
Gathering the estimates together, the lemma is proved in this case.
\end{proof}
For more general configurations $\tau$ than $\mathbf{1}_L$, one merely reflects $-\tau$ and connects $\tau$ on $\mathbb{B}_L$ to $-\tau$ on a reflection of $\mathbb{B}_L$
across one of the faces. It does not matter which dimension one chooses to reflect in if $d>1$.
In fact, it seems reasonable that one can make an even better construction in dimensions $d>1$ (reflecting something more like the true higher dimensional analogue of the dipole picture)
so that one need not assume $L\leq \frac{1}{2}N$, but instead just that $L^d \leq \frac{1}{2} N^d$.
But we will not pursue this here.
\subsubsection{The Case $\Delta \notin [-1, 1]$ and $\beta< 4L$.}
\label{S:func}
The above argument does not work for $\Delta \notin [-1, 1]$. Nevertheless, if $\beta< 4L$ we can use a more direct functional analysis argument, working with the operators in the expression
\begin{equation}
\label{E:EFP1}
\EFP_{L}(N,\beta)= \frac{\textrm{Tr}\left( \prod_{\bi \in \mathbb{B}_L} \left[\frac{1}{2} + S_{\bi}^z\right] e^{-\beta H'_{N,\Delta}} \right)
}{\textrm{ Tr} (e^{-\beta H'_{N,\Delta}})}.
\end{equation}
where
\begin{equation}
\label{E:Ham}
-H'_{N, \Delta}\, =\, -b(\Lambda)+ H_{N,\Delta}\, ,
\end{equation}
where $b(\Lambda)=\frac{1}{4} |\mathcal{E}(\T_N)|$, so that $\operatorname{Tr}[\exp(-\beta H'_{N,\Delta})] = \E[2^{|\LL_{\beta}(\omega)|}]$
 by \cref{EPar}. 
\begin{lemma}
For any $\Delta \in \R$, there are constants $c, C>0$ so that the following holds. If $|\beta| \leq 4L$,
\[
\EFP_{L}(N,\beta) \geq Ce^{-c L^d \beta}.
\]
\end{lemma}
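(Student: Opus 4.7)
The plan is to compare $\EFP_L(N,\beta)$ to its value in a decoupled system in which the interactions touching $\B_L$ have been removed, and to control the error by a Peierls--Bogoliubov-type bound of size $O(\beta L^d)$.

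Decompose the Hamiltonian appearing in \cref{E:EFP1} as $H'_{N,\Delta}= H_{\mathrm{out}} + H_{\partial}$, where $H_{\mathrm{out}}$ collects the edge terms $-(S^x_\bi S^x_\bj + S^y_\bi S^y_\bj + \Delta S^z_\bi S^z_\bj)$ over edges $\{\bi,\bj\}$ with both endpoints in $\B_N\setminus\B_L$, and $H_{\partial}$ collects the remaining edge contributions (those meeting $\B_L$) together with the constant shift $b(\Lambda)$. Since the number of edges meeting $\B_L$ is $O(L^d)$ and each single-edge term has operator norm bounded in terms of $\Delta$, $\|H_{\partial}\|_\infty\leq C(\Delta) L^d$; moreover $[\mathbf{Q}_L,H_{\mathrm{out}}]=0$ because the two operators are supported on disjoint vertex sets.

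The classical Peierls--Bogoliubov inequality $|\log \operatorname{Tr}(e^{-\beta(A+B)})-\log\operatorname{Tr}(e^{-\beta A})|\leq \beta\|B\|_\infty$ applied to the denominator yields
\[
\operatorname{Tr}(e^{-\beta H'_{N,\Delta}}) \leq e^{\beta\|H_{\partial}\|_\infty}\operatorname{Tr}(e^{-\beta H_{\mathrm{out}}}),
\]
while the projection-weighted analog---convexity of $s\mapsto \log\operatorname{Tr}(\mathbf{Q}_L e^{-\beta(H_{\mathrm{out}}+sH_{\partial})})$ by trace convexity for positive weights, together with the commutativity $[\mathbf{Q}_L,H_{\mathrm{out}}]=0$ which identifies the derivative at $s=0$ with the standard expectation $-\beta\operatorname{Tr}(H_{\partial}\mathbf{Q}_L e^{-\beta H_{\mathrm{out}}})/\operatorname{Tr}(\mathbf{Q}_L e^{-\beta H_{\mathrm{out}}})$ of magnitude $\leq\beta\|H_{\partial}\|_\infty$---gives
\[
\operatorname{Tr}(\mathbf{Q}_L e^{-\beta H'_{N,\Delta}}) \geq e^{-\beta\|H_{\partial}\|_\infty}\operatorname{Tr}(\mathbf{Q}_L e^{-\beta H_{\mathrm{out}}}).
\]
Dividing the two inequalities and exploiting the tensor-product factorisation $e^{-\beta H_{\mathrm{out}}}=I_{\B_L}\otimes e^{-\beta H^*_{\mathrm{out}}}$ with $\mathbf{Q}_L = \Pi^L_+ \otimes I_{\B_N\setminus\B_L}$, the decoupled ratio collapses to
\[
\frac{\operatorname{Tr}(\mathbf{Q}_L e^{-\beta H_{\mathrm{out}}})}{\operatorname{Tr}(e^{-\beta H_{\mathrm{out}}})}=\frac{\operatorname{Tr}_{\B_L}(\Pi^L_+)}{\operatorname{Tr}_{\B_L}(I_{\B_L})}=2^{-L^d},
\]
yielding $\EFP_L(N,\beta)\geq 2^{-L^d}e^{-2\beta C(\Delta) L^d}$. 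In the range $|\beta|\leq 4L$ this is of the advertised form $Ce^{-cL^d\beta}$ after absorbing the $\log 2$ prefactor into the exponential coefficient (with an appropriate adjustment of constants).

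The main technical obstacle is the projection-weighted Peierls--Bogoliubov inequality for the numerator. The derivative computation at the commuting base point is routine, but establishing global convexity of $A\mapsto \log\operatorname{Tr}(\mathbf{Q}_L e^A)$ in the self-adjoint operator $A$ depends on operator-theoretic trace convexity results (in the spirit of Lieb's concavity theorem) beyond the classical $\mathbf{Q}_L = I$ case. All remaining ingredients---the decomposition, the standard PB for the denominator, and the tensor-product computation of the decoupled ratio---are elementary.
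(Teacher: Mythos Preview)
Your approach is correct and is genuinely different from the paper's. The paper decouples only the boundary edges of $\B_L$ (cost $O(\beta L^{d-1})$), keeps the interior edges, and handles the numerator via the graphical Feynman--Kac representation together with the event that no Poisson arrivals cross $\partial\B_L$; the residual $\B_L$ ratio is then bounded by $e^{-c\beta L^d}$. You instead strip \emph{all} edges meeting $\B_L$, so the decoupled $\B_L$ ratio is exactly $2^{-L^d}$ and no graphical representation is needed. Your version is more elementary; the paper's version isolates the $L^{d-1}$ boundary cost more cleanly, but both land on the same $e^{-c\beta L^d}$ conclusion.

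Two remarks. First, do not place the full constant $b(\Lambda)=\tfrac14|\mathcal{E}(\T_N)|$ into $H_\partial$: as written this would make $\|H_\partial\|$ of order $N^d$, not $L^d$. Since the constant cancels in the ratio \cref{E:EFP1}, just drop it (or assign only the per-edge contributions $\tfrac14$ for edges meeting $\B_L$).

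Second, the ``technical obstacle'' you flag is not one: no Lieb-type convexity is required. The commutativity $[\mathbf{Q}_L,H_{\mathrm{out}}]=0$ alone gives the numerator bound via Peierls' inequality. Indeed, choose an orthonormal basis $\{\psi_k\}$ of $\operatorname{Ran}(\mathbf{Q}_L)$ consisting of eigenvectors of $H_{\mathrm{out}}$ (possible by commutativity), with $H_{\mathrm{out}}\psi_k=\lambda_k\psi_k$. Then
\[
\operatorname{Tr}\big(\mathbf{Q}_L e^{-\beta(H_{\mathrm{out}}+H_\partial)}\big)
=\sum_k \langle\psi_k, e^{-\beta(H_{\mathrm{out}}+H_\partial)}\psi_k\rangle
\ge \sum_k e^{-\beta\langle\psi_k,(H_{\mathrm{out}}+H_\partial)\psi_k\rangle}
\ge e^{-\beta\|H_\partial\|}\sum_k e^{-\beta\lambda_k},
\]
using only Jensen's inequality in the form $\langle\psi,e^X\psi\rangle\ge e^{\langle\psi,X\psi\rangle}$ for Hermitian $X$ and unit $\psi$. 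The last sum is exactly $\operatorname{Tr}(\mathbf{Q}_L e^{-\beta H_{\mathrm{out}}})$, which is your desired inequality. So the proof is complete without appealing to any global operator convexity of $A\mapsto\log\operatorname{Tr}(\mathbf{Q}_L e^A)$.
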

\begin{proof}
The simple and brutal idea is to compare to the system in which interactions between $\mathbb{T}_N\backslash \mathbb{B}_L$ and $\mathbb{B}_L$ have been turned off. To this end, let us introduce the interpolating Hamiltonians
\[
-H'_{N,\Delta}(a)= -H'_{\mathbb{T}_N \backslash \mathbb{B}_L,\Delta} - H'_{\mathbb{B}_L,\Delta} 
- a \underbrace{\sum_{\substack{\{ \bi\bj\} \in E(\mathbb{T}_N)\\ \bi \in \mathbb{B}_L,\, \bj \in \mathbb{T}_N \backslash \mathbb{B}_L}} 
S^x_{\bi}S^x_{\bj} +S^y_{\bi}S^y_{\bj}+ \Delta S^z_{\bi}S^z_{\bj}- \frac{|\partial \mathbb{B}_L|}{4}}_{\mathcal I(\mathbb{T}_N\backslash \mathbb{B}_L, \mathbb{B}_L)}
\]
for $a \in [0, 1]$. Here $\partial \mathbb{B}_L$ denotes the edge boundary of $\mathbb{B}_L$ -- the collection of edges with exactly one endpoint in $\mathbb{B}_L$.
Further set
\[
Z(a) = \textrm{ Tr} (e^{-\beta H_{N,\Delta}'(
a)}).
\]
Then $|\log Z(1)- \log Z(0)| = \left|\int_0^1 \frac{\textd}{\textd a} \log Z(a) \right|$
and
\[
\frac{\textd}{\textd a} \log Z(a)= \frac{\beta \textrm{Tr} (\mathcal I(\mathbb{T}_N\backslash \mathbb{B}_L, \mathbb{B}_L) e^{-\beta H'_{N,\Delta}})}{Z(a)}.
\]
Because the summands of $\mathcal I(\mathbb{T}_N\backslash \mathbb{B}_L, \mathbb{B}_L)$ are uniformly bounded the $\ell^1$-$\ell^\infty$ H\"{o}lder inequality for operators implies
$\frac{\textd}{\textd a} \log Z(a) \leq c\beta L^{d-1}$ and thus
\[
e^{-c \beta L^{d-1}} \leq \frac{Z(1)}{Z(0)} \leq e^{c \beta L^{d-1}}.
\]
To handle the numerator in \cref{E:EFP1}, we use a modified version of the graphical representation summarized in \Cref{S:graphical} (see \Cref{S:UB} for more details since it is also important there). For $\Delta$ fixed, we may write $-H'_{N,\Delta}= -H'_{N,{1}} + \sum_{\{\bi,\bj\} \in E(\mathbb{T}_N)}
(\Delta -1)S^z_{\bi} S^z_{\bj}$. The term $\sum_{\{\bi,\bj\}}
(\Delta -1)S^z_{\bi} S^z_{\bj}$ acts as a potential over the configuration space of labeled graphs $\Sigma_{N,\beta}^{\mathrm{per}}(\omega)$ determined by the graphical representation for $-H'_{N,{1}}$. 
(This is much the same as the outcome of the Feynman-Kac expansion for Schr\"{o}dinger operators).
Using this representation we have:
there is a constant $C_0$ depending only on $u, \Lambda_N$ and $\beta$ so that
\begin{align}
& Z_{N,\Delta}(\beta)= C_0 \E_{1}\left[ \sum_{\sigma \in \Sigma^{\mathrm{per}}_{N,\beta}(\omega)} e^{\frac{\Delta-1}{4} V_\beta(\sigma)} \right], \\
\label{E:Pot}
& Z_{N,\Delta}(\beta) \EFP_L(N,\beta) = C_0 \E_{1} \left[ \sum_{\sigma \in E_{N,L,\beta}(\mathbf 1_L)} e^{\frac{\Delta-1}{4} V_\beta(\sigma)} \right]
\end{align}
where $\E_{1}[\cdot]$ denotes the expectation associated with respect to the Poisson process for the graphical representation of $-H'_{N,{1}}$ and 
$-V_\beta(\sigma)= \int_{-\beta/2}^{\beta/2} \textrm{d} t 
\sum_{\{\bi,\bj\}} \sigma_{\bi}(t) \sigma_{\bj}(t)$.
We use, in particular, \cref{E:Pot}. Let $F_L$ be the event that there are no bonds connecting $\mathbb{B}_L$ to $\mathbb{T}_N\backslash \mathbb{B}_L$. Then, of course,
\[
\E_{1} \left[ \sum_{\sigma \in E_{N,L,\beta}(\mathbf 1_L)} e^{\frac{\Delta-1}{4} V_\beta(\sigma)} \right]
\geq \E_{1} \left[ \mathbf 1\{F_L\}\sum_{\sigma \in E_{N,L,\beta}(\mathbf 1_L)} e^{\frac{\Delta-1}{4} V_\beta(\sigma)} \right].
\]
Combined with the bound above on the ratio of partition functions this implies (for a second pair of constants $c_1, C_1$)
\begin{align*}
\EFP_L(N,\beta)\, 
&\geq\, e^{-c \beta L^{d-1}} \frac{\E_{1} \left[ \mathbf 1\{F_L\} \sum_{\sigma \in E_{N,L,\beta}(\mathbf 1_L)} e^{\frac{\Delta-1}{4} V_\beta(\sigma)}\right]}{Z(0)}\\ 
&\geq\, C_1 e^{-c_1 \beta L^{d-1}} 
\frac{\textrm{ Tr} \left( \prod_{\bi \in \mathbb{B}_L} \left[\frac 12 + S_{\bi}^z\right] \exp\left(-\beta H'_{\mathbb{B}_L,\Delta}\right)\right)}{\textrm{ Tr} (\exp\left(-\beta H'_{\mathbb{B}_L,\Delta}\right))}.
\end{align*}
Part of the reason for the factor $e^{-c_1 \beta L^{d-1}}$ is the potential terms spanning the boundary of $\partial \mathbb{B}_L$.
Since the numerator on the righthand side of this string of inequalities is
\[
\textrm{ Tr} (\exp\left(-\beta H'_{\mathbb{B}_L,\Delta}\right))\, =\, \langle  \Psi_{\mathbb{B}_L}(\mathbf{1}_L), e^{-\beta H'_{\mathbb{B}_L,\Delta}}  \Psi_{\mathbb{B}_L}(\mathbf{1}_L) \rangle\, ,
\]
and the denominator on the righthand side of this string of inequalities is
\[
\sum_{\sigma \in\Sigma_{\mathbb{B}_L}} \langle \Psi_{\mathbb{B}_L}(\sigma), \exp\left(-\beta H'_{\mathbb{B}_L,\Delta}\right)\Psi_{\mathbb{B}_L}(\sigma)\rangle\, ,
\]
we have that
\[
\frac{\textrm{ Tr} \left[ \prod_{\bi \in \mathbb{B}_L} \left[\frac 12 + S_{\bi}^z\right] \exp\left(-\beta H'_{\mathbb{B}_L,\Delta}\right)\right]}
{\textrm{ Tr} \left[\exp\left(-\beta H'_{\mathbb{B}_L,\Delta}\right)\right]} \geq C_2 \exp^{-c_2 \beta L^d},
\]
and the lemma follows.
\end{proof}
\section{The Upper Bound \cref{E:Ud2} in \Cref{thm:Main1}}
\label{S:UB}

The idea of the upper bound is that the projector $\mathbf{Q}_L$ imposes higher energy relative to the Hamiltonian $H_N$
than in the ground state.
We do not have an exact formula for the ground state or the ground state energy.
But we can obtain variational upper bounds on the ground state energy.

Aside from this idea, we use two main tools for this part of the argument.
The first main tool is a graphical representation for the equilibrium expectation.
This is the one introduced in \Cref{S:func}, and used for example in equation \cref{E:Pot}.
\begin{lemma}
\label{lem:FK}
For any graph $\mathcal{G} =  (\V,\E)$, and for any configurations $\sigma,\tau \in \Sigma_{\V}$,
\begin{equation}
\label{eq:Eform2}
\langle \Psi_{\V}(\tau), e^{-\beta H_{\G,\Delta}} \Psi_{\V}(\sigma)\rangle\,
=\, e^{\beta |\mathcal{E}|} \E_1\Bigg[\sum_{\sigma(\cdot) \in \Sigma_{\mathcal{V},\beta}(\omega)} \mathbf{1}_{\{\sigma(-\beta/2)=\sigma\}} \mathbf{1}_{\{\sigma(\beta/2)=\tau\}}
\exp\bigg(\frac{\Delta-1}{4}\, \int_{-\beta/2}^{\beta/2} U_{\G}(\sigma(t))\, dt\bigg)\Bigg]\, ,
\end{equation}
where $\E_1$ is the expectation defined just before \Cref{prop:Ueltschi}, and where $U_{\G} : \Sigma_{\V} \to \R$
is defined as the usual ferromagnetic Ising energy
$U_{\G}(\sigma) = -\sum_{\{\bi,\bj\} \in \E} \s_{\bi} \s_{\bj}$.
\end{lemma}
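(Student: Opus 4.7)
My approach is a Feynman--Kac/Duhamel expansion that reduces the general XXZ matrix element to the $\E_1$-representation for the reference case $\Delta = 1$ already provided by \Cref{prop:Ueltschi}. Write
\[
-H_{\G,\Delta} \;=\; -H_{\G,1} \;+\; (\Delta - 1)\, V, \qquad V \;:=\; \sum_{\{\bi,\bj\}\in\mathcal{E}} S^z_\bi S^z_\bj,
\]
and observe that $V$ is diagonal in the Ising basis, with $V\,\Psi_\V(\sigma) = \tfrac{1}{4}\bigl(\sum_{\{\bi,\bj\}}\sigma_\bi\sigma_\bj\bigr)\Psi_\V(\sigma)$, a scalar proportional to $U_\G(\sigma)$ up to sign. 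Since \Cref{prop:Ueltschi} at $u=1$ already expresses every matrix element of $e^{-\beta H_{\G,1}}$ as an $\E_1$-expectation over admissible labelings, only the diagonal perturbation remains to be incorporated into the graphical representation.

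The central step is the Lie--Trotter product formula
\[
e^{-\beta H_{\G,\Delta}} \;=\; \lim_{n\to\infty}\bigl(e^{-(\beta/n)H_{\G,1}}\, e^{(\beta/n)(\Delta-1)V}\bigr)^{n}.
\]
Taking the $(\Psi_\V(\tau),\Psi_\V(\sigma))$ matrix element and inserting a resolution of identity $\sum_{\sigma^{(k)}\in\Sigma_\V}|\Psi_\V(\sigma^{(k)})\rangle\langle\Psi_\V(\sigma^{(k)})|$ between consecutive factors produces, for each $n$, a sum over intermediate Ising configurations $\sigma = \sigma^{(0)},\sigma^{(1)},\ldots,\sigma^{(n)} = \tau$, in which every diagonal factor contributes the scalar eigenvalue of $\exp((\beta/n)(\Delta-1)V)$ on $\Psi_\V(\sigma^{(k)})$, while each kinetic factor $\langle\Psi_\V(\sigma^{(k+1)}),\,e^{-(\beta/n)H_{\G,1}}\,\Psi_\V(\sigma^{(k)})\rangle$ is exactly the \Cref{prop:Ueltschi} expectation on a time slab of width $\beta/n$ with endpoint spins $\sigma^{(k)}$ and $\sigma^{(k+1)}$.

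Passing to the limit $n\to\infty$ is then bookkeeping. By the strong Markov property the $n$ independent Poisson slabs concatenate into a single $\E_1$-process on $\V\times[-\beta/2,\beta/2]$; the intermediate sums over the $\sigma^{(k)}$ together with the slab-boundary indicator functions coalesce into the single sum over admissible labelings $\sigma(\cdot)\in\Sigma_{\V,\beta}(\omega)$ with $\sigma(-\beta/2)=\sigma$ and $\sigma(\beta/2)=\tau$; the Riemann sum of diagonal weights converges to the time integral appearing in the exponent of the stated formula, using that every admissible labeling $\sigma(\cdot)$ is piecewise constant with only finitely many jumps almost surely; and the accumulated \Cref{prop:Ueltschi} prefactors $e^{(\beta/n)|\mathcal{E}|/4}$ combine to produce the overall scalar in front of $\E_1$.

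The main technical obstacle is justifying the interchange of the $n\to\infty$ limit with the expectation and with the sums over labelings. I would handle this by dominated convergence: the diagonal factors are uniformly bounded in magnitude by $e^{\beta|\Delta-1||\mathcal{E}|/4}$, the \Cref{prop:Ueltschi} kernels are nonnegative (so monotonicity across the Trotter sum is automatic), and exponential tail bounds on the total number of Poisson arrivals in $\V\times[-\beta/2,\beta/2]$ supply an integrable majorant. A cleaner alternative that bypasses any explicit Trotter error control is to check the identity at $\Delta=1$, where it reduces to \Cref{prop:Ueltschi}, and then verify that both sides satisfy the same first-order linear ODE in $\Delta$ via Duhamel's formula applied to the diagonal perturbation $V$.
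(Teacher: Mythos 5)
Your primary route---Lie--Trotter plus dominated convergence---is one of the two equivalent strategies the paper itself names (``This can be proved by the Trotter product formula or by taking derivatives'') before executing the second one: the paper defines the operator $A_{\Delta,\beta}$ whose matrix elements give the right-hand side, checks $\frac{d}{d\beta}A_{1,\beta}=-H_{\G,1}A_{1,\beta}$ directly from \Cref{prop:Ueltschi}, and reads off the extra diagonal term to conclude that both sides solve the same linear ODE in $\beta$ with the same initial data. That route is shorter because it dispenses with Trotter error control and the concatenation of independent Poisson slabs; your Trotter argument is more hands-on but has to supply the dominated-convergence step you correctly anticipate. Your proposed alternative (fix $\beta$, check equality at $\Delta=1$, and run a Duhamel/ODE argument in $\Delta$) is a further cosmetic variant of the paper's derivative strategy. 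All three rest on the same two pillars: the decomposition $-H_{\G,\Delta}=-H_{\G,1}+(\Delta-1)V$ with $V=\sum_{\{\bi,\bj\}}S^z_{\bi}S^z_{\bj}$ diagonal in the Ising basis, and the $u=1$ case of \Cref{prop:Ueltschi}. Two loose ends worth tightening in your write-up: fix the sign rather than leaving it at ``proportional up to sign''---since $U_\G(\sigma)=-\sum\sigma_\bi\sigma_\bj$, one has $V\Psi_\V(\sigma)=-\tfrac14\,U_\G(\sigma)\Psi_\V(\sigma)$---and note that your own prefactor bookkeeping yields $\prod_{k=1}^{n}e^{(\beta/n)|\mathcal{E}|/4}=e^{\beta|\mathcal{E}|/4}$, consistent with \Cref{prop:Ueltschi} at $\Delta=1$ but not with the $e^{\beta|\mathcal{E}|}$ printed in \cref{eq:Eform2}, which is evidently a typo that your proof quietly corrects.
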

This is the same formula we have used before
in equation \cref{E:Pot} in \Cref{prop:Ueltschi}.
We will give a proof, shortly, for completeness.

The reason \Cref{lem:FK} is useful is that for $\Delta<1$, the exponential factor is actually
the Gibbs weight factor for the antiferromagnetic Ising model.
The projector $\mathbf{Q}_L$ is onto the ground states of the ferromagnetic Ising model
on the block $\B_L$.
Therefore, on this block, relative to the potential energy $U_{\G}$,
the configuration that $\mathbf{Q}_L$ imposes has the highest possible energy,
which means that its Gibbs probability is low.

This still leaves the difficulty that the block $\B_L$ is not as large as $\T_N$.
Moreover, the effect of imposing $\mathbf{Q}_L$ at one time does not last for all time $\beta$
because the graphical representation inherent in $\E_1$ does allow the energy of $U_{\G}$
to change.
The second main tool is reflection positivity and chessboard estimates which allows us to disseminate
the event imposed by $\mathbf{Q}_L$ in space, as well as the generalized 
H\"older's inequality to re-impose the event after a period $\delta T$ of time.
These allow us to disseminate the event to overcome this difficulty.

But the argument is still involved at this point.
Every Hamiltonian satisfies the generalized H\"older's inequality which allows to disseminate in time.
But only certain Hamiltonians satisfy reflection positivity.
For the XXZ model, this requires $\Delta\leq 0$.
Moreover, the dissemination in space is antiferromagnetic in nature.
The hopping allowed by the process related to $\E_1$
allows mixing of two adjacent blocks, one of which is of type $\uparrow$/$+$ and the other of which is $\downarrow$/$-$.
But this is essentially a boundary effect if $\delta T$ is a sufficiently small fraction of the linear size of the block $L$.
We can show this by analysis of the Poisson process related to $\E_1$.

With these guiding principles, we will now enter the details of the proof.
Before doing that, let us -quickly prove Lemma \ref{lem:FK}.

\begin{proof}
This can be proved by the Trotter product formula or by taking derivatives. It is a typical Feynman-Kac formula for perturbing 
the generator of a Markov process by a potential. 
Defining the operator $A_{\Delta,\beta}$ such that
$\langle \Psi_{\V}(\tau), A_{\Delta,\beta} \Psi_{\V}(\s)\rangle$ gives the right-hand-side of \cref{eq:Eform2}, we already know from \cref{eq:Eform} that 
$\frac{d}{d\beta} A_{1,\beta} = -H_{\G,1} A_{1,\beta}$.
So, by the rules of differentiation $\frac{d}{d\beta} \langle \Psi_{\V}(\tau), A_{\beta,\Delta} \Psi_{\V}(\s)\rangle$
equals $\langle \Psi_{\V}(\tau), [-H_{\G,1}]A_{\beta,\Delta} \Psi_{\V}(\s)\rangle$ plus $-\frac{1}{4}\, (\Delta-1) U(\tau) \langle \Psi_{\V}(\tau), A_{\beta,\Delta} \Psi_{\V}(\s)\rangle$.
Then the lemma follows because
$H_{\G,\Delta} - H_{\G,1} = -(\Delta-1)\sum_{\{\bi,\bj\}\in\mathcal{E}}S_{\bi}^z S_{\bj}^z$.
I.e., $(H_{\G,\Delta}-H_{\G,1})\Psi_{\V}(\tau) = -\frac{1}{4}\, (\Delta-1) U_{\G}(\tau) \Psi_{\V}(\tau)$.
\end{proof}

\subsection{Reduction to Estimating the Cost of a ``Universal Contour''}


Recall that we always assume $N$ is even. 
%
Recall that the vertex set of $\T_N$ is 
the same as the vertex set $\B_N$ \Cref{eq:BNdef}, even though the edge set for $\T_N$
contains extra edges beyond those in $\B_N$ \Cref{eq:TNdef}.
Let us define the configuration $\tau^{(N,L)}$ and the rank-1 projection $\widehat{\mathbf{Q}}_{N,L}$, as follows:
\begin{equation}
\label{eq:QLNdef}
\tau^{(L,N)}_{\bi}\, =\, (-1)^{\lfloor{(2i_1-1)/(2L)}\rfloor+\dots+\lfloor{(2i_d-1)/(2L)}\rfloor}\, ,\quad \text{ and }\
\widehat{\mathbf{Q}}_{N,L}\, 
=\, \ket{\Psi_N(\tau^{(L,N)})} \bra{\Psi_N(\tau^{(L,N)})}\, .
\end{equation}
The operator $\widehat{\mathbf{Q}}_{N,L}$ is the ``universal contour.''
A schematic picture of it is
given in \Cref{fig:unicon}. 
Then, using {\em reflection positivity}, we may prove the following lemma.

\begin{lemma}
\label{lem:ChessLoss}
Suppose that $\Delta\leq 0$. Let $L$ be fixed, satisfying $L\leq N/2$.
Then
\begin{equation}
\label{ineq:QhatQ}
\EFP_L(N,\beta)\, =\, 
\langle \mathbf{Q}_{L}\rangle_{N, \Delta,\beta}\,
\leq\, \left(\langle \widehat{\mathbf{Q}}_{N,L} \rangle_{N, \Delta,\beta}\right)^{1/K}\, ,
\end{equation}
where $K = 2^{d (\log_2(N/L)+1)}$.
\end{lemma}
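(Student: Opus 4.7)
The plan is to deploy reflection positivity of the XXZ Hamiltonian at $\Delta \leq 0$ and then invoke the standard chessboard estimate. After the sublattice unitary $U_A$ introduced earlier, one has $U_A H_{N,\Delta} U_A^* = \sum_{\{\bi,\bj\}}(S^x_\bi S^x_\bj + S^y_\bi S^y_\bj - \Delta S^z_\bi S^z_\bj)$, which for $\Delta \leq 0$ places the model in the Heisenberg antiferromagnet class that is reflection positive through bond midpoints on the even torus $\mathbb{T}_N$, by classical Dyson-Lieb-Simon arguments adapted to quantum spins.

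Assuming $N/L$ is a power of $2$ (otherwise one inflates $L$ slightly), I would choose reflection planes bisecting $\mathbb{T}_N$ in each coordinate direction and iteratively bisect $\log_2(N/L)$ times per direction, so that the resulting fundamental cells are translates of $\mathbb{B}_L$. The chessboard estimate then yields
$$
\langle \mathbf{Q}_L \rangle_{N,\Delta,\beta}\, \leq\, \langle \mathcal{C}(\mathbf{Q}_L)\rangle_{N,\Delta,\beta}^{1/K},
$$
where $\mathcal{C}(\mathbf{Q}_L)$ is obtained by placing reflected copies of $\mathbf{Q}_L$ (each conjugated by $U_A$ as required to preserve the Hamiltonian) into every fundamental cell. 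The essential observation is that each bond-reflection combined with $U_A$ maps $\tfrac12 + S^z_\bi$ to $\tfrac12 - S^z_\bi$. Iterating, the sign pattern across cells is exactly $\tau^{(L,N)}_{\bi} = (-1)^{\sum_k \lfloor (2i_k-1)/(2L)\rfloor}$, so $\mathcal{C}(\mathbf{Q}_L)$ coincides with $\widehat{\mathbf{Q}}_{N,L}$.

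To track the exponent, the naive bisection scheme produces $(N/L)^d$ fundamental cells, but a careful accounting including the initial symmetrization step needed to place $\mathbf{Q}_L$ in standard position with respect to the chessboard structure yields $K = 2^{d(\log_2(N/L)+1)}$. Substituting gives the claimed bound $\langle \mathbf{Q}_L\rangle \leq \langle \widehat{\mathbf{Q}}_{N,L}\rangle^{1/K}$.

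The main obstacle is the geometric bookkeeping: verifying that the alternation pattern generated by repeated bond-reflections combined with $U_A$ matches $\tau^{(L,N)}$ exactly (and not a translated or otherwise perturbed version), together with pinning down the precise value of $K$ rather than a nearby number. Any slack in $K$ only weakens constants in the resulting bound for $\langle \mathbf{Q}_L\rangle$, not the $L^{d+1}$-order decay that is the ultimate goal of \Cref{S:UB}.
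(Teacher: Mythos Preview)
Your approach is essentially the paper's: reflection positivity for $\Delta\leq 0$ after the sublattice unitary, iterated Cauchy--Schwarz through bond hyperplanes in each coordinate direction, and the observation that conjugation by $U_A$ flips $\tfrac12+S^z_\bi$ to $\tfrac12-S^z_\bi$, producing exactly the sign pattern $\tau^{(L,N)}$.

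The one place where your account is imprecise is the origin of the ``$+1$'' in the exponent. It does not come from an ``initial symmetrization step,'' and your fallback of ``inflating $L$ slightly'' when $N/L$ is not a power of $2$ is not available, since $L$ is fixed. The paper handles this differently: one performs $\lfloor\log_2(N/L)\rfloor$ full rounds of reflection in each direction, and then one \emph{additional} reflection per direction. This last reflection overshoots---the doubled block now covers more than half the torus---so in the Cauchy--Schwarz step the complementary factor is the expectation of a projection, which is bounded above by $1$ and simply discarded. That extra round is the ``loss'' in the name, and it is precisely what gives $K=2^{d(\lfloor\log_2(N/L)\rfloor+1)}\leq 2^{d(\log_2(N/L)+1)}$. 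As you correctly note, this slack is harmless for the eventual $L^{d+1}$ rate.
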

The proof of this result is relegated to the appendix as it is a standard application of the chessboard estimates method in \cite{FrohLieb}. For the readers convenience, we give a short account of reflection positivity for quantum spin systems there; see \Cref{S:RP}.
\begin{figure}
\centerline{
\begin{tikzpicture}[xscale=0.75, yscale=0.75, very thick]
\foreach \x in {-3.25,-3,...,3.5}
{\foreach \y in {-3.25,-3,...,3.5}
{
\fill (\x,\y) circle (0.2mm);}}
\foreach \x in {-0.25,0,0.25,0.5}
{\foreach \y in {-0.25,0,0.25,0.5}
{
\fill[white] (\x,\y) circle (0.72mm);
\filldraw[thick,fill=gray] (\x,\y) circle (0.8mm);
}}
\end{tikzpicture}
\hspace{1.5cm}
\begin{tikzpicture}[xscale=0.75, yscale=0.75, very thick]
\foreach \x in {-3.25,-3,...,3.5}
{\foreach \y in {-3.25,-3,...,3.5}
{
\fill (\x,\y) circle (0.2mm);}}
\foreach \x in {-1.75,-1.5,-1.25,-1,0.25,0.5,0.75,1,2.25,2.5,2.75,3}
{\foreach \y in {-1.75,-1.5,-1.25,-1,0.25,0.5,0.75,1,2.25,2.5,2.75,3}
{
\fill[white] (\x,\y) circle (0.72mm);
\filldraw[thick,fill=gray] (\x,\y) circle (0.8mm);
}}
\begin{scope}[xshift=-1cm]
\foreach \x in {-1.75,-1.5,-1.25,-1,0.25,0.5,0.75,1,2.25,2.5,2.75,3}
{\foreach \y in {-1.75,-1.5,-1.25,-1,0.25,0.5,0.75,1,2.25,2.5,2.75,3}
{
\fill[white] (\x,\y) circle (0.72mm);
\filldraw[gray!50!black,thick,fill=white] (\x,\y) circle (0.8mm);
}}
\end{scope}
\begin{scope}[yshift=-1cm]
\foreach \x in {-1.75,-1.5,-1.25,-1,0.25,0.5,0.75,1,2.25,2.5,2.75,3}
{\foreach \y in {-1.75,-1.5,-1.25,-1,0.25,0.5,0.75,1,2.25,2.5,2.75,3}
{
\fill[white] (\x,\y) circle (0.72mm);
\filldraw[gray!50!black,thick,fill=white] (\x,\y) circle (0.8mm);
}}
\end{scope}
\begin{scope}[xshift=-1cm,yshift=-1cm]
\foreach \x in {-1.75,-1.5,-1.25,-1,0.25,0.5,0.75,1,2.25,2.5,2.75,3}
{\foreach \y in {-1.75,-1.5,-1.25,-1,0.25,0.5,0.75,1,2.25,2.5,2.75,3}
{
\fill[white] (\x,\y) circle (0.72mm);
\filldraw[thick,fill=gray] (\x,\y) circle (0.8mm);
}}
\end{scope}
\foreach \x in {3.25,3.5}
{\foreach \y in {-2.75,-2.5,-2.25,-2,-0.75,-0.5,-0.25,0,1.25,1.5,1.75,2,3.25,3.5}
{
\fill[white] (\x,\y) circle (0.72mm);
\filldraw[thick,fill=gray] (\x,\y) circle (0.8mm);
\fill[white,xshift=0.25cm,yshift=0.25cm] (-\x,-\y) circle (0.72mm);
\filldraw[thick,fill=gray,xshift=0.25cm,yshift=0.25cm] (-\x,-\y) circle (0.8mm);
\fill[white,yshift=0.25cm] (\x,-\y) circle (0.72mm);
\draw[thick,gray!50!black,yshift=0.25cm] (\x,-\y) circle (0.8mm);
\fill[white,xshift=0.25cm] (-\x,\y) circle (0.72mm);
\draw[thick,gray!50!black,xshift=0.25cm] (-\x,\y) circle (0.8mm);
}}
\foreach \y in {3.25,3.5}
{\foreach \x in {-2.75,-2.5,-2.25,-2,-0.75,-0.5,-0.25,0,1.25,1.5,1.75,2,3.25,3.5}
{
\fill[white] (\x,\y) circle (0.72mm);
\filldraw[thick,fill=gray] (\x,\y) circle (0.8mm);
\fill[white,xshift=0.25cm,yshift=0.25cm] (-\x,-\y) circle (0.72mm);
\filldraw[thick,fill=gray,xshift=0.25cm,yshift=0.25cm] (-\x,-\y) circle (0.8mm);
\fill[white,yshift=0.25cm] (\x,-\y) circle (0.72mm);
\draw[thick,gray!50!black,yshift=0.25cm] (\x,-\y) circle (0.8mm);
\fill[white,xshift=0.25cm] (-\x,\y) circle (0.72mm);
\draw[thick,gray!50!black,xshift=0.25cm] (-\x,\y) circle (0.8mm);
}}
\end{tikzpicture}
}
\caption{On the left is a schematic view of $\mathbf{Q}_L$, for $d=2$ dimensions, when $L=4$ and $N=28$.
In the right the ``universal contour,'' $\widehat{\mathbf{Q}}_{L,N}$ is shown.
A black circle depicts a $+$ spin projector, and a white circle depicts a $-$ spin projector.
A small dot represents a site without a projector.
\label{fig:unicon}
}
\end{figure}
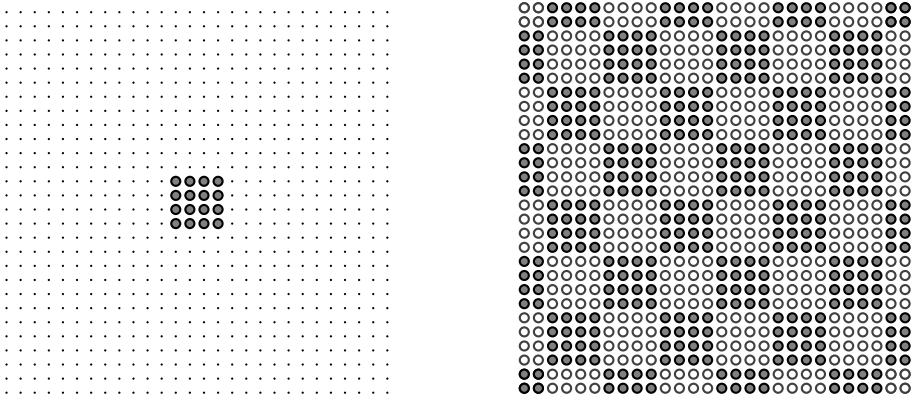

Given \Cref{lem:ChessLoss} our task now is to bound $\langle \widehat{\mathbf{Q}}_{N, L} \rangle_{N,\Delta,\beta}$ from above. 
We note that \Cref{lem:ChessLoss} applies for $\Delta\leq 0$ so that an upper bound on $\langle \widehat{\mathbf{Q}}_{N, L} \rangle_{N,\Delta,\beta}$
leads to an upper bound on $\langle \mathbf{Q}_{L} \rangle_{N,\Delta,\beta}$ for $\Delta \leq 0$.
But we may actually obtain an upper bound on $\langle \widehat{\mathbf{Q}}_{N, L} \rangle_{N,\Delta,\beta}$
for a wider range of $\Delta$'s.
We demonstrate this since the argument is the same.
\begin{theorem}
\label{thm:Block}
For each $\Delta<1$, there exists an $L_0 \in \{1,2,\dots\}$ and $C, c>0$ such that for all $L\geq L_0$
and all $N\geq 32 d L$, we have
$$
\langle \widehat{\mathbf{Q}}_{N,L} \rangle_{N, \Delta,\beta}\, \leq\, C e^{-c N^d \min(L, \beta)}\, .
$$
\end{theorem}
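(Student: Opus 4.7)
The plan is to combine the graphical representation of \Cref{lem:FK} with a short-time-slicing argument to obtain the desired decay.

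First, by rank-one-ness of $\widehat{\mathbf{Q}}_{N,L}$,
\[
\langle\widehat{\mathbf{Q}}_{N,L}\rangle_{N,\Delta,\beta}=\frac{\langle\Psi_N(\tau^{(L,N)}),e^{-\beta H_{N,\Delta}}\Psi_N(\tau^{(L,N)})\rangle}{Z_{N,\Delta}(\beta)}.
\]
Applying \Cref{lem:FK} with $u=1$ to the numerator expresses it as a Poisson-process expectation restricted to trajectories satisfying $\sigma(\pm\beta/2)=\tau^{(L,N)}$, weighted by the ferromagnetic-favoring Ising factor $\exp(\tfrac{\Delta-1}{4}\int_{-\beta/2}^{\beta/2}U_{\mathbb{T}_N}(\sigma(t))\,dt)$; similarly for the denominator but summed over all periodic trajectories. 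A variational lower bound, obtained by keeping only Poisson configurations with no arrivals and the constant trajectory $\sigma\equiv\mathbf{1}$, gives $Z_{N,\Delta}(\beta)\geq\exp(c_0\beta N^d)$ for some $c_0=c_0(\Delta,d)>0$, since this configuration realizes the ferromagnetic minimum $U_{\mathbb{T}_N}(\mathbf{1})=-dN^d$ of the Ising factor.

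Next I would apply a generalized H\"older inequality in imaginary time to reduce the return amplitude to a product of short-time return amplitudes. Fix $\delta T:=c_1\min(\beta,L)$ with $c_1>0$ small, and let $K:=\lfloor\beta/\delta T\rfloor$. The rank-one identity $\widehat{\mathbf{Q}}_{N,L}A\widehat{\mathbf{Q}}_{N,L}=\langle\Psi_N(\tau^{(L,N)})|A|\Psi_N(\tau^{(L,N)})\rangle\widehat{\mathbf{Q}}_{N,L}$, together with operator H\"older for the trace, is used to bound $\langle\widehat{\mathbf{Q}}_{N,L}\rangle_{N,\Delta,\beta}$ by the $K$-th power of the analogous ratio at time scale $\delta T$, modulo partition-function corrections that cancel in the ratio. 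This effectively re-imposes the chessboard configuration at $K$ equally spaced times in $[-\beta/2,\beta/2]$.

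For the short-time estimate, \Cref{lem:FK} is applied again on the window $[-\delta T/2,\delta T/2]$. The key geometric observation is that the Poisson process underlying $\E_1$ propagates spatial information at $O(1)$ speed: a cul-de-sac arrival flips both adjacent spins, so to flip a spin at distance $r$ from a seed site requires at least $r$ sequential cul-de-sac arrivals on a connecting chain of edges, which by elementary Poisson tail estimates cannot happen on a time scale $\delta T\ll r$. With $\delta T\leq c_1 L$, this forces that in each of the $(N/L)^d$ blocks of side $L$, all but an $O(\delta T L^{d-1})$ shell of sites remains pinned at its chessboard value $\pm 1$ throughout the window. Consequently $U_{\mathbb{T}_N}(\sigma(t))$ stays at an extensive $\Theta(N^d)$ deficit above the ferromagnetic minimum $-dN^d$ over the whole window, yielding a short-time ratio of the form $\exp(-c_2\delta T\,N^d)$ after normalizing by the variational lower bound on $Z_{N,\Delta}(\delta T)$. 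Raising to the $K$-th power produces the claimed $\exp(-c\,N^d\,K\delta T)=\exp(-c\,N^d\min(\beta,L))$.

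The main obstacle is the short-time estimate, specifically making precise the ``no fast propagation'' statement so that the $N^d$-extensive (rather than merely $N^d/L$-extensive) Ising-energy deficit survives uniformly over the window $[-\delta T/2,\delta T/2]$. This will require a careful combinatorial analysis of admissible Poisson configurations on the space-time slab, combining a large-deviation estimate for the total number of Poisson arrivals inside each block with a peeling argument from the block boundaries inward that uses the bipartite structure of $\mathbb{T}_N$ and the time-boundary constraints. Dovetailing the two regimes $\beta\leq L$ and $\beta>L$ inside the H\"older decomposition is a straightforward but tedious bookkeeping step on top, requiring no further conceptual input.
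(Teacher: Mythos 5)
Your strategy reproduces the skeleton of the paper's argument — generalized H\"older in imaginary time, a variational lower bound on the partition function, and a Feynman--Kac short-time estimate built on finite-speed Poisson propagation — so structurally you are on the right track; the short-time ``no fast propagation'' estimate is indeed the core technical work. There are, however, two genuine gaps.

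The more serious one is the choice of trial state for the denominator. You lower-bound $Z_{N,\Delta}(\beta)$ by keeping only the constant all-$+$ trajectory with no arrivals, which amounts to the variational inequality $Z_{N,\Delta}(\beta)\geq\langle\Psi_N(\mathbf{1}),e^{-\beta H_{N,\Delta}}\Psi_N(\mathbf{1})\rangle\geq e^{-\beta\langle\Psi_N(\mathbf{1}),H_{N,\Delta}\Psi_N(\mathbf{1})\rangle}$. But $\langle\Psi_N(\mathbf{1}),H_{N,\Delta}\Psi_N(\mathbf{1})\rangle=-\tfrac{\Delta}{4}|\Ed(\T_N)|$, so the $c_0$ your argument produces is at best $\tfrac{\Delta d}{4}$: this is nonpositive for $\Delta\leq 0$ and strictly less than $\tfrac{d}{4}$ for every $\Delta<1$. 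That is not enough. After applying \Cref{lem:FK} to $\tr[\widehat{\mathbf{Q}}_{N,L}e^{-\delta T H}]$ the graphical prefactor $e^{\delta T|\Ed|/4}$ appears, and the Ising suppression on the chessboard is at best $e^{-\tfrac{1-\Delta}{4}\delta T|\Ed|}$; dividing by $Z(\beta)^{1/(2n)}\geq e^{c_0 N^d\delta T}$, you need $c_0 > d\big(\tfrac14-\tfrac{1-\Delta}{4}\big)=\tfrac{\Delta d}{4}$ strictly (with room to spare, since the Poisson large-deviation part of the short-time estimate loses a further constant). Your trial state sits exactly at this threshold and cannot beat it. The paper sidesteps this by using the uniform superposition $\Phi_\V=2^{-|\V|/2}\sum_\tau\Psi_\V(\tau)$, which satisfies $(H_{\G,1}+\tfrac14|\Ed|)\Phi_\V=0$ \emph{and} has vanishing expectation of the anisotropy perturbation, giving $e^{-\beta|\Ed|/4}Z_{N,\Delta}(\beta)\geq 1$, i.e.\ $c_0=\tfrac{d}{4}$ uniformly in $\Delta$. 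This exact cancellation of the graphical prefactor is what closes the argument for all $\Delta<1$, and the all-$+$ state does not supply it.

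The second, more minor issue is the closing bookkeeping: you set $\delta T=c_1\min(\beta,L)$, $K=\lfloor\beta/\delta T\rfloor$, obtain a short-time ratio $e^{-c_2\delta T N^d}$, and then ``raise to the $K$-th power'' to claim $\exp(-cN^dK\delta T)=\exp(-cN^d\min(\beta,L))$. But $K\delta T\approx\beta$, not $\min(\beta,L)$; read literally your bound would assert decay $\exp(-cN^d\beta)$, which is false for $\beta\to\infty$ since $\langle\widehat{\mathbf{Q}}_{N,L}\rangle_{N,\Delta,\beta}$ converges to a strictly positive ground-state expectation. The generalized H\"older inequality together with rank-one-ness already yields $\langle\widehat{\mathbf{Q}}_{N,L}\rangle\leq\tr[\widehat{\mathbf{Q}}_{N,L}e^{-\delta T H}]/Z(\beta)^{1/(2n)}$ with no further exponentiation; the rate one gets is simply $\exp(-cN^d\delta T)$, and one takes $\delta T\asymp\min(\beta,L)$ because that is the largest time scale at which the finite-speed Poisson estimate is valid.
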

Combined with \Cref{lem:ChessLoss}, this theorem implies \cref{E:Ud2} of \Cref{thm:Main1}.
Therefore, proving this theorem is our main goal for the rest of this section.

We note that \Cref{lem:ChessLoss} has allowed us to disseminate the projector $\mathbf{Q}_L$, which projected just on spins inside $\B_L$,
to the projector $\widehat{\mathbf{Q}}_{N,L}$ which restricts to a specified spin configuration on all of $\T_N$.
This is what we described in the outline at the beginning of this section.
We have also obtained the desired graphical representation in \Cref{lem:FK}.
The next step is to disseminate in time.

\subsection{Generalized H\"older's inequality}
\label{sec:Block}

\begin{proposition} For any positive integer $n$,
\begin{equation}
\label{E:quantime}
\langle \widehat{\mathbf{Q}}_{N, L} \rangle_{N, \Delta,\beta}\,
\leq\, \left(\frac{\tr[(\widehat{\mathbf{Q}}_{N,L} e^{-\beta H/(2n)})^{2n}]}{Z_{N,\Delta}(\beta)}\right)^{1/(2n)}\, .
\end{equation}
\end{proposition}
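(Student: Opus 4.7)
I would view the inverse temperature as $2n$ imaginary-time slices of length $\beta/(2n)$, setting $\rho = e^{-\beta H/(2n)}$ so that $e^{-\beta H} = \rho^{2n}$, and then rewrite the numerator of the left-hand side as
\[
\tr\bigl[\widehat{\mathbf{Q}}_{N,L}\, e^{-\beta H}\bigr]\, =\, \tr\bigl[\widehat{\mathbf{Q}}_{N,L}\, \rho\, I\, \rho\, I\, \cdots\, I\, \rho\bigr],
\]
in which one copy of the projection $\widehat{\mathbf{Q}}_{N,L}$ is interspersed among $2n-1$ identity operators, alternating with $2n$ factors of $\rho$. The goal is then to ``replicate'' $\widehat{\mathbf{Q}}_{N,L}$ into each of the $2n$ slots at the cost of exactly the normalization we can afford to lose.

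\textbf{Key inequality.} The main tool I would invoke is the generalized H\"older inequality for traces: for any positive self-adjoint $\rho$ and any self-adjoint operators $X_1,\dots,X_{2n}$,
\[
\bigl|\tr[X_1 \rho X_2 \rho \cdots X_{2n} \rho]\bigr|\, \leq\, \prod_{j=1}^{2n} \bigl(\tr[(X_j \rho)^{2n}]\bigr)^{1/(2n)}.
\]
To derive this, set $A_j = \rho^{1/2} X_j \rho^{1/2}$, which is self-adjoint since $X_j$ is. Cyclicity of the trace identifies $\tr[A_1 A_2 \cdots A_{2n}]$ with $\tr[X_1 \rho X_2 \rho \cdots X_{2n} \rho]$, and then the matrix H\"older inequality for Schatten norms yields $|\tr[A_1 \cdots A_{2n}]| \leq \prod_j \|A_j\|_{2n}$. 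Since $A_j$ is self-adjoint one has $|A_j|^{2n} = A_j^{2n}$, and one more cyclic rearrangement shows $\tr[A_j^{2n}] = \tr[(X_j \rho)^{2n}]$, as needed.

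\textbf{Application.} Specializing to $X_1 = \widehat{\mathbf{Q}}_{N,L}$ (self-adjoint as a projection) and $X_2 = \cdots = X_{2n} = I$, the left-hand side of the generalized H\"older inequality collapses to $\tr[\widehat{\mathbf{Q}}_{N,L}\, e^{-\beta H}]$. On the right, the $j=1$ factor is precisely $\bigl(\tr[(\widehat{\mathbf{Q}}_{N,L}\, e^{-\beta H/(2n)})^{2n}]\bigr)^{1/(2n)}$, while each of the remaining $2n-1$ factors equals $(\tr\rho^{2n})^{1/(2n)} = Z_{N,\Delta}(\beta)^{1/(2n)}$. Dividing both sides by $Z_{N,\Delta}(\beta)$ produces exactly \cref{E:quantime}.

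\textbf{Main difficulty.} None of the steps constitutes a serious obstacle; the argument is essentially algebraic, with the only subtlety being the bookkeeping between $\rho^{1/2} X_j \rho^{1/2}$ and $(X_j \rho)^{2n}$. The conceptual content is that a single-time insertion of $\widehat{\mathbf{Q}}_{N,L}$ can be ``cloned'' across $2n$ time slots at the cost of a factor $Z_{N,\Delta}(\beta)^{(2n-1)/(2n)}$ that is absorbed when we normalize by $Z_{N,\Delta}(\beta)$. This is the gain to be exploited in the remainder of the section, where choosing $n$ so that $\beta/(2n)$ is comparable to $L$ will allow the spatial chessboard estimate of \Cref{lem:ChessLoss} to be combined with a ``per-slice'' bound on $\tr[(\widehat{\mathbf{Q}}_{N,L} e^{-\beta H/(2n)})^{2n}]$.
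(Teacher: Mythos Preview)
Your argument is correct and is essentially the same approach as the paper: both invoke a generalized H\"older inequality with $X_1=\widehat{\mathbf{Q}}_{N,L}$ and $X_2=\cdots=X_{2n}=I$ in the $2n$ time slots of length $\beta/(2n)$. The only difference is in packaging: you quote the Schatten-norm H\"older inequality $|\tr[A_1\cdots A_{2n}]|\le\prod_j\|A_j\|_{2n}$ as a black box (applied to $A_j=\rho^{1/2}X_j\rho^{1/2}$), whereas the paper derives the needed inequality from scratch in its Appendix (Theorem~\ref{thm:Holder}) via iterated Cauchy--Schwarz in the spirit of Fr\"ohlich--Lieb reflection positivity in imaginary time. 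Your route is slightly more direct if one is willing to take Schatten H\"older as known; the paper's route is self-contained and makes the reflection-positivity structure explicit, which dovetails with the spatial chessboard estimates used elsewhere.
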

This is a direct consequence of the generalized H\"{o}lder inequality for operators with $A = \widehat{\mathbf{Q}}_{N ,L}$ and this choice of $n$, see Theorem \ref{thm:Holder} of \Cref{S:App}.
In that appendix we will also describe how this proposition follows.

Because we are re-imposing the rank-1 projection
$\widehat{\mathbf{Q}}_{N, L}$ every $\frac{\beta}{2n}$ units of ``time,'' we further have
$$
\tr[(\widehat{\mathbf{Q}}_{N, L} e^{-\beta H/(2n)})^{2n}]\,
=\, \left(\tr[\widehat{\mathbf{Q}}_{N, L} e^{-\beta H/(2n)}]\right)^{2n}\, 
=\, \left(\langle \Psi_{N}(\tau^{(N,L)}), e^{-\beta H/(2n)} \Psi_{N}(\tau^{(N,L)}) \rangle\right)^{2n}\, .
$$
Therefore, \cref{E:quantime} becomes
\begin{equation}
\label{eq:QtimeDecoupled}
\langle \widehat{\mathbf{Q}}_{N, L} \rangle_{N, \Delta, \beta}\,
\leq\, \frac{\tr[\widehat{\mathbf{Q}}_{N, L} e^{-\beta H/(2n)}]}{\left(Z_{N,\Delta}(\beta)\right)^{1/(2n)}}\, .
\end{equation}
Recall that $\mathcal{E}(\mathbb{T}_N)$ denotes the edge-set for $\mathbb{T}_N$.
Then we may define
\begin{align}
\label{eq:sfNum}
\mathsf{Num}_{\beta,N,L,n}\,
&=\, e^{-(\beta/(2n)) (|\mathcal{E}(\mathbb{T}_N)|/4)} \tr[\widehat{\mathbf{Q}}_{N,L} e^{-\beta H/(2n)}]\, ,\\
\label{eq:sfDen}
\mathsf{Den}_{\beta,N,L,n}\,
&=\, e^{-(\beta/(2n)) (|\mathcal{E}(\mathbb{T}_N)|/4)} \left(Z_{N,\Delta}(\beta)\right)^{1/(2n)}\, .
\end{align}
We will frequently write these as just $\mathsf{Num}$ and $\mathsf{Den}$.
By \cref{eq:QtimeDecoupled} we have the following:
\begin{corollary}
\label{cor:QtDec}
For any choice of $n$
\begin{equation}
\label{eq:QtimeDecoupled2}
\langle \widehat{\mathbf{Q}}_{N, L} \rangle_{N,\Delta,\beta}\,
\leq\, \frac{\mathsf{Num}_{\beta,N,L,n}}{\mathsf{Den}_{\beta,N,L,n}}\, .
\end{equation}
\end{corollary}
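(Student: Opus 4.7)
The plan is straightforward: the inequality in \Cref{cor:QtDec} is an immediate consequence of \cref{eq:QtimeDecoupled} together with the definitions \cref{eq:sfNum} and \cref{eq:sfDen}. I would first multiply both the numerator $\tr[\widehat{\mathbf{Q}}_{N,L} e^{-\beta H/(2n)}]$ and the denominator $(Z_{N,\Delta}(\beta))^{1/(2n)}$ on the right-hand side of \cref{eq:QtimeDecoupled} by the common positive scalar $e^{-(\beta/(2n))(|\mathcal{E}(\mathbb{T}_N)|/4)}$. Because this scalar is strictly positive and identical on top and bottom, the inequality is preserved and the value of the ratio does not change. I would then identify the rescaled numerator and denominator with $\mathsf{Num}_{\beta,N,L,n}$ and $\mathsf{Den}_{\beta,N,L,n}$ via \cref{eq:sfNum} and \cref{eq:sfDen}, and the claim follows.

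There is no genuine obstacle at this step; the content of the corollary is a cosmetic rewriting of \cref{eq:QtimeDecoupled} rather than a new inequality. The real interest is the motivation for the rewriting. The plan in what follows is to evaluate $\mathsf{Num}$ and $\mathsf{Den}$ separately using the graphical representations of \Cref{prop:Ueltschi} and \Cref{lem:FK}. Those representations produce a common prefactor of the form $e^{(\beta/(2n))|\mathcal{E}(\mathbb{T}_N)|/4}$ in front of any trace of $e^{-(\beta/(2n))H_{N,\Delta}}$ expressed in the Ising basis. By absorbing exactly this prefactor into both $\mathsf{Num}$ and $\mathsf{Den}$, both quantities become clean expectations against the Poisson point process with no stray combinatorial prefactor. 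This is the form needed to apply reflection positivity, the dipole counting from \Cref{S:graphical}, and ultimately obtain the bound on $\langle \widehat{\mathbf{Q}}_{N,L}\rangle$ claimed in \Cref{thm:Block}.

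The only point where one must exercise care is to check that the normalization of the Poisson rates used in \Cref{prop:Ueltschi} (rates $\tfrac{1}{2}(1-u)$ and $\tfrac{1}{2}u$, with $u=(1+\Delta)/2$) is applied identically in the two expressions, so that the prefactors cancelled in $\mathsf{Num}$ and $\mathsf{Den}$ are genuinely the same. Given the uniform conventions adopted throughout the paper, this consistency is automatic, and \Cref{cor:QtDec} follows with no further argument.
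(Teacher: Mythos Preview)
Your proposal is correct and matches the paper's own treatment: the corollary is stated immediately after the definitions \cref{eq:sfNum} and \cref{eq:sfDen} as a direct consequence of \cref{eq:QtimeDecoupled}, obtained by multiplying numerator and denominator by the same positive scalar $e^{-(\beta/(2n))|\mathcal{E}(\mathbb{T}_N)|/4}$. The additional remarks you make about motivation and consistency of normalizations are accurate but go beyond what the paper records for this step.
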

To obtain an upper bound on the left hand side of \cref{eq:QtimeDecoupled2} we need to obtain an upper bound on $\mathsf{Num}$
and a lower bound on $\mathsf{Den}$.

The point of multplying by $e^{-(\beta/(2n)) (|\mathcal{E}(\mathbb{T}_N)|/4)}$ is to cancel the multiplier in \cref{eq:Eform2}.
This is particularly useful in obtaining a variational lower bound on $\mathsf{Den}$. We do this next.

\subsection{Variational lower bound on $\mathsf{Den}$}

Optimally, we would calculate $\mathsf{Den}$ exactly.
But this is difficult. It involves calculating the partition function.
But even the ground state energy is difficult.
However, we may make a variational calculation to obtain a bound.

This is a standard exercise in quantum formalism, and Jensen's inequality. (It could also be deduced easily from basic results in statistical mechanics such as 
the Gibbs variational principle for the free energy and equilibrium state.) 
\begin{lemma}
For any graph $\G = (\V,\Ed)$, and any $\Delta \in \R$, $\beta\geq0$, we have
\begin{equation}
\label{ineq:ZlowerBd}
Z_{\G,\Delta}(\beta) e^{-\beta |\Ed|/4}\, =\, \operatorname{Tr}\big[e^{-\beta (H_{\G,\Delta}+(1/4)|\Ed|)}\big]\geq\, 1\, .
\end{equation}
\end{lemma}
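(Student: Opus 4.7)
The plan is to invoke the Peierls--Bogoliubov inequality with a carefully chosen product trial vector. The equality in the statement is immediate: shifting a Hamiltonian by a scalar multiple of the identity simply multiplies the partition function by the corresponding exponential factor. So the real content is the lower bound $\operatorname{Tr}[e^{-\beta A}] \geq 1$, where $A = H_{\G,\Delta} + (|\Ed|/4)I$.

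First I would record the standard two-step inequality: for any self-adjoint $A$ on $\Hil_\V$ and any unit vector $\psi$,
$$\operatorname{Tr}[e^{-\beta A}]\, \geq\, \langle \psi, e^{-\beta A}\psi\rangle\, \geq\, \exp\bigl(-\beta\langle \psi, A\psi\rangle\bigr).$$
The first bound is positivity of $e^{-\beta A}$ (extend $\psi$ to an orthonormal basis; the remaining diagonal entries are nonnegative). The second is Jensen's inequality applied to the convex function $x\mapsto e^{-\beta x}$ against the probability measure $|\langle \psi_k,\psi\rangle|^2$ on the eigenvalues $\lambda_k$ of $A$, where $(\psi_k)$ is an eigenbasis.

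The core step is to exhibit a unit vector $\psi$ satisfying $\langle \psi, H_{\G,\Delta}\psi\rangle = -|\Ed|/4$ uniformly in $\Delta \in \R$. I would take the product of $+x$-eigenstates, $\psi = \bigotimes_{\bi \in \V}\bigl(\psi^+_{\bi}+\psi^-_{\bi}\bigr)/\sqrt{2}$. Because $\psi$ factorizes over sites and $\langle S^z_{\bi}\rangle_\psi = \langle S^y_{\bi}\rangle_\psi = 0$ while $\langle S^x_{\bi}\rangle_\psi = 1/2$, the two-site expectations also factorize: $\langle S^x_{\bi}S^x_{\bj}\rangle_\psi = 1/4$, and $\langle S^y_{\bi}S^y_{\bj}\rangle_\psi = \langle S^z_{\bi}S^z_{\bj}\rangle_\psi = 0$ for every edge $\{\bi,\bj\}$. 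The crucial feature is that the $\Delta$-dependent $S^zS^z$ contribution vanishes identically in this trial state, so $\langle \psi, H_{\G,\Delta}\psi\rangle = -|\Ed|/4$ independently of $\Delta$.

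Putting the pieces together with $A = H_{\G,\Delta} + (|\Ed|/4)I$ gives $\langle \psi, A\psi\rangle = 0$ and hence $\operatorname{Tr}[e^{-\beta A}] \geq e^{0} = 1$, as required. I do not foresee a real obstacle here; the only substantive observation is the uniform-in-$\Delta$ identification of an energy-zero trial vector for the shifted Hamiltonian, and the $+x$-product state accomplishes exactly this because it annihilates every $S^z$ expectation while saturating each $S^xS^x$ contribution to $1/4$ per edge.
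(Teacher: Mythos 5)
Your proof is correct and takes essentially the same approach as the paper: the trial vector $\bigotimes_{\bi}\frac{1}{\sqrt{2}}(\psi^+_\bi+\psi^-_\bi)$ is precisely the paper's $\Phi_\V = 2^{-|\V|/2}\sum_{\tau}\Psi_\V(\tau)$, and both proofs then combine positivity of $e^{-\beta A}$ with Jensen's inequality. The only cosmetic difference is that you compute $\langle\psi, H_{\G,\Delta}\psi\rangle=-|\Ed|/4$ directly from the site-factorization and $\langle S^y\rangle=\langle S^z\rangle=0$, whereas the paper first shows $\Phi_\V$ is a zero eigenvector of $H_{\G,1}+(1/4)|\Ed|$ (via the transposition operators $T_{\bi\bj}$) and then separately verifies $\langle\Phi_\V,(H_{\G,\Delta}-H_{\G,1})\Phi_\V\rangle=0$.
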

\label{lem:GenLB}
\begin{proof}
We define a unit vector
\begin{equation}
\label{eq:PhiVDef}
\Phi_{\V}\, =\, 2^{-|\V|/2} \sum_{\tau \in \Sigma_{\V}}\Psi_{\V}(\tau)\, .
\end{equation}
We note that $H_{\G,1}+(1/4)|\Ed|$ equals the sum over all $\{i,j\} \in \Ed$ of $\frac{1}{2} (1 - T_{\bi,\bj})$,
where $T_{\bi,\bj} \Psi_{\V}(\tau) = \Psi_{\V}(\tau^{(\bi,\bj)})$, where $\tau^{(\bi,\bj)}$ is the configuration
obtained from $\tau$
by interchanging $\tau^{(\bi,\bj)}_{\bi} = \tau_{\bj}$ and $\tau^{(\bi,\bj)}_{\bj} = \tau_{\bi}$.
But in \cref{eq:PhiVDef}, we sum over all $\tau$'s, uniformly. Therefore $T_{\bi,\bj} \Phi_{\V} = \Phi_{\V}$ for all $\{\bi,\bj\}$.
So 
\begin{equation}
\label{eq:Delta1Ann}
\Big(H_{\G,1}+(1/4)|\Ed|\Big) \Phi_{\V}\, =\, 0\, .
\end{equation}
We also know from the proof of \Cref{lem:FK} that $(H_{\G,\Delta}-H_{\G,1})\Psi_{\V}(\tau) = -\frac{1}{4}(\Delta-1)U_{\G}(\tau)\Psi_{\V}(\tau)$.
So
$$
\langle \Phi_{\V}, (H_{\G,\Delta}-H_{\G,1}) \Phi_{\V}\rangle\,
=\,-\left(\frac{\Delta-1}{4}\right)  2^{-|\V|} \sum_{\tau \in \Sigma_{V}} U_{\G}(\tau)\, .
$$
But the uniform average of $U_{\G}(\tau)$ is the sum over $\{\bi,\bj\} \in \Ed$ of the uniform average of $\s_{\bi} \s_{\bj}$,
and for $\bi\neq \bj$ the random variable $\s_{\bi}$ is independent of the random variable $\s_{\bj}$ in the uniform probability measure.
Moreover both $\s_{\bi}$ and $\s_{\bj}$ have expectation zero in the uniform probability measure.
So $\langle \Phi_{\V}, (H_{\G,\Delta}-H_{\G,1}) \Phi_{\V}\rangle=0$.
Combining this with \cref{eq:Delta1Ann}, we conclude that $\langle \Phi_{\V},(H_{\G,\Delta}+|\Ed|)\Phi_{\V}\rangle=0$.
Finally, \begin{align*}
\tr\left[e^{-\beta(H_{N,\Delta}+(1/4)|\mathcal{E}(\mathbb{T}_N)|)}\right]\,
&\geq\, \langle \Psi,  e^{-\beta(H_{N,\Delta}+(1/4)|\mathcal{E}(\mathbb{T}_N)|)} \Psi\rangle\\
&\geq\, \exp\left(-\beta \langle \Psi,  (H_{N,\Delta}+(1/4)|\mathcal{E}(\mathbb{T}_N)|) \Psi\rangle\right)\, ,
\end{align*}
which completes the proof.
\end{proof}
In particular, taking the $(2n)$th root, we obtain the following.
\begin{corollary}
\label{cor:DenLB}
We have the bound
\begin{equation}
\label{eq:sfDenLB}
\mathsf{Den}\, 
=\, \Big(e^{-\beta |\mathcal{E}(\mathbb{T}_N)|/4} Z_{N,\Delta}(\beta)\Big)^{1/(2n)}\, 
\geq\, 1.
\end{equation}
\end{corollary}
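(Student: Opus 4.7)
The statement is essentially a restatement of the preceding lemma after taking a $(2n)$th root, so my plan is very short. First, I would observe that by factoring the exponential inside the $(2n)$th power, the definition of $\mathsf{Den}_{\beta,N,L,n}$ in \cref{eq:sfDen} can be rewritten as
\[
\mathsf{Den}_{\beta,N,L,n}\, =\, \Big(e^{-\beta |\mathcal{E}(\mathbb{T}_N)|/4}\, Z_{N,\Delta}(\beta)\Big)^{1/(2n)}\, =\, \Big(\operatorname{Tr}\big[e^{-\beta (H_{N,\Delta}+(1/4)|\mathcal{E}(\mathbb{T}_N)|)}\big]\Big)^{1/(2n)},
\]
using the identity $e^{-\beta H_{N,\Delta}}e^{-\beta |\mathcal{E}|/4} = e^{-\beta(H_{N,\Delta} + (1/4)|\mathcal{E}|)}$, which is valid because $H_{N,\Delta}$ and $(1/4)|\mathcal{E}| \cdot I$ commute.

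Next, I would directly apply the previous lemma (\cref{ineq:ZlowerBd}) to the graph $\G = \mathbb{T}_N$, which gives
\[
e^{-\beta |\mathcal{E}(\mathbb{T}_N)|/4}\, Z_{N,\Delta}(\beta)\, \geq\, 1.
\]
Taking the $(2n)$th root (which is monotone on $[0,\infty)$, and noting $2n > 0$) immediately yields $\mathsf{Den}_{\beta,N,L,n} \geq 1$, as claimed in \cref{eq:sfDenLB}.

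There is no substantive obstacle here; the real work has already been done in the lemma, where the product state $\Phi_{\V}$ from \cref{eq:PhiVDef} was used as a variational trial vector to show that the shifted Hamiltonian $H_{\G,\Delta} + (1/4)|\Ed|$ has an expectation value of zero in a unit vector, forcing the trace of its negative exponential to be at least $1$ by Jensen's inequality. The corollary is just the packaging of that bound into the form that will be consumed by \cref{eq:QtimeDecoupled2}, where an upper bound on $\langle \widehat{\mathbf{Q}}_{N,L}\rangle_{N,\Delta,\beta}$ is to be extracted by combining an upper bound on $\mathsf{Num}$ with this lower bound on $\mathsf{Den}$.
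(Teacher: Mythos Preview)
Your proposal is correct and matches the paper's approach exactly: the paper simply states that the corollary follows by taking the $(2n)$th root of \cref{ineq:ZlowerBd}, which is precisely what you do.
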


\subsection{Upper bound on $\mathsf{Num}$ using large deviation bounds}

Using \Cref{lem:FK} and the definition in \cref{eq:sfNum} (and \cref{eq:QLNdef}), we can rewrite
\begin{equation}
\label{eq:NumFK}
\mathsf{Num}_{\beta,N,L,n}\,
=\, e^{\beta |\mathcal{E}|} \E_1\Bigg[\sum_{\sigma(\cdot) \in \Sigma_{\mathcal{V},\beta}(\omega)} \mathbf{1}_{\{\sigma(-\beta/2)=\sigma(\beta/2)=\tau^{(N,L)}\}}
\exp\bigg(\frac{\Delta-1}{4}\, \int_{-\beta/2}^{\beta/2} U_{\G}(\sigma(t))\, dt\bigg)\Bigg]\, ,
\end{equation}
where we write $U_N$ as a short hand notation for $U_{\mathbb{T}_N}$, as usual.
Note that when there are no cul-de-sac edges, specifying $\sigma(-\beta/2)$ completely specifies $\sigma(t)$ for all $t\geq -\beta/2$.
Therefore the indicators are not restricting any multplicity of choices of $\sigma(\cdot)$ as much as they are putting restrictions on $\omega$.

We will prove the following.
\begin{proposition} 
\label{prop:NumBd1}
For any $\Delta<1$ and $L\geq 24$
\begin{equation}
\label{ineq:NumBd1}
\mathsf{Num}_{\beta,N,L,n}\, \leq\, e^{-(1/64)(1-\Delta) dN^d \delta T} + e^{[(1/4)(1-\Delta) - (M \ln M - M + 1)] dN^d \delta T}\, ,
\end{equation}
where $\delta T = \beta/(2n)$ and $M = L/(1536 d^2 \delta T)$.
\end{proposition}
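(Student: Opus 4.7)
The plan is to apply \Cref{lem:FK} to expand $\mathsf{Num}_{\beta,N,L,n}$ in the graphical representation of \Cref{S:graphical}, and then split the resulting Poisson expectation according to whether the total number of space--time arrivals is small or large. Throughout I write $\delta T = \beta/(2n)$.

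Using $\widehat{\mathbf{Q}}_{N,L} = \ket{\Psi_N(\tau^{(N,L)})}\bra{\Psi_N(\tau^{(N,L)})}$ and the Feynman--Kac identity of \Cref{lem:FK} applied with matched endpoints $\sigma(-\delta T/2) = \sigma(\delta T/2) = \tau^{(N,L)}$, I would first rewrite
\[
\mathsf{Num}_{\beta,N,L,n}\, =\, e^{\,3\delta T |\mathcal{E}(\T_N)|/4}\,\E_1\Bigg[\sum_{\sigma(\cdot)\in\Sigma_{\V,\delta T}(\omega)}\mathbf 1_{\{\sigma(\pm\delta T/2)=\tau^{(N,L)}\}}\exp\Bigg(\tfrac{\Delta-1}{4}\int_{-\delta T/2}^{\delta T/2}U_N(\sigma(t))\,dt\Bigg)\Bigg].
\]
The key geometric observation is that $\tau^{(N,L)}$ is nearly fully ferromagnetically aligned: $U_N(\tau^{(N,L)}) = -|\mathcal{E}(\T_N)| + O(dN^d/L) = -dN^d + O(dN^d/L)$. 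So if $\sigma(t)$ sat exactly at $\tau^{(N,L)}$ throughout the time interval, the Ising weight would be close to its maximum $\exp(\tfrac{1-\Delta}{4}dN^d\delta T)$, which is precisely the origin of the $(1-\Delta)/4$ term appearing in the second summand of \cref{ineq:NumBd1}.

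Next, with $K = M\,dN^d\delta T$ and $M = L/(1536\,d^2\delta T)$ as in the statement, I would split the expectation according to the total number of Poisson arrivals $|\omega|$,
\[
\mathsf{Num}_{\beta,N,L,n}\,\leq\, \mathsf{Num}_{\beta,N,L,n}\mathbf 1_{\{|\omega|>K\}}\,+\,\mathsf{Num}_{\beta,N,L,n}\mathbf 1_{\{|\omega|\leq K\}}.
\]
For the first piece I would combine the pointwise maximum $\exp(\tfrac{1-\Delta}{4}dN^d\delta T)$ of the Ising weight with Cram\'er's large-deviation inequality for $|\omega|$, which is Poisson of mean $\tfrac12 dN^d\delta T$ (the total arrival rate over $\T_N$). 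Evaluating the Cram\'er rate function at the threshold $K$ produces exactly the $(M\ln M - M + 1)dN^d\delta T$ rate appearing in the second summand of \cref{ineq:NumBd1}. For the second piece I would partition $\T_N\times[-\delta T/2,\delta T/2]$ into cells of spatial side $\sim L$ and temporal height $\delta T$ (the constant $1536\,d^2$ in the definition of $M$ is designed to track this partition), observe that on $\{|\omega|\leq K\}$ most cells are quiet, and then use the fact that cul-de-sac (AF) arrivals inside a monochromatic sub-block of $\tau^{(N,L)}$ are forbidden by the labeling rules defining $\Sigma_{\V,\delta T}(\omega)$, together with the endpoint constraint, to force each arrival that does occur to seed a localized spin-flipped domain. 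Counting the misaligned edges such domains create and integrating over time yields a surplus in $\int U_N(\sigma(t))\,dt$ of order $dN^d\delta T$, whose coefficient collapses into the $1/64$ in the first summand.

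The main obstacle is this last step: quantifying how ``few arrivals plus $\sigma(\pm\delta T/2)=\tau^{(N,L)}$'' actually forces a \emph{macroscopic} Ising-energy surplus. One has to rule out trajectories that remain entirely within the monochromatic blocks of $\tau^{(N,L)}$, for which $\int U_N\,dt\approx -dN^d\delta T$ is unperturbed and which are perfectly compatible with sparse Poisson configurations; only the interaction between the AF-arrival rule, the loop topology, and the endpoint constraint prevents this saturation. I expect the loop/dipole structure of the graphical representation, combined with the Lipschitz property from the remark after \Cref{prop:Ueltschi} and standard per-cell Poisson tail estimates, to supply the required quantitative bound, with the explicit numerical constants $1/64$ and $1536\,d^2$ emerging naturally from boundary-edge counts and cell-volume bookkeeping.
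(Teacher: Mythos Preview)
Your proposal rests on a sign error that inverts the entire mechanism. You write that ``if $\sigma(t)$ sat exactly at $\tau^{(N,L)}$ throughout the time interval, the Ising weight would be close to its maximum $\exp(\tfrac{1-\Delta}{4}dN^d\delta T)$.'' In fact the opposite is true: with the correct sign of $U_N$ (the paper's statement of \Cref{lem:FK} has a typo; the proof and the discussion at the start of \Cref{S:UB} make the intended sign clear), the Feynman--Kac potential for $\Delta<1$ is an \emph{antiferromagnetic} Ising weight, so the nearly ferromagnetic configuration $\tau^{(N,L)}$ sits at (essentially) the \emph{maximum} energy and therefore receives the \emph{smallest} weight, namely $\exp\bigl(-\tfrac{1-\Delta}{4}(1-O(L^{-1}))|\Ed|\delta T\bigr)$. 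Consequently your ``main obstacle'' --- ruling out trajectories that stay at $\tau^{(N,L)}$ --- is a phantom: those are precisely the trajectories that produce the small first term $e^{-(1/64)(1-\Delta)dN^d\delta T}$. There is no need for any ``energy surplus'' mechanism, and the cul-de-sac/loop heuristics you sketch to manufacture one are pointed in the wrong direction.

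Once the sign is righted, the paper's argument splits on the event $\mathcal{A}=\{X(t)\le N^d/8\text{ for all }t\}$, where $X(t)$ counts sites at which $\sigma_{\bi}(t)\ne\tau^{(N,L)}_{\bi}$. On $\mathcal{A}$ the weight bound is immediate. On $\mathcal{A}^c$ the paper uses a \emph{transport} argument: if a site at distance $r$ from the block interfaces has the wrong spin at some time, there must be a chain of at least $2r$ arrivals carrying that spin in from a neighboring block and back out again; summing this over the $N^d/8$ flipped sites (with $r=L/(96d)$) forces at least $LN^d/(1536d)$ arrivals, and the Poisson large-deviation cost of this beats the worst-case Ising weight, giving the second term. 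Your split on the total number of arrivals $|\omega|$ is the contrapositive of this and could be made to work, but you would still need exactly this path-counting step to pass from ``$|\omega|\le K$'' to ``$X(t)$ stays small,'' and that step is absent from your proposal.
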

The idea is relatively straightforward.
For sufficiently large $L$, we have $U_N(\tau^{(N,L)}) \approx |\Ed(\T_N)|$.
This is because most edges $\bi$, $\bj$ have $\tau^{(N,L)}_{\bi} = \tau^{(N,L)}_{\bj}$.
This only fails if $\bi$ and $\bj$ span two adjoining blocks. See \Cref{fig:unicon}.
The fraction of those edges is $1/L$.
If $\Delta<1$ then $|\Ed(\T_N)|$ is interpreted as the maximum possible energy of the antiferromagnetic
Ising potential, instead of the ground state energy of the ferromagnetic Ising potential.
Therefore, in \cref{eq:NumFK} this leads to exponential suppression.

That reasoning works to bound $U_N(\sigma(t))$ for $t = \pm\beta/(4n)$.
But for $t$ between $-\beta/(4n)$ and $\beta/(4n)$, one needs to account for the hopping
of spins allowed by the stochastic process associated to $\E_1$.
Esssentially, the stochastic process allows spins to hop between neighboring blocks.
But in a short period of time, there should not be too many arrivals of $\textd \nu_{\bi\bj}^F(\omega)$.
Therefore, with high probability, the spin configuration $\sigma(t)$ will still have relatively low energy for the ferromagnetic Ising potential,
and hence will be exponentially suppressed in the antiferromagnetic Ising Gibbs state.

In the complementary event, $\textd \nu_{\bi\bj}^F(\omega)$ may allow enough overpass edges to attain the ground state
of the Ising antiferromagnet. Therefore, the un-normalized Gibbs weight exponential term in \cref{eq:NumFK}
may be exponentially large.
But the large deviation bound for this rare event is nonlinear and dominates this. 

\begin{proof}
We may decompose $\T_N$ into ``blocks.''
Let us define a ``block'' to be a maximal connected set of sites $\bi \in \T_N$
satsifying the condition that $\tau^{(N,L)}_{\bi} = \tau^{(N,L)}_{\bj}$ for all pairs of points $\{\bi,\bj\}$ in the block.
Let us define the edges on the faces of blocks 
$$
\mathcal{F}\, =\, \{\{\bi,\bj\} \in \Ed(\T_N)\, :\, \tau^{(N,L)}_{\bi} \neq \tau^{(N,L)}_{\bj}\}\, .
$$
This means that the edge $\{\bi,\bj\}$ spans two adjacent blocks.

Some blocks are ``full'' having size $L^d$. For each coordinate direction, there may also be partial blocks at a distance less than $L$
from the two faces of $\B_N$ in the $+$ and $-$ side of that coordinate direction.
Considering this, it is easy to see that 
$$
|\mathcal{F}|\, \leq\, (L^{-1} + 2N^{-1}) |\Ed(\T_N)|\, .
$$
Since $N$ is always at least $L$, this implies $|\mathcal{F}|\leq 3L^{-1} |\Ed(\T_N)|$.
Therefore
$$
|\Ed(\T_N)| - U_N(\tau^{(N,L)})\, \leq\,  2 |\mathcal{F}|\, \leq\, 6 L^{-1} |\Ed(\T_N)|\, .
$$
At times $t=\pm \beta/(4n)$ the only edges $\{\bi,\bj\}$ with $\s_{\bi}(t)=-\s_{\bj}(t)$ are the ones
in $\mathcal{F}$.
We will show that with high probability at times between $-\beta/(4n)$ and $\beta/(4n)$
most antiferromagnetic edges are close to $\mathcal{F}$.
Because of this we note the following easy bound on the number of vertices at a short distance from $\mathcal{F}$.

For any positive integer $r$, let us define $\mathcal{V}_r$ to be the set of all sites $\bi$ satisfying this condition:
$\bi$ is in a block $\Lambda \subset \T_N$, and has distance less than or equal to $r$ from $\T_N \setminus \Lambda$.
So for instance 
$$
\mathcal{V}_1\, =\, \{\bi \in \T_N\, :\, \exists \bj \in \T_N \text{ such that } \{\bi,\bj\} \in \mathcal{F}\}\, .
$$
Because $\mathcal{F}$ can be written as a disjoint union of coordinate ``planes'' in $\T_N$, it is easy to deduce from this that
the formula above generalizes in the following way for distances $r\geq 1$:
\begin{equation}
\label{ineq:Vr}
|\mathcal{V}_r|\, \leq\, 2r|\mathcal{F}|\, \leq\, 6 r L^{-1} |\Ed(\T_N)|\, =\, 6 d r L^{-1} N^d\, .
\end{equation}
With this easy bound done, we proceed with the remainder of the proof. 

For each time $t \in [-\beta/(4n),\beta/(4n)]$
define $X(t)$ to be the number of sites $\bi$ such that $\tau^{(N,L)}_{\bi} \neq \sigma_{\bi}(t)$.
Then we have a lower bound
$$
|\Ed|-6L^{-1}|\Ed|-U_N(\sigma(t))\, \leq\, 4d X(t)\, .
$$
Let $\mathcal{A}$ be the event that $\textd \nu_{\bi\bj}^F(\omega)$ is such that for all times $t \in [-\beta/(2n),\beta/(2n)]$ we have $X(t)\leq N^d/8$.
On this event we have
\begin{equation}
\label{ineq:UpEn}
\mathbf{1}_{\{\sigma(-\beta/(4n)) = \sigma(\beta/(4n))=\tau^{(N,L)}\}}
\exp\left(\frac{\Delta-1}{4}\, \int_{-\beta/(4n)}^{\beta/(4n)} U_{N}(\sigma(t))\, dt\right)\,
\leq\, e^{-(1/8)(1-\Delta)(1-12L^{-1})|\Ed|}\, .
\end{equation}
This is a good inequality for us, for the purpose of the proof.
So now we turn our attention to bounding the probability of $\mathcal{A}^{c}$.

Suppose that at some time $t$ we have that $\bi$ is a site in $\T_N \setminus \mathcal{V}_r$
such that $\s_{\bi}(t) \neq \tau^{(N,L)}_{\bi}$.
Then $\bi$ is in a block and $\s_{\bi}(t)$ is opposite to the spin of the block.
Thus in time $[-\beta/(4n),t]$ there must be a path from some neighboring block via overpass edge arrivals of $\textd \nu_{\bi\bj}^F(\omega)$
to bring this oppositely oriented spin in to site $\bi$, and in time $[t,\beta/(4n)]$ there is another path.
Since $\bi$ is at distance more than $r$ from $\mathcal{F}$, this means that there are at least $2r$ arrivals of $\textd \nu_{\bi\bj}^F(\omega)$
associated to these two paths.

Because of this we may see that there are at least $r (X(t)-|\mathcal{V}_r|)$ arrivals of $\textd \nu_{\bi\bj}^F(\omega)$ in the interval $[-\beta/(4n),\beta/(4n)]$.
We divided by 2 since a given arrival of $\textd \nu_{\bi\bj}^F(\omega)$ could contribute to two different paths for two different vertices $\bi$, $\bj$ (since an edge
has two endpoints).
On $\mathcal{A}^c$, we have that $X(t)>N^d/8$ for some time.
So on this event, choosing $r=L/(96d)$, we get that there are at least $LN^d/(1536d)$ arrivals of $\textd \nu_{\bi\bj}^F(\omega)$  in the time interval $[-\beta/(4n),\beta/(4n)]$.
We have used \cref{ineq:Vr}.

Now we recall the large deviation tail bound for a Poisson random variable $\mathcal{N}$:
$$
\mathbf{P}(\mathcal{N} \geq M \mathbf{E} \mathcal{N})\, \leq\,
e^{-(M \ln M -M + 1)\mathbf{E} \mathcal{N} }\, ,
$$
for each $M\geq 1$.
For 
$\textd \nu_{\bi\bj}^F(\omega)$  in the time interval $[-\beta/(4n),\beta/(4n)]$, the total expectation of all the arrivals is $dN^d \delta T$, where
we write $\delta T$ for the length of the time interval, $\delta T = \beta/(2n)$.
Therefore
$$
\Pr_1(\mathcal{A}^c)\, \leq\, e^{-(M\ln M-M+1) d N^d \delta T}\, ,
$$
where $M = L/(1536d^2 \delta T)$.
Combining this with \cref{ineq:UpEn} and the uniform upper bound
$$
\mathbf{1}_{\{\sigma(-\beta/(4n)) = \sigma(\beta/(4n))=\tau^{(N,L)}\}}
\exp\left(\frac{\Delta-1}{4}\, \int_{-\beta/(4n)}^{\beta/(4n)} U_{N}(\sigma(t))\, dt\right)\,
\leq\, e^{(1/4)(1-\Delta)|\E|\delta T}\,
$$
gives the result.
\end{proof}

\subsection{Completion of the proof of \Cref{thm:Block}}

We combine \Cref{cor:QtDec} with \Cref{cor:DenLB} and \Cref{prop:NumBd1}.
Choose any fixed $\epsilon>0$ such that $M=L/(1536 d^2 \epsilon \min(\beta,L))$ is large enough that
$M\ln M - M +1 \geq (17/64)(1-\Delta)$.
Then, taking $n = \lfloor \epsilon \min(\beta,L) \rfloor$, we get that
$$
\mathsf{Num}\, \leq\, 2 e^{-(1/64)(1-\Delta)dN^d \lfloor \epsilon \min(\beta,L)\rfloor}\, .
$$
So taking $c = (1/64)(1-\Delta)d \epsilon$ basically gives the result.
Or, choosing $\beta,L$ sufficiently large that $\epsilon \min(\beta,L)\geq 2$ we can remove the effect of the floor function
by taking $c$ to be half that previous value.

\section{One dimensional results using the six vertex model}
\label{sec:1d6vtx}
In one dimension, on the 1d torus $\mathcal{G} = \T_N$,
the XXZ model of \Cref{eq:PFready} for $\Delta<1$ has a ground state
which may be understood in terms of the six-vertex model.

In this section, we will give one description of the six-vertex model, and also explain
the relation to the ground state of the XXZ model.
These are well-known results and may be found in \cite{Lieb}.
We give a brief review here for the benefit of the reader.

The underlying lattice for the six vertex model is the usual square lattice on a discrete torus of horizontal side length $N$ and vertical
sidelength which we will call $T$. When we need to, we will refer to the torus as $\mathbb{T}_{N,T}$.
Both $N$ and $T$ will be assumed even. 
More precisely, the configuration space for the six-vertex model is the edge set of the torus.

We will introduce some extra notation for the graph $\mathbb{T}_{N,T}$ because the spins
live on the edges.
Let us use a special notation for the vertices of $\mathbb{T}_{N,T}$.
The vertices will be denotes as $V_{ij}$ for $i\in \{1,\dots,N\}$ and $j\in \{1,\dots,T\}$.
Let $\mathscr{E}_{N,T}$ denote the set of edges of $\mathbb{T}_{N,T}$, both horizontal and vertical: 
\begin{equation}
\label{eq:Escr}
\mathscr{E}_{N,T}\, =\, \{E^{h}_{i,j}\, :\, i=1,\dots,N\, ,\ j=1,\dots,T\}
\cup\{E^{v}_{i,j}\, :\, i=1,\dots,N\, ,\ j=1,\dots,T\}\, .
\end{equation}
The edges incident to $V_{ij}$ are $E^h_{ij}$, $E^v_{ij}$,
$E^h_{i-1,j}$ and $E^v_{i,j-1}$, where we identify $i+N\cong i$ and $j+T\cong j$ on the torus.
In \Cref{fig:First6vtx}, in the picture on the left, we show an example when $N=T=6$ along with a few labelled edges and a labelled vertex.

The six-vertex configurations are assignments of spins to $\mathscr{E}_{N,T}$
satisfying certain conditions.
We will interpret a $+$ ($-$) spin on a horizontal edge to point right (left).
As usual, a $+$ ($-$) spin on a vertical edge will point up (down).
Let $\mathcal{S}_{N,T}$ denote the set of spin configurations $\varsigma : \mathscr{E}_{N,T} \to \{+1,-1\}$
satisfying the following six-vertex conditions: for each $i \in \{1,\dots,N\}$ and $j \in \{1,\dots,T\}$,
considering the spins on the edges incident to $V_{ij}$, we have
\begin{equation}
\label{eq:sxCnd}
\varsigma(E^{h}_{i-1,j}) - \varsigma(E^{h}_{ij}) + \varsigma(E^v_{i,j-1}) - \varsigma(E^v_{ij})\,
=\, 0\, .
\end{equation}
These are the valid six-vertex configurations.
The conditions mean that at each vertex there are two spins in and two spins out.
Since there are four edges, the number of such choices is $4$-choose-$2$, which
gives rise to the name ``six'' vertex configuration.
\newcommand{\uptri}{+(0,0.1) -- +(-0.075,-0.1) -- +(0.075,-0.1) -- +(0,0.1)}
\newcommand{\dntri}{+(0,-0.1) -- +(-0.075,0.1) -- +(0.075,0.1) -- +(0,-0.1)}
\newcommand{\ritri}{+(0.1,0) -- +(-0.1,-0.075) -- +(-0.1,0.075)}
\newcommand{\letri}{+(-0.1,0) -- +(0.1,-0.075) -- +(0.1,0.075)}

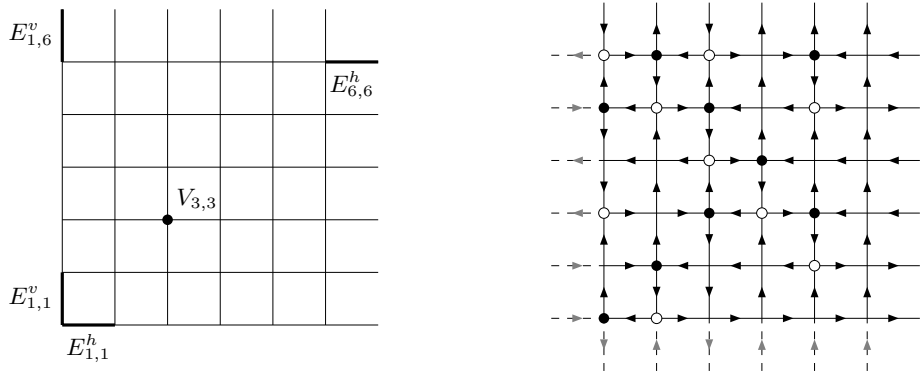
\begin{figure}[]
\begin{center}
\begin{tikzpicture}[xscale=0.7,yscale=0.7,very thin]
\draw (0,0) -- (6,0) (0,1) -- (6,1) (0,2) -- (6,2) (0,3) -- (6,3) (0,4) -- (6,4) (0,5) -- (6,5);
\draw (0,0) -- (0,6) (1,0) -- (1,6) (2,0) -- (2,6) (3,0) -- (3,6) (4,0) -- (4,6) (5,0) -- (5,6);
\draw[very thick] (0,0) -- (1,0) (5,5) -- (6,5) (0,5) -- (0,6) (0,0) -- (0,1);
\fill (2,2) circle (1mm);
\draw (0.5,0) node[below] {\small ${E}^h_{1,1}$};
\draw (5.5,5) node[below] {\small ${E}^h_{6,6}$};
\draw (0,5.5) node[left] {\small ${E}^v_{1,6}$};
\draw (0,0.5) node[left] {\small ${E}^v_{1,1}$};
\draw (2,2) node[above right] {\small ${V}_{3,3}$};
\end{tikzpicture}
\hspace{2cm}
\begin{tikzpicture}[xscale=0.7,yscale=0.7,very thin]
\draw (0,0) -- (6,0) (0,1) -- (6,1) (0,2) -- (6,2) (0,3) -- (6,3) (0,4) -- (6,4) (0,5) -- (6,5);
\draw (0,0) -- (0,6) (1,0) -- (1,6) (2,0) -- (2,6) (3,0) -- (3,6) (4,0) -- (4,6) (5,0) -- (5,6);
\draw[dashed,xshift=-7cm] (6,0) -- (7,0) (6,1) -- (7,1) (6,2) -- (7,2) (6,3) -- (7,3) (6,4) -- (7,4) (6,5) -- (7,5);
\draw[dashed,yshift=-7cm] (0,6) -- (0,7) (1,6) -- (1,7) (2,6) -- (2,7) (3,6) -- (3,7) (4,6) -- (4,7) (5,6) -- (5,7);
\fill[gray] (0,-0.5) \dntri (1,-0.5) \uptri (2,-0.5) \dntri (3,-0.5) \uptri (4,-0.5) \uptri (5,-0.5) \uptri;
\fill (0.5,0) \letri (1.5,0) \ritri (2.5,0) \ritri (3.5,0) \ritri (4.5,0) \ritri (5.5,0) \ritri;
\fill[gray] (-0.5,0) \ritri;
\fill (0,0) circle (1mm);
\filldraw[fill=white] (1,0) circle (1mm);
\fill (0,0.5) \uptri (1,0.5) \dntri (2,0.5) \dntri (3,0.5) \uptri (4,0.5) \uptri (5,0.5) \uptri;
\fill (0.5,1) \ritri (1.5,1) \letri (2.5,1) \letri (3.5,1) \letri (4.5,1) \ritri (5.5,1) \ritri;
\fill[gray] (-0.5,1) \ritri;
\fill (0,1.5) \uptri (1,1.5) \uptri (2,1.5) \dntri (3,1.5) \uptri (4,1.5) \dntri (5,1.5) \uptri;
\fill (1,1) circle (1mm);
\filldraw[fill=white] (4,1) circle (1mm);
\fill (0.5,2) \ritri (1.5,2) \ritri (2.5,2) \letri (3.5,2) \ritri (4.5,2) \letri (5.5,2) \letri;
\fill[gray] (-0.5,2) \letri;
\fill (0,2.5) \dntri (1,2.5) \uptri (2,2.5) \uptri (3,2.5) \dntri (4,2.5) \uptri (5,2.5) \uptri;
\filldraw[fill=white] (0,2) circle (1mm) (3,2) circle (1mm);
\fill (2,2) circle (1mm) (4,2) circle (1mm);
\fill (0.5,3) \letri (1.5,3) \letri (2.5,3) \ritri (3.5,3) \letri (4.5,3) \letri (5.5,3) \letri;
\fill[gray] (-0.5,3) \letri;
\fill (0,3.5) \dntri (1,3.5) \uptri (2,3.5) \dntri (3,3.5) \uptri (4,3.5) \uptri (5,3.5) \uptri;
\filldraw[fill=white] (2,3) circle (1mm);
\fill (3,3) circle (1mm);
\fill (0.5,4) \letri (1.5,4) \ritri (2.5,4) \letri (3.5,4) \letri (4.5,4) \ritri (5.5,4) \ritri;
\fill[gray] (-0.5,4) \ritri;
\fill (0,4.5) \uptri (1,4.5) \dntri (2,4.5) \uptri (3,4.5) \uptri (4,4.5) \dntri (5,4.5) \uptri;
\fill (0,4) circle (1mm) (2,4) circle (1mm);
\filldraw[fill=white] (1,4) circle (1mm) (4,4) circle (1mm);
\fill (0.5,5) \ritri (1.5,5) \letri (2.5,5) \ritri (3.5,5) \ritri (4.5,5) \letri (5.5,5) \letri;
\fill[gray] (-0.5,5) \letri;
\fill (0,5.5) \dntri (1,5.5) \uptri (2,5.5) \dntri (3,5.5) \uptri (4,5.5) \uptri (5,5.5) \uptri;
\filldraw[fill=white] (0,5) circle (1mm) (2,5) circle (1mm);
\fill (1,5) circle (1mm) (4,5) circle (1mm);
\end{tikzpicture}
\caption{
\label{fig:First6vtx}
On the left we indicate the variable names for labelling vertices and edges, which will be useful for the six vertex representation. On the right we have
an example of a valid six-vertex configuration.
}
\end{center}
\end{figure}

In \Cref{fig:First6vtx}, in the picture on the right we give an example of a valid six-vertex configuration of arrows: 
satisfying two-in, two-out at every vertex.
The dashed edges are edges which
are actually repeated on the other side of the torus due to periodic boundary conditions.
Similarly, the grey arrows are arrows which are repeated from the other side.

For the example of a six-vertex configuration in \Cref{fig:First6vtx}, we give some extra decorations,
to help to see certain features.
For each vertex $V_{ij}$ we call it a ``sink'' if 
$$
\varsigma(E^h_{i-1,j})\,
=\, \varsigma(E^v_{ij})\, 
=\, -\varsigma(E^h_{i,j})\,
=\, -\varsigma(E^v_{i,j-1})\,
=\, 1\, .
$$
In the picture on the right in \Cref{fig:First6vtx} the sinks are denoted by black (filled) circles.
A source is a vertex $V_{ij}$ such that
$\varsigma(E^h_{i-1,j})
= \varsigma(E^v_{ij})
= -\varsigma(E^h_{i,j})
= -\varsigma(E^v_{i,j-1}) = -1$.
Sources are indicated in \Cref{fig:First6vtx} by white (open) circles.

In the six vertex model there is also a weight for each $\varsigma \in \mathcal{S}_{N,T}$.
For each $i\in\{1,\dots,N\}$, 
$j \in \{1,\dots,T\}$ let us define
\begin{equation}
\label{eq:mij}
m_{ij}(\varsigma)\, =\, \mathbf{1}\{\varsigma(E^h_{i-1,j})=\varsigma(E^v_{ij})=-\varsigma(E^h_{i,j})
=-\varsigma(E^v_{i,j-1})\}\, ,
\end{equation}
which is the indicator of having either a sink or a source at $V_{ij}$.
Given a parameter $\kappa \in \R$, which will play the role of an inverse-temperature,
the weight of any $\varsigma \in \mathcal{S}_{N,T}$ is 
\begin{equation}
\label{eq:WtDf}
W_{N,T}^\kappa(\varsigma)\, 
=\, \prod_{i=1}^{N} \prod_{j=1}^{T} w_{ij}^{\kappa}(\varsigma)\, ,\qquad
w_{ij}^\kappa(\varsigma)\, =\, e^{\kappa m_{ij}(\varsigma)}\, .
\end{equation}
The relation between the XXZ model and the six-vertex model was originally noted by Lieb,
for example in \cite{Lieb}. The row-to-row transfer matrix of the six-vertex model may be viewed
as an operator on $\Hil_N$, the Hilbert space for the circle, $\T_N$ for $d=1$.
This operator has the same eigenvectors as $H^{\Delta}_N$ if $K$ and $\Delta$ satisfy a certain relation.

Let us be more precise. Given the parameter $K$, the row-to-row transfer operator for the six vertex model may be defined as 
$A_{N,\kappa} : \Hil_N \to \Hil_N$ (where again $\Hil_N$ is quantum spin Hilbert space for the graph $\mathcal{G}=\T_N$ with $d=1$).
It is defined as follows.
Let $\Sigma_N$ denote $\Sigma_{\mathcal{V}}$ for $\mathcal{V} = \T_N$.
For any $\sigma, \sigma' \in \Sigma_N$   define
\begin{equation}
\label{eq:RowToRow}
\langle \Psi_N(\sigma), A_{N,\kappa} \Psi_N(\sigma') \rangle\,
=\, \sum_{\tau \in \Sigma_N}
\prod_{i=1}^{N} \Big(\mathbf{1}\{\tau_{i-1}=\tau_{i})\}
\mathbf{1}\{\sigma'_i=\sigma_i\}
+ e^{\kappa}  \mathbf{1}\{\tau_{i-1}=\sigma'_i=-\tau_i=-\sigma_i\}\Big)\, .
\end{equation}
where $N+1\equiv 1$ because of periodic boundary conditions, as usual.
This represents the product of indicators for \cref{eq:sxCnd} for $i=1,\dots,N$ for a fixed $j$,
and the product of the weights \cref{eq:WtDf}, if we define a partial configuration
only on edges incident to the vertices $V_{ij}$ for $i=1,\dots,N$ and a fixed $j$,
where $\varsigma(E^v_{i,j-1})=\s_i$, $\varsigma(E^v_{i,j})=\s'_i$ and $\varsigma(E^h_{i,j})=\tau_i$.
See \Cref{fig:Row} for an example.
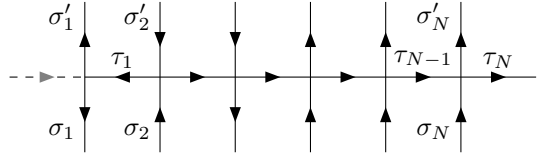
\begin{figure}
\begin{center}
\begin{tikzpicture}[xscale=1,yscale=1]
\draw[dashed] (-1,0) -- (0,0);
\draw (0,0) -- (6,0);
\foreach \x in {0,1,...,5}
{\draw (\x,-1) -- (\x,1);}
\fill (0,-0.5) \dntri (1,-0.5) \uptri (2,-0.5) \dntri (3,-0.5) \uptri (4,-0.5) \uptri (5,-0.5) \uptri;
\fill (0.5,0) \letri (1.5,0) \ritri (2.5,0) \ritri (3.5,0) \ritri (4.5,0) \ritri (5.5,0) \ritri;
\fill[gray] (-0.5,0) \ritri;
\fill (0,0.5) \uptri (1,0.5) \dntri (2,0.5) \dntri (3,0.5) \uptri (4,0.5) \uptri (5,0.5) \uptri;
\draw (0,-0.5) node[below left] {$\sigma_1$};
\draw (0,0.5) node[above left] {$\sigma_1'$};
\draw (1,-0.5) node[below left] {$\sigma_2$};
\draw (1,0.5) node[above left] {$\sigma_2'$};
\draw (0.5,0) node[above] {$\tau_1$};
\draw (5,-0.5) node[below left] {$\sigma_N$};
\draw (5,0.5) node[above left] {$\sigma_N'$};
\draw (4.5,0) node[above] {$\tau_{N-1}$};
\draw (5.5,0) node[above] {$\tau_N$};
\end{tikzpicture}
\end{center}
\caption{\label{fig:Row}
An example of the spins $\sigma$, $\sigma'$ and $\tau$ involved in \cref{eq:RowToRow}.}
\end{figure}
Therefore, for example, defining the partition function for the six vertex model on $\T_{N,T}$
with inverse temperature $\kappa$, we have
\begin{equation}
\label{eq:Ztr}
Z_{N,T}(\kappa)\, =\, \sum_{\varsigma \in \mathcal{S}_{N,T}} W_{N,T}^{\kappa}(\varsigma)\,
=\, \operatorname{Tr}[(A_{N,\kappa})^T]\, .
\end{equation}
The following is to Sutherland, see \cite{Sutherland}:
\begin{proposition}
\label{prop:Sutherland}
For each $N$, the operators $A_{N,\kappa}$ and $H_{N,\Delta}$ (for $d=1$) commute if 
\begin{equation}
\label{eq:kappaDelta}
\Delta\, =\, 1 - \frac{1}{2}\, e^{2\kappa}\, .
\end{equation}
\end{proposition}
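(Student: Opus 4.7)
The plan is to follow Sutherland's original strategy: recognize $A_{N,\kappa}$ as a six-vertex row-to-row transfer matrix built from local Lax operators, embed it in a commuting family of transfer matrices via a Yang-Baxter / $RLL$ relation, and then extract $H_{N,\Delta}$ as a logarithmic derivative at a distinguished ``shift point'' of the family.

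First, I would rewrite \cref{eq:RowToRow} in Lax form. Introduce an auxiliary copy $\mathbb{C}^2_a$ and define $L_i(\kappa)\in\mathrm{End}(\mathbb{C}^2_a\otimes\mathbb{C}^2_i)$ whose matrix element $\langle \tau_i,\sigma'_i|L_i(\kappa)|\tau_{i-1},\sigma_i\rangle$ equals $1$ on straight-through vertices ($\tau_{i-1}=\tau_i$, $\sigma_i=\sigma'_i$), equals $e^{\kappa}$ on source/sink vertices ($\tau_{i-1}=\sigma'_i=-\tau_i=-\sigma_i$), and vanishes otherwise. Then $A_{N,\kappa}=\mathrm{tr}_a\bigl(L_N(\kappa)\cdots L_1(\kappa)\bigr)$. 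In the symmetric six-vertex notation this is the weight triple $(a,b,c)=(1,1,e^{\kappa})$, whose anisotropy is $(a^2+b^2-c^2)/(2ab)=1-\tfrac12 e^{2\kappa}$, already matching \cref{eq:kappaDelta}.

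Next, promote this to a one-parameter family by rescaling the weights along a curve $(a(\lambda),b(\lambda),c(\lambda))$ that keeps the anisotropy fixed at $\Delta$ (trigonometric parametrization when $|\Delta|<1$, hyperbolic when $\Delta<-1$, rational at $\Delta=\pm1$), producing Lax operators $L_i(\lambda)$. The algebraic heart of the argument is the $RLL$ relation
$$R_{ab}(\lambda,\mu)\,L_{ai}(\lambda)L_{bi}(\mu)\,=\,L_{bi}(\mu)L_{ai}(\lambda)\,R_{ab}(\lambda,\mu)$$
on $\mathbb{C}^2_a\otimes\mathbb{C}^2_b\otimes\mathbb{C}^2_i$, with $R_{ab}$ itself a six-vertex $R$-matrix of anisotropy $\Delta$; this is a finite verification on an $8\times 8$ identity. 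Taking $\mathrm{tr}_{ab}$ in the standard way yields the commuting family $[A_N(\lambda),A_N(\mu)]=0$ for all $\lambda,\mu$ on the curve, with $A_N(\lambda_\kappa)=A_{N,\kappa}$ at the distinguished value corresponding to $(1,1,e^{\kappa})$.

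Finally, I would identify $H_{N,\Delta}$ as a derivative. At a special value $\lambda_0$ on the curve, $L_i(\lambda_0)$ equals (up to a scalar) the permutation $P_{ai}$ on $\mathbb{C}^2_a\otimes\mathbb{C}^2_i$, so $A_N(\lambda_0)$ reduces to the cyclic translation on $\Hil_N$. A telescoping calculation then shows that $A_N(\lambda_0)^{-1}\frac{d}{d\lambda}A_N(\lambda)\bigm|_{\lambda=\lambda_0}$ is a sum $\sum_i h_{i,i+1}$ of nearest-neighbor two-site operators, and a direct computation using the derivative of $L_i$ at $\lambda_0$ and matching to Pauli matrices yields $h_{i,i+1}=-(S^x_i S^x_{i+1}+S^y_i S^y_{i+1}+\Delta\,S^z_i S^z_{i+1})$ up to an additive constant, with $\Delta=1-\tfrac12 e^{2\kappa}$. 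Differentiating the commutation $[A_{N,\kappa},A_N(\lambda)]=0$ in $\lambda$ at $\lambda_0$ then gives the proposition. The main obstacle will be pinning down the specific parametrization $(a(\lambda),b(\lambda),c(\lambda))$ so that the shift point $\lambda_0$ lies on the curve and the logarithmic derivative there produces exactly $H_{N,\Delta}$ with the right normalization; the $RLL$ verification and the telescoping reduction are routine once the parametrization is fixed.
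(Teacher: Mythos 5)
Your proposal follows essentially the same route the paper indicates: the paper does not prove \Cref{prop:Sutherland} itself but refers to Sutherland and remarks in \Cref{sec:6vtx} that the result ``follows from the Yang-Baxter relation,'' and your sketch is a correct elaboration of exactly that argument (Lax form, $RLL$ relation giving a commuting one-parameter family, logarithmic derivative at the shift point producing $H_{N,\Delta}$). Your computation of the six-vertex anisotropy $(a^2+b^2-c^2)/(2ab)=1-\tfrac12 e^{2\kappa}$ from the weight triple $(1,1,e^{\kappa})$ matches \cref{eq:kappaDelta}, and the remaining steps you defer (fixing the trigonometric/hyperbolic parametrization, verifying $RLL$, carrying out the telescoping derivative and Pauli-matrix matching) are the standard and routine ones.
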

A simple but crucial fact is that $A_{N,\kappa}$ commutes with $\mathbf{M}_M$ for each
$M \in \{-\frac{1}{2}\, N,\dots,\frac{1}{2}\, N\}$, just like $H_{N,\Delta}$.
One way to see this is the following  lemma which we anyway need later to prove the upper bounds of \Cref{E:Ld3}.
\begin{lemma}
\label{lem:alternate}
Suppose that $\varsigma \in \mathcal{S}_{N,T}$ is any valid six-vertex configuration.
For any $j \in \{1,\dots,T\}$, the sinks and sources on the vertices $V_{ij}$, $i=1,\dots,N$
must alternate. 
Similarly, for any $i \in \{1,\dots,N\}$, the sinks and sources on the vertices $V_{ij}$,
$j=1,\dots,T$ must alternate.
\end{lemma}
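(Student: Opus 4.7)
The plan is to exploit the fact that the ice-rule \cref{eq:sxCnd} rigidly constrains how horizontal edge spins can change as one moves along a row, and to observe that among the six allowed vertex types, only sinks and sources produce a change.

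First I would enumerate the six allowed local configurations at a vertex $V_{ij}$ by listing the quadruple $\big(\varsigma(E^h_{i-1,j}),\varsigma(E^h_{i,j}),\varsigma(E^v_{i,j-1}),\varsigma(E^v_{i,j})\big)$ and verifying in each case whether $\varsigma(E^h_{i-1,j})=\varsigma(E^h_{i,j})$. A direct check (using \cref{eq:sxCnd}) shows that for the four ``straight'' or ``turn'' vertices, the two incident horizontal spins agree, while for the sink and source vertices (defined by \cref{eq:mij}) they disagree. Moreover, at a sink one has $\varsigma(E^h_{i-1,j})=+1$ and $\varsigma(E^h_{i,j})=-1$ (a $+\to-$ transition), whereas at a source one has $\varsigma(E^h_{i-1,j})=-1$ and $\varsigma(E^h_{i,j})=+1$ (a $-\to+$ transition).

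Second, I would fix a row $j$ and traverse the horizontal edges $E^h_{1,j},E^h_{2,j},\dots,E^h_{N,j}$ cyclically (using the periodic identification $N+1\equiv 1$). By the previous paragraph, the sequence of spins $\varsigma(E^h_{i,j})$ is piecewise constant in $i$, and it can only change at indices corresponding to sink or source vertices, with sinks producing $+\to-$ steps and sources producing $-\to+$ steps. Between consecutive such vertices, the spin value is constant. Hence sinks and sources must alternate along the row: two consecutive sinks (without a source between) would demand two consecutive $+\to-$ steps with no intervening sign reversal, which is impossible. Periodicity of the torus is also automatically compatible, since the total number of $+\to-$ steps must equal the total number of $-\to+$ steps in order to return to the starting spin value after going once around the cycle.

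Finally, the column statement is proved by the identical argument applied to the vertical edges: a completely analogous case check shows $\varsigma(E^v_{i,j-1})=\varsigma(E^v_{i,j})$ at every non-sink, non-source vertex, while at a sink the vertical spin flips $-\to+$ (going upward) and at a source it flips $+\to-$. Traversing column $i$ cyclically in $j$ then forces sinks and sources to alternate.

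There is no real obstacle here; the content is simply the observation that the ice rule lets horizontal (resp.\ vertical) edge spins change only at sinks and sources, with opposite signed transitions, so that alternation on the torus is forced.
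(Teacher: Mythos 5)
Your proof is correct and uses essentially the same idea as the paper: only sinks and sources change the sign of a horizontal (resp.\ vertical) edge spin as one traverses a row (resp.\ column), and they change it in opposite directions, so they must alternate. The only cosmetic difference is that the paper starts from a given source and tracks outward until hitting a sink on each side, whereas you traverse the entire cyclic row and classify each sign-change, which is an equally valid phrasing of the same observation.
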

\begin{proof}
Suppose that $V_{ij}$ is a source. Then on the edges $E^h_{ij}$ and $E^{h}_{i-1,j}$,
the arrows are flowing away from $V_{ij}$. This means $\varsigma(E^h_{ij})=+1$
and $\varsigma(E^h_{ij})$.
Similarly, we will have $\varsigma(E^h_{i+1,j})=+1$, and so on, until we come to a vertex
$V_{i+k,j}$ which is a sink.
Similarly $\varsigma(E^h_{i-2,j})=-1$ and so on until we come to a vertex $V_{i-k,j}$ which is a sink.
If we start with a source there is a symmetric argument.
This shows that sinks and sources must alternate in rows. The argument for columns is exactly
symmetric.
\end{proof}
\begin{corollary}
\label{cor:equal}
For any $\varsigma \in \mathcal{S}_{N,T}$, looking at any row or column, 
the number of sinks must equal the number of sources.
\end{corollary}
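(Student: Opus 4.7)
The plan is to extract Corollary \ref{cor:equal} as an immediate consequence of \Cref{lem:alternate}, exploiting the crucial fact that rows and columns of $\mathbb{T}_{N,T}$ are cyclic. First I would fix a row index $j \in \{1,\dots,T\}$ and consider the subsequence of indices $i_1 < i_2 < \dots < i_k$ in $\{1,\dots,N\}$ for which $V_{i_r,j}$ is either a sink or a source (i.e., the indices $i$ with $m_{ij}(\varsigma)=1$). By \Cref{lem:alternate}, as we step cyclically through $i_1, i_2, \dots, i_k, i_1$ the type (sink vs.\ source) must strictly alternate.

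The key observation is then that a two-valued cyclic sequence of length $k$ which alternates at every step is only consistent if $k$ is even, and in that case exactly half of the entries are sinks and half are sources. This gives the claimed equality for rows. The argument for columns is identical, with the role of $i$ and $j$ interchanged, using the column part of \Cref{lem:alternate}.

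There is really no serious obstacle here: \Cref{lem:alternate} is doing all the work, and the remaining step is a purely combinatorial statement about cyclic $\{\text{sink},\text{source}\}$-valued sequences with no two equal consecutive terms. The only thing to double-check is that the proof of \Cref{lem:alternate} genuinely used the cyclic structure (so that ``alternation'' is cyclic alternation rather than merely linear alternation); inspecting the argument, the chain of implications $\varsigma(E^h_{i,j})=+1 \Rightarrow \varsigma(E^h_{i+1,j})=+1 \Rightarrow \cdots$ closes up around the torus, so this is automatic. I would state the corollary as a one-line deduction, perhaps adding the remark that in particular the number of sinks in any row or column is even, a fact that will be convenient when connecting six-vertex configurations to the $S^z_{\mathrm{tot}}$-sectors of the XXZ chain in the sequel.
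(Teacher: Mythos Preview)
Your proposal is correct and follows exactly the same approach as the paper: the paper's proof is the one-line statement ``This follows from Lemma~\ref{lem:alternate} and periodic boundary conditions,'' and you have simply unpacked that line by spelling out the elementary combinatorial fact about cyclic alternating sequences. Nothing more is needed.
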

\begin{proof}
This follows from Lemma \ref{lem:alternate} and periodic boundary conditions.
\end{proof}
\begin{corollary}
For any $N$ and any $M \in \{-\frac{1}{2}\, N,\dots, \frac{1}{2}\, N\}$, the operator $A_{N,\kappa}$
commutes with $\mathbf{M}_M$.
\end{corollary}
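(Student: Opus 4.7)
My plan is to deduce the corollary from the equivalent statement that $A_{N,\kappa}$ preserves total $S^z$. Since the operators $\{\mathbf{M}_M\}_M$ are the spectral projections of $S^z_{\mathrm{tot}}$, commutation with every $\mathbf{M}_M$ is equivalent to $A_{N,\kappa}$ preserving each of its eigenspaces $\operatorname{Ran}(\mathbf{M}_M)$. Because $\operatorname{Ran}(\mathbf{M}_M)$ is spanned by the Ising basis vectors $\Psi_N(\sigma)$ with $\sum_i\sigma_i = 2M$, it suffices to show that the matrix element in \cref{eq:RowToRow} satisfies
\[
\langle \Psi_N(\sigma), A_{N,\kappa}\, \Psi_N(\sigma') \rangle\, =\, 0 \quad \text{whenever}\quad \sum_{i=1}^N \sigma_i\, \neq\, \sum_{i=1}^N \sigma'_i.
\]

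Next I would read off from \cref{eq:RowToRow} that a given $\tau$ contributes a nonzero term precisely when, at every index $i$, the quadruple $(\varsigma(E^v_{i,j-1}),\varsigma(E^v_{ij}),\varsigma(E^h_{i-1,j}),\varsigma(E^h_{ij})) = (\sigma'_i,\sigma_i,\tau_{i-1},\tau_i)$ corresponds to one of the six allowed vertex types, i.e.\ satisfies the two-in-two-out constraint \cref{eq:sxCnd}. The first indicator in the product (``straight'' term) forces $\sigma_i = \sigma'_i$, whereas the second indicator (``turning'' term) forces $\sigma_i = -\sigma'_i$ and makes $V_{ij}$ a sink if $\sigma_i = +1$ and a source if $\sigma_i = -1$. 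Hence, for any contributing $\tau$,
\[
\sum_{i=1}^N (\sigma_i - \sigma'_i)\, =\, 2\bigl(\#\{\text{sinks in row }j\} - \#\{\text{sources in row }j\}\bigr).
\]

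The key step is to invoke the alternation of \Cref{lem:alternate}, which immediately yields via \Cref{cor:equal} that the number of sinks in the row equals the number of sources, so the right-hand side above vanishes. The mild subtlety I expect to be the main point to verify is that \Cref{lem:alternate} was stated for full toroidal configurations $\varsigma \in \mathcal{S}_{N,T}$, whereas here we only have a single-row (single-strip) configuration supplied by the data $(\sigma',\tau,\sigma)$. However, the proof of \Cref{lem:alternate} proceeds by following the horizontal arrows along a fixed row $j$ using only the two-in-two-out rule at each vertex $V_{ij}$ and the periodic identification $i\equiv i+N$, all of which are available in the single-strip setting. So the alternation, hence the equality of counts, carries over unchanged.

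Combining these pieces, every $\tau$ giving a nonzero summand in \cref{eq:RowToRow} forces $\sum_i \sigma_i = \sum_i \sigma'_i$, so the off-diagonal matrix elements of $A_{N,\kappa}$ between eigenspaces of $S^z_{\mathrm{tot}}$ vanish. Therefore $A_{N,\kappa}$ preserves each $\operatorname{Ran}(\mathbf{M}_M)$, and consequently commutes with every $\mathbf{M}_M$, which is the desired conclusion.
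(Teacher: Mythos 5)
Your proof is correct and takes essentially the same route as the paper: identify which $\tau$ contribute, observe that the ``turning'' term of the product in \cref{eq:RowToRow} creates a sink or a source, and conclude via \Cref{cor:equal} that these cancel, so $\sum_i\sigma_i=\sum_i\sigma'_i$. Two small remarks. First, your sink/source labels are swapped relative to \cref{eq:mij}: with $\varsigma(E^v_{i,j-1})=\sigma_i$ and $\varsigma(E^v_{ij})=\sigma'_i$, a sink has $\sigma_i=-1,\ \sigma'_i=+1$ (and vice versa for a source), not the other way around; since sinks and sources enter your count symmetrically, this does not affect the argument, but the signed identity you wrote should read $\sum_i(\sigma_i-\sigma'_i)=2(\#\text{sources}-\#\text{sinks})$. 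Second, your observation that \Cref{lem:alternate} and \Cref{cor:equal} were stated for full toroidal configurations, yet really only use the two-in-two-out rule along a single row with periodic identification $i\equiv i+N$, is a genuine subtlety that the paper's proof glosses over by directly citing \Cref{cor:equal}; you are right that the argument carries over unchanged to the single-strip data $(\sigma',\tau,\sigma)$.
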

\begin{proof}
Suppose that $\sigma, \sigma' \in \Sigma_N$ are two spin configurations and $\Psi_{\sigma}$ is in the 
range of $\mathbf{M}_M$.
Suppose that there is some $\tau \in \Sigma_N$ such that 
$$
\prod_{i=1}^{N} \Big(\mathbf{1}\{\tau_{i-1}=\tau_{i})\}
\mathbf{1}\{\sigma'_i=\sigma_i\}
+ e^{\kappa}  \mathbf{1}\{\tau_{i-1}=\sigma'_i=-\tau_i=-\sigma_i\}\Big)\, 
>\, 0\, .
$$
Then note that the sinks and sources of $\tau$ must be equal.
For each sink at a position $i$ we have $\sigma_i = - \sigma'_i=-1$.
If there is a source, then $\sigma_i=-\sigma'_i=1$.
Otherwise $\sigma_i=\sigma_i'$.
Thus, since the number of sinks equals the number of sources
by \Cref{cor:equal}, $\sum_{i=1}^{N} \sigma_i=\sum_{i=1}^N \sigma'_i$.
\end{proof}
The next result appears in \cite{Lieb} as a consequence of Bethe Ansatz computations  
and the Perron-Frobenius theorem.
\begin{proposition}
\label{prop:Lieb}
Suppose \cref{eq:kappaDelta} is satisfied.  Then
for any $N$ and
for any $M \in \{-\frac{1}{2}\, N,\dots,\frac{1}{2}\, N\}$ consider the restiction of  $H_{N,\Delta}$ to $\Ran(\mathbf{M}_M)$.  The eigenspace with minimal eigenvalue is one
dimensional and the unit eigenvector of this space,  $\Psi^{(\Delta)}_{N,M}$, can be chosen so that  
\[
\langle \Psi^{(\Delta)}_{N,M},\Psi_{N}(\sigma) \rangle >  0\quad \forall \Psi_{N}(\sigma)\in \Ran(\mathbf{M}_M).
\]
Further, if we  restrict $A_{\Delta,\kappa}$ to $\Ran(\mathbf{M}_M)$, its eigenspace of largest eigenvalue is spanned by the same $\Psi^{(\Delta)}_{N,M}$.\end{proposition}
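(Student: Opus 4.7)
The plan is to apply the Perron--Frobenius theorem to both $-H_{N,\Delta}$ and $A_{N,\kappa}$, restricted to $\Ran(\mathbf{M}_M)$ and expressed in the Ising basis $\{\Psi_N(\sigma)\}$, and then to exploit the commutation relation $[H_{N,\Delta}, A_{N,\kappa}] = 0$ from \Cref{prop:Sutherland} to force the two Perron eigenvectors to coincide. The sectors $M = \pm N/2$ are one-dimensional (spanned by the all-$+$ and all-$-$ configurations), so the statement is vacuous there, and I would restrict to $|M| < N/2$ throughout.

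For $-H_{N,\Delta}$, first I would note that the off-diagonal matrix elements in the Ising basis come from $S^x_\bi S^x_\bj + S^y_\bi S^y_\bj = \frac{1}{2}(S^+_\bi S^-_\bj + S^-_\bi S^+_\bj)$, which equals $+\frac{1}{2}$ whenever $\sigma'$ is obtained from $\sigma$ by an anti-parallel swap across some edge of $\T_N$, and is $0$ otherwise. The diagonal part is real, so $-H_{N,\Delta} + cI$ is entrywise non-negative for $c$ sufficiently large. Irreducibility on $\Ran(\mathbf{M}_M)$ reduces to the claim that any two $M$-sector Ising configurations are connected by a finite sequence of nearest-neighbor anti-parallel swaps on the cycle $\T_N$; this is a standard bubble-sort argument, valid whenever both spin values are present. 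Perron--Frobenius then gives a simple smallest eigenvalue of $H_{N,\Delta}|_{\Ran(\mathbf{M}_M)}$ whose eigenvector (unique up to scalar) has strictly positive Ising coefficients, and I would define $\Psi^{(\Delta)}_{N,M}$ to be the normalization of this eigenvector.

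For $A_{N,\kappa}$, the matrix elements in \cref{eq:RowToRow} are manifestly sums of non-negative terms. For irreducibility on $\Ran(\mathbf{M}_M)$, I would show that a single application of $A_{N,\kappa}$ suffices to swap any anti-parallel pair of spins across the cycle: given $\sigma, \sigma'$ in the same $M$-sector differing only at positions $i < j$ by $(\sigma_i, \sigma_j) = (+,-)$ and $(\sigma'_i, \sigma'_j) = (-,+)$, the horizontal configuration $\tau$ with a source at vertex $i$, a sink at vertex $j$, and the values $\tau_k = +1$ for $k \in \{i, \dots, j-1\}$ and $\tau_k = -1$ otherwise is consistent with the periodic identification $\tau_N \equiv \tau_0$, respects the alternation required by \Cref{lem:alternate}, and contributes $e^{2\kappa} > 0$ to $\langle \Psi_N(\sigma), A_{N,\kappa} \Psi_N(\sigma')\rangle$. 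Since such transpositions generate the full permutation action on the positions of the $+$-spins, this yields irreducibility, and Perron--Frobenius produces a simple largest eigenvalue with strictly positive eigenvector.

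To combine: because $A_{N,\kappa}$ and $H_{N,\Delta}$ commute and both commute with $\mathbf{M}_M$, they may be simultaneously diagonalized on $\Ran(\mathbf{M}_M)$; since the ground-state eigenspace of $H_{N,\Delta}$ there is one-dimensional, it must be $A_{N,\kappa}$-invariant, so $\Psi^{(\Delta)}_{N,M}$ is an eigenvector of $A_{N,\kappa}$. Its strict positivity, together with the uniqueness half of Perron--Frobenius, forces it to be the top eigenvector of $A_{N,\kappa}|_{\Ran(\mathbf{M}_M)}$. The main technical obstacle I anticipate is the irreducibility check for $A_{N,\kappa}$, which requires a careful reading of \cref{eq:RowToRow}, the alternation constraint of \Cref{lem:alternate}, and the periodic boundary condition on $\tau$; in contrast the XY-chain irreducibility and the off-diagonal sign check for $H_{N,\Delta}$ are routine.
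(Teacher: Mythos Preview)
Your approach is correct and is precisely the argument the paper has in mind: the paper does not give a detailed proof but attributes the result to \cite{Lieb} and, in \Cref{sec:6vtx}, sketches exactly the combination of Perron--Frobenius for both $-H_{N,\Delta}$ and $A_{N,\kappa}$ together with the commutation relation from \Cref{prop:Sutherland}. Your explicit irreducibility checks (bubble-sort for the XY hopping, and the single-$\tau$ construction realizing an arbitrary $(+,-)$ transposition for the transfer matrix) fill in the details the paper leaves implicit; note only that your transposition construction as written covers the case $\sigma_i=+$, $\sigma_j=-$ with $i<j$, and the remaining orientation is handled by the symmetry of $A_{N,\kappa}$ under $\tau\mapsto -\tau$.
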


We will discuss both of the results \Cref{prop:Sutherland} and \Cref{prop:Lieb} further in \Cref{sec:6vtx}.
For us, the following is the most important implication.
\begin{corollary}
Given $N$ and $M \in \{-\frac{1}{2}\, N,\dots,\frac{1}{2}\, N\}$, 
suppose $\sigma \in \Sigma_N$ is such that $\Psi_N(\sigma)$  is in the range of $\mathbf{M}_M$.
Then for any $\Delta$ and $\kappa$ satisfying \cref{eq:kappaDelta}
\begin{equation}
\label{eq:components}
|\langle \Psi_{N,M}^{(\Delta)}, \Psi_N(\sigma) \rangle|^2\,
=\, \lim_{T \to \infty} 
\frac{\sum_{\varsigma \in \mathcal{S}_{N,T}} W_{N,T}^{\kappa}(\varsigma)
\mathbf{1}\{\forall i \in \{1,\dots,N\}\, ,\ \varsigma(E^v_{i,1})=\s_i\}}
{\sum_{\varsigma \in \mathcal{S}_{N,T}} W_{N,T}^{\kappa}(\varsigma)
\mathbf{1}\{\sum_{i=1}^N\varsigma(E^v_{i,1})=2M\}}\, .
\end{equation}
\end{corollary}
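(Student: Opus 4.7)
The plan is to interpret both the numerator and denominator on the right-hand side of \cref{eq:components} as matrix elements of $(A_{N,\kappa})^T$, and then to apply spectral decomposition together with \Cref{prop:Lieb} to pick out the contribution of $\Psi_{N,M}^{(\Delta)}$ in the large-$T$ limit.

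First I would unfold the definitions. Iterating the row-to-row construction in \cref{eq:RowToRow} layer by layer and using the trace representation \cref{eq:Ztr}, one identifies
\[
\sum_{\varsigma \in \mathcal{S}_{N,T}} W_{N,T}^{\kappa}(\varsigma)\, \mathbf{1}\{\varsigma(E^v_{i,1})=\sigma_i\ \forall i\}\, =\, \langle \Psi_N(\sigma), (A_{N,\kappa})^T \Psi_N(\sigma)\rangle,
\]
because fixing one horizontal slice of vertical edges on the torus and then tracing over all other configurations is exactly the $(\sigma,\sigma)$ diagonal matrix element of $(A_{N,\kappa})^T$. Summing this over all $\sigma$ with $\sum_i \sigma_i = 2M$, the denominator becomes $\operatorname{Tr}\!\left[\mathbf{M}_M (A_{N,\kappa})^T\right]$. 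This step is routine; the key payoff is that both sides of the ratio are now spectral objects for $A_{N,\kappa}$.

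Next I would diagonalize. Since $A_{N,\kappa}$ commutes with $\mathbf{M}_M$, we may pick an orthonormal eigenbasis $\{\Psi_{N,M,k}\}_{k\geq 0}$ of $\Ran(\mathbf{M}_M)$ for $A_{N,\kappa}$, with eigenvalues $\Lambda_{N,M,k}$ and convention $\Psi_{N,M,0}=\Psi_{N,M}^{(\Delta)}$. Since $\Psi_N(\sigma) \in \Ran(\mathbf{M}_M)$ by hypothesis, inserting the resolution of the identity gives
\[
\langle \Psi_N(\sigma), (A_{N,\kappa})^T \Psi_N(\sigma)\rangle\, =\, \sum_{k\geq 0} |\langle \Psi_{N,M,k}, \Psi_N(\sigma)\rangle|^2 \Lambda_{N,M,k}^T,
\]
while $\operatorname{Tr}[\mathbf{M}_M (A_{N,\kappa})^T] = \sum_{k\geq 0}\Lambda_{N,M,k}^T$. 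Dividing and factoring out $\Lambda_{N,M,0}^T$ from numerator and denominator, the result will follow once we know the ratios $\Lambda_{N,M,k}/\Lambda_{N,M,0}$ are in $[0,1)$ for $k\geq 1$, so that the $k\geq 1$ contributions vanish as $T\to\infty$.

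The main obstacle, and where I would invoke \Cref{prop:Lieb} in full force, is this strict spectral gap. Proposition \ref{prop:Lieb} already gives that $\Psi_{N,M}^{(\Delta)}$ is the \emph{unique} top eigenvector of $A_{N,\kappa}$ restricted to $\Ran(\mathbf{M}_M)$; the underlying Perron--Frobenius argument moreover shows that all other eigenvalues in this sector are strictly smaller in absolute value (this is where one must be slightly careful — $A_{N,\kappa}$ has nonnegative matrix entries in the $\Psi_N(\sigma)$ basis with an irreducibility structure on $\Ran(\mathbf{M}_M)$ coming from the fact that the off-diagonal spin-flip terms in \cref{eq:RowToRow} connect all configurations of fixed magnetization). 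Granting this, the $k\geq 1$ terms decay at least geometrically in $T$, yielding
\[
\lim_{T\to\infty} \frac{\langle \Psi_N(\sigma), (A_{N,\kappa})^T \Psi_N(\sigma)\rangle}{\operatorname{Tr}[\mathbf{M}_M (A_{N,\kappa})^T]}\, =\, \frac{|\langle \Psi_{N,M}^{(\Delta)}, \Psi_N(\sigma)\rangle|^2}{1},
\]
which is the desired identity. The rest of the write-up would just be to record these steps cleanly.
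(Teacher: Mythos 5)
Your proposal is correct and follows essentially the same route as the paper: rewrite the numerator and denominator as $\langle \Psi_N(\sigma),(A_{N,\kappa})^T\Psi_N(\sigma)\rangle$ and $\operatorname{Tr}[\mathbf{M}_M(A_{N,\kappa})^T]$, expand in an eigenbasis of $A_{N,\kappa}$ restricted to $\Ran(\mathbf{M}_M)$, and use the Perron--Frobenius dominance from \Cref{prop:Lieb} to show all subleading ratios tend to zero as $T\to\infty$. One small slip: the ratios $\Lambda_{N,M,k}/\Lambda_{N,M,0}$ need not lie in $[0,1)$ (lower eigenvalues may be negative); the correct statement, which you in fact invoke a line later, is that $|\Lambda_{N,M,k}|<\Lambda_{N,M,0}$ for $k\geq 1$, which is all that is needed.
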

\begin{proof}
Let $a_{N,M,\kappa}$ be the eigenvalue such that 
$A_{N,\kappa} \Psi^{(\Delta)}_{N,M} = a_{N,M,\kappa} \Psi^{(\Delta)}_{N,M}$.
Let $\Psi_{\alpha}$ and $a_{\alpha}$, for $\alpha=1,\dots,2^{N}-1$ be the remaining orthonormal
eigenvectors and eigenvalues for the real symmetric operator $A_{N,\kappa}$.
Then
\begin{align*}
\sum_{\varsigma \in \mathcal{S}_{N,T}} W_{N,T}^{\kappa}(\varsigma)
\mathbf{1}\{\forall i \in \{1,\dots,N\}\, ,\ \varsigma(E^v_{i,1})=\s_i\}\,
&=\, a_{N,M,\kappa}^T |\langle \Psi_N({\sigma})\, ,\, \Psi_{N,M}^{(\Delta)} \rangle|^2
+ \sum_{\alpha=1}^{2^N-1} a_{\alpha}^T |\langle \Psi_{N}(\sigma)\, ,\, \Psi_{\alpha} \rangle|^2\, .
\end{align*}
A similar formula holds for the trace, so we see that since $|a_{\alpha}|<a_{N,M,\kappa}$ for each $\alpha$,
we have that the right hand side of \cref{eq:components} is
\begin{align*}
\lim_{T \to \infty} \frac{|\langle \Psi_N(\sigma)\, ,\, \Psi_{N,M}^{(\Delta)} \rangle|^2
+ \sum_{\alpha=1}^{2^N-1} 
\left(\frac{a_{\alpha}}{a_{N,M,\kappa}}\right)^T 
|\langle \Psi_{N}(\sigma)\, ,\, \Psi_{\alpha} \rangle|^2}
{1 + \sum_{\alpha=1}^{2^N-1} 
\left(\frac{a_{\alpha}}{a_{N,M,\kappa}}\right)^T}\,
=\, |\langle \Psi_N({\sigma})\, ,\, \Psi_{N,M}^{(\Delta)} \rangle|^2\, ,
\end{align*}
which is evidently the same as the left hand side of \cref{eq:components}.
\end{proof}
Let us denote 
\begin{equation}
\label{eq:MathcalM}
\mathcal{M}(N,M)\, 
=\, \{\varsigma \in \mathcal{S}_{N,T}\, :\, 
\sum_{i=1}^N \varsigma(E^v_{i,1})
=2M\}\, .
\end{equation}
An immediate consequence is the following.
\begin{corollary}
\label{cor:Q1d}
Let $N$ and $M \in \{-\frac{1}{2}\, N,\dots,\frac{1}{2}\, N\}$ be fixed.  Given $\Delta<1$, suppose \cref{eq:kappaDelta} is satisfied.
Then for any $L \leq N$, 
\begin{equation}
\label{eq:EFPd1}
\lim_{\beta \to \infty}
\frac{\langle \mathbf{M}_M \mathbf{Q}_L \rangle_{N,\Delta,\beta}}
{\langle \mathbf{M}_M \rangle_{N,\Delta,\beta}}\,
=\, \lim_{T \to \infty} 
\frac{\sum_{\varsigma \in \mathcal{S}_{N,T}} W_{N,T}^{\kappa}(\varsigma)
\mathbf{1}\{\forall i \in \B_L\, ,\ \varsigma(E^v_{i,1})=+1\}
\mathbf{1}_{\mathcal{M}(N,M)}}
{\sum_{\varsigma \in \mathcal{S}_{N,T}} W_{N,T}^{\kappa}(\varsigma)
\mathbf{1}_{\mathcal{M}(N,M)}}\,.
\end{equation}
\end{corollary}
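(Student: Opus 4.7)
The plan is to reduce both sides to an expression involving $|\langle \Psi_{N,M}^{(\Delta)}, \Psi_N(\sigma) \rangle|^2$ summed over the appropriate set of Ising configurations $\sigma$, and then invoke the previous corollary.

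First I would handle the left-hand side by taking the $\beta \to \infty$ limit. Since $H_{N,\Delta}$ commutes with $\mathbf{M}_M$ (and $\mathbf{Q}_L$ is diagonal in the Ising basis, hence commutes with $\mathbf{M}_M$), I can write
\begin{equation*}
\frac{\langle \mathbf{M}_M \mathbf{Q}_L\rangle_{N,\Delta,\beta}}{\langle \mathbf{M}_M\rangle_{N,\Delta,\beta}}
= \frac{\sum_k e^{-\beta E_{k,M}} \langle \psi_{k,M}, \mathbf{Q}_L \psi_{k,M}\rangle}{\sum_k e^{-\beta E_{k,M}}},
\end{equation*}
where $\{\psi_{k,M}\}$ are orthonormal eigenvectors of $H_{N,\Delta}$ restricted to $\Ran(\mathbf{M}_M)$ with eigenvalues $E_{k,M}$. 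By \Cref{prop:Lieb}, the minimal eigenvalue in this sector is nondegenerate with eigenvector $\Psi_{N,M}^{(\Delta)}$, so the ratio converges to $\langle \Psi_{N,M}^{(\Delta)}, \mathbf{Q}_L \Psi_{N,M}^{(\Delta)}\rangle$ as $\beta \to \infty$.

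Next I would expand $\mathbf{Q}_L$ in the Ising basis. Writing $\mathbf{Q}_L = \sum_{\sigma} |\Psi_N(\sigma)\rangle\langle \Psi_N(\sigma)|$, where the sum is over $\sigma \in \Sigma_N$ with $\sigma_{\bi} = +1$ for all $\bi \in \B_L$, gives
\begin{equation*}
\langle \Psi_{N,M}^{(\Delta)}, \mathbf{Q}_L \Psi_{N,M}^{(\Delta)}\rangle = \sum_{\substack{\sigma:\ \sigma|_{\B_L}=+1 \\ \sum_i \sigma_i = 2M}} |\langle \Psi_{N,M}^{(\Delta)}, \Psi_N(\sigma)\rangle|^2,
\end{equation*}
where the constraint $\sum_i \sigma_i = 2M$ appears because $\Psi_{N,M}^{(\Delta)} \in \Ran(\mathbf{M}_M)$ forces the inner product to vanish otherwise.

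Then I would apply the previous corollary (equation \cref{eq:components}) to each term: for each such $\sigma$,
\begin{equation*}
|\langle \Psi_{N,M}^{(\Delta)}, \Psi_N(\sigma)\rangle|^2 = \lim_{T\to\infty} \frac{\sum_{\varsigma \in \mathcal{S}_{N,T}} W_{N,T}^\kappa(\varsigma)\, \mathbf{1}\{\varsigma(E^v_{i,1}) = \sigma_i\ \forall i\}}{\sum_{\varsigma \in \mathcal{S}_{N,T}} W_{N,T}^\kappa(\varsigma)\, \mathbf{1}_{\mathcal{M}(N,M)}}.
\end{equation*}
Since the outer sum over $\sigma$ is finite, I can pull it inside the $T \to \infty$ limit. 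The numerators combine: summing $\mathbf{1}\{\varsigma(E^v_{i,1}) = \sigma_i\ \forall i\}$ over $\sigma$ with $\sigma|_{\B_L} = +1$ and $\sum \sigma_i = 2M$ yields the indicator $\mathbf{1}\{\varsigma(E^v_{i,1}) = +1\ \forall i \in \B_L\}\,\mathbf{1}_{\mathcal{M}(N,M)}$. This is exactly the right-hand side of \cref{eq:EFPd1}.

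The main step to verify with care is the first one, namely the interchange of the $\beta\to\infty$ limit with the ratio, which relies crucially on the nondegeneracy of the ground state in each $\mathbf{M}_M$-sector supplied by \Cref{prop:Lieb}; the remaining steps are bookkeeping, since the sum over $\sigma$ is finite and matches the indicator on the six-vertex side tautologically.
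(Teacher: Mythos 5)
Your proposal is correct and follows essentially the same route as the paper's own (much terser) argument: the paper simply invokes that $\langle \mathbf{M}_M X\rangle_{N,\Delta,\beta}/\langle\mathbf{M}_M\rangle_{N,\Delta,\beta}$ tends to $\langle \Psi^{(\Delta)}_{N,M}, X\Psi^{(\Delta)}_{N,M}\rangle$ (which is exactly the content of your spectral decomposition plus nondegeneracy via \Cref{prop:Lieb}), and then sums \cref{eq:components} over the $\sigma$ with $\sigma|_{\B_L}=+1$. Your write-up correctly supplies the details the paper leaves implicit, including the observation that $\mathbf{Q}_L$ commutes with $\mathbf{M}_M$, the fact that the inner products vanish outside the $\sum_i\sigma_i=2M$ sector, and the bookkeeping by which the sum of indicators over $\sigma$ collapses to the joint indicator $\mathbf{1}\{\varsigma(E^v_{i,1})=+1\;\forall i\in\B_L\}\,\mathbf{1}_{\mathcal{M}(N,M)}(\varsigma)$.
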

\begin{proof}
Use the fact that $\langle \mathbf{M}_M X \rangle_{N,\Delta,\beta}$
is asymptotic to $\langle \Psi^{(\Delta)}_{N,M}, X \Psi^{(\Delta)}_{N,M} \rangle$ as $\beta \to \infty$ for any operator $X$ commuting
with $\mathbf{M}_M$, 
and then in \cref{eq:components} sum over all $\sigma \in \Sigma_N$
such that $\sigma_{i} = +1$ for all $i \in \B_L$.
\end{proof}

\section{The Upper Bound in  \Cref{E:Ld3} in \Cref{thm:1d}}
\label{sec:UB6}

For the 1d problem the upper bound in \Cref{E:Ld3} is easier than the lower bound.
So we start with the proof of that.
For both bounds, a key will be to use \Cref{cor:Q1d}.
To begin with, we use \Cref{lem:alternate} to bound the number of changes of downspins
in any interval when going from one row to the next.

\begin{lemma}
\label{lem:diff1}
Suppose that $L\leq N$.
For any $\varsigma \in \mathcal{S}_{N,T}$, if 
$\sum_{i=1}^{L} \varsigma(E^v_{i,1}) = L-2k$,
then $\sum_{i=1}^{L} \varsigma(E^v_{i,2})=L-2k'$
where $k'$ is in $\{k-1,k,k+1\}$.
\end{lemma}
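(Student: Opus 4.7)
The plan is to rewrite the row-sum difference $\sum_{i=1}^{L}\varsigma(E^v_{i,2}) - \sum_{i=1}^{L}\varsigma(E^v_{i,1})$ as a signed count of sinks and sources in row $j=2$ restricted to the interval $\{V_{i,2}:1\le i\le L\}$, and then invoke \Cref{lem:alternate} to control that signed count by $1$ in absolute value.

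First I would examine how $\varsigma(E^v_{i,2})$ relates to $\varsigma(E^v_{i,1})$ at a single vertex $V_{i,2}$. There are only three cases allowed by the two-in/two-out rule \cref{eq:sxCnd}: the vertex is a sink, in which case $\varsigma(E^v_{i,1}) = -1$ and $\varsigma(E^v_{i,2}) = +1$ (net change $+2$); the vertex is a source, in which case $\varsigma(E^v_{i,1}) = +1$ and $\varsigma(E^v_{i,2}) = -1$ (net change $-2$); or the vertex is one of the four remaining ``through'' types, all of which satisfy $\varsigma(E^v_{i,2}) = \varsigma(E^v_{i,1})$ (net change $0$). This classification follows by checking the six permitted local arrow patterns directly against the definitions of sink and source.

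Summing these local contributions over $i=1,\dots,L$ yields
\[
\sum_{i=1}^{L}\varsigma(E^v_{i,2}) - \sum_{i=1}^{L}\varsigma(E^v_{i,1})\, =\, 2(s - r),
\]
where $s$ is the number of sinks and $r$ the number of sources among $V_{1,2},\dots,V_{L,2}$. Setting the left side equal to $(L-2k') - (L-2k) = 2(k-k')$ gives $k - k' = s - r$, so the claim reduces to showing $|s - r| \le 1$.

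This last bound is exactly where \Cref{lem:alternate} comes in: along the row $j=2$, sinks and sources alternate as $i$ varies. Hence any contiguous window of that row, in particular $i \in \{1,\dots,L\}$, carries an alternating subsequence of sinks and sources, and the difference between the two counts in any such window is at most $1$. This forces $k - k' \in \{-1,0,1\}$, i.e.\ $k' \in \{k-1,k,k+1\}$, as required. I do not expect a serious obstacle here; the only subtle point is the case analysis of the six vertex types to verify that non-sink/non-source vertices preserve the vertical spin, which is routine.
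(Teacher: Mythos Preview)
Your proof is correct and follows essentially the same approach as the paper: express the change in the row sum as twice the difference between the number of sinks and sources among $V_{1,2},\dots,V_{L,2}$, then use \Cref{lem:alternate} to bound that difference by $1$. The paper's proof is more terse, simply pointing back to the argument behind \Cref{cor:equal}, whereas you spell out the vertex-type case analysis explicitly; but the substance is identical.
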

\begin{proof}
This is the same basic argument as in the proof of \Cref{cor:equal}.
What changes is that, when $L$ is less than $N$, the number of sinks and sources on the vertices $V_{ij}$, $i \in \{1,\dots,L\}$
need not be equal. But they can only differ by at most one, by \Cref{lem:alternate}.
Hence $k'-k$ must be in $\{-1,0,1\}$.
\end{proof}
An immediate corollary is this.
\begin{corollary}
\label{cor:density}
Suppose that for a fixed $j \in \{1,\dots,N\}$, 
we have $\sum_{i=1}^L \varsigma(E^v_{ij}) = L$.
For any $\epsilon>0$ and any integer $\ell$ such that $\ell+1\leq \epsilon L$,
we also have $\sum_{i=1}^L \sum_{j'=j-\ell}^{j+\ell} \varsigma(E^v_{i,j'}) \geq L (2\ell+1)(1-\epsilon)$.
\end{corollary}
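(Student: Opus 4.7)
The plan is to apply \Cref{lem:diff1} iteratively, starting from the row $j$ where the block is saturated with $+1$'s, and to track how fast the number of $-1$'s in the block can grow as we move up or down row by row.

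First I would set, for each $j' \in \{1,\dots,T\}$, $k_{j'} = \frac{1}{2}(L - \sum_{i=1}^L \varsigma(E^v_{i,j'}))$, which counts the number of $-1$ spins among $E^v_{1,j'},\dots,E^v_{L,j'}$. The hypothesis gives $k_j = 0$. Applying \Cref{lem:diff1} to consecutive rows (the lemma is stated for rows $1$ and $2$ but is translation invariant), we have $|k_{j'+1} - k_{j'}| \leq 1$ for every $j'$. Iterating this Lipschitz bound from the initial condition $k_j = 0$ yields $k_{j+\ell'} \leq |\ell'|$ for all integers $\ell'$, hence $\sum_{i=1}^L \varsigma(E^v_{i,j+\ell'}) \geq L - 2|\ell'|$.

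Next, summing this estimate over $\ell' \in \{-\ell,\dots,\ell\}$ produces
\begin{equation*}
\sum_{i=1}^{L}\sum_{j'=j-\ell}^{j+\ell}\varsigma(E^v_{i,j'})\, \geq\, (2\ell+1)L - 2\sum_{\ell'=-\ell}^{\ell}|\ell'|\, =\, (2\ell+1)L - 2\ell(\ell+1).
\end{equation*}
It then remains to verify that the hypothesis $\ell+1 \leq \epsilon L$ implies $2\ell(\ell+1) \leq \epsilon L (2\ell+1)$. This is immediate: $2\ell(\ell+1) \leq 2\ell\cdot \epsilon L \leq (2\ell+1)\epsilon L$. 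Substituting gives the claimed bound $\sum_{i=1}^L\sum_{j'=j-\ell}^{j+\ell}\varsigma(E^v_{i,j'}) \geq L(2\ell+1)(1-\epsilon)$.

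There is essentially no obstacle here; the entire content is in \Cref{lem:diff1}, and the corollary is a straightforward bookkeeping exercise once one recognizes that the block-wise magnetization executes a deterministic-in-magnitude $1$-Lipschitz walk as the row index varies. The only mild subtlety is remembering to invoke \Cref{lem:diff1} in both the forward and backward time directions (which is justified since the statement is symmetric under swapping the two rows), and noting that the argument does not require $k_{j'} \geq 0$ to be reset at intermediate rows, only the Lipschitz bound.
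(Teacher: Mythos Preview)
Your proof is correct and follows essentially the same approach as the paper: iterate \Cref{lem:diff1} to obtain the $1$-Lipschitz bound $k_{j'} \leq |j'-j|$, sum the resulting inequalities $\sum_{i=1}^L \varsigma(E^v_{i,j'}) \geq L - 2|j'-j|$ over $j' \in \{j-\ell,\dots,j+\ell\}$ using the arithmetic series formula, and then invoke $\ell+1 \leq \epsilon L$ to absorb the defect $2\ell(\ell+1)$ into $\epsilon L(2\ell+1)$. Your write-up is slightly more explicit about the translation invariance and the use of \Cref{lem:diff1} in both directions, but the argument is the same.
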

\begin{proof}
Iterate \Cref{lem:diff1} to obtain $\sum_{i=1}^{L} \varsigma(E^v_{i,j'}) = L - 2k'$
with $|k'| \leq |j'-j|$.
Summing over $j' \in \{j-\ell,\dots,j+\ell\}$ (and using the formula for the sum of an arithmetic sequence) gives $\sum_{i=1}^{L} \sum_{j'=j-\ell}^{j+\ell} \varsigma(E^v_{i,j'}) = L(2\ell+1) - 2K$
where $K \leq \ell(\ell+1)$.
Using $\ell +1\leq \epsilon \cdot L$ and $2\ell \leq 2\ell+1$,
we see that $2K\leq \epsilon L (2\ell+1)$.
\end{proof}

To state the next corollary, let $S^{(j)}_{L,\ell}$ be the set of all valid six-vertex configurations $\varsigma$ on the set of edges incident to all vertices $V_{i,j'}$, $i \in \{1,\dots,L\}$, 
$j' \in \{j-\ell+1,\dots,j+\ell\}$, subject to the constraint
$\sum_{i=1}^L \varsigma(E^v_{i,j}) = L$.
\begin{corollary}
\label{cor:DensityPlus}
With the same hypotheses as in the last corollary, 
\begin{equation}
\label{eq:SnumBound}
\frac{\ln|S^{(j)}_{L,\ell}|}{L(2\ell+1)}\, \leq\, -\epsilon \ln \left(\epsilon\right)
- \left(1-\epsilon\right) \ln \left(1 - \epsilon\right) + \frac{\ln(2)}{L}\, .
\end{equation}
\end{corollary}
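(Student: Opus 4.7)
The plan is to parameterize each $\varsigma \in S^{(j)}_{L,\ell}$ by its vertical-edge values together with a single ``boundary bit'' per row of horizontal edges, then bound each factor separately using \Cref{cor:density} and a standard entropy estimate. Set $N' = L(2\ell+1)$, the total number of vertical edges in the strip.

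First I will handle the horizontal edges. The six-vertex constraint \cref{eq:sxCnd} at vertex $V_{i,j'}$ can be rearranged into the recursion
\[
\varsigma(E^h_{i,j'}) = \varsigma(E^h_{i-1,j'}) + \varsigma(E^v_{i,j'-1}) - \varsigma(E^v_{i,j'}).
\]
Once all vertical edges in the strip are fixed, the single bit $\varsigma(E^h_{0,j'})$ therefore determines each of $\varsigma(E^h_{1,j'}), \ldots, \varsigma(E^h_{L,j'})$; some choices may produce values outside $\{\pm 1\}$, but discarding such cases only improves the upper bound. Summing over the $2\ell$ horizontal rows in the strip gives at most $2^{2\ell} \leq 2^{2\ell+1}$ horizontal completions for any admissible vertical pattern.

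Next I will bound the admissible vertical patterns. By \Cref{cor:density}, every $\varsigma \in S^{(j)}_{L,\ell}$ satisfies $\sum_{i,j'} \varsigma(E^v_{i,j'}) \geq N'(1-\epsilon)$, so at most $\lfloor \epsilon N'/2 \rfloor$ of the $N'$ vertical edges equal $-1$. In the main regime $\epsilon \leq 1/2$, the standard entropy bound on partial binomial sums gives
\[
\sum_{k=0}^{\lfloor \epsilon N'/2 \rfloor} \binom{N'}{k} \leq \exp\bigl(N'\,[-\epsilon \ln \epsilon - (1-\epsilon)\ln(1-\epsilon)]\bigr),
\]
using monotonicity of the binary entropy $-p \ln p - (1-p)\ln(1-p)$ on $[0,1/2]$ to pass from coefficient $\epsilon/2$ inside the binomial to $\epsilon$ inside the exponential.

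Combining the two counts and dividing $\ln |S^{(j)}_{L,\ell}|$ by $N' = L(2\ell+1)$ yields \cref{eq:SnumBound} directly. I do not foresee any serious obstacle: the argument rests entirely on \Cref{cor:density} together with the explicit row-by-row solution of the six-vertex relation and the standard tail bound on partial binomial sums. The only mild subtlety is behavior when $\epsilon$ is close to $1$, but this regime is not the one used in \Cref{sec:UB6}, where the corollary will be applied with $\epsilon$ of order $\ell/L$.
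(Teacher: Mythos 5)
Your proposal is correct and follows the same structure as the paper's proof: bound the vertical-edge patterns via \Cref{cor:density} and the standard binomial/entropy estimate, then observe that the horizontal edges contribute at most a factor $2^{2\ell+1}$ (one residual bit of freedom per horizontal row). Your explicit recursion for reconstructing horizontal spins from vertical spins and a single boundary bit per row is a slightly cleaner version of the paper's remark that a horizontal row with no sources or sinks leaves a two-fold choice.
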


\begin{proof}
Let us first focus just on the spins on the vertical edges: $\varsigma(E^{v}_{i,j'})$ for $i \in \{1,\dots,L\}$
and $|j-j'|\leq \ell$.
By \cref{cor:density}, we know that the total number of $\downarrow$ spins among these vertical
edges is at most an integer $K$ where $K \leq \epsilon M$, where we define $M= L(2\ell+1)\epsilon$ for notational convenience.

Then the standard binomial large deviation bound gives that the total number of choices of placing
$K$ $\downarrow$ spins among $M$ vertical edges is bounded by
$\exp(-K\ln(\epsilon)-(M-K)\ln(1-\epsilon))$.
The extra $\ln(2)/L$ arises from the fact that for any row of horizontal edges with no sources or sinks (an unusual occurrence)
all spins will point in the same direction, and there are two choices $\leftarrow$ or $\rightarrow$.
\end{proof}

Equation \cref{eq:SnumBound} is a basic entropy-type bound.
The upper bound of \cref{E:Ld3} will use entropy bounds to bound the likelihood of having
a long row of vertical edges with all spins aligned.

\subsection{Chessboard estimate and    strategy for upper bound in \Cref{thm:1d}}
Fr\"ohlich, Israel, Lieb and Simon proved that the six-vertex model is a reflection positive
classical spin system, much like the classical Ising model \cite{FILS}.
See section 6.1 of that reference culminating in Theorem 6.1, there.
We state the following very minor extension of their result here.

Let us define the Boltzmann-Gibbs state, as
\begin{equation}
\label{eq:BGdef}
\langle f(\varsigma) \rangle_{N,T,\kappa}\,
=\, \frac{\sum_{\varsigma \in \mathcal{S}_{N,T}} W_{N,T}^{\kappa}(\varsigma)
f(\varsigma)}
{\sum_{\varsigma \in \mathcal{S}_{N,T}} W_{N,T}^{\kappa}(\varsigma)}\, ,
\end{equation}
and the state constrained to the zero magnetization, as
\begin{equation}
\label{eq:BGdef2}
\langle f(\varsigma) \rangle_{N,T,\kappa,0}\,
=\, \frac{\langle f(\varsigma) \mathbf{1}_{\mathcal{M}(N,0)}(\varsigma) \rangle_{N,T,\kappa}}
{\langle \mathbf{1}_{\mathcal{M}(N,0)}(\varsigma) \rangle_{N,T,\kappa}}\, .
\end{equation}

Let $\theta^{(1)}_{n,I}, \theta^{(2)}_{t,J}$ is defined by
\begin{itemize}
\item if $I$ is even then $\theta^{(1)}_{t,I}$ is translation in the horizontal direction by $nI$;
\item if $I$ is odd then $\theta^{(1)}_{t,I}$ involves translating by $nI$, then mapping spins on the horizontal and vertical edges by reflecting (in the horizontal directions) all edges, but keeping the orientation of the spins the same on the horizontal edges,
and then flipping the orientation of all spins on vertical edges; 
\end{itemize}
and $\theta^{(2)}_{t,J}$ is defined similarly by switching the notion of horizontal and vertical directions.  See \Cref{fig:reflect} for an ilustration

Then we have the following result.
\begin{theorem}
\label{thm:FILS2}
For any $n$ and $t$, let $A \subset \mathcal{S}_{N,T}$ be an event depending only on the spins at edges incident to vertices $V_{ij}$
for $1\leq i\leq n$ and $1\leq j\leq T$.
Then, for any $a$ and $b$ such that $2^an \leq N$ and $2^b t\leq T$,
\begin{equation}
\label{ineq:FILS2}
\langle \mathbf{1}_{A} \rangle_{N,T,\kappa,0}\,
\leq\, \bigg\langle \prod_{I=0}^{2^a-1} \prod_{J=0}^{2^b-1} 
\mathbf{1}_{\theta^{(1)}_{n,I}(\theta^{(2)}_{t,J}(A))}\bigg\rangle_{N,T,\kappa,0}^{1/2^{a+b}}\, ,
\end{equation}
\end{theorem}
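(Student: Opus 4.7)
The plan is to derive \cref{ineq:FILS2} by iterating the standard reflection-positivity (RP) Schwarz inequality $a+b$ times, using as foundation the fact established in \cite[\S 6]{FILS} that the unconstrained six-vertex Boltzmann–Gibbs state $\langle\cdot\rangle_{N,T,\kappa}$ is reflection positive with respect to reflections in both coordinate directions of the torus $\mathbb{T}_{N,T}$. The role of the spin flips in the definitions of $\theta^{(1)}_{n,1}$ and $\theta^{(2)}_{t,1}$ is precisely to make the reflected configurations satisfy the ice rule \cref{eq:sxCnd}; this is what distinguishes six-vertex RP from ordinary Ising RP and is handled in detail in that reference.

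The first step is to lift RP from the unconstrained state to the zero-magnetization state $\langle\cdot\rangle_{N,T,\kappa,0}$ defined by \cref{eq:BGdef2}. Pick a row $j^\ast$ lying on the first reflection axis. By \Cref{cor:equal} the global constraint $\mathbf{1}_{\mathcal{M}(N,0)}$ is equivalent to $\mathbf{1}\bigl\{\sum_{i=1}^N \varsigma(E^v_{i,j^\ast})=0\bigr\}$, and by splitting the sum between the two halves $\mathrm{L},\mathrm{R}$ of the row cut by the reflection axis we may write
\[
\mathbf{1}_{\mathcal{M}(N,0)}\, =\, \sum_{m\in\mathbb{Z}}\mathbf{1}\bigl\{\textstyle\sum_{i\in \mathrm{L}}\varsigma(E^v_{i,j^\ast})=m\bigr\}\,\mathbf{1}\bigl\{\textstyle\sum_{i\in \mathrm{R}}\varsigma(E^v_{i,j^\ast})=-m\bigr\}.
\]
Each summand factorizes across the reflection plane as a product of two nonnegative observables, so the RP Schwarz inequality for the unconstrained measure yields a corresponding Schwarz inequality for $\langle\cdot\rangle_{N,T,\kappa,0}$, after summing over the sector label $m$. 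The analogous statement holds for reflections in the second coordinate direction, using the column version of \Cref{cor:equal}.

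With RP transferred to the constrained state, the second step is the standard chessboard iteration. For $A$ supported on $\{1,\dots,n\}\times\{1,\dots,t\}$, the Schwarz inequality across the vertical reflection plane between columns $n$ and $n+1$ gives
\[
\langle \mathbf{1}_A\rangle_{N,T,\kappa,0}\,\leq\, \bigl\langle \mathbf{1}_A\cdot \mathbf{1}_{\theta^{(1)}_{n,1}(A)}\bigr\rangle_{N,T,\kappa,0}^{1/2}.
\]
Iterating through successive vertical axes spaced by $n$ units produces, after $a$ steps, a product of $2^a$ reflected translates of $\mathbf{1}_A$ tiling the strip $\{1,\dots,2^an\}\times\{1,\dots,t\}$ under an overall $1/2^a$ power. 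Repeating the argument in the vertical direction for $b$ iterations with axes spaced by $t$ units produces the full $2^{a+b}$-fold tiling and yields \cref{ineq:FILS2}.

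The main obstacle I anticipate is ensuring that the spin-flip part of $\theta^{(1)}_{n,1}$ and $\theta^{(2)}_{t,1}$ interacts correctly with the sector decomposition above. Under $\theta^{(1)}_{n,1}$ the vertical spins in the reflected half are negated, so the constraint $\sum_{i\in\mathrm{R}}\varsigma(E^v_{i,j^\ast})=-m$ is turned into $\sum_{i\in\mathrm{R}}\varsigma(E^v_{i,j^\ast})=+m$; summing over $m$ then reproduces the same projection onto zero total magnetization in each reflected copy, so the argument goes through. This is essentially the same bookkeeping that underlies the \cite{FILS} treatment of RP for the six-vertex model in the canonical ensemble, and I do not expect any conceptually new difficulty beyond careful tracking of which row/column hosts the magnetization constraint at each stage of the iteration.
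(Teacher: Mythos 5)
Your proposal is correct and follows essentially the same route as the paper, which also reduces the claim to the unconstrained six-vertex reflection positivity of Fr\"ohlich--Israel--Lieb--Simon together with the observation that the conditioning event $\mathcal{M}(N,0)$ can be enforced in an RP-compatible way: by the sector decomposition you describe for vertical reflection planes, and by placing the constraint on the axis for horizontal planes, after which the chessboard iteration is standard. The one small imprecision is your appeal to the ``column version'' of \Cref{cor:equal} for the $\theta^{(2)}$ reflections: since $\mathcal{M}(N,0)$ constrains vertical edges in a single row and the row sum $\sum_i \varsigma(E^v_{i,j})$ is $j$-independent (the \emph{row} version of \Cref{cor:equal}), the indicator $\mathbf{1}_{\mathcal{M}(N,0)}$ is automatically $\theta^{(2)}$-invariant and no factorization across horizontal axes is actually required.
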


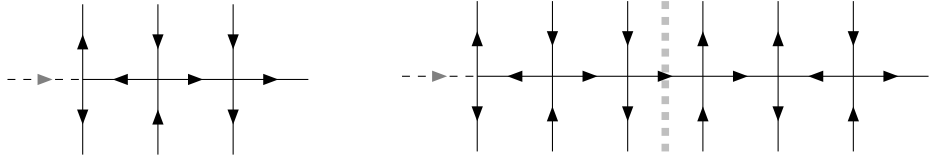
\begin{figure}
\begin{center}
\begin{tikzpicture}[xscale=1,yscale=1]
\draw[dashed] (-1,0) -- (0,0);
\draw (0,0) -- (3,0);
\foreach \x in {0,1,...,2}
{\draw (\x,-1) -- (\x,1);}
\fill (0,-0.5) \dntri (1,-0.5) \uptri (2,-0.5) \dntri;
\fill (0.5,0) \letri (1.5,0) \ritri (2.5,0) \ritri;
\fill[gray] (-0.5,0) \ritri;
\fill (0,0.5) \uptri (1,0.5) \dntri (2,0.5) \dntri;
\end{tikzpicture}
\hspace{1cm}
\begin{tikzpicture}[xscale=1,yscale=1]
\draw[line width=1mm, gray!50!white, dashed] (2.5,1) -- (2.5,-1);
\draw[dashed] (-1,0) -- (0,0);
\draw (0,0) -- (6,0);
\foreach \x in {0,1,...,5}
{\draw (\x,-1) -- (\x,1);}
\fill (0,-0.5) \dntri (1,-0.5) \uptri (2,-0.5) \dntri (3,-0.5) \uptri (4,-0.5) \dntri (5,-0.5) \uptri;
\fill (0.5,0) \letri (1.5,0) \ritri (2.5,0) \ritri (3.5,0) \ritri (4.5,0) \letri (5.5,0) \ritri;
\fill[gray] (-0.5,0) \ritri;
\fill (0,0.5) \uptri (1,0.5) \dntri (2,0.5) \dntri (3,0.5) \uptri (4,0.5) \uptri (5,0.5) \dntri;
\end{tikzpicture}
\end{center}
\caption{\label{fig:reflect}
For $n=3$ and $t=1$,
if $A$ is the set of spin configurations with the prescribed spins at the edges pictured on the left picture
then $A \cap \theta^{(1)}_{n,1}(A)$ is pictured on the right picture.
}
\end{figure}
This is a direct translation of Theorem 6.1 in \cite{FILS}, except for the fact that we
have conditioned on the event $\mathcal{M}(N,0)$.
We will comment briefly on this generalization in \Cref{sec:6vtx}.

\Cref{thm:FILS2} should be compared to  \Cref{lem:ChessLoss}.
Actually, at this point there is no ``loss'' in \Cref{thm:FILS2}.
We will provide the lossy version in the context of the actual proof of the upper bound.

\subsubsection{Proof outline for the upper bound:}
\Cref{thm:FILS2} is key, much as \Cref{lem:ChessLoss} was key to 
\Cref{E:Ud2}.
Because of some differences in the models we use a slightly different argument than before.

The beginning of the upper bound is still to consider a long string of aligned spins.
Now we consider vertical edges $E_{ij}^v$ for say $i \in \{1,\dots,L\}$, and any $j$.
We will let $\mathcal{A}_L^{(j)}$ be the event that all the spins on these $L$ edges are $\uparrow$ spins.
Using \Cref{cor:DensityPlus} we know that there are relatively few choices of spin configurations
for a box of height $\ell$, with $\ell+1\leq \epsilon L$, surrounding this row.
Using reflection positivity as in \Cref{thm:FILS2} we will disseminate the event $\mathcal{A}_L^{(j)}$, tiling
$\mathbb{T}_{N,T}$ with tiles of length $n=L$ and height $t=2\ell+1$.
More precisely, we choose $j=\ell$ so that the first tile is in the lower left corner
before periodic boundary conditions.

Because the number of choices of configurations in each tile, satisfying 
$\theta^{(1)}_{n,I}(\theta^{(2)}_{t,J}(\mathcal{A}^{(\ell)}_{L}))$, is so low,
the actual weights $w_{ij}^{\kappa}(\varsigma)$ will not be able to make up for this,
even if $\kappa$ is negative.

We will need to show that the total number of configurations times weights is small in comparison
to the unconstrained partition function.
For this purpose we will consider configurations which still have a small density of mis-aligned 
spins on vertical edges. But we will make the density significantly larger than the density $\epsilon$
associated to the events $\mathcal{A}^{(\ell)}_L$ disseminated as above.
Then we use straightforward binomial lower bounds on the partition function.

\subsection{The Partition Function: Denominator Bound}

Let us define
\begin{equation}
\label{eq:ZNTkappa0}
Z_{N,T}(\kappa,0)\, =\,
\sum_{\varsigma \in \mathcal{S}_{N,T}} W_{N,T}^{\kappa}(\varsigma)
\mathbf{1}_{\mathcal{M}(N,0)}(\varsigma)\, 
\end{equation}
and note that
\begin{equation}
\label{eq:NTkappa0}
\langle f(\varsigma) \rangle_{N,T,\kappa,0}\,
=\, \frac{
\sum_{\varsigma \in \mathcal{S}_{N,T}} W_{N,T}^{\kappa}(\varsigma)
f(\varsigma)
\mathbf{1}_{\mathcal{M}(N,0)}(\varsigma)}{Z_{N,T}(\kappa,0)}\, \, 
\end{equation}
where the LHS was defined in \Cref{eq:BGdef2}.
With $f(\sigma)$ taken from the right hand side of \cref{ineq:FILS2}, an upper bound on the left hand side of \cref{eq:NTkappa0},
results from an upper bound on the numerator of the right hand side of \cref{eq:NTkappa0},
and a lower bound on the denominator.
We are going to start with a lower bound on the denominator $Z_{N,T}(\kappa,0)$.

Let $R$ be any positive integer. Recall that $N$ and $T$ are both assumed to be even
for the torus $\mathbb{T}_{N,T}$.
We divide $N$ into $2\lfloor N/(2R) \rfloor$ intervals of length $R$.
There may be a remainder of sites, $N-2\lfloor N/(2R) \rfloor$, which is even.
We split these remaining sites into two subintervals of equal size as well.

Now we have a ``background'' configuration. This is a configuration that we will perturb.
On rows of vertical edges, on alternating tiles, we make the spins all $\uparrow$
and all $\downarrow$.
That is the background configuration.

The rows of vertical edges are also numbered by $j$.
On even rows, in each full tile, we choose one vertical edge to 
have a reversed spin.
An example is shown in \Cref{fig:Bin}.
\begin{figure}
\begin{center}
\begin{tikzpicture}[xscale=0.7,yscale=0.7,very thin]
\fill[gray,xshift=-0.5cm] (0,0) rectangle (16,1);
\fill[gray,xshift=-0.5cm] (0,2) rectangle (16,3);
\draw[very thick, xshift=-0.5cm] (0,0) -- (0,4) (2,0) -- (2,4) (4,0) -- (4,4) (6,0) -- (6,4) 
(8,0) -- (8,4) (10,0) -- (10,4) (12,0) -- (12,4) (14,0) -- (14,4) (16,0) -- (16,4);
\draw (-0.5,0) -- (15.5,0) (-0.5,1) -- (15.5,1) (-0.5,2) -- (15.5,2) (-0.5,3) -- (15.5,3);
\foreach \x in {0,1,...,15}
{
\draw (\x,0) -- (\x,4);
}
\fill (1,0.5) \dntri (5,0.5) \dntri (8,0.5) \dntri (13,0.5) \dntri;
\filldraw (1,1) circle (1mm) (5,1) circle (1mm) (8,1) circle (1mm) (13,1) circle (1mm);
\filldraw[fill=white, thick] (1,0) circle (1mm) (5,0) circle (1mm) (8,0) circle (1mm) (13,0) circle (1mm);
\draw (0,0.5) \uptri (4,0.5) \uptri (9,0.5) \uptri (12,0.5) \uptri;
\fill (3,0.5) \uptri (7,0.5) \uptri (11,0.5) \uptri (15,0.5) \uptri;
\filldraw (3,0) circle (1mm) (7,0) circle (1mm) (11,0) circle (1mm) (15,0) circle (1mm);
\filldraw[fill=white, thick] (3,1) circle (1mm) (7,1) circle (1mm) (11,1) circle (1mm) (15,1) circle (1mm);
\draw (2,0.5) \dntri (6,0.5) \dntri (10,0.5) \dntri (14,0.6) \dntri;
\fill (1,2.5) \dntri (4,2.5) \dntri (8,2.5) \dntri (13,2.5) \dntri;
\filldraw (1,3) circle (1mm) (4,3) circle (1mm) (8,3) circle (1mm) (13,3) circle (1mm);
\filldraw[fill=white, thick] (1,2) circle (1mm) (4,2) circle (1mm) (8,2) circle (1mm) (13,2) circle (1mm);
\draw (0,2.5) \uptri (5,2.5) \uptri (9,2.5) \uptri (12,2.5) \uptri;
\fill (2,2.5) \uptri (6,2.5) \uptri (11,2.5) \uptri (14,2.5) \uptri;
\filldraw (2,2) circle (1mm) (6,2) circle (1mm) (11,2) circle (1mm) (14,2) circle (1mm);
\filldraw[fill=white, thick] (2,3) circle (1mm) (6,3) circle (1mm) (11,3) circle (1mm) (14,3) circle (1mm);
\draw (3,2.5) \dntri (7,2.5) \dntri (10,2.5) \dntri (15,2.5) \dntri;

\foreach \x in {0,4,8,12}
{
\foreach \y in {1.5,3.5}
{
\draw (\x,\y) \uptri;
\draw[xshift=1cm] (\x,\y) \uptri;
}
}
\foreach \x in {2,6,10,14}
{
\foreach \y in {1.5,3.5}
{
\draw (\x,\y) \dntri;
\draw[xshift=1cm] (\x,\y) \dntri;
}
}
\end{tikzpicture}
\caption{\label{fig:Bin}
An example of the background configuration and choices of reversed spins for $R=2$.}
\end{center}
\end{figure}
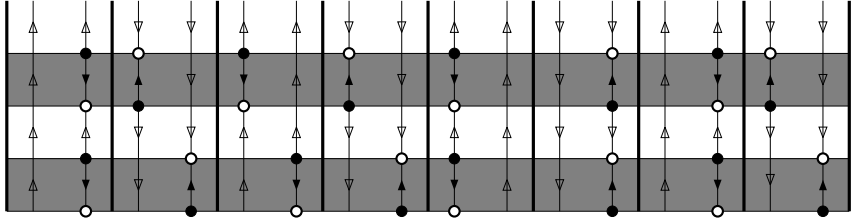
For each reversed spin, if it is $\downarrow$, then we have a source at the top vertex (a filled circle)
and a sink at the bottom vertex (an empty circle).
If instead we have an $\uparrow$, then the location of the sink and source are exchanged relative to the reversed spin.  
Now we have $2\lfloor N/(2R)\rfloor$ sinks and sources on each line of horizontal edges.
Therefore for each $\varsigma$ obtained this way, 
\[
W_{N,T}^{\kappa}(\varsigma)= \exp( 2 \kappa \lfloor N/(2R)\rfloor T).
\]

To estimate the number of configurations obtained, note that in each full tile, there is also a choice of $R$ for the location of the reversed spin. Since we are only modifying spins on even rows, the total number of $\varsigma$'s obtained in this way is at least
$\exp(\lfloor N/(2R)\rfloor T \ln(R))$.
We thus obtain the lower bound
\begin{equation}
\label{ineq:PFlb}
\frac{\ln Z_{N,T}(\kappa,0)}{NT}\,
\geq\, \frac{ \lfloor N/(2R)\rfloor}{N}\, (2\kappa + \ln(R))\, .
\end{equation}
This is not a particularly good bound in general,
but is convenient for comparing to the upper bound we will obtain in the next subsection.

\subsection{The Chessboard Estimate with Some Loss: Numerator Bound}
We begin with a more direct analogue of \Cref{lem:ChessLoss}.
Let 
\[
\mathcal{A}(L)= \{\varsigma \in \mathcal{S}_{N,T}\, :\, \forall i \in \mathbb{B}_L\, ,\ \varsigma(E_{i,1}^v)=+1\}.
\]

For integers $L\leq N/2$ and $\ell \leq T/2$, let us also define the disseminated event
$\widehat{\mathcal{A}}(N,T,L,\ell)$ as follows:
For the horizontal direction, make a decomposition of the length $N$ into
$2 \lfloor N/(2L) \rfloor$ tiles each of length $L$, as well as a number of remainder
vertices, numbering less than $2L$.
Split the remainder vertices into two partial tiles each with equal numbers of vertices.
This can be done since $N - 2 \lfloor N/(2L) \rfloor$ is even.

For the vertical direction, make a decomposition of the height $T$
into $2 \lfloor T/(2\ell) \rfloor$ tiles each of length $\ell$, as well as a number
of remainder vertices, numbering less than $2\ell$.
Split the remainder vertices into two partial tiles each with equal numbers of vertices.
This can be done since $T - 2 \lfloor T/(2\ell) \rfloor$ is even.

Disseminate the event $\mathcal{A}(L)$, viewed as occupying the first tile,
into these $2 \lfloor N/(2L) \rfloor \cdot 2 \lfloor T/(2\ell) \rfloor$ other tiles,
using reflections associated to $\theta_{n,I}^{(1)}$ and $\theta_{t,J}^{(2)}$
for $n=L$ and $t=\ell$.

On the partial tiles, continue the same pattern, except only project onto $\varsigma=+1$
or $\varsigma=-1$ for those sites in the partial tile.
\begin{lemma}
\label{lem:ChessLoss2}
Assume that $L$ is fixed with $L\leq N/2$ and $\ell$ is fixed with $\ell \leq T/2$. Then
\begin{equation}
\label{ineq:ChessLoss2}
\langle \mathbf{1}_{\mathcal{A}(L)}(\varsigma) \rangle_{N,T,\kappa,0}\,
\leq\, \left( \langle \mathbf{1}_{\widehat{\mathcal{A}}(N,T,L,\ell)} \rangle_{N,T,\kappa,0}\right)^{1/K}\, ,
\end{equation}
where $K = 2^{\log_2(N/L)+\log_2(T/\ell)+2}$.
\end{lemma}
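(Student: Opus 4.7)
The proof is a direct application of the chessboard estimate \Cref{thm:FILS2}, following the general pattern of the proof of \Cref{lem:ChessLoss}. The event $\mathcal{A}(L) = \{\varsigma : \varsigma(E^v_{i,1}) = +1 \text{ for all } i \in \mathbb{B}_L\}$ depends only on the vertical edges at the vertices $V_{i,1}$ for $i = 1, \dots, L$, and so depends only on edges incident to vertices $V_{ij}$ with $1 \leq i \leq L$ and $1 \leq j \leq T$. This places $\mathcal{A}(L)$ within the framework of \Cref{thm:FILS2} with $n = L$ (and any admissible vertical tile height $t$).

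The plan is to apply \Cref{thm:FILS2} with $n = L$, $t = \ell$, and with $a, b$ chosen so that the $2^{a+b}$ reflected copies of $\mathcal{A}(L)$, namely the events $\theta^{(1)}_{L,I}(\theta^{(2)}_{\ell,J}(\mathcal{A}(L)))$ for $0 \leq I < 2^a$, $0 \leq J < 2^b$, tile $\mathbb{T}_{N,T}$ in the chessboard pattern defining $\widehat{\mathcal{A}}(N,T,L,\ell)$. In the clean case where $N/L = 2^{a_0}$ and $T/\ell = 2^{b_0}$ are exact powers of $2$, taking $a = a_0$, $b = b_0$ makes the product of reflected indicators on the right-hand side of \cref{ineq:FILS2} coincide exactly with $\mathbf{1}_{\widehat{\mathcal{A}}(N,T,L,\ell)}$, yielding the claim with $K = 2^{a_0+b_0} = (N/L)(T/\ell)$.

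The general case, where $N/L$ or $T/\ell$ need not be a power of $2$, requires handling partial tiles at the boundary. One takes $a = \lfloor \log_2(N/L) \rfloor$ and $b = \lfloor \log_2(T/\ell) \rfloor$, so that the hypotheses $2^a L \leq N$ and $2^b \ell \leq T$ of \Cref{thm:FILS2} are satisfied; the reflected tiles then cover a sub-rectangle of $\mathbb{T}_{N,T}$, and one absorbs the remaining boundary strips into $\widehat{\mathcal{A}}(N,T,L,\ell)$ via a standard doubling argument (cf.\ the proof of \Cref{lem:ChessLoss}). Each doubling contributes an extra factor of $2$ in $K$, one per direction, so the final bound is $K = 2^{\log_2(N/L) + \log_2(T/\ell) + 2} = 4(N/L)(T/\ell)$, as stated.

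The main technical obstacle is the bookkeeping for these partial tiles, together with the verification that the reflections $\theta^{(1)}$, $\theta^{(2)}$ produce the correct orientations of the arrows in $\widehat{\mathcal{A}}(N,T,L,\ell)$ (recall that $\theta^{(1)}_{n,I}$ flips vertical-edge spins when $I$ is odd, and similarly $\theta^{(2)}_{t,J}$ flips horizontal-edge spins when $J$ is odd). Since this verification is entirely parallel to the corresponding step in the proof of \Cref{lem:ChessLoss} in \Cref{S:RP}, we omit the detailed calculation.
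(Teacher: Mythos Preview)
Your proposal is correct and takes essentially the same approach as the paper. The paper's own proof is a single sentence---``This lemma follows from \Cref{thm:FILS2}, requiring just one more reflection in each direction, through any appropriate plane''---and your elaboration (apply \Cref{thm:FILS2} with $n=L$, $t=\ell$, then absorb the leftover partial tiles by one extra reflection per direction, bounding the resulting overlap factor by $1$ as in the proof of \Cref{lem:ChessLoss}) is exactly what that sentence means.
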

This lemma follows from \Cref{thm:FILS2}, requiring just one more reflection in each direction, through any appropriate plane.

Now we will try to bound the right hand side of \cref{ineq:ChessLoss2}.
It will suffice to obtain a good bound on 
$$
\langle \mathbf{1}_{\widehat{\mathcal{A}}(N,T,L,\ell)} \rangle_{N,T,\kappa,0}\, .
$$
For this, we use \cref{eq:NTkappa0} with $f = \mathbf{1}_{\widehat{\mathcal{A}}(N,T,L,\ell)}$.
We then need to bound the right hand side of \cref{eq:NTkappa0}.
We already have a lower bound on the denominator, $Z_{N,T}(\kappa,0)$.
So now we just need an upper bound on the numerator:
$$
\sum_{\varsigma \in \mathcal{S}_{N,T}} W_{N,T}^{\kappa}(\varsigma) 
\mathbf{1}_{\mathcal{M}(N,0)}(\varsigma)
\mathbf{1}_{\widehat{\mathcal{A}}(N,T,L,\ell)}(\varsigma)\, .
$$
This is what we will describe next.

We choose $\ell$ such that $\ell+1 \leq \epsilon L$.
On each full tile, we know by \Cref{cor:DensityPlus} that the set $\mathcal{S}_{L,\ell,I,J}$of 
 valid spin configurations for that tile satisfies
 $$
\frac{|\mathcal{S}_{L,\ell,I,J}|}{(2\ell+1) L}\, \leq\, -\epsilon \ln(\epsilon) - (1-\epsilon) \ln(1-\epsilon)
+\frac{\ln(2)}{L}\, .
$$

Note that, in each full tile, a source or a sink must have the two incident vertical edges of opposite
spins.  By \Cref{cor:density}, the density of sources and sinks is no greater than $2\epsilon$, where we use 
to bound the number of reversed spins (relative to the majority in that tile).
So the product of the $w_{ij}^{\kappa}(\varsigma)$'s for a single full tile is no
greater than $\exp(2 \epsilon |\kappa| \ell L)$.

If one only had to worry about full tiles, then this would lead one to a bound
such as 
$$
\frac{1}{NT}\,
\ln \left(
\sum_{\varsigma \in \mathcal{S}_{N,T}} W_{N,T}^{\kappa}(\varsigma) 
\mathbf{1}_{\mathcal{M}(N,0)}(\varsigma)
\mathbf{1}_{\widehat{\mathcal{A}}(N,T,L,\ell)}(\varsigma)
\right)\, \leq\, -\epsilon \ln(\epsilon) - (1-\epsilon) \ln(1-\epsilon)
+\frac{\ln(2)}{L} + 2 \epsilon |\kappa|\, .
$$

Since the partial tiles are smaller than the full tiles, we have that there is no greater a number
of configurations satisfying the conditions than for the full tiles, and we still have
that the product of all the $w_{ij}^{\kappa}(\varsigma)$'s is no greater than $\exp(2\epsilon |\kappa| \ell L)$.
Finally, the number of partial tiles is no greater than the number of full tiles.
So we may obtain:
\begin{corollary}
With the set-up as above
\begin{equation}
\label{ineq:UB2}
\begin{split}
\frac{1}{NT}\,
\ln \bigg(
\sum_{\varsigma \in \mathcal{S}_{N,T}} W_{N,T}^{\kappa}(\varsigma) 
\mathbf{1}_{\mathcal{M}(N,0)}(\varsigma)
\mathbf{1}_{\widehat{\mathcal{A}}(N,T,L,\ell)}(\varsigma)
\bigg)\, 
&\leq\, -2\epsilon \ln(\epsilon) - 2(1-\epsilon) \ln(1-\epsilon)
+ 4 \epsilon |\kappa|\\
&\hspace{-2cm}
+\frac{2\ln(2)}{L} + \left(\frac{2L-1}{N} + \frac{2\ell-1}{T}\right)(|\kappa|+2\ln(2))\, .
\end{split}
\end{equation}
\end{corollary}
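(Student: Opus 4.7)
The plan is to upper-bound the sum by decomposing the torus $\mathbb T_{N,T}$ into the tiles used in the definition of $\widehat{\mathcal A}(N,T,L,\ell)$ and bounding each tile's contribution separately. The chessboard decomposition yields $4\lfloor N/(2L)\rfloor\lfloor T/(2\ell)\rfloor$ full $L\times\ell$ tiles (covering area at most $NT$), together with partial tiles filling a remainder strip of width $<2L$ horizontally and $<2\ell$ vertically. Because both the weight $\prod_{ij} w_{ij}^\kappa$ and the six-vertex validity constraint are vertex-local, we first drop the nonlocal indicator $\mathbf 1_{\mathcal M(N,0)}\le 1$ and then upper-bound the remaining sum by a product over tiles of the per-tile maximum of (number of admissible configurations)$\,\times\,$(weight of that configuration).

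For each full tile, the event $\widehat{\mathcal A}$, after undoing the appropriate reflection, forces a row of $L$ consecutive vertical edges inside the tile to carry the same spin. Placing that row at the centre of the $L\times(2\ell+1)$ region appearing in \Cref{cor:DensityPlus} (legitimate under the standing assumption $\ell+1\le\epsilon L$) bounds the number of admissible configurations on the region by
\[
\exp\!\Bigl(L(2\ell+1)\bigl[-\epsilon\ln\epsilon-(1-\epsilon)\ln(1-\epsilon)+\tfrac{\ln 2}{L}\bigr]\Bigr),
\]
while \Cref{cor:density} bounds the density of minority-spin vertical edges in the region by $\epsilon$, hence the total number of sinks plus sources in the region by $2\epsilon L(2\ell+1)$, yielding the weight bound $\prod_{V_{ij}} w_{ij}^\kappa\le\exp(2\epsilon|\kappa|L(2\ell+1))$. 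Multiplying this per-tile bound across all full tiles, using $4\lfloor N/(2L)\rfloor\lfloor T/(2\ell)\rfloor L(2\ell+1) \le (2+1/\ell) NT$, and dividing by $NT$ produces the main $-2\epsilon\ln\epsilon - 2(1-\epsilon)\ln(1-\epsilon)+4\epsilon|\kappa|+2\ln(2)/L$ contribution.

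The partial tiles obey the same per-tile bounds (they are no larger than full tiles) but occupy a combined area of at most $(2L-1)T+N(2\ell-1)$; bounding their contribution per unit area by the uniform constant $|\kappa|+2\ln 2$ (using $-\epsilon\ln\epsilon-(1-\epsilon)\ln(1-\epsilon)\le\ln 2$ and $2\epsilon|\kappa|\le|\kappa|$) and dividing by $NT$ gives the residual $\bigl((2L-1)/N+(2\ell-1)/T\bigr)(|\kappa|+2\ln 2)$ term. The main subtlety is matching the height $2\ell+1$ of the entropy bound in \Cref{cor:DensityPlus} to the chessboard tile height $\ell$, which is what produces the factor of $2$ in front of the binary-entropy and weight terms; apart from this bookkeeping I expect no conceptual obstacle.
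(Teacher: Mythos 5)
Your proposal follows the same scaffolding as the paper's (implicit) argument: drop $\mathbf{1}_{\mathcal{M}(N,0)}$, apply \Cref{cor:DensityPlus} and \Cref{cor:density} tile-by-tile, multiply, and add a remainder for the partial tiles. However, you and the paper disagree about where the factor $2$ in front of the entropy and $\epsilon|\kappa|$ terms comes from, and this matters.

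The paper's prose first records an intermediate, full-tiles-only estimate
\[
\frac{1}{NT}\ln(\cdots)\ \leq\ -\epsilon\ln\epsilon-(1-\epsilon)\ln(1-\epsilon)+\tfrac{\ln 2}{L}+2\epsilon|\kappa|
\]
\emph{with no factor of $2$}, and then asserts that since ``the number of partial tiles is no greater than the number of full tiles'' and each partial tile obeys the same per-tile bound, the whole thing can be doubled, producing the $2$'s in the corollary. Your proposal instead attributes the factor $2$ to the mismatch between the height $2\ell+1$ of the region in \Cref{cor:DensityPlus} and the tile height $\ell$. These are genuinely different accountings. Carried out literally, your bookkeeping gives $4\lfloor N/(2L)\rfloor\lfloor T/(2\ell)\rfloor\,L(2\ell+1)\leq (2+1/\ell)NT$, hence a leading coefficient of $2+1/\ell$, not $2$. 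The extra $\tfrac{1}{\ell}\bigl(-\epsilon\ln\epsilon-(1-\epsilon)\ln(1-\epsilon)+2\epsilon|\kappa|\bigr)$ is of size $O(1/L)$ (since $\ell\sim\epsilon L$) but has a $|\kappa|$-dependent constant that is not manifestly dominated by the $\tfrac{2\ln 2}{L}$ or remainder terms in the stated corollary, so your derivation would yield a slightly different right-hand side rather than \cref{ineq:UB2} as written. There is also an unaddressed point in your step 5: you bound the count by a product over tiles, but the per-tile bound from \Cref{cor:DensityPlus} is for a region \emph{taller} than the tile, so the regions overlap; the product over overlapping regions is still a valid upper bound on the total count (restriction is injective once the regions cover the torus), but that is exactly what forces the $(2\ell+1)/\ell$ loss — a disjoint-region decomposition avoids it and gives coefficient $1$ rather than $\approx 2$ for the full tiles, which is consistent with the paper's intermediate claim. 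In short: same strategy, but your factor-of-$2$ explanation diverges from the paper's and, as stated, does not reproduce the exact constants of \cref{ineq:UB2}.
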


\begin{proof}
This follows from the argument above and the realization that imposing correct boundary
conditions at the edge of each  tile (instead of considering each  tile as having free boundary
conditions as above) only further reduces the total number of valid configurations.
\end{proof}

\subsection{Completion of the proof of upper bound for  \cref{thm:1d}}

We will use \cref{ineq:PFlb}. Let us denote $\eta = 1/(8R)$.
Then we may rewrite it as 
\begin{equation}
\label{ineq:PFlb2}
\frac{\ln Z_{N,T}(\kappa,0)}{NT}\, \geq\, \frac{\lfloor 4\eta N \rfloor}{N}\, (-2|\kappa| - \ln(\eta) - 3\ln(2))\, .
\end{equation}
Note that for $N$ sufficiently large, we have $\lfloor 4\eta N \rfloor/N \geq 2\eta$.
Also, for $\eta$ sufficiently small, we have $-\ln(\eta) - 2|\kappa| -3\ln(2)$ is positive.
So, combining \cref{ineq:PFlb2} with \cref{eq:NTkappa0} and \cref{ineq:UB2}, we have
\begin{equation}
\label{ineq:UppBd}
\begin{split}
\frac{1}{NT}\, \ln \langle \mathbf{1}_{\widehat{\mathcal{A}}(N,T,L,\ell)}\rangle_{N,T,\kappa,0}\,
&\leq\, - 2\epsilon \ln(\epsilon) + 2\eta \ln(\eta) \\
&\qquad + 4(\epsilon+\eta) |\kappa|
-2(1-\epsilon) \ln(1-\epsilon) + 3 \eta \ln(2)\\
&\qquad
+\frac{2\ln(2)}{L}\, .
\end{split}
\end{equation}
We have tried to arrange the terms on the right hand side as: most important first, on the first line,
second most important on the second line, and least important on the last line.
The term on the last line goes to $0$ as $L \to \infty$.

Let us fix $\epsilon>0$ and choose $R$ such that $\eta = 2\epsilon$.
Then the first line is $-2\epsilon |\ln(\epsilon)| + 4\epsilon \ln(2)$.
We see that by choosing  $\epsilon$ small enough, we can arrange that the left hand
side of \cref{ineq:UppBd} is bounded by a strictly negative quantity $-\alpha$ for some $\alpha=\alpha(\epsilon)>0.$

We have
$$
\ln \langle \mathbf{1}_{\widehat{\mathcal{A}}(N,T,L,\ell)}\rangle_{N,T,\kappa,0}\,
\leq\, -\alpha NT,\, 
$$
so that
$$
\frac{1}{K}\, \ln \langle \mathbf{1}_{\widehat{\mathcal{A}}(N,T,L,\ell)}\rangle_{N,T,\kappa,0}\,
\leq\, -\frac{\alpha N T}{K}\, .
$$
Since $K \leq 4 NT/(\ell L)$, we have 
$$
\frac{1}{K}\, \ln \langle \mathbf{1}_{\widehat{\mathcal{A}}(N,T,L,\ell)}\rangle_{N,T,\kappa,0}\,
\leq\, -\frac{\alpha \ell L}{4}\, .
$$
Finally, we chose $\ell$ just so that $\ell+1 \leq \epsilon L$, and $\epsilon>0$ is small but fixed. We now also assume $\ell \geq \frac{1}{2} \epsilon L$ and with this choice obtain
$$
\frac{1}{K}\, \ln \langle \mathbf{1}_{\widehat{\mathcal{A}}(N,T,L,\ell)}\rangle_{N,T,\kappa,0}\,
\leq\, -\frac{\alpha \epsilon}{8}\, L^2\, .
$$
Putting this together with \Cref{lem:ChessLoss2} and \Cref{cor:Q1d}
(and the definition of $\mathcal{A}(L)$ as the event that $\varsigma(E^v_{i,1})=+1$
for all $i \in \mathbb{B}_L$), we have the desired result.

\section{The Lower bound  in \Cref{thm:1d}}
\label{sec:LBthm2}
The lower bound argument uses the six-vertex configuration again, and 
 \Cref{cor:Q1d}.
The argument from here on is more combinatorial.
The idea is to try
show that given a typical configuration  we may perturb it in a square in such a way
as to obtain a large interval of all aligned spins.
Because configurations are discrete in both space and time,
we merely need to construct a configuration.
In essence entropy is cut-off below in a discrete system.
Note that a ``typical configuration,'' here will mean that we sample a configuration
according to the Gibbs state, and we show that with a probability $p$ close to 1 this works.
In fact, this part of the argument will also use reflection positivity a bit, too.
We will comment on this a bit, later.

\subsection{Osculating paths}
We switch our perspective on six-vertex configurations, now using an equivalent formulation in terms of osculating paths 
This is well know and easily constructed:
Start with a reference configuration such as: all vertical edges $\uparrow$ and all horizontal edges $\rightarrow$.
For any given edge, color the interior of the edge black if it differs from the reference configuration.
We want connect up the edges in a natural way to make paths.  The main rule is that the paths are not allowed to cross at a vertex. We round corners whenever two edges meet.  Where four edges meet, we declare the west and north edges part of the same path and the east and south edges part of a separate path (so all paths go "up and right" only).

\begin{figure}
\begin{center}
\begin{tikzpicture}[xscale=0.7,yscale=0.7]
\draw (0,1.4) node[] {$1$};
\draw (-0.7,0) -- (0.7,0) (0,-0.7) -- (0,0.7);
\fill (-0.35,0) \ritri (0.35,0) \ritri; \fill (0,-0.35) \uptri (0,0.35) \uptri;
\begin{scope}[yshift=-1.85cm]
\draw (-0.7,0) -- (0.7,0) (0,-0.7) -- (0,0.7);
\draw[rounded corners, line width=2pt,gray!75!black,nearly opaque] (-0.7,0) -- (0,0) -- (0,0.7) (0,-0.7) -- (0,0) -- (0.7,0);
\end{scope}
\begin{scope}[xshift=3cm]
\draw (0,1.4) node[] {$2$};
\draw (-0.7,0) -- (0.7,0) (0,-0.7) -- (0,0.7);
\fill (-0.35,0) \letri (0.35,0) \letri; \fill (0,-0.35) \uptri (0,0.35) \uptri;
\begin{scope}[yshift=-1.85cm]
\draw (-0.7,0) -- (0.7,0) (0,-0.7) -- (0,0.7);
\draw[rounded corners, line width=2pt,gray!75!black,nearly opaque] (0,-0.7) -- (0,0.7);
\draw[rounded corners, line width=2pt] (-0.7,0) -- (0.7,0);
\end{scope}
\end{scope}
\begin{scope}[xshift=6cm]
\draw (0,1.4) node[] {$3$};
\draw (-0.7,0) -- (0.7,0) (0,-0.7) -- (0,0.7);
\fill (-0.35,0) \ritri (0.35,0) \ritri; \fill (0,-0.35) \dntri (0,0.35) \dntri;
\begin{scope}[yshift=-1.85cm]
\draw (-0.7,0) -- (0.7,0) (0,-0.7) -- (0,0.7);
\draw[rounded corners, line width=2pt,gray!75!black,nearly opaque] (-0.7,0) -- (0.7,0);
\draw[rounded corners, line width=2pt] (0,-0.7) -- (0,0.7);
\end{scope}
\end{scope}
\begin{scope}[xshift=9cm]
\draw (0,1.4) node[] {$4$};
\draw (-0.7,0) -- (0.7,0) (0,-0.7) -- (0,0.7);
\fill (-0.35,0) \letri (0.35,0) \letri; \fill (0,-0.35) \dntri (0,0.35) \dntri;
\begin{scope}[yshift=-1.85cm]
\draw (-0.7,0) -- (0.7,0) (0,-0.7) -- (0,0.7);
\draw[rounded corners, line width=2pt] (0,-0.7) -- (0,0) -- (0.7,0) (-0.7,0) -- (0,0) -- (0,0.7);
\end{scope}
\end{scope}
\begin{scope}[xshift=12cm]
\draw (0,1.4) node[] {$5$};
\draw (-0.7,0) -- (0.7,0) (0,-0.7) -- (0,0.7);
\fill (-0.35,0) \letri (0.35,0) \ritri; \fill (0,-0.35) \uptri (0,0.35) \dntri;
\begin{scope}[yshift=-1.85cm]
\draw (-0.7,0) -- (0.7,0) (0,-0.7) -- (0,0.7);
\draw[rounded corners, line width=2pt,gray!75!black,nearly opaque] (0,-0.7) -- (0,0) -- (0.7,0);
\draw[rounded corners, line width=2pt] (-0.7,0) -- (0,0) -- (0,0.7);
\end{scope}
\end{scope}
\begin{scope}[xshift=15cm]
\draw (0,1.4) node[] {$6$};
\draw (-0.7,0) -- (0.7,0) (0,-0.7) -- (0,0.7);
\fill (-0.35,0) \ritri (0.35,0) \letri; \fill (0,-0.35) \dntri (0,0.35) \uptri;
\begin{scope}[yshift=-1.85cm]
\draw (-0.7,0) -- (0.7,0) (0,-0.7) -- (0,0.7);
\draw[rounded corners, line width=2pt,gray!75!black,nearly opaque] (-0.7,0) -- (0,0) -- (0,0.7);
\draw[rounded corners, line width=2pt] (0,-0.7) -- (0,0) -- (0.7,0);
\end{scope}
\end{scope}
\end{tikzpicture}
\caption{
\label{fig:trans}
This is a translation of the 6 valid vertex configurations, and 6 vertex configuriations for the osculating paths.
We enumerate them for later reference.
}
\end{center}
\end{figure}
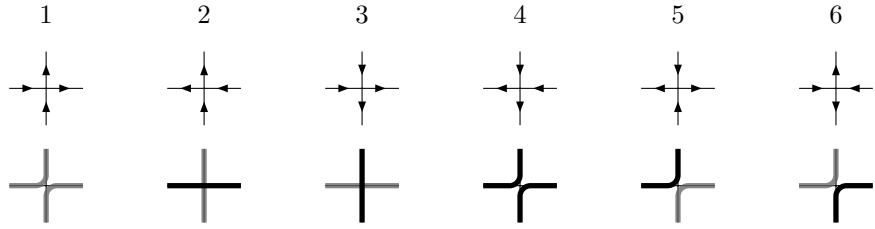
In \Cref{fig:trans}, we 
show the translation from the 6 valid types of vertices in the six-vertex model and the 6 types of valid vertices for an osculating path configuration.

Let us abbreviate ``opc'' for ``osculating path configuration.''
If we use as a reference configuration $\downarrow$ on vertical edges and $\leftarrow$ on horizontal edges we obtain a different opc, which we refer to as a gray opc.

For later reference, we note that the black and the gray opc's from a fixed six vertex configuration occupy complementary edges.
Two paths of different colors may cross at a vertex in this joint picture
The black opc has all of the information, but there are times when the picture of the gray opc is helpful.

\begin{figure}
\begin{center}
\begin{tikzpicture}[xscale=0.65,yscale=0.65,very thin]
\draw (0,1) -- (6,1) (0,2) -- (6,2) (0,3) -- (6,3) (0,4) -- (6,4) (0,5) -- (6,5);
\draw  (1,0) -- (1,6) (2,0) -- (2,6) (3,0) -- (3,6) (4,0) -- (4,6) (5,0) -- (5,6);
\fill (1,0.5) \dntri (2,0.5) \dntri (3,0.5) \uptri (4,0.5) \uptri (5,0.5) \uptri;
\fill (0.5,1) \ritri (1.5,1) \letri (2.5,1) \letri (3.5,1) \letri (4.5,1) \ritri (5.5,1) \ritri;
\fill (1,1.5) \uptri (2,1.5) \dntri (3,1.5) \uptri (4,1.5) \dntri (5,1.5) \uptri;
\fill (1,1) circle (1mm);
\filldraw[fill=white] (4,1) circle (1mm);
\fill (0.5,2) \ritri (1.5,2) \ritri (2.5,2) \letri (3.5,2) \ritri (4.5,2) \letri (5.5,2) \letri;
\fill (1,2.5) \uptri (2,2.5) \uptri (3,2.5) \dntri (4,2.5) \uptri (5,2.5) \uptri;
\filldraw[fill=white] (3,2) circle (1mm);
\fill (2,2) circle (1mm) (4,2) circle (1mm);
\fill (0.5,3) \letri (1.5,3) \letri (2.5,3) \ritri (3.5,3) \letri (4.5,3) \letri (5.5,3) \letri;
\fill (1,3.5) \uptri (2,3.5) \dntri (3,3.5) \uptri (4,3.5) \uptri (5,3.5) \uptri;
\filldraw[fill=white] (2,3) circle (1mm);
\fill (3,3) circle (1mm);
\fill (0.5,4) \letri (1.5,4) \ritri (2.5,4) \letri (3.5,4) \letri (4.5,4) \ritri (5.5,4) \ritri;
\fill (1,4.5) \dntri (2,4.5) \uptri (3,4.5) \uptri (4,4.5) \dntri (5,4.5) \uptri;
\fill (2,4) circle (1mm);
\filldraw[fill=white] (1,4) circle (1mm) (4,4) circle (1mm);
\fill (0.5,5) \ritri (1.5,5) \letri (2.5,5) \ritri (3.5,5) \ritri (4.5,5) \letri (5.5,5) \letri;
\fill (1,5.5) \uptri (2,5.5) \dntri (3,5.5) \uptri (4,5.5) \uptri (5,5.5) \uptri;
\filldraw[fill=white] (2,5) circle (1mm);
\fill (1,5) circle (1mm) (4,5) circle (1mm);
\end{tikzpicture}
\hspace{1cm}
\begin{tikzpicture}[xscale=0.65,yscale=0.65,very thin]
\draw (0,1) -- (6,1) (0,2) -- (6,2) (0,3) -- (6,3) (0,4) -- (6,4) (0,5) -- (6,5);
\draw  (1,0) -- (1,6) (2,0) -- (2,6) (3,0) -- (3,6) (4,0) -- (4,6) (5,0) -- (5,6);
\fill (1,0.5) \dntri (2,0.5) \dntri (3,0.5) \uptri (4,0.5) \uptri (5,0.5) \uptri;
\fill (0.5,1) \ritri (1.5,1) \letri (2.5,1) \letri (3.5,1) \letri (4.5,1) \ritri (5.5,1) \ritri;
\fill (1,1.5) \uptri (2,1.5) \dntri (3,1.5) \uptri (4,1.5) \dntri (5,1.5) \uptri;
\fill (1,1) circle (1mm);
\filldraw[fill=white] (4,1) circle (1mm);
\fill (0.5,2) \ritri (1.5,2) \ritri (2.5,2) \letri (3.5,2) \ritri (4.5,2) \letri (5.5,2) \letri;
\fill (1,2.5) \uptri (2,2.5) \uptri (3,2.5) \dntri (4,2.5) \uptri (5,2.5) \uptri;
\filldraw[fill=white] (3,2) circle (1mm);
\fill (2,2) circle (1mm) (4,2) circle (1mm);
\fill (0.5,3) \letri (1.5,3) \letri (2.5,3) \ritri (3.5,3) \letri (4.5,3) \letri (5.5,3) \letri;
\fill (1,3.5) \uptri (2,3.5) \dntri (3,3.5) \uptri (4,3.5) \uptri (5,3.5) \uptri;
\filldraw[fill=white] (2,3) circle (1mm);
\fill (3,3) circle (1mm);
\fill (0.5,4) \letri (1.5,4) \ritri (2.5,4) \letri (3.5,4) \letri (4.5,4) \ritri (5.5,4) \ritri;
\fill (1,4.5) \dntri (2,4.5) \uptri (3,4.5) \uptri (4,4.5) \dntri (5,4.5) \uptri;
\fill (2,4) circle (1mm);
\filldraw[fill=white] (1,4) circle (1mm) (4,4) circle (1mm);
\fill (0.5,5) \ritri (1.5,5) \letri (2.5,5) \ritri (3.5,5) \ritri (4.5,5) \letri (5.5,5) \letri;
\fill (1,5.5) \uptri (2,5.5) \dntri (3,5.5) \uptri (4,5.5) \uptri (5,5.5) \uptri;
\filldraw[fill=white] (2,5) circle (1mm);
\fill (1,5) circle (1mm) (4,5) circle (1mm);
\draw[black, semitransparent, rounded corners, line width=2pt] (1,0) -- (1,1) -- (2,1) -- (2,2) -- (3,2) -- (3,3) -- (6,3)
(2,0) -- (2,1) -- (4,1) -- (4,2) -- (6,2)
(0,3) -- (2,3) -- (2,4) -- (4,4) -- (4,5) -- (6,5)
(0,4) -- (1,4) -- (1,5) -- (2,5) -- (2,6);
\end{tikzpicture}
\hspace{1cm}
\begin{tikzpicture}[xscale=0.65,yscale=0.65,very thin]
\draw (0,1) -- (6,1) (0,2) -- (6,2) (0,3) -- (6,3) (0,4) -- (6,4) (0,5) -- (6,5);
\draw (1,0) -- (1,6) (2,0) -- (2,6) (3,0) -- (3,6) (4,0) -- (4,6) (5,0) -- (5,6);
\draw[rounded corners, line width=2pt,gray!75!black,nearly opaque] (5,0) -- (5,1) -- (6,1)
(4,0) -- (4,1) -- (5,1) -- (5,4) -- (6,4)
(3,0) -- (3,2) -- (4,2) -- (4,4) -- (5,4) -- (5,6) 
(0,1) -- (1,1) -- (1,2) -- (2,2) -- (2,3) -- (3,3) -- (3,5) -- (4,5) -- (4,6)
(0,2) -- (1,2) -- (1,4) -- (2,4) -- (2,5) -- (3,5) -- (3,6)
(0,5) -- (1,5) -- (1,6) ;
\draw[rounded corners, line width=2pt]  (1,0) -- (1,1) -- (2,1) -- (2,2) -- (3,2) -- (3,3) -- (6,3)
(2,0) -- (2,1) -- (4,1) -- (4,2) -- (6,2)
(0,3) -- (2,3) -- (2,4) -- (4,4) -- (4,5) -- (6,5)
(0,4) -- (1,4) -- (1,5) -- (2,5) -- (2,6);
\end{tikzpicture}
\caption{
\label{fig:Osc}
This picture is an example of a valid 6-vertex configuration and its mapping to an osculating
path configuration (opc). (The middle picture is an intermediate step to guide the eye.)
}
\end{center}
\end{figure}
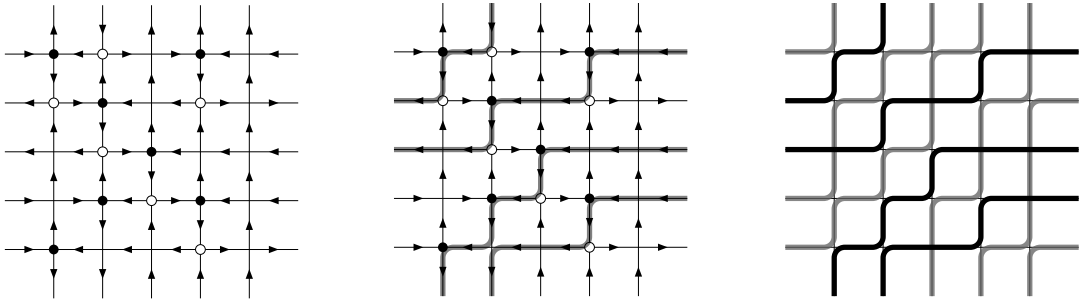
We have shown how to convert a six-vertex configuration into an osculating paths configuration in a particular example in \Cref{fig:Osc}.

\subsection{Local ``isotopy'' moves}
Consider a fixed $L \times R$ block inside $\mathbb{T}_{N,T}$.
(Later we will choose $R$ to be a large, but fixed, multiple of $L$.)
We call this block $\Gamma_{L,R} \subset \mathbb{T}_{N,T}$.Suppose that we have a valid opc $x$ on $\Gamma_{L,R}$ with a prescribed pattern of black and gray edges on $\partial \Gamma_{L,R}$.
We will construct a different $y$ on $\Gamma_{L,R}$, with the same pattern of black and gray edges on $\partial \Gamma_{L,R}$
by performing a sequence of ``local isotopy moves.''

These are based on ``flippable plaquettes.''
\begin{figure}
\begin{center}
\begin{tikzpicture}
\draw (0,0) rectangle (1,1);
\draw[rounded corners, line width=2pt, gray!75!black,nearly opaque] (1,1) -- (0,1) -- (0,0);
\draw[rounded corners, line width=2pt] (1,1) -- (1,0) -- (0,0);
\draw (3,0) rectangle (4,1);
\draw[rounded corners, line width=2pt, gray!75!black,nearly opaque] (4,1) -- (4,0) -- (3,0);
\draw[rounded corners, line width=2pt] (4,1) -- (3,1) -- (3,0);
\end{tikzpicture}
\caption{\label{fig:corn}
We show two possible squares which may be moved by a local move. We call the square on the left a corner of type $C_-$ and the square on the right a corner
of type $C_+$.
}
\end{center}
\end{figure}
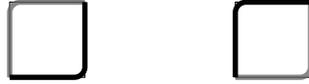
In \Cref{fig:corn}, we have shown two different squares. These are the two possible configurations for a ``flippable plaquette.''
\begin{definition}
\label{def:pMove}
We define a $+$ move to be one which switches a type $C_-$ corner square to a type $C_+$ one.
We define a $-$ move to be the opposite.
\end{definition}

Given an opc $x$ on $\Gamma_{L,R}$, let us define its height,  $h(x)$, as follows.
Recall that an opc $x$ consists black paths (and/or gray paths).
For a given black path, consider all the vertices to the lower right of the path.
The height is equal to the sum, over all black paths, of the number of these vertices.
The main point in defining $h(x)$ is that it increases after $+$ moves (by $1$ with each one).

Given $x$, we perform $+$ moves to each $C_-$-type square, one at a time, until there are no more $C_-$ squares.
It is a fact that even if there were some choices to be made of which $C_-$ square to turn into a $C_+$ square at some stage,
the final configuration $x_{\max}$ is unique.
$x_{\max}$ is sometimes called the highest opc relative to the boundary conditions on $\partial \Gamma_{L,R}$.

\begin{figure}
\begin{center}
\begin{tikzpicture}[xscale=0.65,yscale=0.65,very thin]
\draw (0,1) -- (6,1) (0,2) -- (6,2) (0,3) -- (6,3) (0,4) -- (6,4) (0,5) -- (6,5);
\draw (1,0) -- (1,6) (2,0) -- (2,6) (3,0) -- (3,6) (4,0) -- (4,6) (5,0) -- (5,6);
\draw[rounded corners, line width=2pt,gray!75!black,nearly opaque] (5,0) -- (5,1) -- (6,1)
(4,0) -- (4,1) -- (5,1) -- (5,4) -- (6,4)
(3,0) -- (3,2) -- (4,2) -- (4,4) -- (5,4) -- (5,6) 
(0,1) -- (1,1) -- (1,2) -- (2,2) -- (2,3) -- (3,3) -- (3,5) -- (4,5) -- (4,6)
(0,2) -- (1,2) -- (1,4) -- (2,4) -- (2,5) -- (3,5) -- (3,6)
(0,5) -- (1,5) -- (1,6) ;
\draw[rounded corners, line width=2pt]  (1,0) -- (1,1) -- (2,1) -- (2,2) -- (3,2) -- (3,3) -- (6,3)
(2,0) -- (2,1) -- (4,1) -- (4,2) -- (6,2)
(0,3) -- (2,3) -- (2,4) -- (4,4) -- (4,5) -- (6,5)
(0,4) -- (1,4) -- (1,5) -- (2,5) -- (2,6);
\end{tikzpicture}
\hspace{1cm}
\begin{tikzpicture}[xscale=0.65,yscale=0.65,very thin]
\draw (0,1) -- (6,1) (0,2) -- (6,2) (0,3) -- (6,3) (0,4) -- (6,4) (0,5) -- (6,5);
\draw (1,0) -- (1,6) (2,0) -- (2,6) (3,0) -- (3,6) (4,0) -- (4,6) (5,0) -- (5,6);
\draw[rounded corners, line width=2pt,gray!75!black,nearly opaque] (5,0) -- (5,1) -- (6,1)
(4,0) -- (4,1) -- (5,1) -- (5,4) -- (6,4)
(3,0) -- (3,1) -- (4,1) -- (4,4) -- (5,4) -- (5,6) 
(0,1) -- (3,1) -- (3,4) -- (4,4) -- (4,6)
(0,2) -- (2,2) -- (2,4) -- (3,4) -- (3,6)
(0,5) -- (1,5) -- (1,6) ;
\draw[rounded corners, line width=2pt] (2,0) -- (2,2) -- (6,2)
(1,0) -- (1,3) -- (6,3)
(0,3) -- (1,3) -- (1,4) -- (2,4) -- (2,5) -- (6,5)
(0,4) -- (1,4) -- (1,5) -- (2,5) -- (2,6);
\end{tikzpicture}
\caption{
\label{fig:Highest}
On the left, this picture is the same configuration in \Cref{fig:Osc} again. Now on the right we have
shown the configuration which is the highest opc with the same boundary conditions.
}
\end{center}
\end{figure}
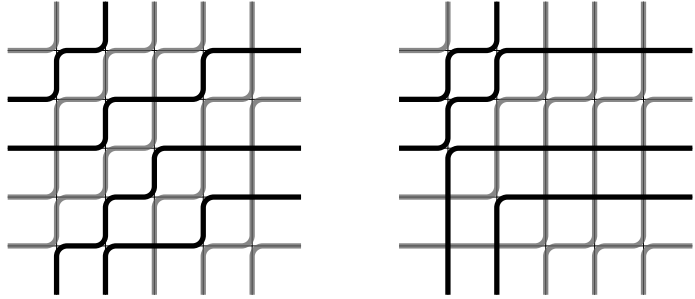

We claim that once this has been done, the resulting configuration will have a long interval of aligned spins, in a row of vertical edges,
unless the boundary edge pattern is {\em unusual}, which happens with a low probability. So the unusual event of a bad boundary edge pattern
will be controllable, {\em a priori}, in terms of the Boltzmann-Gibbs measure.

\subsection{Blockades}
We assume that $\Gamma_{L,R}$ consists of all vertices $V_{ij}$ for $1\leq i\leq L$ and $1\leq j\leq R$.
The boundary of $\Gamma_{L,R}$ are the edges $E_{0,j}^h$ and $E_{L,j}^h$ for $j \in \{1,\dots,R\}$
and $E_{i,0}^v$ and $E_{i,R}^v$ for $i \in \{1,\dots,L\}$.

\begin{lemma}
\label{lem:blockades}
Assume that $x$ is a highest opc in $\Gamma_{L,R}$. Then the following is true in reference to $x$.
Suppose that some vertex $V_{ij}$ in $\Gamma_{L,R}$ is either of type $4$ or $5$ from \Cref{fig:trans}, and suppose $i>1$ and $j<R$.
Then either all vertices $V_{1,j},\dots,V_{i-1,j}$ are of type 4 or all vertices $V_{i,j+1},\dots,V_{i,R}$ are of type 4,
or possibly both.
\end{lemma}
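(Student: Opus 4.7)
The plan is to argue by contrapositive: assume $V_{ij}$ is of type 4 or 5, with $i > 1$ and $j < R$, and that not all of $V_{1,j},\dots,V_{i-1,j}$ are type 4; I will show every $V_{i,j+1},\dots,V_{i,R}$ must be type 4. The only property of a highest opc I use is that no plaquette of $\Gamma_{L,R}$ admits a $+$ move in the sense of \Cref{def:pMove}, equivalently no plaquette has the edge-coloring pattern $C_-$ of \Cref{fig:corn}. Edge colors at each vertex are read directly off \Cref{fig:trans}, which gives: types 2, 4, 6 have east edge black (the rest gray); types 3, 4, 6 have south edge black; types 2, 4, 5 have west edge black; types 3, 4, 5 have north edge black.

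The base step of the $j$-induction proceeds as follows. Let $k_0 = \max\{k \in \{1,\dots,i-1\} : V_{k,j} \text{ is not of type } 4\}$, so that $V_{k_0+1,j},\dots,V_{i-1,j}$ are type 4 and $V_{i,j}$ is type 4 or 5. Both types 4 and 5 have west edge black, so the east edge of $V_{k_0,j}$ is black. Among non-type-4 vertices, only types 2 and 6 admit a black east edge, and both of these have north edge gray. Now consider the plaquette $P$ with corners $V_{k_0,j}, V_{k_0+1,j}, V_{k_0,j+1}, V_{k_0+1,j+1}$, which lies inside $\Gamma_{L,R}$ since $k_0 \geq 1$ and $j+1 \leq R$. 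Its bottom, right, and left edges are black, black, and gray, respectively. Were the top edge also gray, $P$ would be a $C_-$ plaquette, contradicting the highest hypothesis. Hence the east edge of $V_{k_0,j+1}$ is black, and combined with south gray this identifies $V_{k_0,j+1}$ as type 2.

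I then claim $V_{k_0+1,j+1},\dots,V_{i,j+1}$ are all of type 4, by a short induction on $m \in \{1,\dots,i-k_0\}$. The west edge of $V_{k_0+m,j+1}$ equals the east edge of its left neighbor in row $j+1$, which is type 2 when $m=1$ and type 4 when $m>1$ (by the $m$-inductive hypothesis), so west is black. The south edge equals the north edge of $V_{k_0+m,j}$, which is type 4 when $m < i-k_0$ and type 4 or 5 when $m = i-k_0$; in both cases north is black. A vertex with west and south both black must be type 4. In particular $V_{i,j+1}$ is type 4, and the pair $(j+1, V_{i,j+1})$ satisfies the same hypotheses as $(j, V_{ij})$ because $V_{k_0,j+1}$ is type 2, so not all of $V_{1,j+1},\dots,V_{i-1,j+1}$ are type 4. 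Iterating the argument for $j' = j+1, j+2, \dots, R-1$ produces $V_{i,j'+1}$ of type 4 at each stage, giving $V_{i,j+1},\dots,V_{i,R}$ all of type 4. The whole argument is combinatorial bookkeeping of edge colors across the six vertex types; the only conceptually nontrivial step is the invocation of the no-$C_-$ property at the single plaquette $P$ in each iteration, and I do not anticipate a substantive obstacle.
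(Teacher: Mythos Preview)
Your proof is correct. Both your argument and the paper's rest on the same key fact---that a highest opc contains no $C_-$ plaquette---but the inductive organization differs. The paper first shows that whenever $V_{ij}$ is of type $4$ or $5$, at least one of $V_{i-1,j}$ or $V_{i,j+1}$ must be of type $4$; it then iterates this to build an up-left staircase of type-$4$ vertices terminating on the left or top face of $\Gamma_{L,R}$, and finally invokes a topological blockade argument (gray paths, being up-right, cannot cross the all-black staircase, so the entire rectangle between $V_{ij}$ and the endpoint is filled with black edges). You instead induct row by row: having located the rightmost non-type-$4$ vertex $V_{k_0,j}$ to the left of $V_{ij}$, you use a single plaquette argument to pin down $V_{k_0,j+1}$ as type $2$, and then a direct edge-color check forces the whole segment $V_{k_0+1,j+1},\dots,V_{i,j+1}$ to be type $4$; iterating yields the column of type-$4$ vertices above $V_{ij}$. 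Your route is more elementary in that it sidesteps the gray-path crossing argument entirely and makes every step an explicit table lookup, at the modest cost of a second inner induction; the paper's route is more pictorial and explains the word ``blockade'' in the lemma's name. A small bonus of your argument is that it shows the index $k_0$ is preserved from row to row, so in fact the entire rectangle $\{k_0+1,\dots,i\}\times\{j,\dots,R\}$ is type $4$, which is exactly what the paper's blockade picture is meant to convey.
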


\begin{proof}
This argument proceeds by induction. Let us consider the event that a vertex $V_{ij}$ is of type in $\{4,5\}$ by filling in $E_{i-1,j}^h$ and $E_{ij}^v$ colored
black, but not filling in the other two edges $E_{i,j}^h$ and $E_{i,j-1}^v$ with either color, black or gray, since we do not know which color it may be filled in.
See \Cref{fig:trans}
As the first step of the induction, we note that if $V_{ij}$ has type in $\{4,5\}$, and assuming that $x$ is a highest opc, either $V_{i-1,j}$ or $V_{i,j+1}$ has type 4, or possibly both (here the reader is encouraged to check the possible cases on a piece of paper).

Once $V_{i-1,j}$ or $V_{i,j+1}$ has type 4, the above observation also applies to this vertex. We conclude that from any vertex $V_{ij}$ of type $4$ or $5$ there is a sequence of types $4$ vertices (except possibly for the first) $V_{ij}=V_{i(1),j(1)},\dots,V_{i(K),j(K)}$ such that for each $k\in\{1,\dots,K-1\}$,
$i(k+1)=i(k)$ and $j(k+1)=j(k)+1$, or else $i(k+1)=i(k-1)$ and $j(k+1)=j(k)$.
The number $K$ is the first time that this path hits either the left face of $\partial \Gamma_{L, R}$, so that $i(K)=1$, or the top face so that $j(K)=R$.

Since each of these vertices is of type 4, all 4 edges incident to each vertex in the sequence is colored black.
Thus this connected chain of vertices creates a blockade against any gray path crossing it.
Because gray paths are also up-right paths, this means that the gray paths must stay out of the rectangle joining $V_{i(K),j(K)}$ to $V_{i,j}$.
As $V_{i(K),j(K)}$ is either on the left face or the top face, this lemma is proved.
\end{proof}

\subsection{Conclusion of the proof}

We have drawn a highest opc for a given boundary configuration on a rectangle in \cref{fig:Highest2}. It is wider than tall, for the ease of drawing the figure.
It may help the reader to keep this picture in mind. Later, we will refer to the features of it which we describe, now.

Suppose that $\rho \in (1,\infty)$ is fixed and that $R$ is even and 
\begin{equation}
\label{eq:rhoR}
\frac{1}{2}\, \rho\, \leq\, \frac{R}{L}\, \leq\, 2\rho\, .
\end{equation}
We consider the rectangle $\Gamma_{2L,R}$.

Let $\mathcal{A}(\Gamma_{2L,R})$ be the event consisting of the set of all $\varsigma \in \mathcal{S}_{N,T}$
such that, taking the boundary conditions of $\varsigma$ on $\partial\Gamma_{2L,R}$,
the highest opc on $\Gamma_{2L,R}$ with these boundary conditions has a 
sequence of $L$ consecutive aligned spins at the edges in some horizontal row of vertical edges
somewhere in $\Gamma_{2L,R}$.
We will prove the following.
\begin{lemma}
\label{lem:Frcng}
Let $\rho>0$ be fixed, $N,T$ be even integers with $N\geq 2L$, $T\geq R$ for $R$ an even integer
satifying \cref{eq:rhoR}.  
Fix a rectangle $\Gamma_{2L,R}$ inside $\mathbb{T}_{N,T}$.

For $\rho$ sufficiently large,
there are constants $c_0,C_0 \in (0,\infty)$, depending on $\rho$ and $\kappa$,
such that
\begin{equation}
\label{eq:AevntBd}
1 - \left\langle \mathbf{1}_{\mathcal{A}(\Gamma_{2L,R})}(\varsigma) \right\rangle_{N,T,\kappa,0}\, \leq\, C_0 e^{-c_0 L^2}\, .
\end{equation}
\end{lemma}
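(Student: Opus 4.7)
The plan is to use the chessboard estimate of \Cref{thm:FILS2} to convert the boundary-defined event $\mathcal{A}(\Gamma_{2L,R})^{c}$ into a bulk event on $\mathbb{T}_{N,T}$ whose probability can then be estimated in the style of the upper bound proof of \Cref{sec:UB6}. The first preparatory step is to construct a bulk surrogate $\widetilde{\mathcal{A}}\supseteq \mathcal{A}(\Gamma_{2L,R})^{c}$ that is measurable with respect to the spins of $\varsigma$ inside (not just on the boundary of) $\Gamma_{2L,R}$. Key to this construction is the observation that a $+$-move on a plaquette swaps an adjacent pair of spins $(\uparrow,\downarrow)$ to $(\downarrow,\uparrow)$ in a single row of vertical edges, hence preserves the number of $\downarrow$ vertical edges in each row during $\varsigma \to x_{\max}$. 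Using this together with the blockade staircases of \Cref{lem:blockades}, I would argue that a run of $(1+\delta)L$ consecutive aligned spins in a row of vertical edges of $\varsigma$ must leave at least $L$ aligned in the corresponding row of $x_{\max}$, because dismantling too much of the run would force $+$-moves on plaquettes that are not of type $C_{-}$. Taking contrapositives, $\mathcal{A}(\Gamma_{2L,R})^{c}$ is contained in the bulk event $\widetilde{\mathcal{A}}$ that no row of vertical edges in $\Gamma_{2L,R}$ contains a run of $(1+\delta)L$ consecutive aligned spins.

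Next, I would apply \Cref{thm:FILS2} with $n=2L$ and $t=R$ to $\widetilde{\mathcal{A}}$, disseminating it across all $K \asymp NT/(LR)$ tiles of the chessboard tiling of $\mathbb{T}_{N,T}$, to obtain
\[
\bigl\langle \mathbf{1}_{\mathcal{A}(\Gamma_{2L,R})^{c}}\bigr\rangle_{N,T,\kappa,0}\,\leq\, \bigl\langle \mathbf{1}_{\widetilde{\mathcal{A}}}\bigr\rangle_{N,T,\kappa,0}\,\leq\, \Bigl(\bigl\langle \textstyle\prod_{I,J}\mathbf{1}_{\theta^{(1)}_{2L,I}\theta^{(2)}_{R,J}(\widetilde{\mathcal{A}})}\bigr\rangle_{N,T,\kappa,0}\Bigr)^{1/K}.
\]
The disseminated event forces every row of vertical edges throughout the torus to avoid runs of $(1+\delta)L$ aligned spins, and hence forces the per-row density of $\downarrow$-edges into an interval bounded away from both $0$ and $1$.

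The main step, which I expect to be the main obstacle, is to bound the disseminated Gibbs expectation by $\exp(-c_{1}NT)$ for some $c_{1}>0$ depending on $\rho$ and $\kappa$. This would be done by an entropy-vs-energy comparison paralleling the numerator bound \cref{ineq:UB2}: the per-row density constraint reduces the entropy per vertex of valid six-vertex configurations by a fixed amount (which can be quantified tile by tile via \Cref{cor:DensityPlus}), while the Boltzmann correction from $w^{\kappa}_{ij}$ is uniformly bounded. Comparing with the partition function lower bound \cref{ineq:PFlb2} and taking $\rho$ sufficiently large makes the entropy gap strictly positive, giving the required bulk bound. Taking the $K$-th root and using $NT/K \asymp LR \asymp \rho L^{2}$ yields the claimed $\exp(-c_{0}L^{2})$, with $c_0,C_0$ depending on $\rho,\kappa$.
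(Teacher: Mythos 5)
Your approach has a fundamental logical problem that makes it unworkable as written. You claim $\mathcal{A}(\Gamma_{2L,R})^{c}\subseteq\widetilde{\mathcal{A}}$ where $\widetilde{\mathcal{A}}$ is the event that no row of vertical edges in $\Gamma_{2L,R}$ contains a run of $(1+\delta)L$ aligned spins, and then propose to show $\widetilde{\mathcal{A}}$ is rare. But $\widetilde{\mathcal{A}}$ is the \emph{typical} event: the whole point of the upper bound proved in \cref{sec:UB6} is that long aligned runs in $\varsigma$ are exponentially unlikely, so $\langle\mathbf{1}_{\widetilde{\mathcal{A}}}\rangle_{N,T,\kappa,0}$ is close to $1$, not to $0$. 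A containment into a high-probability event gives no information on $\Pr(\mathcal{A}^{c})$. Relatedly, the step ``no row anywhere on the torus has a run of length $(1+\delta)L$ $\Longrightarrow$ per-row density of $\downarrow$ is bounded away from $0$ and $1$'' is false: forbidding runs of length $(1+\delta)L$ on a circle of length $N$ only forces the minority density to be at least of order $1/L$, which tends to $0$, so you cannot extract a per-vertex entropy deficit from it.

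The preparatory combinatorial claim is also not justified. You correctly observe that a $+$-move on a plaquette exchanges a black and a gray vertical edge within the same row, so the per-row count of black vertical edges is invariant under $\varsigma\to x_{\max}$. But count invariance says nothing about run lengths: the black edges in a row of $x_{\max}$ can be completely redistributed (e.g.\ interleaved) relative to their positions in $\varsigma$. The heuristic ``dismantling too much of the run would force $+$-moves on plaquettes not of type $C_{-}$'' is not an argument, and I do not believe the implication holds. Even if it did, it runs in the wrong direction for the lemma, for the reason above.

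What is actually needed, and what the paper does, is a structural analysis of the \emph{highest opc} itself. The key input is \Cref{lem:blockades}: in a highest opc, any vertex of type $4$ or $5$ spawns a monotone chain of type-$4$ vertices reaching the top or left face of the rectangle. This yields a trichotomy. Either the highest opc already contains a long aligned run (the good events $G_1,G_2,G_3$ built in Steps 1--3 of the paper's proof), or a black path $\gamma$ runs straight vertically for $R/2$ steps (event $B_1$), or there is a full black blockade spanning the upper half (event $B_2$). On $B_1$ at most $L$ of the $R/2$ edges $E^{h}_{2L,j}$, $j\le R/2$, are black; on $B_2$ at most $2L$ of the $R/2$ edges $E^{h}_{0,j}$, $j>R/2$, are gray. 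Both constrain the edge density on a boundary segment of $\partial\Gamma_{2L,R}$ of length $\asymp \rho L$, and this is a genuinely rare event: it is controlled by \Cref{lem:dens6}, which is proved by the same reflection positivity / chessboard machinery as \Cref{sec:UB6}. Taking $\rho$ large makes the forced density $4L/R$ smaller than the threshold $\delta$ of \Cref{lem:dens6} and yields the $e^{-c_0 L^2}$ bound. The chessboard estimate and the density-entropy argument are thus used --- but applied to the boundary density forced by the bad geometric scenarios, not to the nonexistence of long runs in the bulk of $\varsigma$.
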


\begin{proof}{Proof of \Cref{lem:Frcng}}

We will choose $\rho$ precisely later. But but the reader should think it is large and fixed, as $L \to \infty$.
Without loss of generality, we will assume that $\Gamma_{2L,R}$ consists of the vertices
$V_{ij}$ for $i=1,\dots,2L$ and $j=1,\dots,R$.
Then  the edges in $\partial \Gamma_{2L,R}$ are 
$E^h_{0,j}$ and $E^h_{2L,j}$ for $j=1,\dots,R$
and $E^v_{i,0}$ and $E^v_{i,R}$ for $i=1,\dots,2L$.

We are going to identify several "good" events $G_i$ which are disjoint subsets of
$\mathcal{A}(\Gamma_{2L,R})$.
We will also identify several events $B_i$ which cover the complement of the good events, and whose probability will be small.

{\underline{\em Step 1:}}
Consider the bottom row of vertical edges, $\{E^v_{i,0}\, :\, i=1,\dots,2L\}$, which comprises the bottom face in $\partial \Gamma_{2L,R}$.
Let us consider the second half of these edges: $E^v_{i,0}$ for $i \in \{L+1,\dots,2L\}$.
$$
G_1\, =\, \{\varsigma \in \mathcal{S}_{N,T}\, :\, \text{all edges $E^v_{i,0}$ are gray for $i=L+1,\dots,2L$}\}\, .
$$
Note $G_1 \subseteq \mathcal{A}(\Gamma_{2L,R})$.

On $\varsigma \in G_1^c$ there is at least one $i \in \{L+1,\dots,2L\}$ such that
$E^v_{i,0}$ is colored black.
Let $i_1$ be the smallest such $i$.
and consider the opc path $\gamma$ connected to this black edge.
We define
$$
B_1\, =\, \{\varsigma \in \mathcal{S}_{N,T}\, :\, \text{$\gamma$ passes through
$E^v_{i_1,R/2}$}\}\, .
$$
Note that on $B_1$ there are no black paths in the bottom half of $\Gamma_{2L,R}$ which cross the thinner direction of the rectangle, $2L$.

{\underline{\em Step 2:}}
Let us now consider $\varsigma \in [G_1\cup B_1]^c$ and $\gamma$ as in Step 1.
Then there is some $j_1 \in \{1,\dots,R/2\}$, such that $\gamma$ passes through
$E^{v}_{i_1,j_1}$, and $\gamma$ turns right there.  We observe that $V_{i_1,j_1}$ is necessarily either of type 4 or 6 (cf. \Cref{fig:trans}).
We let
$$
G_2\, =\,  \{\varsigma \in \mathcal{S}_{N,T}\, :\, \text{$E^v_{i,j_1}$ is gray for each $i\in\{L+1,\dots,2L\}$}\}\, .
$$
Once again,
$G_2 \subseteq \mathcal{A}(\Gamma_{2L,R})$.

{\underline{\em Step 3:}}
Finally consider the case $\varsigma \in [G_1 \cup B_1 \cup G_2]^c$ with and $\gamma$ as in Steps 1,2.
In this situation there is, for some $i_2 \in \{L+1,\dots,2L\}$
and $j_2 \in \{1,\dots,j_1\}$, a vertex $V_{i_2,j_2}$ which is of type 4 or 5.

To see this, for a contradiction consider the case observe that it does not happen.  Then $\gamma$ must turn right at $V_{i_1,j_1}$ and then go straight to $V_{2L,j_1}$.
Otherwise the path would have to turn up at some $V_{i_2,j_1}$ (since it starts going right at $V_{i_1,j_1}$)
and then this vertex would be of type 4 or 5.
This forces all edges $E^v_{i,j_1}$ for $i \in \{i_1,\dots,2L\}$ to be gray, since the straight horizontal segment of $\gamma$ blocks any other black edges from crossing it.
Note, by our reductions, that all edges $E^v_{i,j_1}$ for $i \in \{L+1,\dots,i_1-1\}$ are also gray.  This means $\varsigma \in G_1 \cup B_1 \cup G_2$, a contradiction.

Returning to the main argument, consider the vertex $V_{i_2,j_2}$.
Then because of \Cref{lem:blockades} one of two final events $G_3$ or $B_2$ must occur.  Here
$$
G_3\, =\,  \{\varsigma \in \mathcal{S}_{N,T}\, :\, \text{$V_{i,j_2}$ is of type 4, for each $i\in\{1,\dots,i_2-1\}$}\}\, ,
$$
and
$$
B_2\, =\,  \{\varsigma \in \mathcal{S}_{N,T}\, :\, \text{$V_{i_2,j}$ is of type 4, for each $j\in\{j_2+1,\dots,R\}$}\}\, .
$$
In $G_3$ we have that $E^v_{i,j_2}$ is colored black for all $i \in \{1,\dots,i_2-1\}$.
But since $i_2 \in \{L+1,\dots,2L\}$, this means that 
$\mathcal{I}_L = \{E^v_{i,j_2}\, :\, i=1,\dots,L\}$
has all edges black.
So $G_3 \subseteq \mathcal{A}(\Gamma_{2L,R})$.

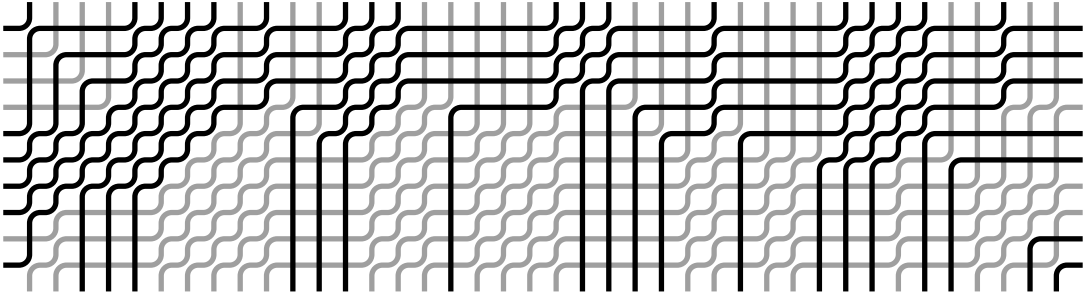
\begin{figure}
\begin{center}
\begin{tikzpicture}[xscale=0.35,yscale=0.35,very thin]
\draw[rounded corners, line width=2pt,gray,nearly opaque] 
(0,9) -- (2,9) -- (2,11)
(0,8) -- (3,8) -- (3,11)
(0,7) -- (4,7) -- (4,11)
(0,2) -- (2,2) -- (2,3) -- (6,3) -- (6,4) -- (7,4) -- (7,5) -- (8,5) -- (8,6) -- (9,6) -- (9,11) 
(1,0) -- (1,1) -- (2,1) -- (2,2) -- (6,2) -- (6,3) -- (7,3) -- (7,4) -- (8,4) -- (8,5) -- (9,5) -- (9,6) -- (10,6) -- (10,7) -- (11,7) -- (11,11)
(2,0) -- (2,1) -- (6,1) -- (6,2) -- (7,2) -- (7,3) -- (8,3) -- (8,4) -- (9,4) -- (9,5) -- (10,5) -- (10,6) -- (12,6) -- (12,11)
(6,0) -- (6,1) -- (7,1) -- (7,2) -- (8,2) -- (8,3) -- (9,3) -- (9,4) -- (10,4) -- (10,5) -- (14,5) -- (14,6) -- (15,6) -- (15,7) -- (16,7) -- (16,11)
(7,0) -- (7,1) -- (8,1) -- (8,2) -- (9,2) -- (9,3) -- (10,3) -- (10,4) -- (14,4) -- (14,5) -- (15,5) -- (15,6) -- (16,6) -- (16,7) -- (17,7) -- (17,11)
(8,0) -- (8,1) -- (9,1) -- (9,2) -- (10,2) -- (10,3) -- (14,3) -- (14,4) -- (15,4) -- (15,5) -- (16,5) -- (16,6) -- (18,6) -- (18,11)
(9,0) -- (9,1) -- (10,1) -- (10,2) -- (14,2) -- (14,3) -- (15,3) -- (15,4) -- (16,4) -- (16,5) -- (18,5) -- (18,6) -- (19,6) -- (19,11)
(10,0) -- (10,1) -- (14,1) -- (14,2) -- (15,2) -- (15,3) -- (16,3) -- (16,4) -- (18,4) -- (18,5) -- (19,5) -- (19,6) -- (20,6) -- (20,11)
(14,0) -- (14,1) -- (15,1) -- (15,2) -- (16,2) -- (16,3) -- (18,3) -- (18,4) -- (19,4) -- (19,5) -- (20,5) -- (20,6) -- (21,6) -- (21,7) -- (24,7) -- (24,11)
(15,0) -- (15,1) -- (16,1) -- (16,2) -- (18,2) -- (18,3) -- (19,3) -- (19,4) -- (20,4) -- (20,5) -- (21,5) -- (21,6) -- (25,6) -- (25,11)
(16,0) -- (16,1) -- (18,1) -- (18,2) -- (19,2) -- (19,3) -- (20,3) -- (20,4) -- (21,4) -- (21,5) -- (26,5) -- (26,11)
(18,0) -- (18,1) -- (19,1) -- (19,2) -- (20,2) -- (20,3) -- (21,3) -- (21,4) -- (26,4) -- (26,5) -- (27,5) -- (27,6) -- (28,6) -- (28,11)
(19,0) -- (19,1) -- (20,1) -- (20,2) -- (21,2) -- (21,3) -- (26,3) -- (26,4) -- (27,4) -- (27,5) -- (29,5) -- (29,11)
(20,0) -- (20,1) -- (21,1) -- (21,2) -- (26,2) -- (26,3) -- (27,3) -- (27,4) -- (29,4) -- (29,5) -- (30,5) -- (30,11)
(21,0) -- (21,1) -- (26,1) -- (26,2) -- (27,2) -- (27,3) -- (29,3) -- (29,4) -- (30,4) -- (30,5) -- (31,5) -- (31,11)
(26,0) -- (26,1) -- (27,1) -- (27,2) -- (29,2) -- (29,3) -- (30,3) -- (30,4) -- (34,4) -- (34,5) -- (36,5) -- (36,11)
(27,0) -- (27,1) -- (29,1) -- (29,2) -- (30,2) -- (30,3) -- (34,3) -- (34,4) -- (37,4) -- (37,11)
(29,0) -- (29,1) -- (30,1) -- (30,2) -- (34,2) -- (34,3) -- (37,3) -- (37,4) -- (38,4) -- (38,7) -- (39,7) -- (39,11) 
(30,0) -- (30,1) -- (34,1) -- (34,2) -- (37,2) -- (37,3) -- (38,3) -- (38,4) -- (39,4) -- (39,7) -- (40,7) -- (40,11)
(34,0) -- (34,1) -- (37,1) -- (37,2) -- (38,2) -- (38,3) -- (39,3) -- (39,4) -- (40,4) -- (40,7) -- (41,7)
(37,0) -- (37,1) -- (38,1) -- (38,2) -- (39,2) -- (39,3) -- (40,3) -- (40,4) -- (41,4)
(38,0) -- (38,1) -- (40,1) -- (40,3) -- (41,3)
;
\draw[rounded corners, line width=2pt]
(0,10) -- (1,10) -- (1,11)
(0,6) -- (1,6) -- (1,10) -- (5,10) -- (5,11)
(0,5) -- (1,5) -- (1,6) -- (2,6) -- (2,9) -- (5,9) -- (5,10) -- (6,10) -- (6,11)
(0,4) -- (1,4) -- (1,5) -- (2,5) -- (2,6) -- (3,6) -- (3,8) -- (5,8) -- (5,9) -- (6,9) -- (6,10) -- (7,10) -- (7,11)
(0,3) -- (1,3) -- (1,4) -- (2,4) -- (2,5) -- (3,5) -- (3,6) -- (4,6) -- (4,7) -- (5,7) -- (5,8) -- (6,8) -- (6,9) -- (7,9) -- (7,10) -- (8,10) -- (8,11)
(0,1) -- (1,1) -- (1,3) -- (2,3) -- (2,4) -- (3,4) -- (3,5) -- (4,5) -- (4,6) -- (5,6) -- (5,7) -- (6,7) -- (6,8) -- (7,8) -- (7,9) -- (8,9) -- (8,10) -- (10,10) -- (10,11)
(3,0) -- (3,4) -- (4,4) -- (4,5) -- (5,5) -- (5,6) -- (6,6) -- (6,7) -- (7,7) -- (7,8) -- (8,8) -- (8,9) -- (10,9) -- (10,10) -- (13,10) -- (13,11)
(4,0) -- (4,4) -- (5,4) -- (5,5) -- (6,5) -- (6,6) -- (7,6) -- (7,7) -- (8,7) -- (8,8) -- (10,8) -- (10,9) -- (13,9) -- (13,10) -- (14,10) -- (14,11)
(5,0) -- (5,4) -- (6,4) -- (6,5) -- (7,5) -- (7,6) -- (8,6) -- (8,7) -- (10,7) -- (10,8) -- (13,8) -- (13,9) -- (14,9) -- (14,10) -- (15,10) -- (15,11)
(11,0) -- (11,7) -- (13,7) -- (13,8) -- (14,8) -- (14,9) -- (15,9) -- (15,10) -- (21,10) -- (21,11)
(12,0) -- (12,6) -- (13,6) -- (13,7) -- (14,7) -- (14,8) -- (15,8) -- (15,9) -- (21,9) -- (21,10) -- (22,10) -- (22,11)
(13,0) -- (13,6) -- (14,6) -- (14,7) -- (15,7) -- (15,8) -- (21,8) -- (21,9) -- (22,9) -- (22,10) -- (23,10) -- (23,11)
(17,0) -- (17,7) -- (21,7) -- (21,8) -- (22,8) -- (22,9) -- (23,9) -- (23,10) -- (27,10) -- (27,11)
(22,0) -- (22,8) -- (23,8) -- (23,9) -- (27,9) -- (27,10) -- (32,10) -- (32,11)
(23,0) -- (23,8) -- (27,8) -- (27,9) -- (32,9) -- (32,10) -- (33,10) -- (33,11)
(24,0) -- (24,7) -- (27,7) -- (27,8) -- (32,8) -- (32,9) -- (33,9) -- (33,10) -- (34,10) -- (34,11)
(25,0) -- (25,6) -- (27,6) -- (27,7) -- (32,7) -- (32,8) -- (33,8) -- (33,9) -- (34,9) -- (34,10) -- (35,10) -- (35,11)
(28,0) -- (28,1) -- (28,6) -- (32,6) -- (32,7) -- (33,7) -- (33,8) -- (34,8) -- (34,9) -- (35,9) -- (35,10) -- (38,10) -- (38,11)
(31,0) -- (31,5) -- (32,5) -- (32,6) -- (33,6) -- (33,7) -- (34,7) -- (34,8) -- (35,8) -- (35,9) -- (38,9) -- (38,10) -- (41,10)
(32,0) -- (32,5) -- (33,5) -- (33,6) -- (34,6) -- (34,7) -- (35,7) -- (35,8) -- (38,8) -- (38,9) -- (41,9)
(33,0) -- (33,5) -- (34,5) -- (34,6) -- (35,6) -- (35,7) -- (38,7) -- (38,8) -- (41,8)
(35,0) -- (35,1) -- (35,6) -- (41,6)
(36,0) -- (36,5) -- (41,5)
(39,0) -- (39,2) -- (41,2)
(40,0) -- (40,1) -- (41,1)
;
\end{tikzpicture}
\caption{
\label{fig:Highest2}
A highest opc for a ``typical'' boundary configuration on a ``long, thin'' rectangle. There is no straight
segment spanning the long dimension (which would block black paths from crossing the small dimension),
nor any black blockade spanning the long dimension (which would block gray paths from crossing the small dimension). Roughly, heuristically, the length scale for straight segments as well as for blockades is set
by the smaller dimension. In the actual argument we have a tall, thin rectangle instead.}
\end{center}
\end{figure}

In \Cref{fig:Highest2} we have shown a highest OPC for a ``typical'' boundary configuration.
$B_1$ and $B_2$ do not occur in this particular example.

{\underline{\em Bounding Probabilities of Bad Events:}}

We will use the following lemma to bound the probabilities of bad events.
For any $L \leq \min\{N/2,T/2\}$, consider a prescribed interval $\mathcal{I}(L)$ of $L$ consecutive edges in $\mathbb{T}_{N,T}$:
either $L$ consecutive horizontal edges in a vertical column,  or else $L$ consecutive vertical edges in a horizontal row.
The spins on $\mathcal{I}(L)$ are all in $\{\leftarrow,\rightarrow\}$ if $\mathcal{I}(L)$ is a vertical interval of horizontal edges,
and are all in $\{\uparrow,\downarrow\}$ if $\mathcal{I}(L)$ is a horizontal interval of vertical edges.
Let $\mathcal{D}_{\mathcal{I}(L),\delta}$
be the event that these spins have a density of $\leftarrow$ spins or $\uparrow$ spins in the respective cases in
$[0,\delta)\cup(1-\delta,1]$.
\begin{lemma}
\label{lem:dens6}
For any $\kappa \in \R$,
there exists a $\delta>0$ and constants $c,C \in (0,\infty)$ such that
$$
\left\langle \mathbf{1}_{\mathcal{D}_{\mathcal{I}(L),\delta}}(\varsigma) \right\rangle_{N,T,\kappa,0}\, \leq\, C e^{-cL^2}\, .
$$
\end{lemma}

We now explain how this lemma applies in the present case.  The proof of \Cref{lem:dens6} follows the conclusion of the present proof.

On $B_1$ there is a straight line segment in $\gamma$
from $V_{i_1,0}$ to $V_{i_1,R/2}$ and it blocks any black path from crossing this line.
Then for every $j \in \{1,\dots,R/2\}$, the black path incident to $E^h_{2L,j}$ must terminate at $E^v_{i,0}$ for some $i \in \{i_1+1,\dots,2L\}$.
In particular, there are at most $L$ such paths and we conclude at most $L$ of the $R/2$ edges $E^h_{2L,j}$ are black.

On $B_2$ there is a black blockade running along the vertices
$V_{i_2,j}$ for $j \in \{j_2+1,\dots,R\}$.  As a consequence,  for every $j \in \{(R/2)+1,\dots,R\}$ such that  $E^h_{0,j}$  is a grey edge, the corresponding grey path must terminate
on the upper face of $\Gamma_{2L,R}$.
This means that there are at most $2L$ such paths and further that, among the $(R/2)$ edges $E^h_{0,j}$ for $j \in \{(R/2)+1,\dots,R\}$,
at most $2L$ of them may be gray.

By choosing $\rho$ appropriately, we may make the density $4L/R$ small enough to apply 
\Cref{lem:dens6}.
In particular, given the $\delta$ from \Cref{lem:dens6}, let us choose $\rho = 8/\delta$. Then $4/\rho = \delta/2$
so that we may definitely apply \Cref{lem:dens6}.
Then we see that the two bad scenarios may be included in a subset of events where \Cref{lem:dens6}
applies with $L$ replaced by $R/2$ or $R$, which is at least $\frac{1}{4}\, \rho L$.
So $c_0 = \frac{1}{4}\, \rho c$, and $C_0 = 3 C$.
\end{proof}

\begin{proof}[Proof of \Cref{lem:dens6}]
In \Cref{sec:UB6} we gave an argument that applied to horizontal intervals.
The argument was based on having a low density of misaligned spins.
We then showed that for entropic reasons, if the density of misaligned spins was low enough,
then it was better (in terms of Gibbs probability) to allow sets of configurations with slightly higher density.

Nothing in that proof  required the density on the initial interval to be precisely $\delta=0$.
Indeed, even though the density on the initial interval in that section was taken to be zero, in our argument we saw that, as one goes up row-by-row in the block,
the density may become $\epsilon/2$ in the middle row, if the height of the block is $\ell = \epsilon L$.
That is why we took $\ell = \epsilon L$ with $\epsilon>0$ small but fixed.
If we start with an initial density of $\delta$, then the maximum density will just be $\delta + \frac{1}{2}\epsilon$ instead of $\frac{1}{2}\epsilon$.
If we choose $\delta$ and $\epsilon$ sufficiently small but fixed, then the same argument goes through, essentially unchanged.

Next, to deal with vertical intervals instead of just horizontal intervals, we note that the 
model is both reflection positive in the vertical and horizontal directions. 
There is a difference in how we treat the horizontal and vertical dimensions.
For instance, in $\langle \cdot \rangle_{N,T,\kappa,0}$ we condition on having an equal number of $\uparrow$ and $\downarrow$ spins on each
horizontal row of vertical edges.
We do not make any conditioning for the vertical columns of horizontal edges.
But this was just to make contact with the XXZ model.
This model is reflection positive in both directions, and that is all that was used in \Cref{sec:UB6}.
(Indeed, we did use reflection positivity in both directions in \Cref{sec:UB6} in order to disseminate the initial reference block
in both directions.)

We did need the following condition: $L \leq N/2$ and $\ell = \epsilon L \leq T/2$.
The reason we needed $L\leq N/2$ is obvious: we needed to reflect the initial block at least 1 full time in the horizontal direction.
We took the height of the block to be $\ell = \epsilon L$ for some $\epsilon>0$ small, but fixed.
We may easily demand that $\epsilon \leq 1$.
Then we are okay as long as we have $L\leq \min\{N/2,T/2\}$.
This is  a condition which is unchanged under switching the vertical and horizontal coordinates.
So the same argument will work for an interval $\mathcal{I}(L)$ consisting of consecutive  edges in a vertical column
of horizontal edges.

Finally, we note that we can replace the condition of having density in $[0,\delta)$ of one type of arrow with a density
in $(1-\delta,1]$ for that same type of arrow, due to spin-flip symmetry of the six-vertex model.
\end{proof}

\subsubsection{Final argument}
We will use the subset bound.

For each $\varsigma \in \mathcal{A}(\Gamma_{2L,R})$, we may alter $\varsigma$ in the interior of $\Gamma_{2L,R}$
it by the highest opc relative to the original boundary configuration of $\varsigma$.
Let us call the new configuration $\varsigma' \in \mathcal{A}(\Gamma_{2L,R})$,
and let us define the function $\mathsf{F} : \mathcal{A}(\Gamma_{2L,R}) \to \mathcal{A}(\Gamma_{2L,R})$
by $\mathsf{F}(\varsigma) = \varsigma'$.
Then, for every $\varsigma \in \mathcal{A}(\Gamma_{2L,R})$,
\[
|\mathsf{F}^{-1}(\{\varsigma\})|\, \leq\, 2^{|\mathcal{E}(\Gamma_{2L,R})|}\leq 2^{8 \rho L^2}.
\]
Either $\varsigma'$ or $-\varsigma'$ does satisfy the condition
\begin{itemize}
\item $\forall i \in \mathbb{B}_L$, $\varsigma(E^{v}_{i+i(0),j(0)})=+1$,
\end{itemize}
for some $i(0),j(0)$ with $V_{i(0),j(0)} \in \Gamma_{2L,R}$.
Let $\mathsf{G} :\mathcal{A}(\Gamma_{2L,R}) \to \mathcal{A}(\Gamma_{2L,R})$
be the mapping, where $\mathsf{G}(\varsigma)$ is either $\varsigma' = \mathsf{F}(\varsigma)$ or $-\varsigma'$,
whichever one satisfies the condition above.
Finally, let $\mathsf{H} :\mathcal{A}(\Gamma_{2L,R}) \to \mathcal{A}(\Gamma_{2L,R})$
be the mapping where, given $\varsigma$, we first take $\mathsf{G}(\Gamma_{2L,R})$,
and then we shift  by $-i(0)$ in the $x$-direction and $1-j(0)$ in the $y$-direction.
An easy calculation shows
\[
|\mathsf{H}^{-1}(\{\varsigma\})|\, \leq\, 8 \rho L e^{8 \rho \ln(2) L^2}\, ,
\]
for every $\varsigma \in \mathcal{A}(\Gamma_{2L,R})$.

Let us define $\mathcal{A}' = \{\varsigma\, :\, \forall i \in \mathbb{B}_L$, $\varsigma(E^{v}_{i,1})=+1\}$.
Then $\mathsf{H}(\varsigma) \in \mathcal{A}'$ for every $\varsigma \in \mathcal{A}(\Gamma_{2L,R})$.
We have
\begin{multline}
\left\langle \mathbf{1}_{\mathcal{A}'}(\varsigma) \right\rangle_{N,T,\kappa,0}\,
\geq\, (8 \rho L)^{-1} e^{-8 \rho \ln(2) L^2} e^{-8 \rho |\kappa| L^2}
\left\langle \mathbf{1}_{\mathcal{A}(\Gamma_{2L,R})}(\varsigma) \right\rangle_{N,T,\kappa,0}\,
\\
\geq (8 \rho L)^{-1} e^{-8 \rho \ln(2) L^2} e^{-8 \rho |\kappa| L^2} (1 - C_0 e^{-c_0 L^2})\, .
\end{multline}
It is easily seen from this that there is some $C_1$ and $c_1$ such that
$$
\left\langle \mathbf{1}_{\mathcal{A}'}(\varsigma) \right\rangle_{N,T,\kappa,0}\,
\geq\, C_1 e^{-c_1 L^2}\, .
$$
Combined with \Cref{cor:Q1d} and the definitions in \cref{eq:BGdef} and \cref{eq:BGdef2}, this proves \cref{E:Ld3}.

\appendix
\section{Appendix}
\label{S:App}
\subsection{Trace inequalities: Generalized H\"older Inequality, first version}
\begin{theorem} \label{thm:Holder}
Suppose that $H$ is a self-adjoint operator and define $Z(\beta) = \textrm{Tr}[e^{-\beta H}]$.
Then for any $A \in \mathcal{B}(\Hil)$
$$
|Z(\beta)^{-1} \tr[A e^{-\beta H}]|\, \leq\, \left(Z(\beta)^{-1} \tr\left[(e^{-\beta H/4n}Ae^{-\beta H/2n}A^*e^{-\beta H/4n})^{n}\right]\right)^{1/2n}\, ,
$$
for each $n \in \{1,2,\dots\}$. In particular, if $A = A^*$ then by cyclicity of the trace this implies
$$
|Z(\beta)^{-1} \tr[A e^{-\beta H}]|\, \leq\, \left(Z(\beta)^{-1} \tr\left[(Ae^{-\beta H/2n})^{2n}\right]\right)^{1/2n}\, .
$$
\end{theorem}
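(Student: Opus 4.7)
The plan is to prove this as a direct application of the (classical) H\"older inequality for Schatten $p$-norms, combined with cyclicity of the trace. First I would abbreviate $C = e^{-\beta H/(2n)}$, so that $e^{-\beta H} = C^{2n}$ is positive self-adjoint and $Z(\beta) = \operatorname{Tr}(C^{2n})$. The key rewriting is to split one factor of $C$ symmetrically around $A$ and move one square root across the trace, giving
$$
\operatorname{Tr}[A e^{-\beta H}]\, =\, \operatorname{Tr}[A C^{2n}]\, =\, \operatorname{Tr}\bigl[(C^{1/2} A C^{1/2})\, C^{2n-1}\bigr]\,.
$$

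Next I would apply H\"older's inequality for Schatten norms with conjugate exponents $p=2n$ and $q=2n/(2n-1)$:
$$
|\operatorname{Tr}[A e^{-\beta H}]|\, \leq\, \|C^{1/2} A C^{1/2}\|_{2n}\, \|C^{2n-1}\|_{2n/(2n-1)}\,.
$$
Because $C$ is positive, the second factor is immediate: $\|C^{2n-1}\|_{2n/(2n-1)} = \operatorname{Tr}(C^{2n})^{(2n-1)/(2n)} = Z(\beta)^{(2n-1)/(2n)}$. For the first factor I would compute
$$
|C^{1/2} A C^{1/2}|^{2}\, =\, (C^{1/2} A^{*} C^{1/2})(C^{1/2} A C^{1/2})\, =\, C^{1/2} A^{*} C A C^{1/2}\,,
$$
so that $\|C^{1/2} A C^{1/2}\|_{2n}^{2n} = \operatorname{Tr}[(C^{1/2} A^{*} C A C^{1/2})^{n}]$. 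Collapsing adjacent $C^{1/2}$'s and then doing one cyclic permutation in the trace reduces this to $\operatorname{Tr}[(A^{*} C A C)^{n}]$, and the identity $\operatorname{Tr}[(YX)^{n}] = \operatorname{Tr}[(XY)^{n}]$ (applied with $X=AC$, $Y=A^{*}C$) gives
$$
\operatorname{Tr}[(A^{*} C A C)^{n}]\, =\, \operatorname{Tr}[(A C A^{*} C)^{n}]\, =\, \operatorname{Tr}\bigl[(e^{-\beta H/(4n)} A e^{-\beta H/(2n)} A^{*} e^{-\beta H/(4n)})^{n}\bigr]\,,
$$
where the final equality uses cyclicity once more to reinstate the symmetric $C^{1/2}$ factors.

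Combining the two norm estimates and dividing by $Z(\beta)$ yields the stated inequality. The ``in particular'' statement for self-adjoint $A$ is then immediate, since $A^{*}=A$ collapses $(A C A^{*} C)^{n}$ into $(AC)^{2n}$. I expect no real obstacle in this argument: the Schatten H\"older inequality is classical (see e.g.\ Simon, \emph{Trace Ideals and Their Applications}), and the only point requiring care is the bookkeeping of cyclic permutations needed to match the precise form of the right-hand side in the theorem.
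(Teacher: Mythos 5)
Your proof is correct, but it takes a genuinely different route from the paper's. You reduce the statement to the Schatten--von Neumann H\"older inequality $|\tr[XY]|\leq \|X\|_{2n}\|Y\|_{2n/(2n-1)}$ (applied to $X=C^{1/2}AC^{1/2}$, $Y=C^{2n-1}$ with $C=e^{-\beta H/(2n)}$), then clean up with cyclicity of the trace; all the algebraic identifications you perform (collapsing $C^{1/2}C^{1/2}=C$, the identity $\tr[(YX)^n]=\tr[(XY)^n]$, and reinstating the symmetric square roots) are correct, and the exponent bookkeeping ($Z(\beta)^{(2n-1)/(2n)}\cdot Z(\beta)^{-1}=Z(\beta)^{-1/(2n)}$) closes the estimate exactly as stated. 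The paper instead proves the bound from scratch via iterated Cauchy--Schwarz for the multilinear form $\alpha(A_1,\dots,A_{2n})=\tr[e^{-\beta H/(4n)}A_1 e^{-\beta H/(4n)}\cdots e^{-\beta H/(4n)}A_{2n}e^{-\beta H/(4n)}]$: setting $c_k=\alpha(A_{k,1},\dots,A_{k,2n})$ with $A_{k,j}$ alternating between $A$, $A^*$, and $I$, it derives the relations $|c_k|^2\leq c_{2k}c_0$ and $|c_k|^2\leq c_{2n}c_{2k-2n}$ and runs a maximum-principle argument on $\gamma_k=(c_k/c_0)^{1/k}$. The paper's route is self-contained and deliberately exhibits the reflection-positivity-in-imaginary-time mechanism (a.k.a.\ the Fr\"ohlich--Lieb chessboard scheme) that is reused elsewhere in the paper for spatial reflections; your route is shorter and more modular, at the cost of invoking the Schatten H\"older inequality as a black box --- which is itself usually proved by an interpolation or iterated Cauchy--Schwarz argument of much the same flavor. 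Either argument is sound for the theorem as stated.
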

Fr\"ohlich and Lieb proved the generalized H\"older inequality in \cite{FrohLieb}.
In the context that we are proving, they showed that, defining
a multilinear form,
$\alpha : (\mathcal{B}(\Hil))^{2n} \to \C$,
$$
\alpha(A_1,\dots,A_{2n})\, =\, \tr\left[(e^{-\beta H/4n}A_1e^{-\beta H/4n})\cdots (e^{-\beta H/4n}A_{2n}e^{-\beta H/4n})\right]\, ,
$$
then
$$
|\alpha(A_1,\dots,A_{2n})|^{2n}\,
\leq\, \prod_{k=1}^{2n} \alpha(A_k,A_k^*,A_k,A_k^*,\dots,A_k,A_k^*)\, .
$$
But we will only need the special case we have proved, which has $A_1=A$ and $A_k=I$ for all
$k=2,\dots,2n$.
Cyclicity of the trace is the property that, for all $A,B \in \mathcal{B}(\Hil)$,
\begin{equation}
\label{eq:cot}
\tr[AB]\, =\, \tr[BA]\, .
\end{equation}
Another important property is the Cauchy-Schwarz inequality. The bilinear form,
$\langle \cdot\, ,\ \cdot \rangle_{\mathrm{HS}} : \mathcal{B}(\Hil) \times \mathcal{B}(\Hil) \to \C$ defined by
$\langle A,B \rangle_{\mathrm{HS}} = \tr[A^*B]$ is a positive-definite form known as the Hilbert-Schmidt inner-product.
The Cauchy-Schwarz inequality says
\begin{equation}
\label{ineq:CS}
|\tr[A^*B]|^2\, \leq\, \tr[A^*A] \tr[B^*B]\, .
\end{equation}
For Frohlich and Lieb's multilinear form (\ref{ineq:CS}) implies
\begin{equation}
\label{ineq:CSref}
|\alpha(A_1,\dots,A_{2n})|^2\, \leq\, \alpha(A_1,\dots,A_n,A_n^*,\dots,A_1^*) \alpha(A_{2n}^*,\dots,A_{n+1}^*,A_{n+1},\dots,A_{2n})\, .
\end{equation}
This is {\em reflection positivity}. Later, we will see
that it applies in more general contexts.
Another important property of $\alpha$ is {\em cyclicity}. Defining $\tau : (\mathcal{B}(\Hil))^{2n} \to (\mathcal{B}(\Hil))^{2n}$
by $\tau(A_1,A_2,\dots,A_{2n}) = (A_2,\dots,A_{2n},A_1)$, equation (\ref{eq:cot}) gives
\begin{equation}
\label{eq:cyc}
\alpha(A_1,A_2,\dots,A_{2n})\, =\, \alpha(\tau(A_1,A_2,\dots,A_{2n}))\,
=\, \alpha(A_2,\dots,A_{2n},A_1)\, .
\end{equation}
Let us now prove Theorem \ref{thm:Holder}.
\begin{proof}
For $k=0,1,\dots,2n$, let us define a complex number $c_k$ as
$$
c_{k}\, =\, \alpha(A_{k,1},\dots,A_{k,n})\, ,\ \text{ where }\
A_{k,j}\, =\, \begin{cases} A & \text{ if $j\leq k$ and $k-j$ is even,}\\
A^* & \text{ if $j\leq k$ and $k-j$ is odd,}\\
I & \text{ if $j>k$.}
\end{cases}
$$
Then, considering the terms on the right hand side of (\ref{ineq:CSref}), and using $\tau$ defined in (\ref{eq:cyc}),
\begin{gather*}
(A_{k,1},\dots,A_{k,n},A_{k,n}^*,\dots,A_{k,1}^*)\, =\, \begin{cases} \tau^{-k}(A_{2k,1},\dots,A_{2k,2n}) & \text{ if $k\leq n$,}\\
\tau^{-k}(A_{2n,1}\dots,A_{2n,2n}) & \text{ if $n\leq k\leq 2n$,}
\end{cases}\\
(A_{k,2n}^*,\dots,A_{k,n+1}^*,A_{k,n+1},\dots,A_{k,2n})\,
=\, \begin{cases} (A_{0,1},\dots,A_{0,2n}) & \text{ if $k\leq n$,}\\
\tau^{2n-k}(A_{2k-2n,1},\dots,A_{2k-2n,2n}) & \text{ if $n\leq k\leq 2n$.}
\end{cases}
\end{gather*}
Therefore, using this with (\ref{ineq:CSref}) and (\ref{eq:cyc}),
\begin{gather}
\label{ineq:kleqn}
|c_k|^2 \leq c_{2k} c_0\, ,\text { when $k\leq n$; }\\
\label{ineq:kgeqn}
|c_k|^2 \leq c_{2n} c_{2k-2n}\, ,\ \text{ when $n\leq k\leq 2n$.}
\end{gather}
Then a maximum principle applies to $\gamma_{k} = (c_{k}/c_0)^{1/(k)}$ defined for $k=2,4,\dots,2n$.
Equation (\ref{ineq:kleqn}) implies
that $\gamma_{k} \leq \gamma_{2k}$ for all even $k$ such that $k\leq n$.
Therefore, $\max_{k=2,\dots,2n} \gamma_{2k}$ must equal $\gamma_{\kappa}$
for some even $\kappa$ such that $n\leq \kappa\leq 2n$.
Then (\ref{ineq:kgeqn}) implies
\begin{equation}
\label{ineq:writeit}
\gamma_{\kappa}^{2\kappa}\, \leq\, \gamma_{2n}^{2n} \gamma_{2\kappa-2n}^{2\kappa-2n}\, .
\end{equation}
But $\gamma_{2\kappa-2n} \leq \max_{k=2,\dots,2n} \gamma_{k} = \gamma_{\kappa}$.
So (\ref{ineq:writeit}) implies $\gamma_{\kappa}^{2\kappa} \leq \gamma_{2n}^{2n} \gamma_{\kappa}^{2\kappa-2n}$ which in turn implies
$\gamma_{2n} \geq \gamma_{\kappa}$. In other words $\gamma_{2n} \geq \gamma_{k}$ for all $k=2,4,\dots,2n$.
So finally by (\ref{ineq:kleqn}) again
$$
|c_1|\, \leq\, c_2^{1/2} c_0^{1/2}\, =\, c_0 \gamma_2\, \leq\, c_0 \gamma_{2n}\, =\, c_0 (c_{2n}/c_0)^{1/(2n)}\,
$$
which is the desideratum, writing out $c_0$, $c_1$ and $c_{2n}$.
\end{proof}
The full generality of Frohlich and Lieb's generalized H\"older's inequality will be described later.
The special case of the inequality just proved allows for a special application to reflections
in (imaginary) time.
\subsection{Reflection Positivity for Quantum Spin Systems}
\label{S:RP}
Suppose that $N$ is even.
Then we may decompose $\mathbb{T}_N^d$ into two halves:
\begin{align*}
\Lambda^+\, &=\, \big\{(x_1,\dots,x_d)\, :\, x_1,\dots,x_d \in \{0,\dots,N-1\}\, ,\ x_\nu \in \{(N/2),\dots,N-1\}\big\}\, ,\\
\Lambda^-\, &=\, \big\{(x_1,\dots,x_d)\, :\, x_1,\dots,x_d \in \{0,\dots,N-1\}\, ,\ x_\nu \in \{0,\dots,(N/2)-1\}\big\}\, .
\end{align*}
Moreover, let us define
$$
\Hil_+\, =\, \ell^2(\Omega(\Lambda^+))\quad \text{ and } \quad \Hil_-\, =\, \ell^2(\Omega(\Lambda_-))\, ,
$$
where as usual $\Omega(\Lambda)$ is the set of all function $\sigma = (\sigma_x)_{x \in \Lambda}$.
We may identify $\Hil(\mathbb{T}^d_N) = \Hil_- \otimes \Hil_+$.
For example, the simple tensor product may be understood as follows: suppose that $f_-$ and $f_+$ are functions
in $\Hil_-$ and $\Hil_+$, respectively.
Then we may define $(f_-\otimes f_+) \in \Hil(\mathbb{T}^d_N)$ as follows: given $\sigma \in \Omega(\mathbb{T}^d_N)$
define $(\sigma)^-$ and $(\sigma)^+$ to be the restrictions:
$$
(\sigma)^{\pm}\, =\, (\sigma_x)_{x \in \Lambda_{\pm}}\, .
$$
Then
$$
(f_- \otimes f_+)(\sigma)\, =\, f_-((\sigma)^-) f_+((\sigma)^+)\, .
$$
Let us define a reflection $R : \Lambda_+ \to \Lambda_-$ by
$$
R(x_1,\dots,x_{\nu-1},x_{\nu},x_{\nu+1},\dots,x_d)\,
=\, (x_1,\dots,x_{\nu-1},N-1-x_{\nu},x_{\nu+1},\dots,x_d)\, .
$$
Then we define an isomorphism $\mathcal{R} : \Omega(\Lambda_-) \to \Omega(\Lambda_+)$
as
$$
\mathcal{R}((\sigma_x)_{x \in \Lambda_-})\, =\, (\tau_x)_{x \in \Lambda_+}\, ,\qquad
\tau_x\, =\, \sigma_{R(x)}\, .
$$
Let us define a unitary transformation $F : \Hil_+ \to \Hil_-$ by
$$
Ff((\sigma_x)_{x \in \Lambda_-})\, =\, f((\tau_x)_{x \in \Lambda_+})\, ,\qquad
(\tau_x)_{x \in \Lambda_+}\, =\, \mathcal{R}((\sigma_x)_{x \in \Lambda_-}))\, .
$$
Finally, we define a $C^*$-algebra ismorphism $\mathcal{F} : \mathcal{B}(\Hil_-) \to \mathcal{B}(\Hil_+)$
by
$$
\mathcal{F}(A)\, =\, F A F^*\, .
$$
Let us define $\mathcal{A}^-$ to be the $C^*$ subalgebra of $\mathcal{B}(\Hil(\mathbb{T}^d_N))$
which is equivalent to set of all operators of the form $A \otimes I_{\Hil_+}$,
$A \in \mathcal{B}(\Hil_-)$.
Note that this is a $C^*$-subalgebra because it is closed under all the algebra operations,
as well as the adjoint: $(A\otimes I_{\Hil_+})^* = A^* \otimes I_{\Hil_+}$,
and $A^*$ is in $\mathcal{B}(\Hil_-)$ for each $A$ in $\mathcal{B}(\Hil_-)$.
We define $\mathcal{A}^+$ similarly as the set of all operators $I_{\Hil_-} \otimes A$
for $A \in \mathcal{B}(\Hil_+)$.
We use the symbol $\mathscr{F}$ for the $C^*$ algebra isomorphism between $\mathcal{A}^-$
and $\mathcal{A}^+$:
$$
\mathscr{F}(A \otimes I_{\Hil_+})\, =\, I_{\Hil_-} \otimes \mathcal{F}(A)\, .
$$
An important consideration for the further part of the definition will be the introduction of an orthonormal basis. We take the basis previously
stated as the appropriate orthonormal basis.
Given an operator $A : \Hil_{-} \to \Hil_{-}$ we will say that it is ``real'' if
$$
\langle \chi_{\{\sigma\}}, A \chi_{\{\sigma'\}} \rangle \in \R\, ,
$$
for all $\sigma,\sigma' \in \Omega(\Lambda_-)$.
Note that this does depend on the choice of orthonormal basis, and even depends on the choice of the phase for the ortho-normal basis.
This is a definition with less flexibility than one usually associates to the framework of quantum mechanics,
as defined on complex Hilbert spaces.
But it is needed.
We define $\mathcal{B}_{\R}(\Hil_-)$ to be the set of all ``real'' operators $A$ in $\mathcal{B}(\Hil_-)$.
Similarly, we say that $A \in \mathcal{B}(\Hil_+)$ is ``real'' if
$$
\langle \chi_{\{\sigma\}}, A \chi_{\{\sigma'\}} \rangle \in \R\, ,
$$
for all $\sigma,\sigma' \in \Omega(\Lambda_+)$, and we define
$\mathcal{B}_{\R}(\Hil_+)$ to be the set of all such ``real'' operators.
Finally, we define $\mathcal{A}_{\R}^-$ to be the subspace of operators in $\mathcal{A}^-$ of the form
$A \otimes I_{\mathcal{H}_+}$ for $A$ in $\mathcal{B}_{\R}(\Hil_-)$, and we define
$\mathcal{A}_{\R}^+$, similarly.
It is straightforward to check that $\mathscr{F}$ actually also maps $\mathcal{A}^-_{\R}$ to
$\mathcal{A}^+_{\R}$.
\begin{definition}
A linear functional $\alpha : \mathcal{B}(\Hil(\mathbb{T}^d_N)) \to \C$ is said to be reflection positive
if, for every operator $A \in \mathcal{A}^-_{\R}$ we have
$$
\alpha( A \mathscr{F}(A))\, \geq\, 0\, .
$$
\end{definition}
\begin{lemma}
\label{lem:RP1}
The tracial state, defined as $\langle \cdot \rangle_0 = \tr[\cdot]/\tr[I]$, is reflection positive.
\end{lemma}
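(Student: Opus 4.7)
The plan is to unfold the definitions and compute $\tr[A\mathscr{F}(A)]$ directly, reducing it to a square of a real number. Every element $A\in\mathcal{A}_{\R}^{-}$ can be written uniquely as $A=B\otimes I_{\Hil_{+}}$ with $B\in\mathcal{B}_{\R}(\Hil_{-})$. By the definition of $\mathscr{F}$, its image is $\mathscr{F}(A)=I_{\Hil_{-}}\otimes\mathcal{F}(B)=I_{\Hil_{-}}\otimes FBF^{*}$. So the product factors cleanly across the tensor decomposition as
\[
A\,\mathscr{F}(A)\;=\;(B\otimes I_{\Hil_{+}})(I_{\Hil_{-}}\otimes FBF^{*})\;=\;B\otimes FBF^{*}.
\]

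The next step is to take the trace on $\Hil(\mathbb{T}_{N}^{d})=\Hil_{-}\otimes\Hil_{+}$, using multiplicativity under tensor products together with unitarity of $F$:
\[
\tr_{\Hil(\mathbb{T}_{N}^{d})}\bigl[B\otimes FBF^{*}\bigr]\;=\;\tr_{\Hil_{-}}[B]\cdot\tr_{\Hil_{+}}[FBF^{*}]\;=\;\tr_{\Hil_{-}}[B]\cdot\tr_{\Hil_{-}}[B]\;=\;(\tr B)^{2},
\]
the middle equality using $\tr[FBF^{*}]=\tr[BF^{*}F]=\tr[B]$ by cyclicity.

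It remains only to observe that the ``real'' hypothesis on $B$ forces $\tr B\in\R$: indeed, the trace equals the sum over the prescribed orthonormal basis of $\Hil_{-}$ of the diagonal matrix entries $\langle\chi_{\{\sigma\}},B\chi_{\{\sigma\}}\rangle$, each of which is real by the definition of $\mathcal{B}_{\R}(\Hil_{-})$. Hence $(\tr B)^{2}\ge 0$, and dividing by $\tr[I]>0$ gives $\langle A\,\mathscr{F}(A)\rangle_{0}\ge 0$, as required. There is no substantive obstacle here: the content of the lemma is simply that the tensor product structure of the Hilbert space is compatible with the reflection, and that ``realness'' ensures a real trace so its square is nonnegative; the only point deserving care is bookkeeping the identifications $\mathcal{A}^{\pm}\cong\mathcal{B}(\Hil_{\pm})$ and $\mathscr{F}(B\otimes I)=I\otimes FBF^{*}$ before taking the trace.
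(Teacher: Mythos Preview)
Your proof is correct and follows essentially the same approach as the paper: both write $A=B\otimes I$, compute $\tr[A\mathscr{F}(A)]=(\tr_{\Hil_-}B)^2$, and use realness of $B$ to conclude nonnegativity. The only cosmetic difference is that the paper expands the trace explicitly in the Ising basis to identify the two factors, whereas you invoke multiplicativity of the trace over tensor products and cyclicity/unitary invariance directly.
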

\begin{proof}
For every operator $A \in \mathcal{A}^-_{\R}$, there is some $\widetilde{A} \in \mathcal{B}_{\R}(\Hil_-)$
such that $A = \widetilde{A} \otimes I_{\Hil_{+}}$.
Note that then $\tr_{\Hil_{-}}(A)$ is real, because $A$ is ``real,'' and the trace may be calculated
in the basis used for the definition of ``real'' operators.
Then we have the string of identities
\begin{align*}
\tr[A \mathscr{F}(A)]\,
&=\, \tr[\widetilde{A} \otimes \mathcal{F}(\widetilde{A})]\\
&=\, \sum_{\sigma \in \Omega(\mathbb{T}^d_N)}
\langle \chi_{\{\sigma\}}, \widetilde{A} \otimes \mathcal{F}(\widetilde{A}) \chi_{\{\sigma\}} \rangle\\
&=\, \sum_{\sigma \in \Omega(\mathbb{T}^d_N)}
\langle \chi_{\{(\sigma)^-\}}, \widetilde{A} \chi_{\{(\sigma)^-\}} \rangle_{\mathcal{H}_-} \cdot
\langle \chi_{\{(\sigma)^+\}}, \mathcal{F}(\widetilde{A}) \chi_{\{(\sigma)^+\}} \rangle_{\mathcal{H}_+}
\\
&=\, \sum_{\sigma \in \Omega(\mathbb{T}^d_N)}
\langle \chi_{\{(\sigma)^-\}}, \widetilde{A} \chi_{\{(\sigma)^-\}} \rangle_{\mathcal{H}_-} \cdot
\langle \chi_{\{(\sigma)^+\}}, \mathcal{F}(\widetilde{A}) \chi_{\{(\sigma)^+\}} \rangle_{\mathcal{H}_+}
\\
&=\, \sum_{\sigma \in \Omega(\Lambda_-)}
\langle \chi_{\{\sigma\}}, \widetilde{A} \chi_{\{\sigma)\}} \rangle_{\mathcal{H}_-}
\sum_{\sigma \in \Omega(\Lambda_+)}
\langle F^* \chi_{\{\sigma\}}, \widetilde{A} F^* \chi_{\{\sigma\}}\rangle_{\mathcal{H}^+}\, .
\end{align*}
But considering the definition of $F^*$ on $\chi_{\{\sigma\}}$, we realize that (because $R$ is a bijection)
the second factor is equal to the first. In other words,
$$
\tr[A\mathscr{F}(A)]\, =\, \left(\tr_{\mathcal{\Hil_-}}[\widetilde{A}]\right)^2\, .
$$
Because the trace of $\widetilde{A}$ is real, this is nonnegative.
\end{proof}
\begin{definition}
A linear functional $\alpha : \mathcal{B}(\Hil(\mathbb{T}^d_N)) \to \C$
is said to be generalized reflection positive if, for every $n$ and all operators $A_1,\dots,A_n \in \mathcal{A}^-_{\R}$ we have
$$
\alpha( A_1 \mathscr{F}(A_1) A_2 \mathscr{F}(A_2) \cdots A_n \mathscr{F}(A_n)\rangle\, \geq\, 0\, .
$$
\end{definition}
\begin{theorem}
The tracial state is also generalized reflection positive.
\end{theorem}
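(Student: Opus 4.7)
The plan is to reduce the general $n$ case to the $n=1$ computation in \Cref{lem:RP1}, exploiting the fact that $\mathcal{A}^-$ and $\mathcal{A}^+$ commute together with the fact that $\mathscr{F}$ is a $*$-algebra isomorphism. First I would write each $A_i = \widetilde{A}_i \otimes I_{\Hil_+}$ with $\widetilde{A}_i \in \mathcal{B}_{\R}(\Hil_-)$, so that $\mathscr{F}(A_i) = I_{\Hil_-} \otimes \mathcal{F}(\widetilde{A}_i)$. Because every element of $\mathcal{A}^-$ commutes with every element of $\mathcal{A}^+$ (they act on complementary tensor factors), the factors in the product $A_1 \mathscr{F}(A_1) \cdots A_n \mathscr{F}(A_n)$ may be rearranged freely; using also that $\mathcal{F}$ is multiplicative,
\begin{equation*}
\prod_{i=1}^n A_i\, \mathscr{F}(A_i)
\, =\, \Bigl(\prod_{i=1}^n \widetilde{A}_i\Bigr) \otimes \mathcal{F}\Bigl(\prod_{i=1}^n \widetilde{A}_i\Bigr)
\, =\, B \otimes \mathcal{F}(B),
\end{equation*}
where $B := \prod_{i=1}^n \widetilde{A}_i \in \mathcal{B}(\Hil_-)$.

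Next, taking the trace and factoring across the tensor decomposition yields
\begin{equation*}
\tr\bigl[A_1 \mathscr{F}(A_1) \cdots A_n \mathscr{F}(A_n)\bigr]
\, =\, \tr_{\Hil_-}(B) \cdot \tr_{\Hil_+}\bigl(\mathcal{F}(B)\bigr).
\end{equation*}
Since $\mathcal{F}$ is implemented by conjugation by the unitary $F$ relating $\Hil_-$ and $\Hil_+$, the two traces are equal, and the right-hand side reduces to $\bigl(\tr_{\Hil_-}(B)\bigr)^{2}$. It remains to verify that $B \in \mathcal{B}_{\R}(\Hil_-)$: in the Ising basis $\{\chi_{\{\sigma\}} : \sigma \in \Omega(\Lambda_-)\}$, the matrix entry of a product of operators with real entries is a sum of products of reals, and so the matrix of $B$ in this basis has real entries; hence $\tr_{\Hil_-}(B) \in \R$ and the quantity is nonnegative.

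The hard part, if there is one, is really just bookkeeping --- tracking the direction of $\mathcal{F}$ relative to $F$, and checking that $\mathcal{B}_{\R}(\Hil_-)$ is closed under products in the specified basis (which it is, as noted above). This is a direct generalization of the computation already performed in the proof of \Cref{lem:RP1}: that lemma is essentially the $n=1$ case, and the same mechanism --- commutation between the two subalgebras, the homomorphism property of $\mathscr{F}$, and closure of $\mathcal{B}_{\R}$ under multiplication --- settles the general case without requiring any new analytical input.
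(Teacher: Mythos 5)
Your proof is correct and uses essentially the same mechanism as the paper: commutativity of $\mathcal{A}^-$ with $\mathcal{A}^+$ to regroup, the homomorphism property of $\mathscr{F}$ to collapse the $\mathscr{F}(A_i)$ factors, and reduction to the $n=1$ computation. The paper simply cites Lemma \ref{lem:RP1} at the end rather than redoing the trace computation; your explicit check that the product $B$ is again a real operator is a small but worthwhile detail the paper leaves implicit.
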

\begin{proof}
Note that $\mathcal{A}^-$
commutes with $\mathcal{A}^+$.
So we are actually trying to prove
$$
\langle A_1 A_2 \cdots A_n \mathcal{F}(A_1) \mathcal{F}(A_2) \cdots \mathcal{F}(A_n)\rangle\, \geq\, 0\, .
$$
Because $\mathcal{F}$ is a $C^*$-algebra homomorphism, this may be rewritten as
$$
\langle A_1 A_2 \cdots A_n \mathcal{F}(A_n \cdots A_2 A_1)\rangle\, \geq\, 0\, .
$$
Then this reduces to the definition of reflection positivity proved in Lemma \ref{lem:RP1}.
\end{proof}
The key theorem for obtaining reflection positive examples is as follows.
\begin{theorem}
\label{thm:RPex}
Suppose that we have a generalized reflection positive linear functional which we will denote
$\alpha_0 : \mathcal{B}(\Hil(\mathbb{T}^d_N)) \to \C$.
Suppose that for some $K$ there are operators $B,C_1,\dots,C_K\in \mathcal{A}^-_{\R}$
such that a Hamiltonian $H \in \mathcal{B}(\Hil(\mathbb{T}^d_N))$ may be written as
\begin{equation}
\label{eq:Hform}
H\, =\, B + \mathscr{F}(B) - \sum_{k=1}^{K} C_k \mathscr{F}(C_k)\, .
\end{equation}
Then, for each $\beta \in [0,\infty)$, defining the linear functional $\alpha_{\beta}$
as
$$
\alpha_{\beta}(\cdot)\,
=\, \frac{\alpha_{0}(\cdot\, e^{-\beta H})}{\alpha_0(e^{-\beta H})}\, ,
$$
this
is also generalized reflection positive.
\end{theorem}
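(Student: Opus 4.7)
The proof follows the standard approach from \cite{FrohLieb}: I plan to exhibit $e^{-\beta H}$ as a norm-convergent limit of nonnegative combinations of products of ``reflected pairs'' $E_i \mathscr{F}(E_i)$ with $E_i \in \mathcal{A}^-_\R$, so that generalized reflection positivity of $\alpha_0$ transfers to $\alpha_\beta$ by linearity and continuity.

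The first step is algebraic. Since $B \in \mathcal{A}^-$ commutes elementwise with $\mathscr{F}(B) \in \mathcal{A}^+$ and $\mathscr{F}$ is a $*$-algebra homomorphism, the power-series definition of the exponential gives
$$e^{-\tfrac{\beta}{N}(B + \mathscr{F}(B))} \;=\; e^{-\tfrac{\beta}{N}B}\, \mathscr{F}\!\left(e^{-\tfrac{\beta}{N}B}\right) \;=\; b_N\, \mathscr{F}(b_N),$$
where $b_N := e^{-\beta B/N} \in \mathcal{A}^-_\R$ (reality of matrix elements in the fixed basis is stable under polynomials, hence under norm-convergent power series). By the Trotter product formula, which is unproblematic in our finite-dimensional setting,
$$e^{-\beta H} \;=\; \lim_{N \to \infty} \left[\, b_N \, \mathscr{F}(b_N) \, \exp\!\Big(\tfrac{\beta}{N} \sum_{k=1}^K C_k \mathscr{F}(C_k)\Big)\, \right]^{N} \!\!=: \lim_{N\to\infty} T_N^N.$$

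The second step is to expand the middle exponential in its Taylor series. A generic monomial,
$$\tfrac{(\beta/N)^m}{m!}\, C_{k_1}\mathscr{F}(C_{k_1}) \cdots C_{k_m} \mathscr{F}(C_{k_m}),$$
carries a nonnegative scalar coefficient. Using commutativity of $\mathcal{A}^-$ with $\mathcal{A}^+$ to pull every $\mathscr{F}(C_{k_i})$ past the remaining $C$-factors and then the multiplicativity of $\mathscr{F}$ to consolidate, this collapses to $\tfrac{(\beta/N)^m}{m!}\, D\, \mathscr{F}(D)$ with $D = C_{k_1}\cdots C_{k_m} \in \mathcal{A}^-_\R$. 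Pre-multiplying by $b_N \mathscr{F}(b_N)$ combines one more time into a single reflected pair $E\mathscr{F}(E)$ with $E = b_N D \in \mathcal{A}^-_\R$. Hence each Trotter factor $T_N$ is a norm-convergent nonnegative combination of reflected pairs, and $T_N^N$ is a nonnegative combination of length-$N$ products $\prod_{i=1}^N E_i \mathscr{F}(E_i)$.

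The conclusion is then immediate. For any test operator $X = A_1 \mathscr{F}(A_1) \cdots A_n \mathscr{F}(A_n)$ with $A_j \in \mathcal{A}^-_\R$, the product $X \cdot T_N^N$ is still a nonnegative combination of reflected-pair products of length $n + N$, so generalized reflection positivity of $\alpha_0$ yields $\alpha_0(X\, T_N^N) \geq 0$ for every $N$. Taking $X = I$ in particular gives $\alpha_0(T_N^N) \geq 0$, and in the tracial case of \Cref{lem:RP1} it is strictly positive, so dividing and passing $N \to \infty$ (using continuity of $\alpha_0$ on the finite-dimensional operator space) produces $\alpha_\beta(X) \geq 0$, as desired. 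The only point requiring vigilance is closure: $\mathcal{A}^-_\R$ must be stable under the products and norm-convergent power series being invoked, which it is because reality of matrix elements in the fixed basis is preserved by multiplication and by norm limits. This is a bookkeeping check rather than a genuine obstacle --- the substantive content of the argument is the purely algebraic collapse of $C\mathscr{F}(C)$-products into reflected-pair form, enabled by the commutativity of $\mathcal{A}^-$ with $\mathcal{A}^+$ and the multiplicativity of $\mathscr{F}$.
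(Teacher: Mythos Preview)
Your proof is correct and follows essentially the same Trotter-expansion approach as the paper. The only cosmetic difference is that the paper linearizes each interaction factor to $(I + \tfrac{\beta}{R}\,C_k\mathscr{F}(C_k))$ inside the Trotter product, whereas you keep the full exponential $\exp\big(\tfrac{\beta}{N}\sum_k C_k\mathscr{F}(C_k)\big)$ and expand its power series; both decompositions yield the same nonnegative combination of reflected-pair products.
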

\begin{proof}
This follows from the Trotter product formula. Given any $A_1,\dots,A_n \in \mathcal{A}^-_{\R}$,
the Trotter product formula implies
\begin{align*}
\alpha_0(A_1\mathscr{F}(A_1)\cdots A_n\mathscr{F}(A_n) e^{-\beta H})\,
&=\,
\lim_{R \to \infty}
\alpha_0\big(A_1\mathscr{F}(A_1)\cdots A_n\mathscr{F}(A_n) \\
&\hspace{-2cm} \big(e^{-(\beta/R)B} \mathscr{F}(e^{-(\beta/R) B})
(I + \frac{\beta}{R}C_1 \mathscr{F}(C_1)) \cdots
(I + \frac{\beta}{R}C_K \mathscr{F}(C_K))\big)^R \big)\, .
\end{align*}
For any fixed $R$ the quantity on the right hand side may be expanded as a nonnegative
combination of terms of the form $\alpha_0(D_1\mathscr{F}(D_1) \cdots D_L \mathscr{F}(D_L))$
for finite $L$'s and operators $D_1,\dots,D_L \in \mathcal{A}^-_{\R}$.
So, since $\alpha_0$ is generalized reflection positive, we see that this term is nonnegative as well.
The property of being nonnegative survives the limit $R \to \infty$.
Therefore,
$$
\alpha_0(A_1\mathscr{F}(A_1)\cdots A_n\mathscr{F}(A_n) e^{-\beta H})\, \geq\, 0\, .
$$
A similar calculation shows $\alpha_0(e^{-\beta H}) \geq 0$. So the theorem follows.
\end{proof}
Since the tracial state is generalized reflection positive, if $H$ satisfies the form
(\ref{eq:Hform}) then the equilibrium state $\langle \cdot \rangle_{\beta}$ is also generalized
reflection positive by this theorem.
\subsection{The XXZ model for $\Delta \leq 0$}
The XXZ Hamiltonian (\ref{eq:PFready}) with $\Delta \leq 0$ is not generalized reflection positive.
But it is unitarily equivalent to a Hamiltonian which is.
Let $U : \Hil_+ \to \Hil_+$ be the unitary transformation on $\Hil_+ = \ell^2(\Omega(\Lambda_+))$
given by
$$
Uf(\sigma)\, =\, f(-\sigma)\, ,
$$
where $-\sigma=(-\sigma_x)_{x \in \Lambda_+}$ for $\sigma = (\sigma_x)_{x \in \Lambda_+}$.
Note that $U=U^*=U^{-1}$.
Importantly, we also have
$$
US_x^{(1)}U\, =\, S_x^{(1)}\, ,\qquad
US_x^{(2)}U\, =\, -S_x^{(2)}\, ,\quad \text{and}\quad
US_x^{(3)}U\, =\, -S_x^{(3)}\, ,
$$
for each $x \in \Lambda_+$.
Let us define $\mathcal{U} = I_{\Hil_-} \otimes U$, which is a unitary transformation on $\Hil(\mathbb{T}^d_N)$ also satisfying $\mathcal{U}=\mathcal{U}^*=\mathcal{U}^{-1}$.
Then, we claim that if $x$ is in $\Lambda_-$ and $y$ is in $\Lambda_+$ then
$$
\mathcal{U}h_{xy}^{\Delta}\mathcal{U}\, =\, - S_x^{(1)} S_y^{(1)} + S_x^{(2)} S_y^{(2)} + \Delta S_x^{(3)} S_y^{(3)}\, .
$$
This is important because $S_x^{(2)}$ and $S_y^{(2)}$ are not real operators because of the presence
of a factor $i$. But $i S_x^{(2)}$ and $i S_y^{(2)}$ are real operators.
Therefore, we may write
$$
Uh_{xy}^{\Delta}U\, =\, -S_x^{(1)} S_y^{(1)} - (iS_x^{(2)}) (iS_y^{(2)}) + \Delta S_x^{(3)} S_y^{(3)}\, .
$$
If we have the condition $\Delta \leq 0$ then this means that if $y = R x$ then
$$
Uh_{xy}^{\Delta}U\, =\, -\sum_{j=1}^{3} C^{(j)}_x \mathscr{F}(C^{(j)}_x)\, ,
$$
where $C^{(1)} = S_x^{(1)}$, $C^{(2)} = i S_x^{(2)}$ and $C^{(3)} = \sqrt{-\Delta} S_x^{(3)}$.
We note that if $x$ and $y$ are both in $\Lambda_-$ then we do have $h_{xy}^{\Delta}$ in $\mathcal{A}^-_{\R}$
because even though $S_x^{(2)}$ and $S_y^{(2)}$ involve $i$, when we multiply them both
together we only get a real factor, $i^2=-1$.
Also, we observe that for $x,y \in \Lambda_+$ we have $Uh_{xy}^{\Delta}U=h_{xy}^{\Delta}$
(because the $-1$
factors are squared in $S_x^{(2)} S_y^{(2)}$ and $S_x^{(3)}S_y^{(3)}$.
Therefore, if we enumerate the pairs $\{x,y\}$ with $y=Rx$ and $\{x,y\} \in \mathcal{E}(\mathbb{T}^d_N)$
as $\{x_1,y_1\},\dots,\{x_K,y_K\}$, then we have
\begin{equation}
\label{eq:XXZform}
\mathcal{U} H^{\Delta}_{N,d} \mathcal{U}\,
=\, B + \mathscr{F}(B) - \sum_{k=1}^{K} \sum_{j=1}^{3} C^{(j)}_{x_k} \mathscr{F}(C^{(j)}_{x_k})\, ,
\end{equation}
where
$$
B\, =\, \sum_{\{x,y\} \in \mathcal{E}(\Lambda_-)} h_{xy}^{\Delta}\, .
$$
Therefore, we conclude:
\begin{corollary}
\label{cor:XXZgrp}
For $\Delta\leq 1$, the equilibrium state associated to the Hamiltonian $\mathcal{U}H^{\Delta}_{N,d}\mathcal{U}$ is generalized reflection positive.
\end{corollary}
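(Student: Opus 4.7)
The plan is to invoke Theorem~\ref{thm:RPex} with $\alpha_0$ taken to be the tracial state $\langle\cdot\rangle_0 = \operatorname{Tr}(\cdot)/\operatorname{Tr}(I)$. The tracial state has already been shown to be generalized reflection positive, so the only remaining task is to exhibit a decomposition of the transformed Hamiltonian $\mathcal{U}H^\Delta_{N,d}\mathcal{U}$ in the form prescribed by \cref{eq:Hform}, namely $B + \mathscr{F}(B) - \sum_k C_k\mathscr{F}(C_k)$ with $B$ and each $C_k$ lying in the ``real'' subalgebra $\mathcal{A}^-_\R$ adapted to the reflection plane separating $\Lambda_-$ from $\Lambda_+$.

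This decomposition is already staring us in the face: equation~\cref{eq:XXZform} writes
\[
\mathcal{U}H^\Delta_{N,d}\mathcal{U}\, =\, B + \mathscr{F}(B) - \sum_{k=1}^{K}\sum_{j=1}^{3} C^{(j)}_{x_k}\mathscr{F}(C^{(j)}_{x_k})\, ,
\]
where $B = \sum_{\{x,y\}\in\mathcal{E}(\Lambda^-)} h^{\Delta}_{xy}$ collects the intra-half-space interactions, and for each boundary-crossing edge $\{x_k,y_k\}$ with $y_k = Rx_k$ we take the triple $C^{(1)}_{x_k}=S^{(1)}_{x_k}$, $C^{(2)}_{x_k}=iS^{(2)}_{x_k}$, $C^{(3)}_{x_k}=\sqrt{-\Delta}\,S^{(3)}_{x_k}$. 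What I would do is verify, one term at a time, that these operators are genuinely real in the fixed Ising basis $\{\chi_{\sigma}\}$. In that basis $S^{(1)}$ and $S^{(3)}$ have real entries, and $iS^{(2)}$ is real because the factor $i$ cancels the one inside $S^{(2)}$. Consequently each two-body term $h^{\Delta}_{xy}$ with both endpoints in $\Lambda^-$ is a sum of products $S^{(1)}S^{(1)}$, $-(iS^{(2)})(iS^{(2)})$, and $\Delta\,S^{(3)}S^{(3)}$, each of which is real, so $B\in\mathcal{A}^-_\R$; and each $C^{(j)}_{x_k}$ is real provided the scalar coefficient $\sqrt{-\Delta}$ is real.

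Once the decomposition sits inside $\mathcal{A}^-_\R$, Theorem~\ref{thm:RPex} applies verbatim with $\alpha_0=\langle\cdot\rangle_0$ and yields that, for every $\beta\geq 0$, the functional
\[
\alpha_\beta(A)\, =\, \frac{\langle A\,e^{-\beta\mathcal{U}H^\Delta_{N,d}\mathcal{U}}\rangle_0}{\langle e^{-\beta\mathcal{U}H^\Delta_{N,d}\mathcal{U}}\rangle_0}
\]
is generalized reflection positive. Since this $\alpha_\beta$ is precisely the equilibrium state attached to $\mathcal{U}H^\Delta_{N,d}\mathcal{U}$, the corollary follows.

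There is essentially no analytic obstacle here beyond bookkeeping; the only point that requires care is the range of $\Delta$. The argument above goes through whenever $\sqrt{-\Delta}$ is real, i.e.\ for $\Delta\leq 0$, which is exactly the regime assumed in the paragraph preceding \cref{eq:XXZform}. For $\Delta\in(0,1]$ one cannot directly include $\Delta\,S^{(3)}_x S^{(3)}_y$ as a $-C\mathscr{F}(C)$ term in this decomposition; instead, using $(S^{(3)})^2=\tfrac14 I$, the coupling can be rewritten as a constant minus $\Delta\big(S^{(3)}_x-S^{(3)}_y\big)^2/2$ (up to a scalar), so that the offending sign absorbs into the $B+\mathscr{F}(B)$ piece while the bilinear part again has the correct sign. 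I would flag this regrouping as the one place where the proof needs a small extra step beyond directly quoting \cref{eq:XXZform}.
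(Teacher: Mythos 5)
For $\Delta\leq 0$ your argument is exactly the paper's: quote the decomposition \cref{eq:XXZform}, check that $B$ and the $C^{(j)}_{x_k}$ lie in $\mathcal{A}^-_{\R}$, and apply Theorem~\ref{thm:RPex} with $\alpha_0$ the tracial state. That part is correct and complete.

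The problem is your proposed repair for $\Delta\in(0,1]$, which does not work. Writing $\Delta S^{(3)}_xS^{(3)}_y=\tfrac{\Delta}{4}I-\tfrac{\Delta}{2}\big(S^{(3)}_x-S^{(3)}_y\big)^2$ is an identity, but when you re-expand the square the cross term reappears as $+\Delta S^{(3)}_xS^{(3)}_y$ with the \emph{same} sign as before; only the diagonal pieces $(S^{(3)}_x)^2=(S^{(3)}_y)^2=\tfrac14 I$ (which are scalars and irrelevant) get moved into $B+\mathscr{F}(B)$. The form \cref{eq:Hform} requires the boundary-crossing bilinear term to appear as $-C\mathscr{F}(C)$ with $C$ real, and $+\Delta S^{(3)}_x\mathscr{F}(S^{(3)}_x)$ with $\Delta>0$ has the wrong sign; you cannot fix this by taking $C=i\sqrt{\Delta}\,S^{(3)}_x$ either, since that operator is not real in the Ising basis and reality is what makes the Cauchy--Schwarz step $\alpha(A\mathscr{F}(A))\geq 0$ valid. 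This is not a bookkeeping issue: reflection positivity genuinely fails for the ferromagnetic sign of the $S^zS^z$ coupling (e.g.\ for the Heisenberg ferromagnet $\Delta=1$), so no decomposition of the form \cref{eq:Hform} exists in that regime. The hypothesis ``$\Delta\leq 1$'' in the corollary is evidently a misprint for ``$\Delta\leq 0$'': the derivation of \cref{eq:XXZform} in the preceding subsection explicitly assumes $\Delta\leq 0$, as do the surrounding statements (Lemma~\ref{lem:ChessLoss} and the remark in \Cref{S:UB} that reflection positivity for the XXZ model requires $\Delta\leq 0$). You should restrict your proof to $\Delta\leq 0$ and drop the regrouping paragraph.
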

\begin{proof}
From (\ref{eq:XXZform}), the Hamiltonian satisfies the condition (\ref{eq:Hform})
needed to apply Theorem \ref{thm:RPex}. Then by the discussion immediately following
the theorem, the corollary follows.
\end{proof}
\subsubsection{Chessboard estimate with some loss}
\label{CbL}
\begin{proof}[Proof of \Cref{lem:ChessLoss}]
This proof is similar to the proof of Theorem \ref{thm:Holder}. In the present context, Fr\"ohlich
and Lieb refer to this type of inequality as a chessboard estimate.
Consider $\mathcal{P}_{L}^+$ which projects onto all spins $+$ on $\mathbb{B}^d_L$.
Consider the face of $\mathbb{B}^d_L$ whose outward pointing normal
points in the direction $e_k$. One may reflect in the plane separating that face from the neighboring
spin sites outside the box $\mathbb{B}^d_L$.
In order to use reflection positivity, we must conjugate by $U$, introduced in the last subsection, on one of the halves. But we can actually undo this conjugation if we consider the effect this has on the operators
whose expectation we take.
Since $U \mathcal{P}^+_{L} U = \mathcal{P}^-_{L}$ and vice-versa,
the Cauchy-Schwarz inequality implies
$$
\langle \mathcal{P}^+_{L}\rangle_{N, \beta,\Delta}\,
\leq\, \left(\left\langle \prod_{x \in \mathbb{B}^d_{2L}} \left(\frac{1+\varsigma_x^{(L)} S_i^{z}}{2}\right)
\right\rangle_{N, \beta,\Delta}\right)^{(2)^{-d}}\, .
$$
More precisely, each time the Cauchy-Schwarz inequality is applied the operator on one of the two
halves is just the identity operator.
Since the identity operator has expectation 1, this does not contribute.
The effect of having to conjugate by $U$ after each reflection gives rise to the factor $\varsigma_i^{(L)}$.
Note also that one must reflect in each of the $d$ directions. This increases the cardinality
pf the box by a factor of $2^d$. The effect of taking the square-root for each application of the Cauchy-Schwarz inequality results in taking the $2^d$'th root.
One may repeat this procedure $n = \lfloor \log_2(N/L) \rfloor$ times with no essential change to the procedure. The last time (in each direction) however will result in cutting-off what is already a projection
covering more than half of the torus.
That is okay, then in the Cauchy-Schwarz inequality the other factor will be an expectation of a projection
which is always less than or equal to 1.
We bound it by 1.
This results in some loss of sharpness in our inequality, which is why we call this ``Chessboard estimates
with some loss.''
But this is not an essential point for us, because we will frequently be considering the case that $N$ is much larger than $L$. So an extra $1$ added to the already large term $\log_2(N/L)$ does not concern us.
\end{proof}

\subsection{The Six Vertex Model}
\label{sec:6vtx}

We wish to mention that Sutherland's result \Cref{prop:Sutherland} follows from the Yang-Baxter relation.
Let us not mention more about this, here.
One may also understand \Cref{prop:Sutherland} from a diagrammatic standpoint.
But the real goal, to generalize to higher dimensions, eludes us.
The reason this is so useful is that \Cref{prop:Sutherland} along with the fact that both the 6-vertex model
and the XXZ model satisfy the good signs condition of the Perron-Frobenius theorem
implies that the ground state of the XXZ model is the eigenvector of the 6-vertex model
with largest eigenvector.

The six-vertex model is reflection positive, using the result from \cite{FILS}.
We may also condition on the event $\mathcal{M}(N,0)$.
This was not treated in \cite{FILS}.
But the idea has been used before for other reflection positive models.
See the paper \cite{AizenmanEtAl} of Aizenman, et.~al., particularly Appendix B.
Given any vertical or horizontal hyperplane, we may enforce $\mathcal{M}(N,0)$
in a reflection positive way.
For a horizontal hyperplane of vertical edges, we merely restrict to configurations on the plane with the same
number of $\uparrow$'s as $\downarrow$'s.
For a vertical hyperplane of horizontal edges, we can write $\mathbf{1}_{\mathcal{M}(N,0)}(\varsigma) = \sum_{M} \mathbf{1}_{\mathcal{M}(N/2,k)}(\varsigma^{(L)}) 
\mathbf{1}_{\theta^{(1)}_{N/2,1}(\mathcal{M}(N/2,k))}(\varsigma^{(R)})$ where we decompose $\varsigma = (\varsigma^{(L)},\varsigma^{(R)})$.
This is the type of condition one needs for reflection positivity.

With this, the chessboard estimate with loss \Cref{lem:ChessLoss2} follows just like in \Cref{CbL}.

\end{document}